\documentclass{article}
%{{{ PRE
%%% NB the triple brackets as above are just Paul's tex-source-edit folding mode
%%% They appear in
%{{{ pairs like this
%}}}
%%% If possible please do not break the pair-matches.
%{{{ packages

\usepackage{authblk}
\usepackage{comment}
\usepackage{amsfonts}
\usepackage{rlepsf}
\usepackage{graphics}
\usepackage{xcolor}
\usepackage{amssymb}  
\usepackage{verbatim}
\usepackage{amsmath}
\usepackage{esint}  
\usepackage{amssymb}
\usepackage{amssymb,graphicx}
\usepackage{amsthm}
\usepackage{amscd}
\usepackage{euscript}
\usepackage{multicol}
\usepackage{mathtools}
%\allowdisplaybreaks
\usepackage[all]{xy}
\usepackage{hyperref}
%{{{ FIG PACKAGES

    %% category
\newcommand{\VB}{{\mathcal {VB}}}    %% category
\newcommand{\WB}{{\mathcal {WB}}} %% category
\newcommand{\B}{{\mathcal {B}}}    %% category
\newcommand{\VVV}[2]{V_{#1}[#2]}   %% notation for elementary transpos
\newcommand{\SSS}[2]{S_{#1}^+[#2]}   %% notation for braid gen
\newcommand{\SSM}[2]{S_{#1}^-[#2]}   %% notation for braid gen
\newcommand{\SSPM}[2]{S_{#1}^{\pm}[#2]}   %% notation for braid gen

\def \Cc {{\cal C}}
\def \Obj {{\rm Obj}}
\def \Mor {{\rm Mor}}
\def \Rc{{{\cal R}}}

\def \tp {{\triangleright'}}
\def \l {\lessdot}
\def \t {\gtrdot}
\def \td {{\gtrdot}}

\def \bw {\overline {w}}

\def \bt {\overline{t}}

\def \bJ{\overline {J}}
\def \bK {\overline{K}}

%%JOAO PACKAGE FOR FIGURES
\usepackage{rlepsf}
\usepackage{float}
\usepackage[all]{xy}
\newcommand{\gwrr}[1]{\Gamma^{\wr #1}}  %%% groupoid wreath

\newcommand{\gwr}[1]{\Gamma^{#1} \rtimes \Sigma_{#1}}  %%% groupoid wreath

%}}}
\usepackage{amsfonts}
\usepackage{amsmath}
%{{{ LAYOUT

\setlength\topmargin{0in}
\setlength\headheight{0in}
\setlength\headsep{0in}
\setlength\textheight{9.1in}
\setlength\textwidth{6.5in}
\setlength\oddsidemargin{0in}
\setlength\evensidemargin{0in}

%}}}
%{{{ JOAO ARROW COMMANDS

%%%% ARROw COMMANDS -- Joao

\makeatletter
\newcommand*{\da@rightarrow}{\mathchar"0\hexnumber@\symAMSa 4B }
\newcommand*{\da@leftarrow}{\mathchar"0\hexnumber@\symAMSa 4C }
\newcommand*{\xdashrightarrow}[2][]{%
  \mathrel{%
    \mathpalette{\da@xarrow{#1}{#2}{}\da@rightarrow{\,}{}}{}%
  }%
}
\newcommand{\xdashleftarrow}[2][]{%
  \mathrel{%
    \mathpalette{\da@xarrow{#1}{#2}\da@leftarrow{}{}{\,}}{}%
  }%
}
\newcommand*{\da@xarrow}[7]{%
  % #1: below
  % #2: above
  % #3: arrow left
  % #4: arrow right
  % #5: space left 
  % #6: space right
  % #7: math style 
  \sbox0{$\ifx#7\scriptstyle\scriptscriptstyle\else\scriptstyle\fi#5#1#6\m@th$}%
  \sbox2{$\ifx#7\scriptstyle\scriptscriptstyle\else\scriptstyle\fi#5#2#6\m@th$}%
  \sbox4{$#7\dabar@\m@th$}%
  \dimen@=\wd0 %
  \ifdim\wd2 >\dimen@
    \dimen@=\wd2 %   
  \fi
  \count@=2 %
  \def\da@bars{\dabar@\dabar@}%
  \@whiledim\count@\wd4<\dimen@\do{%
    \advance\count@\@ne
    \expandafter\def\expandafter\da@bars\expandafter{%
      \da@bars
      \dabar@ 
    }%
  }%  
  \mathrel{#3}%
  \mathrel{%   
    \mathop{\da@bars}\limits
    \ifx\\#1\\%
    \else
      _{\copy0}%
    \fi
    \ifx\\#2\\%
    \else
      ^{\copy2}%
    \fi
  }%   
  \mathrel{#4}%
}
\makeatother

%}}}
%%\subsubsection{(Temporary?) Groupoids and metas}
%{{{ meta

%\newcommand{\st}[1]{\stackrel{#1}{\rightarrow}}

%\newcommand{\jectiv}{injective}

\def \U {\underline{U}}
\def \V {\underline{V}}
\def \W {\underline{W}}

\def \G {\Gamma}
\def \uR {\underline{R}}
\newcommand{\st}[1]{(\xrightarrow{{#1}})}

%{(\stackrel{#1}{\rightarrow})}

\usepackage[utf8]{inputenc}
\usepackage{dashbox}

\newcommand{\Gam}{(\Gamma_1,\Gamma_0,\sigma, \tau, \iota, \star)} %% groupoid structure
\newcommand{\Gamp}{(\Gamma_1',\Gamma_0',\sigma', \tau', \iota', \star')} %% groupoid structure

\newcommand{\trll}{\trl^*}  %% product x

\newcommand{\N}{{\mathbb N}}

%%% Arrow commands João: end

% 
% 
% \usepackage{psfrag,graphicx}
% \usepackage{tikz}
% \usepackage{pgfplots}
% \usepackage{dcolumn}
% \usepackage{bm}
% \usepackage{amsfonts,amssymb,amsmath} 
% \usepackage{amssymb}
% \usepackage{xcolor}
% \usepackage{tikz}
% \usetikzlibrary{decorations.markings}
% \usetikzlibrary{arrows}
% %\usepackage{bbold}
% \usepackage{epstopdf}
% \usepackage{mathbbol}
% \usepackage{verbatim}
% \usepackage{hyperref}
% %\usepackage{tikz-cd} 
% \let\conjugatet\bar
% 
% \usepackage{amsthm}

%%JOAO PACKAGE FOR FIGURES
\usepackage{rlepsf}
\usepackage{float}

%}}}
%{{{ MACROS

\def \g {\gamma}

\def \Homeo {{\rm Homeo}}

\def \trr {{\triangleright}}
\def \trl {{\triangleleft}}
\def \tn {{ \otimes}}
\def \ra {\xrightarrow}
\def \d {{\partial}}
\def \Gc {{\cal G}}
\def \id {{\rm id}}
\def \AUT {{\rm AUT}}
\def \LBG {{\rm LB}}
\def \WBG {{\rm WB}}
\def \VBG {{\rm VB}}
\def \MCG {{\rm MCG}}
\def \BG {{\rm B}}
\def \TRANS {{\rm TRANS}}
\def \F {{ \cal F}}
\def \C {{ \mathbb{C}}}
\def \R {{ \mathbb{R}}}
\def \Z {{ \mathbb{Z}}}

\newcommand{\Spic}{
%% picture of S_a generator
{\underbrace{\hspace{-3mm}
\xymatrix@R=22pt@C=0pt
{
  &\,1\ar@{{}{ }{}}@/_1.3pc/[rrrr]^{\,\,\,\,\, ...}  \mbox{} \ar[d]
  & \mbox{ } \ar[d]<1.2ex> \ar[d]<-1.2ex> \ar[d]<0ex>
  &   \mbox{ a}  \ar[drrr]|\hole
 \ar@{{}{ }{}}@/_1.3pc/[rrrrrr]^{\,\,\,\,\,\,...} &&& \mbox{\hspace{-5mm} a+1} \ar[dlll]
 & & {}\ar[d]<1.2ex> \ar[d]<-1.2ex> \ar[d]<0ex> & \ar[d] \mbox{n} \\ 
& {} & {}  & \hspace{.13in}   && {}& & &{} & \hspace{.1in} 
}
\hspace{-2mm}
}_{n \textrm{ strands}}
}
}
\newcommand{\Smpic}{
%% picture of S_a generator
{\underbrace{\hspace{-3mm}
\xymatrix@R=22pt@C=0pt
{
  &\,1\ar@{{}{ }{}}@/_1.3pc/[rrr]^{\,\,\,\,\,\, ...}  \mbox{} \ar[d]
  & \mbox{ } \ar[d]<1.2ex> \ar[d]<-1.2ex> \ar[d]<0ex>
  &   \mbox{ a}  \ar[drrr]
 \ar@{{}{ }{}}@/_1.3pc/[rrrrrr]^{\,\,\,\,\,\,\,\,...} &&& \mbox{\hspace{-5mm} a+1} \ar[dlll]|>>>>>>>\hole
 & & {}\ar[d]<1.2ex> \ar[d]<-1.2ex> \ar[d]<0ex> & \ar[d] \mbox{n} \\ 
& {} & {}  & \hspace{.13in}   &&& {}& & &{} & \hspace{-6mm} 
}
\hspace{-4mm}
}_{n \textrm{ strands}}
}
}

\newcommand{\Vpic}{
%% picture of S_a generator
{\underbrace{\hspace{-3mm}
\xymatrix@R=22pt@C=0pt
{
  &\,1\ar@{{}{ }{}}@/_1.3pc/[rrrr]^{\,\,\,\,\, ...}  \mbox{} \ar[d]
  & \mbox{ } \ar[d]<1.2ex> \ar[d]<-1.2ex> \ar[d]<0ex>
  &   \mbox{ a}  \ar[drrr]
 \ar@{{}{ }{}}@/_1.3pc/[rrrrrr]^{\,\,\,\,\,\,\,\,...} &&& \mbox{\hspace{-5mm} a+1} \ar[dlll]
 & & {}\ar[d]<1.2ex> \ar[d]<-1.2ex> \ar[d]<0ex> & \ar[d] \mbox{n} \\ 
& {}  {}  & \hspace{.13in}   &&& {}& & &{} & \hspace{.1in} 
}
\hspace{-2mm}
}_{n \textrm{ strands}}
}
}

\newcommand{\In}{
%% picture of S_a generator
{\underbrace{\hspace{-3mm}
\xymatrix@R=15pt@C=0pt
{
 &\,1\ar@{{}{ }{}}@/_1.3pc/[rrrrr]^{\,\,\,\,\,\,\, ...}  \mbox{} \ar[d]
  & \mbox{ } \ar[d]<1.2ex> \ar[d]<-1.2ex> \ar[d]<0ex>
  & 
 && & {}\ar[d]<1.2ex> \ar[d]<-1.2ex> \ar[d]<0ex> & \ar[d] \mbox{n} \\ 
 &\hspace{.13in}  & &&&  &  & \hspace{.1in} 
}
\hspace{-2mm}
}_{n \textrm{ strands}}
}
}

\newtheorem{Theorem}{Theorem}

\newtheorem{Lemma}[Theorem]{Lemma}
\newtheorem{Proposition}[Theorem]{Proposition}
\newtheorem{Fundamental Theorem}{Fundamental Theorem}

%%% the next bit taken from paul's martin07.tex macros file %%%
%%% it provides macros for paragraph numbering where needed %%%
\newcounter{minidef}[section]
\renewcommand{\theminidef}{\thesection.\arabic{minidef}}

\newcommand{\mdef}{\refstepcounter{minidef} 
\noindent {\bf [\theminidef]} }

\newcommand{\peq}[1]{\hyperref[#1]{{\bf[\ref{#1}]}}}
  %% the brackets here should match \mdef
%\newcommand{\paref}[1]{({\bf \ref{#1}})}  %% referencing a paragraph number

%%% This is what we had before (using Joao's whole-paper numbering scheme):  %%%
%\newcommand{\mdef}{\refstepcounter{Theorem}\medskip\noindent{\bf (\theTheorem) } }

%%%%%%%%%%%%%%%%%% Now back to Joao's macros %%%
%%% Except that ppm has put in the next line %%%

\theoremstyle{definition}  %% non-italic block type for subsequent environments

\newtheorem{Example}[Theorem]{Example}
\newtheorem{Definition}[Theorem]{Definition}
\newtheorem{Remark}[Theorem]{Remark}

%\newenvironment{Proof}[1][Proof]{\textbf{#1} }{ \rule{0.5em}{0.5em}}

% \newenvironment{Proof}[1][Proof:\,\,]{\hskip-0.5cm\textbf{#1}}{$\qed$}

  %% PPM annotations/questions etc
  %% annotations/questions etc
    %% PPM annotations/questions etc
\newcommand{\ignore}[1]{}
\newcommand{\beq}{\begin{equation}}
\newcommand{\eq}{\end{equation}}

\newcommand{\AGr}[2]{{\mathbf \Gamma}_{#2}(#1)}  %% action gpoid hook
%% (see also later!)

%}}}
%\input links   %%% locations of other directories that are used.

%%%% JOAO NEW COMMANDS

  %% annotations/questions etc
%\newcommand{\ppm}[1]{\textcolor{green}{#1}} 

%\usepackage{amssymb,graphicx}  %% promoted to main package list above.
%\newcommand\CircArrowLeft[1]{\stackengine{-.5ex}{#1}{\CAL}{O}{c}{F}{F}{L}\hskip-0.6cm*}

%}}}
%}}}
\begin{document}

%{{{ TITLE/ABS

%%\newcommand{\biker}{groupoid birack}
%%\newcommand{\Biker}{Groupoid birack}
%\newcommand{\biker}{birack groupoid}
%\newcommand{\Biker}{Birack groupoid}
\newcommand{\biker}{bikoid}
\newcommand{\Biker}{Bikoid}

\newcommand{\wbiker}{welded \biker}                
\newcommand{\wwbiker}{W-\biker}

\title{Representations of the Loop Braid Group and 
  Aharonov-Bohm {like} effects in discrete (3+1)-dimensional higher gauge theory}
% 
% \author[1,2]{Alex Bullivant\footnote{E-mail address: 
% a.l.bullivant@leeds.ac.uk}}
% \affil[1]{School of Physics and Astronomy, University of Leeds, Leeds, LS2 9JT, United Kingdom}
% \affil[2]{Department of Pure Mathematics, University of Leeds, Leeds, LS2 9JT, United Kingdom}
% \author[2]{Jo\~ao Faria Martins\footnote{E-mail address: j.fariamartins@leeds.ac.uk}}
% %\affil{School of Physics and Astronomy, University of Leeds, Leeds, LS2 9JT, United Kingdom}
% \author[2]{Paul Martin\footnote{E-mail address: 
% p.p.martin@leeds.ac.uk}}

% 	{School of Mathematics, University of
% 		Leeds, Leeds, LS2 9JT, United Kingdom}
% 	{School of Physics and Astronomy, University of
% 		Leeds, Leeds, LS2 9JT, United Kingdom}

\author {Alex Bullivant\footnote{E-mail address: 
a.l.bullivant@leeds.ac.uk}; \quad Jo\~ao Faria Martins\footnote{E-mail address: j.fariamartins@leeds.ac.uk},\quad
 Paul Martin.\footnote{E-mail address: 
p.p.martin@leeds.ac.uk}\\

	{School of Mathematics, University of
		Leeds, Leeds, LS2 9JT, United Kingdom}
}
\maketitle

%}}}

%{{{ ABS

{We show that representations  of the loop braid group  arise from Aharonov-Bohm like effects in  finite 2-group  (3+1)-dimensional topological higher gauge theory.
For this we introduce a minimal categorification  of biracks, which we call W-bikoids (welded bikoids).
Our main example of  W-bikoids  arises from finite 2-groups, realised as crossed modules of groups. Given a W-bikoid, and hence a groupoid of symmetries, we construct a family of unitary  representations of the  loop braid group derived from representations of the groupoid algebra. 
We thus give a candidate for higher Bais' flux metamorphosis, and hence also   a version of a `higher quantum group'.}

\smallskip
\noindent{{\it Keywords:}
  Loop braid group, virtual braid group, welded braid group, welded knot, mapping class groups,
  knotted surfaces, categorification,
  higher gauge theory, {flux metamorphosis}, topological phases of matter in 3+1D, loop excitations, topological quantum computing, {weak Hopf algebras}, biracks and biquandles}
%}}}
%\tableofcontents 

\section{Introduction}
%{{{ INTRO

%{{{ PART 1

The motivation for this work is to describe the
observable collective  %braiding
properties 
of loop-like quasiparticles \cite {WalkerWang,paul_et_al,LL,CL,EN,KBS}
arising in (3+1)-dimensional topological phases of matter with higher
gauge symmetry \cite{Our1,Our2,kapustin,Wang2017},
thereby modelling
{corresponding higher Aharonov-Bohm  effects}.

{Higher gauge theory is a version of gauge theory that features 2-dimensional holonomies along surfaces \cite{BaezHuerta11,baez_schreiber}, hence in particular  along trajectories of loops moving in 3-dimensional space.}
{In this paper we model Aharonov-Bohm phases that are conjecturaly related to the 2-dimensional holonomies along the surfaces traced by loop-particles as they move.}% as we elucidate here.}  
%Aharonov-Bohm type effects
%for charged, {unknotted and unlinked loop-particles} in (3+1)-dimensions \cite{LL,EN}.

In a topological system the observable collective properties of particles are, at
most, their braidings.
%
%% {Topological phases
%% of matter are an intriguing area of study  both for intrinsic
%% interest and for their potential as a realisation of fault
%% tolerant quantum computation \cite{Nayak,PachosB,stern}.}
%
%%Let $n$ be a positive integer.
{%As in \cite{WalkerWang}, t
The braiding % structure
of {(unknotted and unlinked)} {oriented circles}  is described by the
{\em $n$-loop braid group} $\LBG_n$  % in $n$-loops
\cite{%WalkerWang,
baez_et_al,damiani}.
So we study representations of $\LBG_n$.
Technically this {group} can be realised as 
%This is
the mapping class group
$\MCG(D^3,C_n)$
of self-mappings of the 3-disk $D^3$ that
are the identity on $\partial D^3$ and 
fix $C_n$,
%where $C_n$ is
an unlinked union of {unknotted oriented circles, setwise, in addition preserving the orientation on the circles}. % $S^1$. %, and $D^3$ is the 3-disk.
}%
(We review the % loop braid group
definitions in \S \ref{motion}.)
%
%}}}
%{{{ viz
%
%%\ppm{Paul's brief and temporary attempt at an intro to $\LBG_n$...}
%
%The group $\LBG_n$
{The group $\LBG_n$ can also be described %\cite{baez_et_al}
in terms of loop motions \cite{baez_et_al,GS}; see \cite[\S 2]{damiani}.
}

Let us give a visualization of  %topological
motions
of loops  in {(3+1)-dimensions} %,
%which will then
that would be expected to 
induce `Aharonov-Bohm like' effects \cite{LL,CL,EN,KBS,PachosB}.
%sufficient for this introduction
It will be helpful (cf. \cite{baez_et_al,damiani})
first to work with two circular and parallel loops, and in a frame in which
one of the loops is fixed.
% We can only describe `complete' motions in the `pure' subgroup this
% way, but this issue can be addressed later.
Thus for example
consider two loops that are initially coaxial
(as on the left in \eqref{eq:fig1} below), and work in a frame in which the
upper loop is fixed. We can then visualize a motion by showing the
`Dirac sheet', the surface swept out by the other loop in this frame.
At times $t=0,1,2$
we might then have:
\newcommand{\fig}[2]{\includegraphics[width=#1cm]{#2}}
\beq \label{eq:fig1}
\fig{2.9}{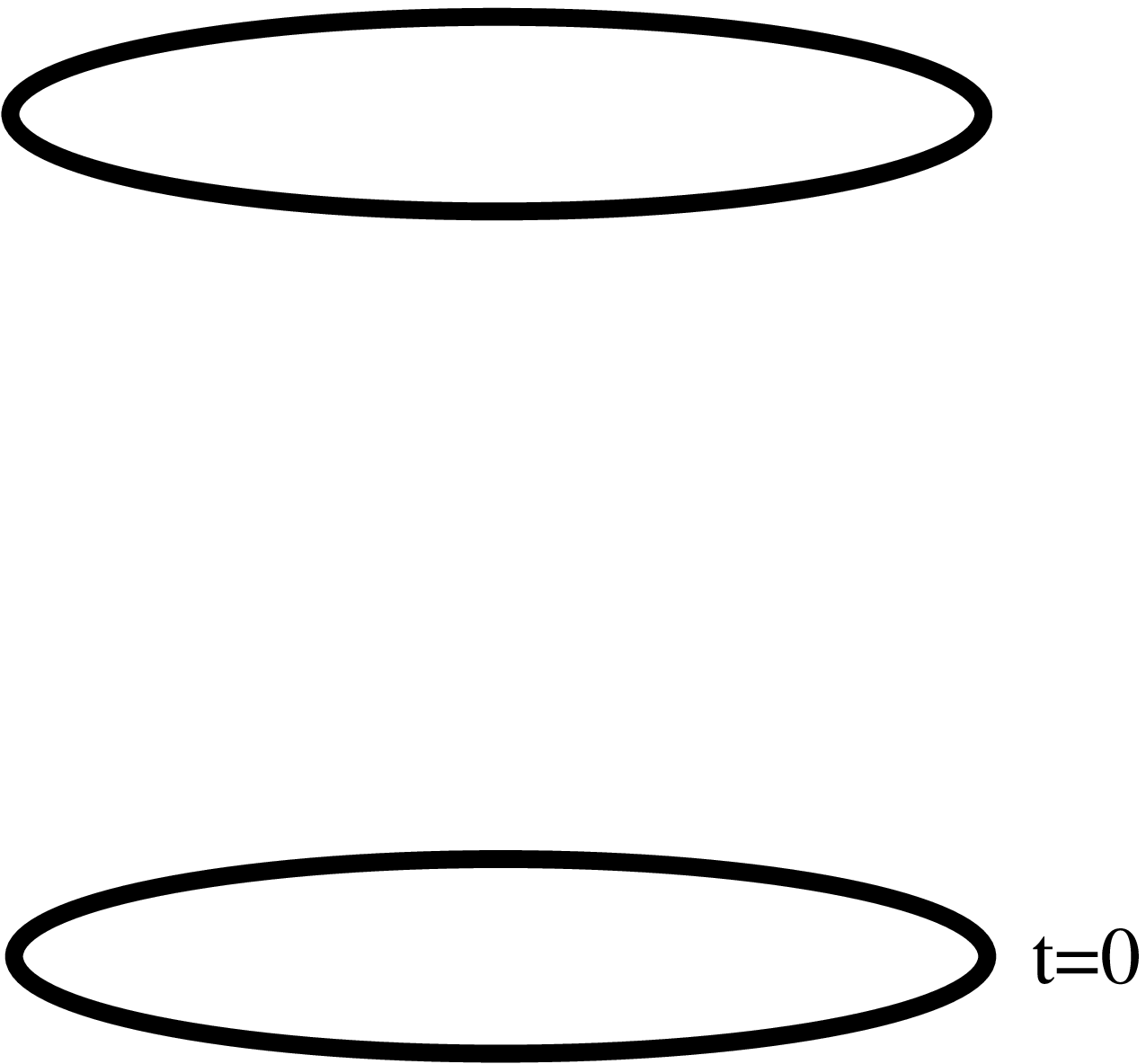} \qquad\hspace{.8631in}
\fig{3.0}{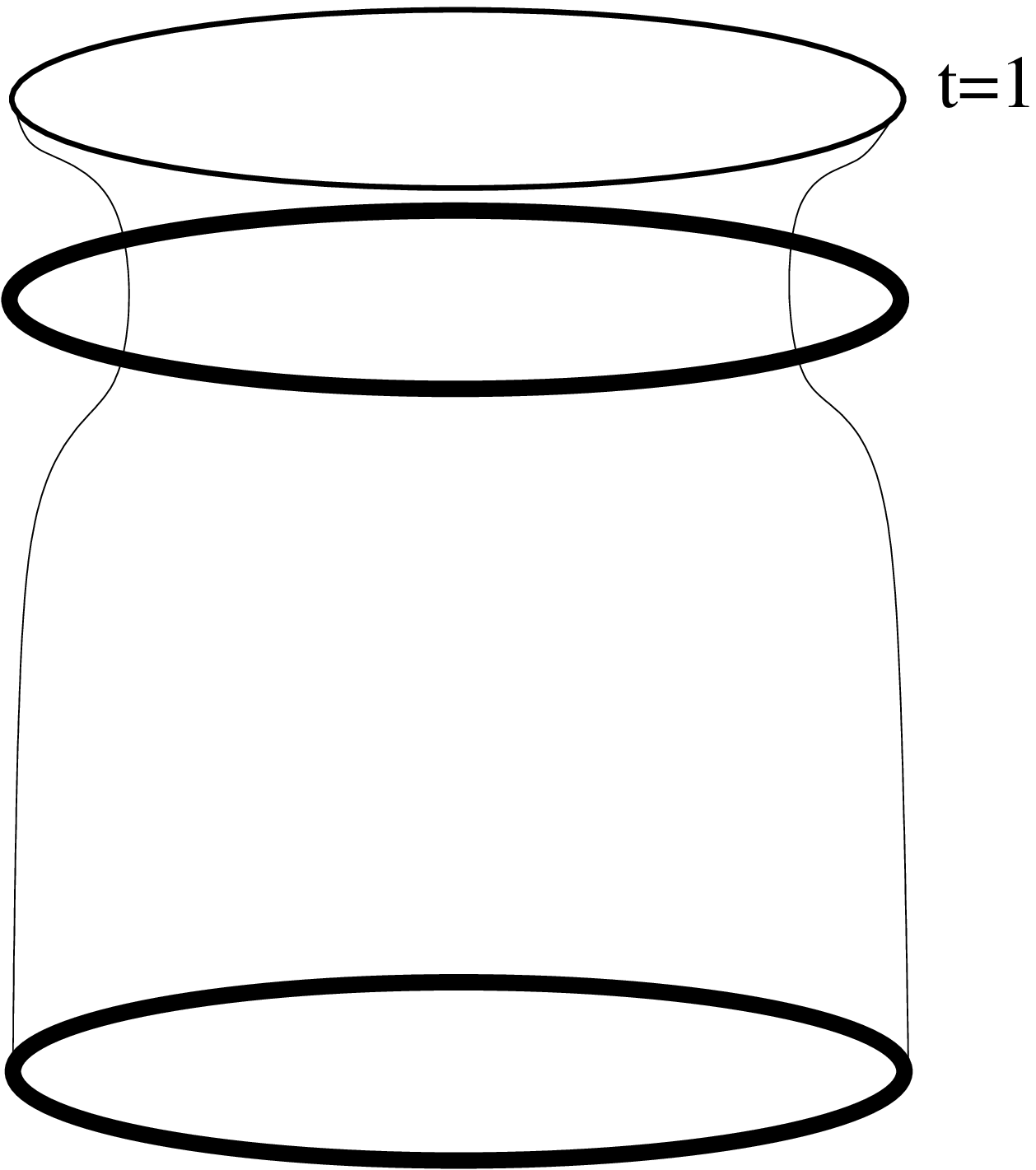} \qquad\hspace{.631in}
\fig{4.2}{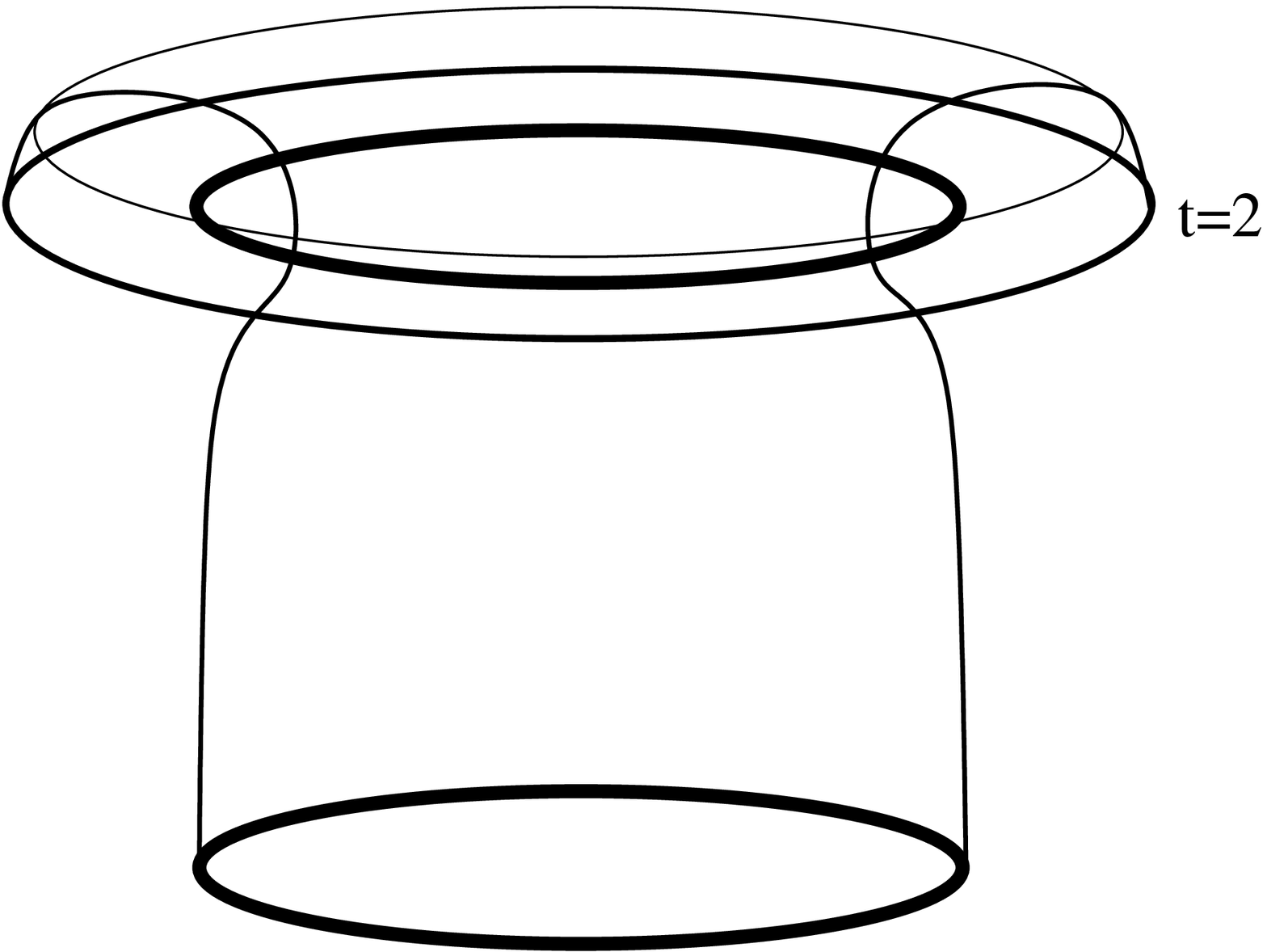}
\eq
{This is  analogous to tracking the world-line of a
point-particle moving around another point in a plane. Indeed our
particular motion is reproduced by rotating this plane into {3 dimensions} about a suitable
axis in the plane (which axis then becomes the coaxis of the two
loops).}
Note that our motion up to $t=1$ can be completed to a pure loop-braid in two
topologically distinct ways, that are both {distinct} from
the `un-motion'.
Firstly the `moving' loop can return to
its original position first as at $t=2$, and from there in the obvious
direct way, {as shown in  figure \eqref{eq:fig1.2} below.}
\beq \label{eq:fig1.2}
%\fig{6.5}{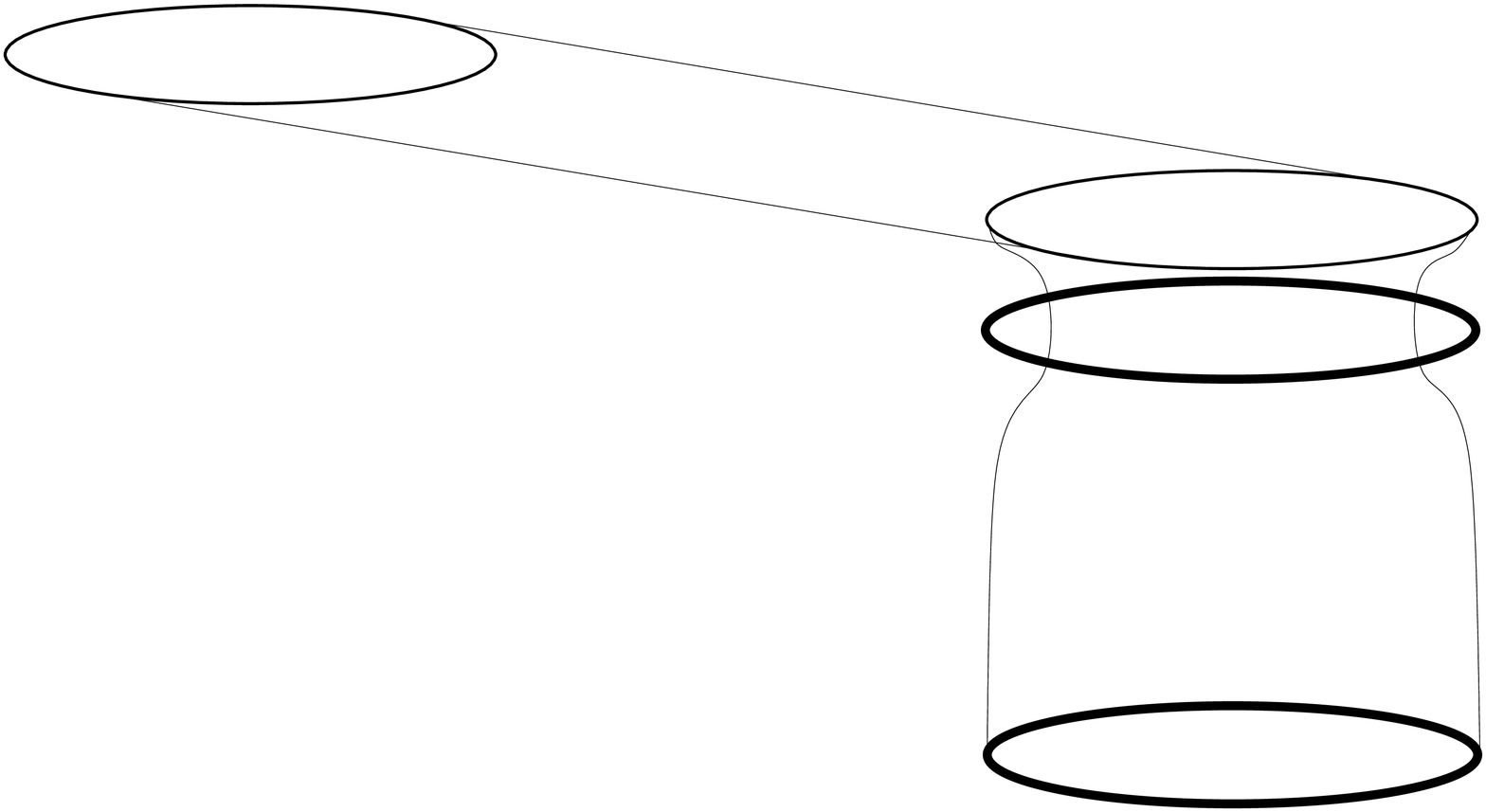} \hspace{.5in}
\fig{3.5}{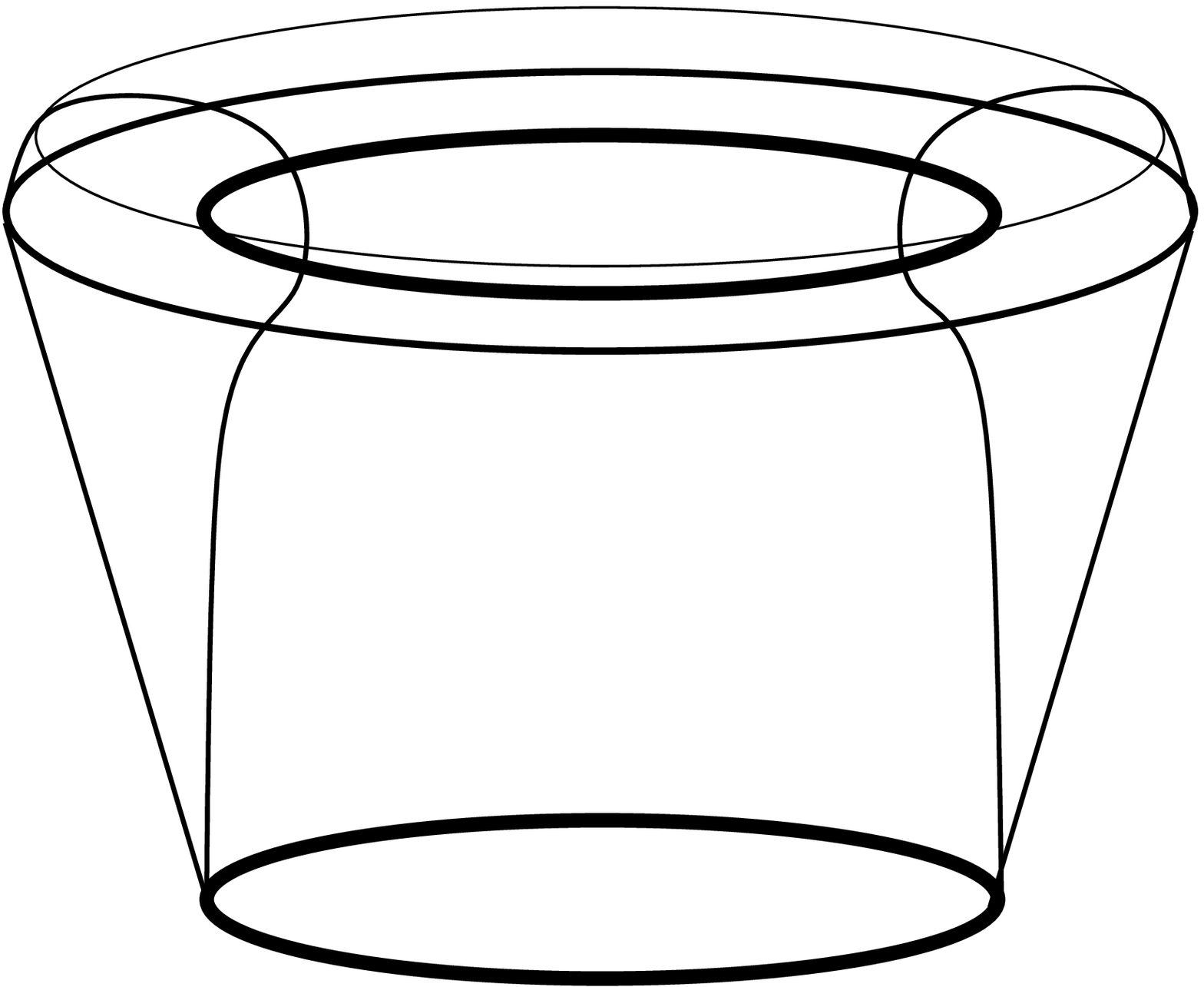}
\eq
Secondly from $t=1$ we can then break the axial symmetry and move to
some position high on the left, say, then return to base in the
obvious way from there; {see  figure \eqref{eq:fig2} below.}
% (Both motions also describe disks sweeping out a solid.
% The first solid contains the fixed loop,
% the second does not.)

\beq \label{eq:fig2}
\fig{6.5}{loopin9.eps} \hspace{.5in}
\fig{6.5}{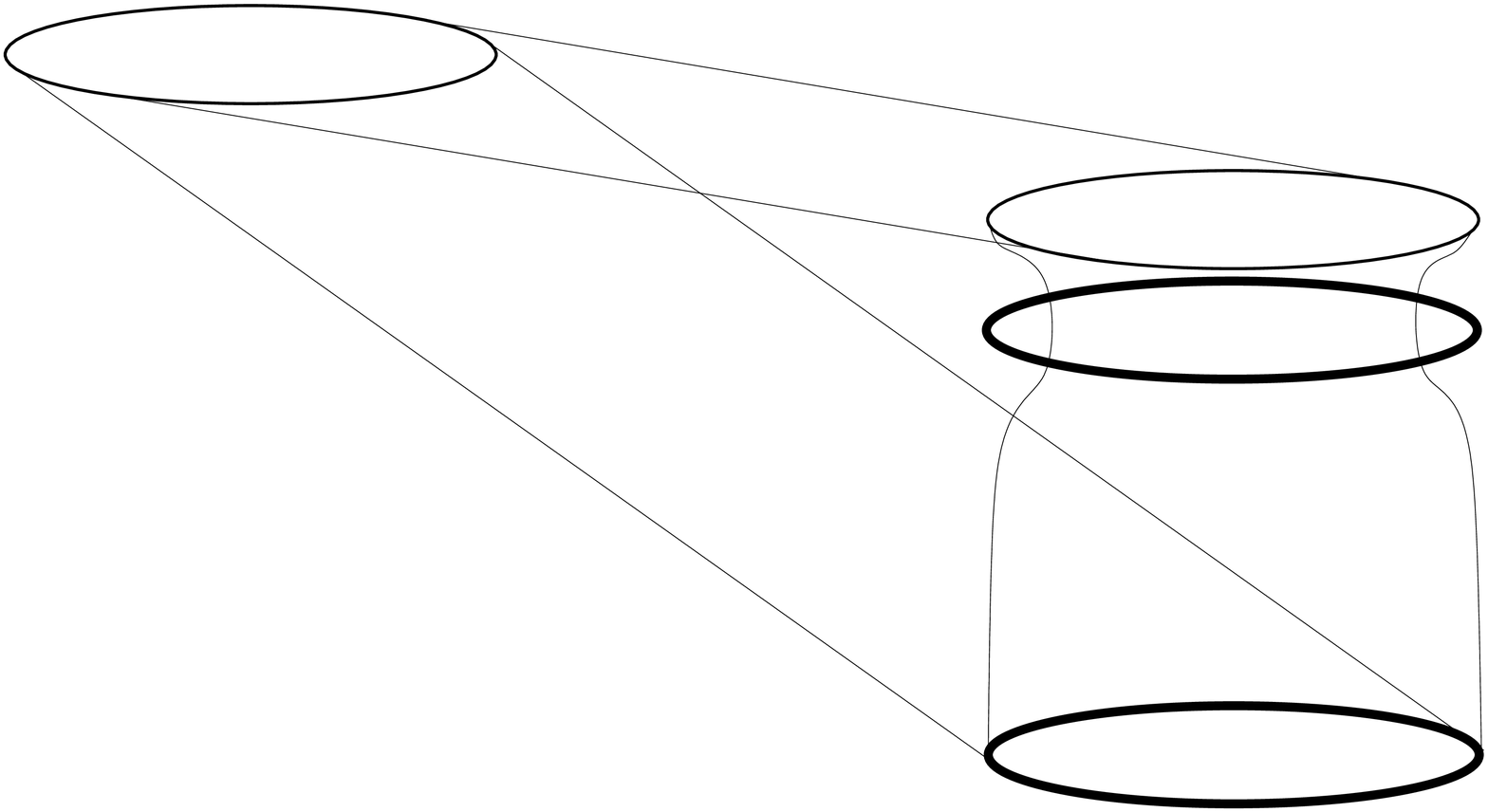}
\eq
{These two topologically distint motions of a loop around
another one play a key role in understanding Aharonov-Bohm like
effects for loop-particles; see \S\ref{phys}.}

%}}}
%{{{ sigma

{Let $\sigma$ denote the 
motion class % $\sigma$
represented by $t=0\rightarrow 1$ in \eqref{eq:fig1};
and $\nu$ the class  %$\nu$
represented by $t=1\rightarrow 3$ in \eqref{eq:fig2}.
(Keep in mind the moving frame
--- the loops are exchanged by $\sigma$ and $\nu$.)
Extending to a stack of $n$ loops, % in the obvious way,
it turns out that these motions generate $\LBG_n$.
Thus given (I) a corresponding $n$-loop-particle Hilbert space
${\mathcal H}$, {in some topological theory},
and (II) transformations on ${\mathcal H}$
corresponding to these motions,
we will have a representation of {the loop braid group} $\LBG_n$.
So far there is no higher analogue of the
solenoid/electron holography setup or 
experiments telling us how to obtain (I,II).
But there are formal recipes coming from higher gauge theory
-- as we shall show in \S\ref{physmot}.}
{Since these
recipes
are necessarily partly {\it ad hoc,} we then rely on combinatorial methods to
%prove
verify the well-definedness Theorem {for the representations of $\LBG_n$},
which takes up the remainder of the paper.
%(...to finish)
}

%}}}
%{{{ welded

\smallskip

Writing $\sigma_i$ for the $\sigma$ motion exchanging loops $i$ and $i+1$ it is
easy to verify that the $\sigma$ and $\nu$ generators obey braid
relations:
$\sigma_i \sigma_{i+1}\sigma_i = \sigma_{i+1}\sigma_i\sigma_{i+1}$,
and certain mixed braid relations between them.
These relations give a homomorphism with
%the welded braid group,
a group
defined by generators and relations.
{
%Given the diagrammatic  point of view used in this paper,
For the combinatorial proof
we will use % a different realisation
the {\em isomorphism} \cite{baez_et_al,damiani} of $\LBG_n$
%This group is {\cite{baez_et_al,damiani}} isomorphic
to the {\em welded braid group} $\WBG_n$ % in $n$-strands
\cite{fenn_et_al,kamada,KLam}.
This presentation arises in virtual knot theory.
}%
\ignore{{
This $\WBG_n$ is a quotient of the  {\em virtual braid group}.
$\VBG_n$.
% , by imposing the so-called  \red{``Forbidden Move $F_1$'' \cite{KLam,fenn_barth,BBD}, here called the {\em W-move} \cite{kamada}'', a short for {\em welded move}. %\eqref{eq:F}.
Hence, we also connect with virtual knot theory \cite{kauffman,kamada}.
}}%
%
%
%%Recall that i
In virtual knot theory we have `virtual' knot and braid diagrams
\cite{damiani,kamada,kauffman,KLam}; {see \S \ref{ss:vbg}.}
These diagrams have positive, negative and virtual crossings between strands, {as below}:
$$
\xymatrix@R=1pt{\\ \\\textrm{positive crossing}}\hskip-1cm
\xymatrix{ &\ar[dr]|\hole &\ar[dl] \\ && }\xymatrix@R=1pt{\\\\\\\,\,,} \qquad \qquad
\xymatrix@R=1pt{\\ \\\textrm{negative crossing}} \hskip-1cm
\xymatrix{ &\ar[dr] &\ar[dl]|\hole \\ && }\xymatrix@R=1pt{\\\\\\\,\,,}\qquad \qquad
\xymatrix@R=1pt{\\ \\\textrm{virtual crossing}}\hskip-1cm
\xymatrix{ &\ar[dr] &\ar[dl] \\ && }\xymatrix@R=1pt{\\\\\\\,\,.}
$$
% We review  virtual and welded braid groups in \S \ref{ss:vbg}.
% A useful notion in this setting is that of a birack \cite{fenn_biquandle}.
% 
% In order to formalise the representations of the loop braid group derived
% from finite 2-group higher gauge theory in 3+1 dimensions, we develop
% a (lightly) categorified notion  of biracks  %and biquandles
% {\cite{fenn_biquandle,nelson,NelsonBook,fenn_barth}} which we call
% ``\biker s''.
% The connections of \biker s to loop-excitations in
% (3+1)-dimensional topological higher gauge theory will be made
% explicit in \S \ref{motivation1} and \ref{phys}.
% 
%}}}
%{{{ VB
% \newcommand{\VV}{{\mathcal V}}    %% category
% \newcommand{\VB}{{\mathcal {VB}}}    %% category
% \newcommand{\WB}{{\mathcal {WB}}} %% category
% \newcommand{\B}{{\mathcal {B}}}    %% category
% \newcommand{\VVV}[2]{V_{#1}[#2]}   %% notation for elementary transpos
% \newcommand{\SSS}[2]{S_{#1}^+[#2]}   %% notation for braid gen
% \newcommand{\SSM}[2]{S_{#1}^-[#2]}   %% notation for braid gen
% \newcommand{\SSPM}[2]{S_{#1}^{\pm}[#2]}   %% notation for braid gen
\smallskip

\mdef \label{def:VBG}
The $n$-strand {\em virtual braid group} $\VBG_n$ {(discussed in \S\ref{ss:vbg})}
is generated, {as a monoid}, by the elements $\SSS{a}{n}$, {$\SSM{a}{n}$} and $\VVV{a}{n}$, where $a\in
\{1,\dots,n-1\}$, 
{subject to certain relations; see  \S\ref{ss:vbg}.}
%\ppm{I suggest to finish this definition by giving the relations.}
%\\
{These generators 
 may be depicted as $n$-strand virtual braid diagrams, as below,
with composition via vertical stacking.}
\beq\label{eq:SV}
\hskip-2mm{\xymatrix@R=1pt{\\ \SSS{a}{n} =}}
\quad {\Spic}\xymatrix@R=1pt{\\\\\\\,,} \quad \,
{\xymatrix@R=1pt{\\ \SSM{a}{n} =}}
\quad
\Smpic\xymatrix@R=1pt{\\\\\\\ ,}
\quad\,
        {\xymatrix@R=1pt{\\ \VVV{a}{n} =}}
\Vpic \xymatrix@R=1pt{\\\\\\\,.}\quad
\eq
% \beq\label{eq:SV}
% {\xymatrix@R=1pt{\\ \SSS{a}{n} =}}
% \Spic 
% \hskip-0.1cm
% \xymatrix@R=1pt{ \\ %\quad
%   \textrm{ $\;$ (the crossing is at $a$-th string);
%     $\;\;\;\;\;$  } \quad\quad }
% \hskip-0.3cm
%         {\xymatrix@R=1pt{\\ \VVV{a}{n} =}}
% \Vpic \quad
% \eq
%{
{The relations in $\VBG_n$ between generators are derived from the usual classical, virtual and mixed Reidemeister moves  between virtual braid diagrams \cite{kamada,KLam,kauffman,BBD};  see  equations (\ref{eq:SRI})--(\ref{eq:VRIII}), and the paragraph below.}
{One passes from  $\VBG_n$  to $\WBG_n$ by furthermore imposing the  the so-called  {\em``Forbidden Move $F_1$''} \cite{KLam,BBD,fenn_barth}, here called the {\em W-move} \cite{kamada}'', a short for {\em welded move}, {displayed in Equation \eqref{eq:F}.}}

{On generators, the isomorphism $\WBG_n \rightarrow \LBG_n$ is given by the motion in  Fig. \ref{ISotopy}.}

\smallskip

\mdef {We have categories $\VB$ and $\WB$ of virtual and welded braid groups. The set of objects of both is $\N$, and $\hom(n,n)$ are $ \VBG_n$ and  $ \WBG_n$, respectively. (There are no morphisms $n\to m$ if $n\neq m$.)}
% 
% 
% {Here we  allow crossings at equal height.}
%From this we note that
%  the virtual braid groups form a monoidal category $VB$ with monoidal
%the diagrams have a
The composition $\otimes$ of diagrams {through} horizontal stacking
 gives that $\VB$ and $\WB$ are monoidal categories.
Note that $\SSS{2}{3} = 1_1 \otimes \SSS{1}{2}$, $\SSS{2}{3} =  \SSS{1}{2} \otimes 1_1$ and so on. {Here $1_1$ is the identity morphism $1 \to 1$.}

{The Reidemeister III-move \eqref{eq:R3} in $\VB$ and $\WB$, 
 is a consequence of the monoidal category structure in $\VB$ and $\WB$, and the particular case $\SSS{1}{3}\,\SSS{2}{3}\,\SSS{1}{3}=\SSS{2}{3}\,\SSS{1}{3}\,\SSS{2}{3}$, which can be written as:}
\begin{equation}\label{bisRIII}{(\SSS{1}{2} \otimes 1_1)\, (1_1 \otimes \SSS{1}{2})\, (\SSS{1}{2}\otimes 1_1)= (1_1 \otimes \SSS{1}{2})\, (\SSS{1}{2} \otimes 1_1)\, (1_1 \otimes \SSS{1}{2}).}
\end{equation}
{And the welded-move in $\WB$ is a consequence of the monoidal structure and the particular case:}
\begin{equation}\label{bisW}{(\VVV{1}{2} \otimes 1_1)\, (1_1 \otimes \SSS{1}{2})\, (\SSS{1}{2}\otimes 1_1)= (1_1 \otimes \SSS{1}{2})\, (\SSS{1}{2} \otimes 1_1)\, (1_1 \otimes \VVV{1}{2}).}
\end{equation}

% Now put  $S=\SSS{1}{2}$
% and $V=\VVV{1}{2}$.
% Then the remaining relations for $\VBG_n$ are
% $Y_1(S,S)=Y_2(S,S)$,  % where $S=S_1(2)$,
% $Y_1(V,V) = Y_2(V,V)$,
% $Y_1(V,S) = Y_2(V,S)$.
% %....
% %}
% 
 \smallskip

\mdef {One can show
{(a proof is in e.g. \cite[Thm. 2]{Manturov2005})}
that the category $\VB$
of virtual braid groups
contains the braid category
%${\mathcal B}$
$\B$
(defined in  e.g. in  \cite[\S XIII.2]{Kassel}) as a subcategory} --- {it is the part
monoidally generated in $\VB$  by the $\SSPM{1}{2}$.}
{Suppose we have a monoidal
representation of $\B$ {in a symmetric category; the main example to have in mind is the category of vector spaces.}
This lifts automatically {\cite{Brochier}} to a representation of
$\VB$ by mapping {$\VVV{1}{2}$} to the transposition of tensor factors.
However, for representations of
the loop braid group we require
the welded braid group, and in this 
%that
the {\em welded move} \eqref{bisW} is also satisfied.
The satisfaction of the welded move is one key outcome of our higher
gauge construction.}

% \red{
% The isomorphism $\WBG_n \rightarrow \LBG_n$ is given by the motion in  Fig. \ref{ISotopy}.}

% {essentially} given by taking \red{$\SSS{1}{2}$} to
% the motion from $t=0$ to $t=1$ in \eqref{eq:fig1}
% (N.B. the moving frame in which the picture is drawn --- the actual final
% position of the loops here is given by the interchange of their initial positions), and
% $\SSS{1}{2}\,\,\VVV{1}{2}$ to the motion in \eqref{eq:fig2}. See also Fig. \ref{ISotopy}.
% }

%}}}
%{{{ Birack (PLEASE DO NOT DELETE THESE FOLD-BRACKET PAIRS)

\smallskip

{In order to formalise the representations of $\LBG_n$ derived
from finite 2-group higher gauge theory \cite{BaezHuerta11} in 3+1 dimensions, we develop
a (lightly) categorified notion  of biracks  %and biquandles
{\cite{fenn_barth,fenn_biquandle,nelson}}, which we call
``W-\biker s''.
The connection %of \biker s 
to loop-excitations in
(3+1)-dimensional topological higher gauge theory is in \S  \ref{phys}.}

\smallskip
\mdef
A
{\em birack}  $(X,/,\backslash)$ is a set $X$
with two operations
$  X \times X \rightarrow X$ written
$(x,y)  \mapsto y / x$     %(``$y$ over $x$'')
and
$(x,y)  \mapsto x \backslash y$    % (``$x$ under $y$''),
%
%
%In particular this says
such that $(x,y)\mapsto (y/x, x \backslash y)$
defines an invertible map
$S: X \times X \to X \times X$,
satisfying the set-theoretical Yang-Baxter equation \cite{st},
$\big(S \times {\rm \id}) \circ (\id\times S) \circ (S \times \id)
=(\id \times S) \circ (S \times \id)\circ (\id \times S)$;
and for any $a\in X$ the maps   %$x\rightarrow
$x \mapsto x/a$ and
% $x \rightarrow
$x \mapsto     x\backslash a$
are invertible. {Our conventions for biracks are spelled out in \S \ref{c:biracks}.}

\smallskip

As we  {now recall}, a birack  encodes the combinatorics of virtual braid
diagrams (and more generally of virtual knot and link diagrams)
\cite{kamada,KLam,kauffman,unsolved,manturov} up to
Reidemeister II and III moves between them.

\smallskip
\mdef
Given a birack $(X,/,\backslash)$, a $(X,/,\backslash)$-colouring
\cite{fenn_biquandle,NelsonBook} of a virtual braid diagram (more generally of a virtual {link} diagram \cite{fenn_et_al,kamada,KLam,kauffman})
is a map from
the set of edges of the diagram to $X$, satisfying the relations below
when four edges meet at a positive, negative or virtual crossing:
\begin{align}\label{birack-colour}
\xymatrix@R=15pt@C=15pt{&x\ar[dr]|\hole &y\ar[dl]\\ & y/x & x \backslash y}
\xymatrix@R=1pt{ \\ \\ \quad \textrm{, }}
\xymatrix@R=15pt@C=15pt{&y/x\ar[dr] &x\backslash y\ar[dl]|\hole\\ & x &  y}
\xymatrix@R=1pt{ \\ \\ \quad \textrm{ and }}
\xymatrix@R=18pt@C=15pt{&x\ar[dr] &y\ar[dl]\\ & y & x}\xymatrix@R=1pt{\\\\\\\,.}
\end{align}
(A crossing breaks an edge into two components, regardless of
it being over, under  or virtual crossing.)

A {\em biquandle} is a birack also imposing
Reidemeister I moves \cite{fenn_biquandle,NelsonBook} between virtual
knot diagrams.
The axioms of  {biquandles}
{ensure} that the number of
$(X,/,\backslash)$-colourings of a virtual link diagram is invariant
under all {classical} Reidemeister moves.
This implies that the number of $(X,/,\backslash)$-colourings is also invariant under virtual {and mixed} Reidemeister moves between virtual {link} diagrams \cite{kauffman}, hence defining a virtual {link}  invariant, called the {\it colour counting invariant} \cite{NelsonBook}.

{A {\em welded birack} \cite{fenn_barth} (called here a {\em W-birack})
makes the number of $(X,/,\backslash)$-colourings of virtual braid diagrams in addition
invariant under the welded move {\eqref{bisW}}. % (one the  two ``forbidden moves''
%\cite{kauffman} between virtual knot diagrams). 
 In the biquandle case this leads to the definition of invariants of welded knots
\cite{fenn_et_al,kamada,kauffman,martins_kauffman,Audoux,welded_link_groups}.}
%and % crucially
%representations of the welded, and hence loop braid groups.

%}}}
%{{{ birack actio

\smallskip

\mdef \label{pa:repX}
A birack  $(X,/,\backslash)$ yields, for each $n$, 
a (in this paper right)
action $\trl$ of
$\VBG_n$ 
on
$X^n$.     
The action on
$\underline{x} = (x_1,x_2,...,x_n) \in  X^n$ 
of $\SSS{a}{n}$ and of
$ \VVV{a}{n}$ is given by the {bottom lines of the diagrams below:}
$$
\xymatrix@R=15pt@C=0pt{ & x_1\ar[d]
  &\dots\ar[d]<1.2ex> \ar[d]<-1.2ex> \ar[d]<0ex>
  &x_{a-1}\ar[d] & x_a \ar[drrr]|\hole &&& x_{a+1}\ar[dlll]
  &x_{a+2}\ar[d]& \dots\ar[d]<1.2ex> \ar[d]<-1.2ex> \ar[d]<0ex> & \ar[d] x_n\\ 
& x_1 &\dots &x_{a-1}& x_{a+1}/x_a &&& x_{a}\backslash x_{a+1} &x_{a+2}& \dots & x_n
} \xymatrix@R=1pt{\\ \\ \quad \textrm{ and } \qquad } \hskip-0.4cm
\xymatrix@R=15pt@C=0pt{ & x_1\ar[d]
  &\dots\ar[d]<1.2ex> \ar[d]<-1.2ex> \ar[d]<0ex>
  &x_{a-1}\ar[d] & x_a \ar[drrr] &&& x_{a+1}\ar[dlll]
  &x_{a+2}\ar[d]& \dots\ar[d]<1.2ex> \ar[d]<-1.2ex> \ar[d]<0ex> & \ar[d] x_n\\ 
& x_1 &\dots &x_{a-1}& x_{a+1} &&& x_{a} &x_{a+2}& \dots & x_n
}\xymatrix@R=1pt{\\\\\\\,.}
$$
This is to say that:
\begin{align*}
  (x_1,\dots,x_{a-1},x_a,x_{a+1},\dots, x_n)
  \; \trl \; \SSS{a}{n}&\;
  =( x_1,\dots, x_{a-1},x_{a+1}/x_a,x_{a}\backslash x_{a+1}, x_{a+2}, \dots , x_n),\\
  (x_1,\dots,x_{a-1},x_a,x_{a+1},\dots, x_n)
  \; \trl\; \VVV{a}{n} &\; =( x_1,\dots, x_{a-1},x_{a+1}, x_{a}, x_{a+2}, \dots , x_n).
\end{align*}
Linearising, we  have a right-representation of $\VBG_n$ on
$(\C X)^{n  \otimes} \cong  \C (X^n)$.
%Here if $Y$ is a set then $\C Y$ denotes  free $\C$ vector space over
%$Y$.
%(N.B. Linearized $G$-set representations are unitary
%with the $G$-set as orthonormal basis.)
%if $X$ is an orthonormal basis of $\C(X)$.

\smallskip

{There are several ways to enrich  biquandle
{colouring counting invariants}  of links and welded links.
See for instance \cite{C1,C2,C3} for quandle cohomology classes, and
\cite{eisermann,Winter} for meridian-longitude refinements.
In this paper we
give the first representation theoretic steps in the development of 
a light categorification-based enrichment.
We will focus on  the corresponding representations of
the loop braid group. %The underlying notion is that of a bikoid. 
Invariants of welded and virtual knots %\red{derived from bikoids} 
will be addressed elsewhere \cite{prox}.}

%}}}
%{{{ bikoid

%\newcommand{\GLR}{(\Gamma,L,R)}  %%  math symbols for \biker
\newcommand{\Gwr}[1]{\Gamma^{\wr #1}}

\smallskip

The underlying notion is that of a
{\em \biker} $(\Gamma,X^+_{\Gamma})$; see \S\ref{s:biker}.
Let us sketch their definition.  First of all,
given a groupoid \cite{Higgins} $\Gamma$, we write $\Gamma_0$
for the set of objects, $\Gamma_1$  for the set of morphisms {(arrows)},
and $\sigma,\tau: \Gamma_1 \rightarrow \Gamma_0$ for the source and
target maps.
 We write $\Gamma^n$ for the $n$-fold product groupoid and
 $\Gwr{n}$ for the wreath product {\S\ref{thegr}}--- the semidirect product
 $\Gamma^n \rtimes \Sigma_n$ with the symmetric group $\Sigma_n$ acting
 by permutation of factors.
 We draw elements of $\Gwr{n}$ as top-to-bottom oriented permutation
 diagrams (like {$\VVV{a}{n}$} in \eqref{eq:SV} above) with edges decorated by
 elements of $\Gamma_1$. Multiplication is then by vertical stacking, {followed by the composition of the groupoid arrows living in the same strand {of a permutation diagram}.  
%The tensor product  map $\otimes: \Gwr{n} \times \Gwr{m} \rightarrow \Gwr{(n+m)}$ is given
%by horizontal stacking.

 \smallskip
A {\biker\ } is  a groupoid
$\Gamma$ 
and a birack $(\Gamma_0, /, \backslash )$, {called the underlying birack of the bikoid},
together  with two maps
$\Gamma_0 \times \Gamma_0 \to \Gamma_1$,
of the form: $$(x,y)\mapsto (x\ra{L(x,y)} x \backslash y) \textrm{ and } (x,y) \mapsto (y \ra{R(x,y)} y / x).$$ The $L,R$ maps  %should be put together and
are
combined as $L \otimes R$ in $\Gwr{2}$, and $X^+_\G = (L \otimes R)T$,
where $T$ is the elementary transposition. A bikoid can be graphically represented as in Equation \eqref{gbirackcolouring} below. {The over-under effect in the last version of $X_\G^+(x,y)$
carries no information. But this redundant information is useful for later tracking reasons, and in order to formulate bikoid colourings of braid diagrams, in a parallel way to birack colourings as in \eqref{birack-colour}.}
\begin{equation}\label{gbirackcolouring}
\xymatrix@R=1pt{\\\\(x,y)\stackrel{X^+_\Gamma}{\longmapsto}}
  \xymatrix{& x  \ar[dr]%|\hole
    \ar[dr]|<<<<<{L(x,y)\,\,\bullet\,\quad \quad\,\,\,}%|\hole
    & y  \ar[dl]\ar[dl]|<<<<<{\quad \quad \,\, \,\,\bullet\,\, R(x,y)}\\
& y/ x &x\backslash y
}\xymatrix@R=1pt{\\\\\\\ \quad=}\xymatrix{& x  \ar[dr]|\hole \ar[dr]|<<<<<{L(x,y)\,\,\bullet\quad \,\quad\,\,\,}|\hole & y  \ar[dl]\ar[dl]|<<<<<{\quad \quad \,\,\,\, \bullet\,\, R(x,y)}\\
& y/ x &x\backslash y
}\xymatrix@R=1pt{\\\\\\\,.}
\end{equation}

{The axioms that $X^+_\Gamma$ in Equation \eqref{gbirackcolouring} should obey are in  Def. \ref{de:biker}. A graphical way to state them is in   Equation  \eqref{eq:r3}, which, `component-wise' (i.e. only looking at the $L,R$ maps) is equivalent to  Equation  \eqref{compbikoids}.}
\smallskip

As we will see,  %realisations of
constructions for \biker s appear in finite (2-)group
topological gauge theory. There, the $L(x,y)$ and $R(x,y)$ arrows
encode Aharonov-Bohm phases \cite{LL,Bais1,Bais2,Bais3,Bais4} arising from flat connection holonomy and flat 2-connection 2-holonomy obtained when point-particles move in 2-dimensional space {and loop-particles} move in 3-dimensional space; see \S\ref{motivation1} and \S\ref{phys}. Hence, we call $L$ and $R$ ``holonomy arrows''.

%}}}
%{{{ One can... (welded bikoid)

\smallskip

One can  formulate the notion of a \biker\ colouring
of a virtual braid diagram (and more generally of a  virtual link diagram), which close to a positive crossing should follow the
pattern indicated in \eqref{gbirackcolouring}.
(At negative crossings we have the inverse of $X^+_{\Gamma}$, and at
virtual crossings only identity holonomy arrows are inserted.)
The axioms of \biker s
precisely ensure that, fixing colours of top and bottom strands, and
given equivalent virtual braid diagrams $B$ and $B'$, there is a
one-to-one correspondence between colourings of $B$ and colourings of
$B'$, which moreover preserves the composition of  holonomy arrows
living in each strand.

\smallskip

\mdef{A {\em welded \biker}
(abbrv.  {\em \wwbiker}) is a \biker\ that also obeys {the welded relation} \eqref{bisW}. In concrete terms this means \eqref{eq:wr1} and \eqref{weldedtoR}.
This ensures that the one-to-one correspondence between bikoid colourings of
the paragraph above also holds {if} $B$ and $B'$ are related by moves
between welded braid diagrams. }

% 
% Our lift from biracks to \biker s     %%is  that this
% allows for refinements through groupoid algebra
% representations of biquandle counting invariants of virtual braids and knots.

%}}}
%{{{ rep

\smallskip
\mdef\label{upper-rep}
{Given a \biker\ $(\Gamma,X^+_\G)$, then not only is $\Gamma_0$ a birack, but also 
%the set of morphisms 
$\Gamma_1$ 
%of $\Gamma$ 
is a birack.}
% 
% % These are
% % called the {\em underlying birack} and {\rm upper birack} of
% % $(\Gamma,X^+_\G)$.
% 
% \smallskip
% 
% \mdef
{Thus, by \peq{pa:repX} in particular we have {a right-action}} of  
$\VBG_n$ on $ \Gamma_1^n$, hence a representation of $\VBG_n$ on $ \C\Gamma_1^n$.}
%Here $\Gamma_1^n$ is the cartesian product $\Gamma_1 \times \dots \times\Gamma_1$.
{Let us write $\underline{\gamma} \in  \Gamma_1^n$ as:}
$$
\underline{\gamma}  =    %%%\C \Gamma_1^n = \C \left\{
\raisebox{.351in}{
  \xymatrix@R=30pt@C=0pt{ & x_1\ar[d]|{\gamma_1}
    &\dots\ar[d]<1.2ex> \ar[d]<-1.2ex> \ar[d]<0ex>
    &x_{a-1}\ar[d]|{\gamma_{a-1}} & x_a \ar[d]|{\gamma_a}
    & x_{a+1}\ar[d]|{\gamma_{a+1}} &x_{a+2}\ar[d]|{\gamma_{a+2}}
    & \dots\ar[d]<1.2ex> \ar[d]<-1.2ex> \ar[d]<0ex> & \ar[d]|{\gamma_{n}} x_n\\ 
    & y_1 &\dots &y_{a-1}& y_a & y_{a+1} &y_{a+2}& \dots
    & y_n } }\xymatrix@R=1pt{\\.}
%%% \qquad | \qquad \gamma_i \in \Gamma_1  \right\}
$$
%(Similarly  unitary.) % if we make different arrows of $\Gamma$ orthogonal.)
%\mdef
\label{act12}
Let $\star$ denote composition of arrows in $\Gamma$.
Then this {representation}  is given by:
\newcommand{\trls}{\trl^{\!*}}%  %% the right action of ...
\begin{equation*}
  \hskip-0.5cm
  \xymatrix@R=35pt@C=0pt{ & x_1\ar[d]|{\gamma_1}
    &\dots\ar[d]<1.2ex> \ar[d]<-1.2ex> \ar[d]<0ex>
    &x_{a-1}\ar[d]|{\gamma_{a-1}} & x_a \ar[d]|{\gamma_a}
    & x_{a+1}\ar[d]|{\gamma_{a+1}} &x_{a+2}\ar[d]|{\gamma_{a+2}}
    & \dots\ar[d]<1.2ex> \ar[d]<-1.2ex> \ar[d]<0ex> & \ar[d]|{\gamma_{n}} x_n\\ 
    & y_1 &\dots &y_{a-1}& y_a & y_{a+1} &y_{a+2}& \dots
    & y_n } 
\xymatrix@R=1pt{\\ \\\trls \SSS{a}{n} = \quad } \hskip-0.8cm
  \xymatrix@R=35pt@C=5	pt{ & x_1\ar[d]|{\gamma_1}  &
    \dots\ar[d]<1.2ex> \ar[d]<-1.2ex> \ar[d]<0ex>
    &x_{a-1}\ar[d]|{\gamma_{a-1}}
    & x_{a+1} \ar[d]|<<<<{\gamma_{a+1}\star R(y_a,y_{a+1})}
    & x_{a}\ar[d]|>>>>>{\gamma_{a}\star L(y_a,y_{a+1})}
    &x_{a+2}\ar[d]|{\gamma_{a+2}}
    & \dots\ar[d]<1.2ex> \ar[d]<-1.2ex> \ar[d]<0ex> & \ar[d]|{\gamma_{n}} x_n\\ 
& y_1 &\dots &y_{a-1}& y_{a+1}/y_a & y_{a}\backslash y_{a+1} &y_{a+2}& \dots & y_n}\xymatrix@R=1pt{\\\\.}
\end{equation*}
\begin{equation*}\label{act2}
%\hskip-0.8cm
  \xymatrix@R=25pt@C=0pt{ & x_1\ar[d]|{\gamma_1}  &\dots\ar[d]<1.2ex> \ar[d]<-1.2ex> \ar[d]<0ex>  &x_{a-1}\ar[d]|{\gamma_{a-1}} & x_a \ar[d]|{\gamma_a} & x_{a+1}\ar[d]|{\gamma_{a+1}} &x_{a+2}\ar[d]|{\gamma_{a+2}}& \dots\ar[d]<1.2ex> \ar[d]<-1.2ex> \ar[d]<0ex> & \ar[d]|{\gamma_{n}} x_n\\ 
& y_1 &\dots &y_{a-1}& y_a & y_{a+1} &y_{a+2}& \dots & y_n
  } 
\xymatrix@R=1pt{\\ \\\trls \VVV{a}{n} =  } \hskip-0.4cm
  \xymatrix@R=25pt@C=7pt{ & x_1\ar[d]|{\gamma_1}
    &\dots\ar[d]<1.2ex> \ar[d]<-1.2ex> \ar[d]<0ex>
    &x_{a-1}\ar[d]|{\gamma_{a-1}} & x_{a+1} \ar[d]|{\gamma_{a+1}}
    & x_{a}\ar[d]|{\gamma_{a}} &x_{a+2}\ar[d]|{\gamma_{a+2}}
    & \dots\ar[d]<1.2ex> \ar[d]<-1.2ex> \ar[d]<0ex> & \ar[d]|{\gamma_{n}} x_n\\ 
& y_1 &\dots &y_{a-1}& y_{a+1} & y_{a} &y_{a+2}& \dots & y_n}\xymatrix@R=1pt{\\\\.}
\end{equation*}
\wwbiker s yield representations of the  welded braid group $\WBG_n$
defined  in the same way.

%}}}
%{{{ ALG...level of structure
\smallskip
\Biker s have a level of structure which biracks do not have,
which is their underlying  groupoid $\Gamma$.
This  {leads the primary reason}
to introduce them. 
In particular consider the following. 

\smallskip

\mdef \label{de:gpoidalg}
%Groupoids give rise to associative groupoid algebras
%\cite{ga1,willerton2008twisted,Morton}.
The groupoid algebra $\C(\Gamma)$ of a groupoid $\Gamma$
\cite{ga1,willerton2008twisted,Morton}
is the free vector space $\C \Gamma_1$
%%on the set $\G_1$ of morphisms of $\Gamma$, 
%%with the product on generators  given by:
with product:
\begin{equation}\label{productinGA}
\big (x \ra{\gamma} y\big) \big (x' \ra{\gamma'} y'\big)
=\delta(y,x') \big (x \ra{\gamma \star \gamma'} y' \big ).
\end{equation}
{(Given $x,y \in \Gamma_0$, we put $\delta(y,x)$ to be $1$ if $y=x$ and $0$ otherwise.) If the set of objects of $\Gamma$ is finite, then $\C(\Gamma)$ is a unital algebra with unit: $\displaystyle 1_{\C(\Gamma)}=\sum_{x \in \Gamma_0} \iota(x)=\sum_{x \in \Gamma_0} \big( x \ra{\id_x} x\big) $. {We also have a *-structure \peq{star}.}}

% 
% There is a star-structure in $\C(\Gamma)$, which on generators has the form: \smash{$(x \ra{\gamma} y\big)^{*}=(y \ra{\gamma^{-1}} x)$}. Note that the regular representation of $\C(\G)$ is unitary, with respect to this star structure. 

\smallskip

A \biker\  $(\G,X^+_\Gamma)$ gives rise to the following
invertible element in $\C(\Gamma)\tn \C(\Gamma)$: 
\begin{equation}\label{defofR-1}
\Rc_{(\G,X^+_\Gamma)} =\Rc=\sum_{x,y \in \Gamma_0}(x \ra{L(x,y)} x\backslash y) \tn (y \ra{R(x,y)} y/x)\in \C(\Gamma) \tn \C(\Gamma).  
\end{equation}
This $\Rc$  satisfies the  relation below
(cf. the relation satisfied by an R-matrix in a quasi-triangular bialgebra): 
\begin{equation}\label{propR-1}
\Rc_{12}\Rc_{13}\Rc_{23}=\Rc_{23}\Rc_{13}\Rc_{12}, \textrm{ in }  \C(\Gamma)\tn\C(\Gamma)\tn \C(\Gamma);
\end{equation}
see e.g. \cite[Thm. VIII.2.4]{Kassel}. Here  $$\Rc_{13}=\sum_{x,y \in \Gamma_0}(x \ra{L(x,y)} x\backslash y) \tn \id_{\C(\Gamma)}\tn (y \ra{R(x,y)} y/x), \quad \Rc_{12}=\Rc \tn \id_{\C(\Gamma)} \textrm{ and } \Rc_{23}=  \id_{\C(\Gamma)}\tn \Rc.$$
(N.B.: $\Rc$ is not in general an R-matrix \cite{Kassel},
%among other reasons since
and $\C(\Gamma)$ is not, a priori, a quasi-triangular bialgebra.)
Furthermore, a \biker\ is welded if, and only if, in $\C(\Gamma)\tn\C(\Gamma)\tn \C(\Gamma)$, {it holds that:}
\begin{equation}\label{propRw-1}
\Rc_{13}\Rc_{23}=\Rc_{23}\Rc_{13}. 
\end{equation}
A main result of this paper is that the representation in {\peq{upper-rep}} of $\VBG_n$ 
%derived from a \biker\  
can be generalised to {\it braid}
any $n$-tuple $(V_1,\dots,V_n)$ of representations
of $\C(\Gamma)$
-- {see Thm. \ref{rep12} and \ref{uf}.} In particular, \eqref{propR-1}   implies that,  if $V$ is a  representation of $\C(\Gamma)$, then there is a representation $\trl^*$ of $\VBG_n$ on $V^{n \otimes}=V \tn \dots \tn V$, such that:% on generators we have:
\begin{multline}\label{actb}
(v_1 \tn \dots\tn v_{a-1}  \tn v_a \tn v_{a+1}  \tn v_{a+2}  \dots \tn v_n)\trl^* S^+_a(n)\\
=\displaystyle\sum_{x,y \in \Gamma_0}v_1 \tn \dots \tn  v_{a-1}
  \tn v_{a+1}.\big (y \ra{R(x,y)} y /x\big)\tn v_a.
    \big (x \ra{L(x,y)} x\backslash y \big) \tn v_{a+2} \tn \dots \tn v_n,
\end{multline}
\begin{equation}\label{actc}(v_1 \tn \dots\tn v_{a-1}  \tn v_a \tn v_{a+1}  \tn v_{a+2}  \dots \tn v_n)\trl^* V_a(n)
=v_1 \tn \dots \tn  v_{a-1}
  \tn v_{a+1}\tn v_a \tn v_{a+2} \tn \dots \tn v_n.
\end{equation}
 This is a unitary representation if $V$ is a unitary representation of the groupoid algebra.
%Note that in order to prove this is a representation, we must check that the relations in  Def \eqref{de:WBG} for the virtual braid group hold. Reidemeister III move follows from \eqref{propR}, whereas virtual and mixed Reidemeister III moves, as well as locallity follow trivially.
This coincides with the representation in  {\peq{upper-rep}} if $V=\C(\Gamma)$.
If   $(\G,X^+_\G)$ is a  welded \biker, then this representation of {$\VBG_n$} in $V^{n \otimes}$ descends to a representation of  $\WBG_n$. In particular the {welded move} {{\eqref{bisW}} follows from Equation \eqref{propRw-1}.

If $\C(\Gamma)$ can be given a quasi-triangular bialgebra structure and  $\Rc$ in \eqref{defofR-1} is its R-matrix then \eqref{actb} yields  exactly the {braid group} representation yielded by the quasi-triangular structure in  $\C(\Gamma)$; see \cite[VIII-3]{Kassel}.

%}}}
%{{{ EXAMPLE

\smallskip

{Of course the utility of all this abstract machinery depends on the
ability to construct \biker s and W-\biker s. We address this core point now.}

\smallskip

Let $G$ be a finite group. 
An example of unitary  representation of the welded braid group that follows from  {W-\biker s}
 is the representation of $\WBG_n$ on $V^{n \tn}$, arising from the R-matrix in the quantum double $D(G)\cong \C(\AUT(G))$  of the group algebra of $G$ \cite{Alt}, which is a quasi-triangular Hopf algebra. Here $V$ is a representation of $D(G)$.
This represention of $\WBG_n$ is normally only stated to be a representation of the braid group $\BG_n$
\cite{Kassel,Alt,Turaev,RT}, by using \eqref{actb}. However \eqref{actc}  extends it  trivially to a representation of  $\VBG_n$. Since the  R-matrix of $D(G)$ satisfies \eqref{propRw-1},  the representation in (\ref{actb},\ref{actc}) descends to $\WBG_n$.

\smallskip

\mdef\label{aut-def}{Let us now hence relate {W-\biker s} with the quantum-double $D(G)$.}
For $G$  a finite group, $\AUT(G)$ is defined to be the action groupoid of
the conjugation action of $G$ on $G$. The objects of 
$\AUT(G)$ are elements $g \in G$ and arrows  %of $\AUT(G)$
have the form
$(g \ra{a} aga^{-1})$, where $g,a \in G$.
Composition in $\AUT(G)$ is
by group multiplication in reverse order. Hence in $\C(\AUT(G))$ the product on generators is: \begin{equation} \label{AUTprod} %$$
\big(g \ra{a} aga^{-1})\big(g'\ra{a'} a'g'{a'}^{-1})
= \delta(aga^{-1},g')
\big (g\ra{aa'} aa'g{a'}^{-1} a^{-1} \big),\textrm{ where } g,g',a,a' \in G.
\end{equation} %

\mdef\label{deQD} The quantum double $D(G)$ is the algebra $\C(\AUT(G))$  %extended
%to
made \cite{Kassel} a quasi-triangular %(in fact ribbon) 
Hopf algebra
by:
\begin{equation} \label{eq:hopf}
 \begin{split}
\Delta(x \ra{g} gxg^{-1})
  & =\sum_{yz=x} ( y \ra{g} gyg^{-1}) \tn (z \ra{g} gzg^{-1}),
   \hspace{.2632in} \epsilon((x \ra{g} gxg^{-1})) = \delta(x,1_G) ,
      \hspace{.28in} % \theta = \sum_{g \in G} g \ra{g} g.
      \\
S(x \ra{g} gxg^{-1})
  & =(gx^{-1}g^{-1} \ra{g^{-1}} x^{-1}) , 
   \hspace{.7543in}    \Rc  =\sum_{g,h \in G}  (g\ra{h^{-1}} h^{-1}gh)  \tn (h \ra{1_G} h) .
\end{split}
\end{equation}

\mdef \label{deQD-1}
The \biker\ $X^+_G$ associated to $\AUT(G)$
is given, cf. Equation (\ref{gbirackcolouring}), by:
\begin{equation}\label{grouprack}
  \xymatrix@R=0pt{\\ \\  (g,h)\stackrel{X^+_G}{\longmapsto}}
 \xymatrix{& g \ar[dr]^{\,\,\,\bullet\,\, 1_G}|\hole &h\ar[dl]_<<<<<<<{{h}^{-1}\,\,	\,\bullet\,\,\,\,\,\, \,\,\,}\\
                                                          &h &  h^{-1}gh}\xymatrix@R=1pt{\\\\.}
\end{equation}
Therefore $\Rc_{(\G,X^+_G)}$ obtained from Equation \eqref{defofR-1} coincides with $\Rc$ in Equation \eqref{eq:hopf}. 
Note that the underlying birack of $X^+_G$ is the `conjugation quandle' in $G$: $h/ g=h$ and $g \backslash h=h^{-1} g h$; see \cite{C1,C2,C3}.
% Hence the representation of $\VBG_2$ on $\AUT(G)\tn \AUT(G)$ from \peq{act12} is such that
% \begin{equation}\label{partact12}
% \begin{split}\big((a^{-1} ga  \ra{a} g) \tn  (b^{-1} hb  \ra{b} h)\big) \triangleleft^*S^+_1(2)&= (b^{-1} hb  \ra{b} h) \tn (a^{-1} ga  \ra{h^{-1}a} hgh^{-1})\\
% \big((a^{-1} ga  \ra{a} g) \tn  (b^{-1} hb  \ra{b} h)\big) \triangleleft^* V_1(2)&= (b^{-1} hb  \ra{b} h) \tn  (a^{-1} ga  \ra{a} g)
% \end{split}
% \end{equation}

\smallskip
So we can see that even though $X^+_G$ in \eqref{grouprack} is apparently only a very simple
spin-off of the  conjugation quandle,  the holonomy arrows in $X^+_G$ add  the information needed for generating
the representations of the braid group derived from the R-matrix in the quantum double $D(G)$. Hence $X^+_G$ is considerably stronger than the conjugation  quandle of $G$ alone. {In particular, the invariants of knots derived from the $R$-matrix of $D(G)$ {take into account not only the knot group of a knot}, as does the conjugation quandle of $G$, but also \cite{prox} the entire peripheral structure of a knot \cite{eisermann}, which the conjugation quandle alone cannot uncover.}

%}}}
%{{{ general

{As we will see below in \S\ref{motivation1}, the holonomy arrows in $X^+_G$ naturally arise from
Aharonov-Bohm phases expected in finite group topological field theory
in the 2-disk; see \cite{Bais1,Bais2,Bais3,Bais4,Mochon1,bianca}.}
%The construction we will present is derived from this observation.

\smallskip

\mdef
For future reference,
{for $X^+_G$,} the representation of $\VBG_2$ on $\C(\AUT(G))\tn \C(\AUT(G))$ from \peq{upper-rep} is:
\begin{align}
  \big((a^{-1} ga  \ra{a} g) \tn  (b^{-1} hb  \ra{b} h)\big)
  \triangleleft^* \SSS{1}{2}
  &\; = (b^{-1} hb  \ra{b} h) \tn (a^{-1} ga  \ra{h^{-1}a} {h^{-1}gh})\label{partact12}, \\
  \big((a^{-1} ga  \ra{a} g) \tn  (b^{-1} hb  \ra{b} h)\big)
  \triangleleft^* \VVV{1}{2}
  &\; = (b^{-1} hb  \ra{b} h) \tn  (a^{-1} ga  \ra{a} g).
\end{align}
% 
% 
% %Of course,
% As noted, $\C(\AUT(G))$ is   a
% ribbon quasi-triangular  Hopf  algebra \cite{kassel,Turaev,RT}.
% This can be used to prove that we have representations of
% the braid group in $\C(\AUT(G))^{n \tn}$, or more generally in
% $V^{n \tn}$ if $V$ is any representation of $\C(\AUT(G))$.
% Moreover we have an associated invariant of links (in fact more generally
% of welded links \ppm{isn't this jumping ahead unhelpfully?})
% if link components are coloured with representation
% of $\C(\AUT(G))$.
% (A  description of the invariants of knots derived from
% representations of $D(G)$ % quantum double of finite groups
% in the realm of
% Markov traces of braids appears in \cite{gould_inv}.) 

%}}}
%{{{ main example
\smallskip

\mdef
%However, o
Our main example here {\eqref{compact} and \eqref{X3}  is  a W-\biker\ $X^+_R$ in  a certain groupoid ${\TRANS\big(T^2_R(\Gc)\big)}$ motivated
by discrete higher gauge theory in the 3-disk.}
Here $\Gc$ is a finite
2-group (represented by a crossed module).  In this case, the underlying birack of $X^+_R$ is a full fledged {W-birack}  and not simply a quandle.

\smallskip

In this $({\TRANS\big(T^2_R(\Gc)\big)},X^+_R)$ case, no
underpinning quasi-triangular Hopf algebra appears to be in hand for modelling the
representations of the welded braid group thereby obtained.
Groupoid algebras are examples of weak Hopf algebras
\cite{Gabriella,turaev_et_al,etingof_et_al}. 
{However} %again 
the
representations of the welded braid group derived from
${\TRANS\big(T^2_R(\Gc)\big)}$  do not appear to arise from a {\it bona
  fide} R-matrix inside $\C({\TRANS\big(T^2_R(\Gc)\big)})$, at least as far
as the definition of quasi-triangular weak Hopf algebras appearing in
\cite{turaev_et_al} is concerned.
These difficulties in interpreting
our representations of $\WBG_n$ in terms of already know
construction were our main reason  to introduce \biker s,
which in addition have the already mentioned advantage that
they naturally incorporate Aharonov-Bohm phases featuring in
topological field theory in their very construction.  

\smallskip
\mdef {The  W-bikoid $X^+_R$, motivated by higher gauge theory, can be derived from one of the  form $X^+_{gr}$, below. Here we have an abelian $gr$-group, i.e. a group $G$ left-acting, by automorphisms, in an additive abelian group $A$. Form the semidirect product $G \ltimes A$ and define $\TRANS(G,A)=\AUT(G\ltimes A)$ to be the action groupoid of the conjugation action of $G \ltimes A$ on itself. Thus arrows of
$\TRANS(G , A)$ 
 are of the form:} 
$$
\big((g,a)\ra{(w,k)} (w,k)(g,a)(w,k)^{-1}\big)
  = \big(wgw^{-1},k+w \trr a-(wgw^{-1}) \trr a)\big),
      \textrm{ where }  g,w \in G \textrm{ and } a, k \in A.
$$
{The corresponding W-\biker\   $X^+_{gr}$ takes the form:}
\begin{equation}\label{xgr1}
\begin{split}
\xymatrix@R=1pt{\\\\ X^+_{gr}\Big((z,a),(w,b)\Big)=}\hskip-1cm
\xymatrix{& (z,a)  \ar[dr]|\hole
  \ar[dr]|<<<<<<<{(w^{-1},0_A)\,\,\,\,\bullet\,\,
    \qquad\quad\,\,\,}|\hole
  & (w,b)  \ar[dl]\ar[dl]|<<<<<<<{\qquad \qquad \,\,\,\, \,\,\,\,\bullet \,\,\,\,(1_G , -{w}^{-1}\trr a)}\\
  &(w,a+b-w^{-1} \trr a)  & \big(w^{-1}zw,w^{-1} \trr a\big)}
\end{split}\xymatrix@R=1pt{\\ .}
\end{equation}
{This W-bikoid is very different from the  $X^+_{G \ltimes A}$ we would obtain from \eqref{grouprack}.}

 {A topological explanation for the existence  of  the underlying W-birack of {$X^+_{gr}$,} in terms of elementary algebraic topology -- namely in terms of the action of $\pi_1$ on $\pi_2$, is done in \S\ref{bandh}. (The pertinent space is the 3-disk $D^3$ minus an unlinked union of unknotted circles.)
The latter W-birack had appeared previously in \cite{martins_kauffman}, where it arised from Yetter TQFT \cite{Yetter,martins_porter} of the complement of  a knotted surface in $S^4$ \cite{martins_JKTR,martins_tams}. In \cite{martins_kauffman} it is proven that this W-birack {can} distinguish between different welded knots with the same knot group.} 
% 
% \red{The representations of $\LBG_n$ derived from $X^+_{gr}$ are indeed finner than  those (see \peq{deQD} and \peq{deQD-1}) derived from $G$ alone, see \red{examples} \ref{whyto-1} and  \ref{whyto-2}.}

 \smallskip

In a future publication \cite{prox}, we will inspect invariants of virtual and welded links derived from \biker s.

\subsubsection*{Structure of the paper}
{In \S \ref{physmot}, we sketch the reason of why W-bikoids appear in the context of topological gauge theory in $D^2$ and topological higher gauge theory in $D^3$. We start in \S \ref{motivation1} by giving an overview of Bais Flux metamorphosis  \cite{Bais1,Bais2,Bais3,Bais4}, and how it is related to quantum doubles and finite group W-bikoids. In \S\ref{phys0} we give a brief overview of discrete higher gauge theory, and in particular of the 2-dimensional holonomy of a 2-connection and its behaviour under gauge transformations. In \S \ref{phys} we explain how our main example of W-bikoids is related to loop-particles moving in $D^3$. In \S \ref{phys} we also propose a version of Bais %\cite{Bais1} 
flux metamorphosis in order to handle higher gauge fluxes of unknotted and unlinked loop particles in topological higher gauge theory. Section \ref{physmot} is roughly independent of the rest of the paper. However it motivates the main constructions.}

{In \S \ref{M-Preliminaries} we recap conventions for: groupoids, groupoid algebras, virtual and welded braid groups, and loop braid groups, and sketch the definition of the isomorphism $\WBG_n \to \LBG_n$. In \S\ref{M-bk}, we firstly \S\ref{thegr} explain the wreath product  $\gwrr{n}$ of a groupoid with the symmetric group $\Sigma_n$, and in \S\ref{ss:msgc} we give a graphical calculus for $\gwrr{n}$, providing a visual framework to handle a lot of intricate calculations. Conventions for biracks are in \S\ref{c:biracks}. Bikoids and W-bikoids are defined in \S\ref{s:biker}. In \S\ref{abeliangrgroups} we explain the W-bikoids $X^+_{gr}$ derived from abelian $gr$-groups, and show how they arise from elementary algebraic topology. In \S\ref{representations} we show how W-bikoids give rise to representations of the loop braid group, when loops are coloured with representations of the groupoid algebra of the underlying groupoid of the W-bikoid. In  \S\ref{ss:4} we show how finite 2-groups (described as crossed modules) yield W-bikoids, hence proving that the representations of $\LBG_n$ announced in \S \ref{phys} exist.}

%}}}
\vspace{.1in} \noindent\thanks{{\bf Acknowledgements.} {Alex Bullivant and Paul Martin thank
  EPSRC for funding under Grant EP/I038683/1. {Jo\~{a}o Faria Martins and Paul Martin thank the Leverhulme trust for funding under the Research Project Grant RPG-2018-029.}
 We all thank  Zoltan Kadar,
  Marcos Cal\c{c}ada and Celeste Damiani for discussions.}}

\section{{Insights from topological gauge theory and higher gauge theory}}\label{physmot}
%{{{ phys
%{{{ phys

\subsection{{Finite group  gauge theory in {$D^2$},
  quantum doubles
  and braid groups}}\label{motivation1}

%The discussion here is standard % well established
{This section contains a review of Bais flux metamorphosis
\cite{Bais1,Bais2,Bais3,Bais4,Mochon1,bianca},
organised and reinterpreted with our lift to higher gauge theory in mind.
(A connection between finite group topological gauge theory in {$D^2$}
and topological phases in (2+1)-dimensions
%{as well as} models for topological quantum computation  
can be made via the
Kitaev quantum double model \cite{PachosB,kitaev}.})

Let $G$ be a  %possibly non-abelian
finite group.
We consider a topological gauge theory with gauge group $G$, with
spatial manifold the unit disk
$D^2=[0,1]^2=\{(x,y)\,\,|\,\, 0\leq x, y\leq 1\} $.
Here  %In this  discrete gauge theory case,
a gauge field is interpreted as being a principal $G$-bundle.
We suppose the restriction to the boundary $S^1$ to be `static' in time. 
%Furthermore,
{Note that a  principal $G$-bundle, where $G$ is finite has a unique connection, which is automatically flat.}

\medskip

\mdef \label{1sing}
Consider a set of $n$ anyonic point-particles $p_1,\dots,p_n$
moving in the interior of $D^2$.
%To describe the presence of anyons we model the theory
We model these
by a flat
$G$-connection in $D^{2}$, which becomes singular at the location of
each anyon. At that location  a point-like magnetic vortex arises.
The
magnetic vortex   % inherent to the particle
at $p_i$ is
formally 
classified by its
magnetic flux.
The latter is
%an element $g_i \in G$,  to be
the gauge holonomy $g_i \in G$ 
observed when travelling along a positively oriented small
circle $c_i$, looping around the particle $p_i$. We let  ${*_{c_i}}$ be
the coinciding initial and  end-point of ${c_i}$; {cf. Fig. \ref{generic}}.
Both $c_i$ and
${*_{c_i}}$ will be co-moving with the particle $p_i$.
{A naive  % quantum mechanical
1-particle Hilbert space
%describing the internal states of particles of finite
%group topological gauge theory
is  given by the
group algebra {$\C G$}.} % of $G$, where $G$ is an orthonormal basis.

\smallskip
\mdef\label{Amb-1g} Holonomies $g \in G$ are  %however
not physical  since the
holonomy around a particle is defined  only up to conjugation. 
 The
 reason is twofold.
  Firstly, gauge transformations in the
 connection change holonomy along a  closed path by conjugation by an
 element of $G$. Secondly, it is more realistic to consider the connection holonomy along of a loop $\hat{c_i}$, looping around $p_i$, however starting and ending at a base-point $*$ far away from $p_i$. Different choices {of} $\hat{c_i}$ lead  also to conjugate values for the holonomy.
(In anticipation, we note that in higher gauge theory, discussed in \S\ref{phys},
 the lifts of 
 these two types of tranformation in the way we calculate magnetic flux do not {act} exactly in the same way on the observed 2-dimensional holonomy; see {\peq{lastc}.})

\smallskip
Hence one could consider  %instead
the Hilbert space
{$\C CC(G)$}
spanned by the  conjugacy classes of $G$ to be the relevant one for
describing finite group topological gauge theory particles in the plane.
However in general there exists a further decomposition of the
physical Hilbert space corresponding to internal charge degrees of
freedom. These charge degrees of freedom are truncated when switching
from {$\C G$ to $\C CC(G)$; \cite[Page 1196]{benini}.} For instance, given a flux
$g\in G$, conjugation by the $g$-central subgroup
$Z(g)=\{h\in G|gh=hg\}\subset G$
is a symmetry of the flux degree of freedom and as
such the action can be further decomposed into representations of
$Z(g)$. The latter correspond to possibly non-trivial charges associated to
anyons.

\smallskip
\mdef\label{HS}{Finally, 
cf.  \cite{Bais2,Bais3},
%Let $\AUT(G)$ be the action groupoid of the action of $G$ on $G$ by
%conjugation.
a convenient formulation of the Hilbert space associated to a
point-particle in finite group topological gauge theory in $D^2$ is
given by the underlying vector space of the groupoid algebra
$\C(\AUT(G))$; see \peq{aut-def}.}
This $\C(\AUT(G))$ retains enough information to allow the treatment
of multiple anyons and of charge. {The Hilbert space for $n$-particles configurations is then
$\C(\AUT(G)) \otimes \dots \otimes \C(\AUT(G))$.}
%Recall also that  $\C(\AUT(G))$ is  the underlying vector space of the quantum double $D(G)$ of the group algebra of $G$, a quasi-triangular Hopf algebra; see \cite{willerton2008twisted}. 

{Cf. the discussion in \cite[\S 2.1]{benini}, the vector space $\C(\AUT(G))$ is a `resolved' version of {$\C CC(G)$,} i.e. a derived quotient of {$\C G$} under the conjugation action. Hence switching from {$\C CC(G)$}  to  $\C(\AUT(G))$ can  be  motivated by the general principle that derived quotients retain more information and  are as a rule well behaved  in comparison to naive quotients, see e.g. \cite[\S 2.1]{Costello}. The latter point of view will be prevalent when discussing topological higher gauge theory in $D^3$; see \S\ref{phys}.}

\smallskip

\mdef \label{whyresolve} {In general, if $X$ is a set of possible formal configurations of a particle and $H$ is a group of symmetries
acting on $X$, then we retain enough information in the Hilbert space of the particle if we take this space to have basis the morphisms of the action
groupoid (Def. \ref{de:ag}) of the action of $H$ on $X$.  A `resolved'  state of the system is thus a formal state $x \in X$ together
with a datum $h \in H,$ which  encodes how it was measured.}

%}}}
%{{{ Hopf

An inner product in $\C(\AUT(G))$ which renders different arrows orthogonal is:
$$
\big\langle(g \ra{a} aga^{-1}), (g' \ra{a'} a'g'{a'}^{-1}) \big\rangle
   =\delta(a,a') \delta(g,g').
$$ 
Note that the group $G$ left-acts in $\C\big(\AUT(G)\big)$ by
$\langle,\rangle$-unitary
transformations as:
\begin{equation}\label{actionnothigher}
\smash{
h. ( g \ra{a} aga^{-1})= ( g
\ra{ha} haga^{-1}h^{-1}), \textrm{ where } g,h,a \in G.}
\end{equation}

The groupoid algebra $\C(\AUT(G))$  is additionally an example of a  Hopf algebra  (see \peq{deQD}) describing the
local symmetries of  finite group topological gauge theory
\cite{Bais2,Bais3,Bais4,kitaev} in the 2-disk. 
References justifying the latter fact by {identifying} the
quantum-double algebra as the coarse graining algebra for topological
gauge theory are \cite{bianca,wen_lan}.
Particle types in finite-group topological gauge theory should  thus be labelled by irreducible representation of $\C(\AUT(G))$.  
% 
% In addition to (\ref{AUTprod}),
% the rest of the ribbon Hopf algebra structure in $\C(\AUT(G))$ is:
% \begin{align} \label{eq:hopf}
% \Delta(x \ra{g} gxg^{-1})
%   & =\sum_{yz=x} ( y \ra{g} gyg^{-1}) \tn (z \ra{g} gzg^{-1}),
%    \hspace{.2632in} \epsilon((x \ra{g} gxg^{-1})) = \delta(x,1_G) ,
%       \hspace{.28in}  \theta = \sum_{g \in G} g \ra{g} g. \\
% S(x \ra{g} gxg^{-1})
%   & =(gx^{-1}g^{-1} \ra{g^{-1}} x^{-1}) , 
%    \hspace{.7543in}    R  =\sum_{g,x \in G} (g \ra{1_G} g)\tn (x\ra{g^{-1}} g^{-1}xg)
% \end{align}

%}}}
%{{{ Hilbert

% The Hilbert space associated to configurations of $n$ particles is then
% $\C(\AUT(G)) \otimes \dots \otimes \C(\AUT(G))$.

\smallskip

A Hilbert space on its own has {very} little information
(Hilbert spaces of equal dimension are isomorphic).
We must
specify what the observables are and what they mean physically.

Given a $g \in G$,
the {\em flux operator}
$F_g\colon \C(\AUT(G))\to \C(\AUT(G))$ for a single particle is defined as
$F_g(x \ra{a} axa^{-1})= \delta(g,axa^{-1}) (x \ra{a} axa^{-1})$.
Classically, {particle fluxes}
depend on the way we travel {around a particle} and also on the choice of
gauge. Let us specify what the flux observables mean by stating how
the magnetic flux  %around a particle
(i.e. the connection holonomy around each particle) is calculated. 
%Which component of
%the tensor product corresponds to each particle must also be
%specified. 

\smallskip

 \mdef \label{gaugerest} Let our base-point $*$ be $(1/2,0)\in \d D^2$.  For simplicity, we will only consider gauge transformations on our discrete gauge fields which are the identity in $\d D^2$. {This is a very strong restriction, which we will also make when discussing topological higher gauge theory in \S\ref{phys2}; see \peq{defR}.}
Given that our gauge fields are stable in $*$, we can also choose and fix a point in the fibre of $*$. Hence the holonomy along a path starting and ending at $*$ will from now on be uniquely defined. This will not make the definition of flux around a particle $p_i$ unique, as some freedom still exists in the actual choice of {closed path} looping around $p_i$.

{
Let $c_{i}$ be a counter-clockwise oriented path enclosing $p_{i}$ and no other $p_{j}$ for $j\neq i$, with initial point $*_{c_{i}}$, lying on the straight line connecting the base-point $*\in \partial D^{2}$ and $p_{i}$. We then choose a path $\gamma_i$, call it a ``connecting path'', % {to be a   path} in the homotopy class of a straight line 
connecting the base-point $*$ to the initial point ${*_{c_i}}$. {This} leads to an unambiguous definition of magnetic flux $g_i \in G$ of a discrete gauge field around $p_i$, to be the holonomy along the closed path $\hat{c_i}=\gamma_i c_i \gamma_i^{-1}$, connecting $*$ to $*$.
}{Homotopically distinct choices of paths in the definition of $\g_i$ generally yield
conjugate values for $g_i$.
%The magnetic flux $g_i$ depends only on the
%homotopy class of the path $\gamma_i$ relative to the end-points.
% {(In anticipation, we note that homotopy invariance of holonomy is not  true when discussing higher gauge theory.)}
However in the following, by allowing the motion of the %particles
particles $p_1,\dots,p_n$, there is no way paths
$\gamma_i$, $i=1,\dots,n$ connecting $*$ to ${*_{c_i}}$ can
be chosen both once and for all, and in a
way such that they depend continuously of time. For this reason we will consider particles in generic positions.}% and lie in the same homotopy class as the previous convention.}
%\ppm{not sure the statement is clear?}

%}}}
%{{{ generic
\smallskip

\mdef
{Let us say that
the particles $p_1,\dots,p_n$ are in {\em generic positions} if their
$x$-coordinates are all different (see Fig. \ref{generic}). 
In particular straight lines $\gamma_i$ connecting $*$ to ${*_{c_i}}$ do not intersect any particle.}
For generic configurations of particles, we will assume that the flux  of each particle 
corresponds to the one arising from the holonomy along $\gamma_i c_i \gamma_i^{-1}$, {for  $\gamma_i$ a straight line connecting $*$ to ${*_{c_i}}$.}
%where $\gamma_i$ is a straight line from $*$ to ${*_{c_i}}$, or one homotopic to such.
%{(We note that when we discuss higher gauge theory we will not be able to consider paths up to homotopy, so the analogues of the $\gamma_i$ will need to be straight lines, for definedness.)}

\begin{figure}
\centerline{\relabelbox 
\epsfysize 3.5cm 
\epsfbox{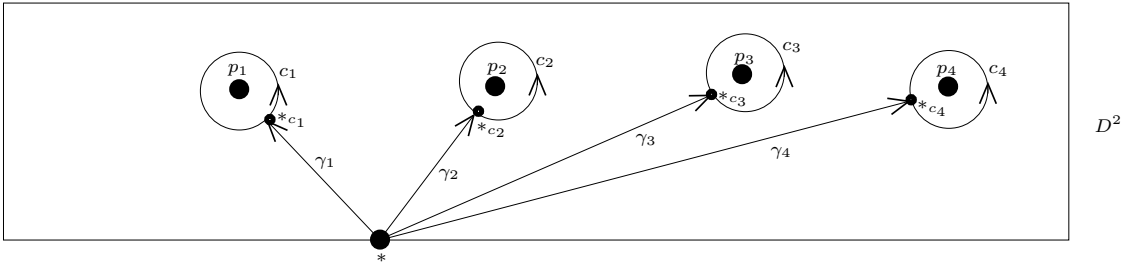}
\relabel{s}{$\scriptstyle{*}$}
\relabel{g1}{$\scriptstyle{\gamma_1}$}
\relabel{g2}{$\scriptstyle{\gamma_2}$}
\relabel{g3}{$\scriptstyle{\gamma_3}$}
\relabel{g4}{$\scriptstyle{\gamma_4}$}
\relabel{D}{$\scriptstyle{D^2}$}
\relabel{c1}{$\scriptstyle{c_1}$}
\relabel{c2}{$\scriptstyle{c_2}$}
\relabel{c3}{$\scriptstyle{c_3}$}
\relabel{c4}{$\scriptstyle{c_4}$}
\relabel{a}{$\scriptstyle{{*_{c_1}}}$}
\relabel{b}{$\scriptstyle{{*_{c_2}}}$}
\relabel{c}{$\scriptstyle{{*_{c_3}}}$}
\relabel{d}{$\scriptstyle{{*_{c_4}}}$}
\relabel{p1}{$\scriptstyle{p_1}$}
\relabel{p2}{$\scriptstyle{p_2}$}
\relabel{p3}{$\scriptstyle{p_3}$}
\relabel{p4}{$\scriptstyle{p_4}$}
\endrelabelbox}
\caption{\label{generic}A generic configuration of four particles $p_1,p_2,p_3,p_4$ in the disk $D^2$.}
\end{figure}

%}}}
%{{{ ACTION

%For a  generic configuration of $n$  point-particles an appropriate
%quantum mechanical Hilbert space is
%$\C(\AUT(G)) \otimes \dots \otimes \C(\AUT(G))$.
%
In the following we utilise the convention that the  tensor
components in the Hilbert space  $\C(\AUT(G)) \otimes \dots \otimes \C(\AUT(G)$
correspond to the particles ordered by increasing $x$ value, here left to right.
% such that the $i^{th}$ tensor component corresponds to the $i^{th}$ particle; {cf. \peq{HS}.}
%Flux operators inherent to each particle correspond to the holonomy along $\gamma_i c_i \gamma_i^{-1}$; see Fig.\ref{generic}.

\smallskip

\mdef
We now  consider the  {transformations} on
the Hilbert space under motions of the particles. 
Moving particles adiabatically in a way such that
%we  remain in a generic
the configuration remains generic
does not change the state of the system.
Let the particles $p_i, i=1,\dots,n$ move, in between $t=0$ and $t=1$,
in such a way that we momentarily pass (say at $t=1/2$) through a
non-generic configuration.
{For each $i$, and at each  $t$, let $\g_i^t$ be the path, in $D^2 \setminus \{p_1,\dots,p_n\}$, obtained from a straight line from $*$ to the position of ${*_{c_i}}$, at time $t$. Note that $\gamma_i^t$ might be undefined at $t=1/2$, since the straight line from $*$ to $*_{c_i}$ may cross   another particle.}

For each $i$, let also $\phi_i^t$ be the path obtained by
concatenating $\gamma_i^{0}$ with the trajectory of ${*_{c_i}}$ in
between time $0$ and time $t$. Note that $\phi_i^0=\gamma_i^0$ and
that  $\phi_i^t$ is homotopic to $\gamma_i^t$, if $0\leq t <
1/2$. However $\gamma_i^t$ is in general not homotopic to $\phi_i^t$
for $t>1/2$. See Fig. \ref{trans}, below, for a two particles
configuration.
%{\em Homotopic paths are considered to be the same.}
\begin{figure}[H]
\centerline{\relabelbox 
\epsfysize 3.8cm 
\epsfbox{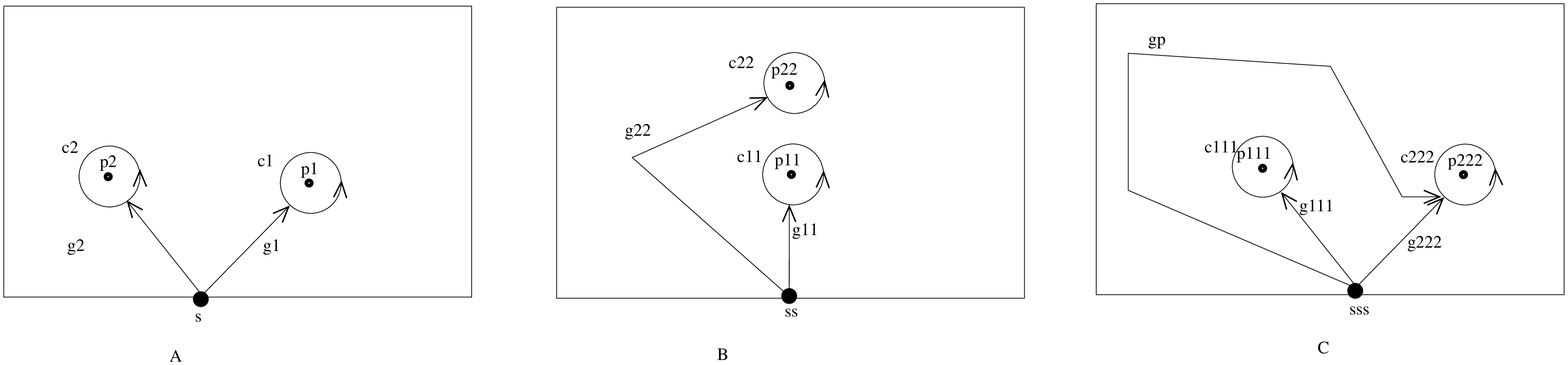}
\relabel{s}{$\scriptstyle{*}$}
\relabel{ss}{$\scriptstyle{*}$}
\relabel{sss}{$\scriptstyle{*}$}
\relabel{g1}{$\scriptstyle{\gamma_2^t}$}
\relabel{g2}{$\scriptstyle{\gamma_1^t=\phi_1^t}$}
\relabel{g11}{$\scriptstyle{\gamma_2^t\cong\phi_2^t}$}
\relabel{g22}{$\scriptstyle{\phi_1^t}$}
\relabel{g111}{$\scriptstyle{\gamma_2^t\cong\phi_2^t}$}
\relabel{g222}{$\scriptstyle{\gamma_1^t}$}
\relabel{gp}{$\scriptstyle{\phi_1^t}$}
\relabel{c1}{$\scriptstyle{c_2}$}
\relabel{c2}{$\scriptstyle{c_1}$}
\relabel{c11}{$\scriptstyle{c_2}$}
\relabel{c22}{$\scriptstyle{c_1}$}
\relabel{c111}{$\scriptstyle{c_2}$}
\relabel{c222}{$\scriptstyle{c_1}$}
\relabel{p1}{$\scriptstyle{p_2}$}
\relabel{p2}{$\scriptstyle{p_1}$}
\relabel{p11}{$\scriptstyle{p_2}$}
\relabel{p22}{$\scriptstyle{p_1}$}
\relabel{p111}{$\scriptstyle{p_2}$}
\relabel{p222}{$\scriptstyle{p_1}$}
\relabel{A}{$t=0$}
\relabel{B}{$t=1/2$}
\relabel{C}{$t=1$}
\endrelabelbox}
\caption{\label{trans} {A movement of particles passing non-generic configurations. Note  $\gamma_{\,\,2}^{\, \,t}$ is homotopic to $\phi_2^t$, for all $t$.} }
\end{figure}

When passing by non-generic configurations,  `Aharonov-Bohm phases' \cite{Bais1,Bais2,Bais3,Bais4,Nayak} $A_i\in G$ must be inserted. They make up for the fact that $\gamma_i^t$ and  $\phi_i^t$ might not be homotopic when $t=1$.
A possible convention is that, for each {$i=1,\dots,n$,} the group
element  $A_i$  is given by the holonomy of the $G$-connection along
the path {$\phi_i^t ({\gamma_i^t})^{-1}$,} at time $t=1$. 
%Thus $A_i$ is {\it essentially} the holonomy of the trajectory of the particle $p_i$, in between $t=0$ and $t=1$. 

%}}}
%{{{ Holonomies

These holonomies $A_i$ can  be {determined}
\cite{Bais1,Bais2,Bais3,Bais4}.
For instance in Fig \ref{trans},  if the particle $p_1$ carries
 flux $g$, and the particle $p_2$ carries flux $h$, then the holonomy along 
the path $\phi_2^t ({\gamma_2^t})^{-1}$ is trivial, when $t=1$,
i.e. $A_2=1_G$. {This is because the path is homotopically trivial, hence the associated holonomy is  the group identity.}
On the other hand, the holonomy along  {$\phi_1^t ({\gamma_1^t})^{-1}$,}
at $t=1$, is $A_1=h^{-1}$. {(Since at $t=1$, $\phi_1^t ({\gamma_1^t})^{-1}$ is homotopic to {$\gamma_2^tc_2^{-1}({\gamma_2^t})^{-1}$.)}}
Therefore, the  movement of particles in Fig. \ref{trans}  induces the following unitary transformation
in the Hilbert space $\C(\AUT(G))\tn\C(\AUT(G))$
(Bais calls this ``flux metamorphosis'' \cite{Bais1}):
\begin{equation} \label{eq:rhorep}
\begin{split}
  (a^{-1} ga  \ra{a} g) \tn  (b^{-1} hb  \ra{b} h)&
  \; \mapsto  \; A_2.(b^{-1} hb  \ra{b} h) \;\tn\;  A_1.(a^{-1} ga  \ra{a} g)\\
  &\;= \; (b^{-1} hb  \ra{b} h) \;\tn\;  h^{-1}.(a^{-1} ga  \ra{a} g)\\
   &\;=\; (b^{-1} hb  \ra{b} h) \;\tn\; (a^{-1} ga  \ra{h^{-1}a} h^{-1}gh^{}).
\end{split}
\end{equation}
Note that the operative part of this can be expressed as $g \tn h
\mapsto h \tn h^{-1} gh$.

\mdef
%\ppm{So far I find this paragraph a bit confusing at this point.
%  Seems premature. Can it be deleted?}
Remark: Equation \eqref{eq:rhorep} is exactly  \eqref{partact12}.
Hence, the  \biker\ $X^+_G$ in \eqref{grouprack} is related to
finite group topological gauge theory in $D^2$
(just forget about the `elementary transposition' component).
Considering $n$-particle configurations \eqref{eq:rhorep}
yields  representations  of the braid group
$\BG_n$ in  %$n$-strands in the vector space
$\C(\AUT(G))^{n \otimes}$ \cite{Bais1,Bais2}.

%
% 
% Comparing with \eqref{act12} and \eqref{act2}, $\rho$ from (\ref{eq:rhorep})  is just
% \green{the braid part of }
% the representation of the loop braid group  obtained from the groupoid
% birack in \eqref{grouprack}.
% \green{(Note that there is an action of $S_n$ on any tensor space
%   $V^{\otimes n} $ permuting the tensor factors. Thus whenever we have
%   a braid group action we have a {\em possible} loop braid
%   action. However there is no reason to suppose in general that these
%   actions will obey the mixed relations required \cite{KMRW}.
%   However, in the present case we observe the following. (BUT WHY observe
%   it here??))}
% And the latter is the welded braid group
% representation derived from the R-matrix of $D(G)$.
% %the quantum double of the group algebra of $G$. 
% }
% \footnote{
% \ppm{Promote to intro?:}
% The representations of the braid group derived from quantum doubles of
% finite group algebras are advocated as a way to achieve fault-tolerant
% quantum computing and as models for quantum memories in
% \cite{kitaev,Nayak}. A  review  is in \cite{Gould_TQC}. That universal
% quantum computing can be achieved in such systems, provided particular
% ancillary states can be produced and sensible non-demolition
% measurements performed, is discussed in \cite{Mochon1,Mochon2,wangcui}.  
% }

\smallskip

\mdef\label{mpg} Fix particle locations $p_1,\dots,p_n \in D^2$. Adiabatic exchanges of
particles, are modelled by diffeomorphisms
$(D^2,\{p_1,\dots,p_n\}) \to (D^2,\{p_1,\dots,p_n\})$
%\ppm{(notation as defined in \paref{??})}.
(by definition these are diffeomorphisms
$D^2 \to D^2$ which are the identity in the boundary $S^1$ of $D^2$ and send
$\{p_1,\dots,p_n\}$ to $\{p_1,\dots,p_n\}$).
The braid group  $\BG_n$ is
isomorphic to the group of isotopy classes of diffeomorphisms
$(D^2,\{p_1,\dots,p_n\})\to (D^2,\{p_1,\dots,p_n \} )$; see
\cite{birman,brendle_hatcher}. In topological gauge theory,  maps
induced by adiabatic exchanges of particles on Hilbert spaces should
depend only on isotopy classes of diffeomorphisms
$(D^2,\{p_1,\dots,p_n\})\to  (D^2,\{p_1,\dots,p_n\})$. This is the
case when maps on Hilbert spaces are, as outlined above, induced by
holonomies (to encode Aharonov-Bohm phases) of flat connections in
$D^2$ with magnetic vortices at particle locations. The same
principle can be made to work in the loop braid group case and higher
gauge theory in the 3-disk $D^3$, as we now explain.

%}}}

%}}}
\subsection{Higher gauge theory with a finite 2-group: preliminaries}
%  and invariants of loop braids}
\label{phys0}
%{{{ HGT prelims
%{{{ HGT

{Higher gauge theory is a  %higher order
%  version
generalisation
  of gauge theory which
enables  the definition of non-abelian holonomies along surfaces
embedded in a manifold $M$,
when a principal 2-bundle with a 2-connection over $M$ is given; see
\cite{BaezHuerta11,baez_schreiber,schreiber_waldorf1,schreiber_waldorf2,martins_picken}.
%See \cite{schreiber_waldorf2} for  thorough formulae.
See \cite{martins_picken} for a cubical approach
\cite{brown_higgins_sivera} to the holonomy of a 2-connection.
This is the point of view used here. In particular, a  2-bundle with a 2-connection over $M$, will be subordinated to an open cover of $M$, whose open sets are called {\em coordinate neighbourhoods} \cite[\S 3.1]{martins_picken}.}

{In discrete higher gauge theory, % -- the case treated here,
  for the
  gauge field, instead of a principal bundle
  with a finite group $G$ of structure, we have a principal 2-bundle
  (see \cite{baez_schreiber,Wockel} {and \cite[\S 3.1]{martins_picken}})
  with a structure finite  2-group. %$\Gc$
%  (a higher order version of a finite group \cite{baez_lauda}).
  Note that a principal 2-bundle with a finite 2-group of structure has a unique 2-connection.}

\ignore{{
{DELETE THIS PARA.?:}
  Topological phases with higher gauge symmetry are discussed in \cite{Our1,Our2,kapustin,Wang2017}.
A motivation for this paper
was to determine  whether {representations of the loop braid group} can be defined
from finite 2-group topological higher gauge theory
{\cite{Our1,Our2,Martins_Mikovic,Zucchini}} in the 3-disk.
This would be via 
2-dimensional holonomies of loop-particles' trajectories, to encode
higher order Aharonov-Bohm effects inherent to loops moving in
(3+1)-dimensional space.
Similar ideas are in \cite{EN,LL}. The loop braid group
models  the space of adiabatic exhanges of unknotted unlinked
horizontal loops in the 3-disk, modulo isotopy.
}} 

%}}}
%{{{ 2-gp

\smallskip\medskip

\mdef\label{PREL} In the strict version of higher gauge theory used
here, a 2-group is equivalent to a
crossed module $\Gc=(\d\colon E \to G,\trr)$ of groups,
\cite{brown_higgins_sivera,brown_hha,baez_lauda}. Here $G$ and $E$ are groups,
$\d\colon E \to G$ is a group map, and $\trr$ is a left action of $G$
on $E$ by automorphisms, such that the {1st and 2nd} Peiffer relations are satisfied:
 $\d(g\trr e)=g\d(e)g^{-1}$ and $\d(e) \trr f=efe^{-1}$, where
$g \in G$ and $e,f \in E$. We will suppose that $E$ and $G$ are  finite.
%The 2-group associated to $\Gc$ is modelled by the
%semidirect product $G \ltimes_\trr E$. 

A {\em square} in $\Gc$ \cite[\S2.2.1]{martins_picken} is, {by definition}, a diagram {like:}
\begin{equation}\label{sg}
  %\xymatrix@R=0pt{\\ 2{\rm hol}(P)=}
\hspace{.31in}
  \xymatrix@R=5pt@C=5pt{&\ast
   \ar[rr]^{x}\ar@{<-}[dd]_z && \ast \ar@{<-}[dd]^y\\ &&e&
   \\ &\ast \ar[rr]_w && \ast } \xymatrix@R=0pt{ \\
    \;\;\;\;
   \textrm{, where } x,y,z,w \in G \textrm{ and } e \in E \textrm{ satisfy } \d(e)=wyx^{-1}z^{-1}.}
\end{equation}

\medskip

\mdef\label{dgp} Squares in $\Gc$ can be composed vertically and  horizontally,
{if} their sides match \cite[\S2.2.1]{martins_picken}:  %; see below:
\begin{equation}\label{compos}
\xymatrix@R=5pt@C=5pt{&\ast
   \ar[rr]^{x}\ar@{<-}[dd]_z && \ast \ar@{<-}[dd]^y\ar[rr]^{x'}   && \ast\ar@{<-}[dd]^{y'} \\ &&e& & e'
   \\ &\ast \ar[rr]_w && \ast \ar[rr]_{w'} && \ast }\xymatrix@R=0pt{\\\\=}\xymatrix@R=5pt@C=5pt{&\ast
   \ar[rr]^{xx'}\ar@{<-}[dd]_z && \ast \ar@{<-}[dd]^{y'}\\ &&(w \trr e')\,\,e&
   \\ &\ast \ar[rr]_{ww'} && \ast } \xymatrix@R=0pt{\\\\\textrm{and}}
 \xymatrix@R=5pt@C=5pt{&\ast
   \ar[rr]^{x}\ar@{<-}[dd]_z && \ast \ar@{<-}[dd]^y\\&&e'\\&\ast
   \ar[rr]|{w}\ar@{<-}[dd]_{z'} && \ast \ar@{<-}[dd]^{y'}\\&&e&
   \\ &\ast \ar[rr]_{w'} && \ast\ } \xymatrix@R=0pt{\\\\=}\xymatrix@R=5pt@C=5pt{&\ast
   \ar[rr]^{x}\ar@{<-}[dd]_{z'z} && \ast \ar@{<-}[dd]^{y'y}\\ &&e\,\, z'\trr e'&
   \\ &\ast \ar[rr]_{w'} && \ast } \xymatrix@R=1pt{\\\\.}
\end{equation}
{These compositions are associative, have vertical and horizontal identities, and satisfy the interchange law;  see \cite[2.2.1]{martins_picken}, \cite[\S 6.6]{brown_higgins_sivera} or \cite{Our2} for explanation. Also, each square in $\Gc$ has a vertical and a horizontal inverse.}

\medskip

In higher gauge theory,  fields are  2-connections in a 2-bundle over a
  manifold $M$.
A 2-connection is a cubical-$\Gc$-2-bundle with a connection,
as  defined in  \cite[Defs. 3.1 and 3.4]{martins_picken}.
There is an equivalence relation \cite[Def. 4.18]{martins_picken} on the set of 2-connections, by gauge transformations. It is derived from a  2-groupoid of  2-connections, gauge transformations and 2-gauge transformations between gauge transformations \cite{schreiber_waldorf1,schreiber_waldorf2,martins_picken}.

\medskip

\mdef
If we have a 2-connection  in $M$,
then given a piecewise smooth map $P\colon [0,1]\times [0,1] \to M$,
{and a choice of coordinate neighbourhoods covering the image $P([0,1]^2)$,}
we can define \cite[Def. 5.1]{martins_picken} {the 2-dimensional holonomy $2{\rm hol}(P)$ along $P$}, which is a
square
in $\Gc$ \cite{Our2,martins_picken,brown_higgins_sivera},} as in \eqref{sg}.
{This $2{\rm hol}(P)$ depends only on the choice of coordinate neighbourhoods covering $P(\d [0,1]^2)$; see \cite[Cor. 5.5]{martins_picken}.}

{If $\gamma \colon [0,1] \to M$ is a path, there is also a 1-dimensional holonomy ${\rm hol}(\gamma)\in G$ of $\g$; see \cite[\S5.1.3]{martins_picken}. Holonomy $h={\rm hol}(\gamma)$ along a closed path is well defined, up to transformations like $h \mapsto \d(a)ghg^{-1},$ where $g\in G$ and $a \in E$.}  The elements of $G$ appearing in the edges of the square  $2{\rm hol}(P)$  in \eqref{sg} are given by the 1-dimensional holonomies of the paths obtained by restricting $P$ to the edges of the boundary of $[0,1]^2$. 

\smallskip

\mdef\label{prescomp}
Piecewise smooth maps of form $P\colon [0,1]\times [0,1] \to M$
  can  be composed horizontally,
  respectively vertically, when the restrictions to the appropriate
  side of  $[0,1]\times [0,1]$ match, by rescaling \cite[\S2.3.1]{martins_picken}.

  The 2-dimensional holonomy of a  2-connection preserves both horizontal
and vertical compositions of maps $P\colon [0,1]^2 \to M$; see \cite[Thm. 4.3]{martins_picken}. Each map $\Gamma  \colon [0,1]^2 \to M$ has horizontal $\Gamma^{-H}$  and vertical $\Gamma^{-V}$ inverses,  given by  $(t,s) \mapsto \G(1-t,s)$ and $(t,s) \mapsto \G(t,1-s)$. The 2-dimensional holonomy of a 2-connection preserves  horizontal and vertical inverses.

\smallskip

%We will  be particularly interested in
Maps $P\colon [0,1]\times [0,1] \to M$ such that the paths $[0,1] \to M$
obtained by restricting $P$ to $\{0\} \times [0,1]$ and to
$\{1\} \times [0,1]$  coincide are called {\em tubes in $M$}, as they
can be seen as maps $P\colon S^1 \times [0,1] \to M$.
If $P$ is a tube in $M$,  the 2-dimensional holonomy along $P$ has the form:
\begin{equation}
\xymatrix@R=0pt{\\ 2{\rm hol}(P)=}\xymatrix@R=5pt@C=5pt{&\ast
   \ar[rr]^{x}\ar@{<-}[dd]_{{h}} && \ast \ar@{<-}[dd]^{{h}}\\ &&{f}&
   \\ &\ast \ar[rr]_w && \ast } \xymatrix@R=0pt{ \\
   \textrm{, where } x,{{h}},w \in G \textrm{ and } {f} \in E \textrm{ satisfy } \d({f})=w{{h}}x^{-1}{{h}}^{-1}.}
\end{equation}

\mdef We will also consider {\em tori-maps in $M$}, which by definition are tubes in $M$ such that the paths $[0,1] \to M$ obtained by restricting to $ [0,1]\times \{0\}$ and to $ [0,1]\times \{1\}$  coincide. If $P$ is a torus-map in $M$, then the 2-dimensional holonomy of a 2-connection along $P$ has the form below {(cf. \cite[Def. 5.16]{martins_picken}):}
\begin{equation}\label{torusinG}
\xymatrix@R=0pt{\\ 2{\rm hol}(P)=}\xymatrix@R=5pt@C=5pt{&\ast
   \ar[rr]^{x}\ar@{<-}[dd]_{{h}} && \ast \ar@{<-}[dd]^{{h}}\\ &&{f}&
   \\ &\ast \ar[rr]_x && \ast } \xymatrix@R=0pt{ \\
   \textrm{, where } x,{{h}} \in G \textrm{ and } {f} \in E \textrm{ satisfy } \d({f})=x{{h}}x^{-1}{{h}}^{-1}={[x,h]}.}
\end{equation}
{This $2{\rm hol}(P)$, for $P$ a torus-map, is independent of the choice of coordinate neighbourhoods along $P(\d [0,1]^2)$, and the gauge equivalence class of the 2-connection, up to \cite[Thm. 5.17]{martins_picken} transformations of the form of a simultaneous horizontal and vertical conjugation in the `double-groupoid' of squares in $\Gc$, as in \eqref{leftrightconj}, below. {(We note that the squares in the middle left and the middle right are horizontal inverses of each other, and similarly the squares in the middle top and the middle bottom are vertical inverses of each other.)}}

\begin{equation}\label{leftrightconj}
\xymatrix@R=5pt@C=5pt{\\\\\\&\ast
   \ar[rr]^{x}\ar@{<-}[dd]_{{h}} && \ast \ar@{<-}[dd]^{{h}}\\ &&{f}&
   \\ &\ast \ar[rr]_x && \ast }\xymatrix{\\ \\ \mapsto}
\xymatrix@C=2pt@R=5pt{ \ast\ar[rr]^{1_G}  && \ast\ar[rr]^{x'} &&\ast \ar[rr]^{1_G}   &&\ast \\
 & 1_E && g^{-1}\trr {e}^{-1} && 1_E\\
 \ast\ar[rr]|{g} \ar[uu]^{1_G} && \ar[uu]|{g^{-1}}\ast\ar[rr]|{x} &&\ast \ar[rr]|{g^{-1}}\ar[uu]|{g^{-1}}   &&\ast\ar[uu]_{1_G} \\
 & {a}^{-1} && f && g^{-1} \trr {a}\\
 \ast\ar[rr]|{g} \ar[uu]^{h'} && \ar[uu]_{h}\ast\ar[rr]|{x} &&\ast \ar[rr]|{g^{-1}}\ar[uu]^{h}   &&\ast\ar[uu]_{h'} \\
 & 1_E &&  {e} && 1_E\\
 \ast\ar[rr]_{1_G} \ar[uu]^{1_G} && \ar[uu]_{g}\ast\ar[rr]_{x'} &&\ast \ar[rr]_{1_G}\ar[uu]^{g}   &&\ast\ar[uu]_{1_G} 
}\xymatrix{\\ \\ =}\xymatrix@R=5pt@C=5pt{\\\\\\&\ast
   \ar[rr]^{x'}\ar@{<-}[dd]_{{h'}} && \ast \ar@{<-}[dd]^{{h'}}\\ &&{f'}&
   \\ &\ast \ar[rr]_{x'} && \ast }\xymatrix@R=1pt{\\\\\\\\\\\\\\\\.}
\end{equation}
Where:
$\d({e})=x'g{x}^{-1}g^{-1},$ {thus $x'=\d(e)gxg^{-1}$,} \quad \quad $\d({a})^{-1}=ghg^{-1}{h'}^{-1}$, {thus $h'=\d(a) ghg^{-1}$,}  and:  $$f'=  {e} \,\, (gxg^{-1})\trr {a}\,\, g \trr f\,\,  (ghg^{-1})\trr {e}^{-1}\,\, {a}^{-1}=x' \trr {a}\,\, {e} \,\, g \trr f\,\, (ghg^{-1}) \trr {e}^{-1}\,\, {a}^{-1}.$$

%}}}
%}}}
%{{{ 3-disk

\mdef \label{flatprop}  {A 2-connection is {\em 2-flat} if its categorical curvature
vanishes \cite[Def. 2.16]{martins_picken} in all coordinate neighbourhoods  of $M$ \cite[Prop. 3.3.6]{schreiber_waldorf1}. 2-connections on 2-bundles with finite 2-groups are automatically 2-flat.}

\smallskip

\mdef \label{flatHOL} 
{Fix a flat 2-connection in $M$. The 2-dimensional holonomy along $P\colon [0,1]^2 \to M$
depends only on the homotopy class of the map
$P\colon   [0,1]^2 \to M$,
relative to the boundary of $[0,1]^2$, {and the choice of coordinate neighbourhoods covering $P(\d [0,1]^2)$.} This follows by \cite[Thm. 4.5]{martins_picken}, by the same argument as in the proof of 
 \cite[Thm. 5.8]{martins_picken}, or by \cite[Prop. 3.3.6]{schreiber_waldorf1}.}
{For torus-maps $P\colon [0,1]^2 \to M$, we consider homotopies of $P$ relative to $\{0,1\}\times I$,  which stay within the set of torus maps. If we fix coordinate neighbourhoods covering $P(\{0,1\} \times [0,1])$, then $2{\rm hol}(P)$ is homotopy invariant, up to transformations as in  \eqref{leftrightconj}, with {$e=1_E$, $g=1_G$.}}

%}}}
%{{{ tubes
\smallskip

\mdef\label{holB} Let $B^2=\{z \in \R^2\colon ||z||\leq 1\}$ be based at $*=(-1,0)$. Hence $B^2$ and $D^2$ are homeomorphic. Suppose that $P\colon B^2 \to M$ is piecewise smooth. It is more convenient to consider the 2-dimensional holonomy along $P$ to have the form: $2{\rm hol}(P)=\hskip-0.3cm\xymatrix{
{\,\,\,\,\ast\,\,\,e}\ar@<1ex>@(dr,ur)|{\,\,\,v} }$, where $e \in E$ and $v=\d(e)$; cf. \eqref{sg}, here we are putting $v=wyx^{-1}z^{-1}$. Note that $v$ is the 1-dimensional holonomy of the path obtained by restricting $P$ to the  boundary $S^1$ of $B^2$, oriented counterclockwise.
If $B\subset M$ is diffeomorphic to $B^2$, and has a base point
$* \in \d B$, then the 2-dimensional holonomy along an orientation and
base-point preserving diffeomorphism $B^2 \to B$ depends only on $B$, see
{\cite[Thm. 67]{Our2} or \cite[Thm. 5.14]{martins_picken}.}
It will be denoted $2{\rm hol}(B)$. 

%}}}
\subsection{Higher gauge theory with a finite 2-group
  and invariants of loop braids} \label{phys} \label{phys2}
%{{{ 1.1.3
%{{{ HGT

\smallskip

{Consider a 2-flat 2-connection \peq{flatprop} in the 3-disk
$D^3=\{(x,y,z)\,|\, 0\leq x,y,z \leq 1\}$.
Let us also assume that the 2-connection is {`static'} in the boundary $\d D^3.$}
We choose an oriented `base-loop' $O\cong S^1$ embedded in  $D^2 \times \{1\} \subset \d D^3$. We let $*\in O$ be the base point of $O$. Hence $O$ is the image of a path {$\g\colon [0,1] \to \d D^3$}, starting and ending at $*$. Also $O$ bounds a 2-dimensional ball $B$ contained in $\d D^3$; see Fig. \ref{cap}, below.

\smallskip

\mdef\label{defR}
{A {parameter}  we will need is the  2-dimensional holonomy {\peq{holB}} along $B$ which has the form:}
\begin{equation}\label{HR}
2{\rm hol}(B)=\hskip-0.3cm\xymatrix{
{\,\,\,\,\ast}\ar@<0.5ex>@(dr,ur)_{\d(R)\,\,\,\,\,\,\,\,\,\,\,\,\,\,\,\,\,\,\,\,\,\,\,\,\,\,} } \hskip-24mm\scriptstyle{R}\qquad \quad.
\end{equation}
Here $R \in E$. The 1-dimensional holonomy along $O$,
with initial point $*$, is $\d(R) \in G$.
Cf. {\peq{gaugerest}, as for gauge theory in $D^2$,
  we will only consider gauge transformations
  \cite[\S4.2.1]{martins_picken} on {2-connections} which are trivial in $\d(D^3)$. We  also fix coordinate neighbourhoods along $O$. Hence we take $R$ and $\d R$  to be constant in time.}

\smallskip

\mdef
Consider a  set of small unknotted
unlinked circles $c_1,\dots,c_n$ (call them ``loop particles'') in $D^3$.
Outside these, the  2-connection is flat.
We allow for a 2-connection to be singular in the loop-particles
$c_1,\dots,c_n$, which therefore may carry `higher gauge magnetic vortices'. 

% Note that since each $c_i$ is a circle it is in particular unknotted;
% and it defines a plane in $D^3$.
\smallskip

\mdef\label{allowed}
{We suppose that loop-particles can move, but 
remain horizontal.
In other words each loop-particle is at each time $t$ contained in a plane (whose height may vary with $t$) parallel to $D^2 \times \{0\}\subset D^3 $. }

%}}}
%{{{ c

\smallskip

\mdef Let $c$ be  loop-particle.
A % pertinent higher gauge theory
formal
observable for $c$ {(here called {\em magnetic primary 2-flux})}
is the
2-dimensional holonomy $F_c=2{\rm hol}(P_c)$ along a {torus-map} $P_c\colon [0,1]^2\to D^3$, starting and ending at the base-loop
$O$, and bounding the torus $T^2_c$ obtained as the boundary  of a
local neighbourhood of $c$ as in Fig. \ref{cap} below.
{Cf. the figure in Equation  \eqref{eq:fig1.2} for
  a visualisation of a tube-map homotopic to $P_c$.}

%A related magnetic 2-flux $\hat{F}_c$ {(the {\em thin 2-flux})} will be introduced in \peq{another}.

\smallskip 

Let us explain $P_c$. It is constructed as the vertical composition of three tubes in $D^3$. The torus $T^2_c\cong S^1 \times S^1$ has a meridian $m$ and a longitude $l$. Both $m$ and $l$ are assumed to be co-moving with $c$. The linking number of $l$ and $c$ should be zero. We consider a `connecting tube' $\Gamma_c$ connecting $O$ to $l$. Then consider the torus-map in $D^3$ obtained by sweeping the torus $T^2_c$, in the obvious way, from $l$ to $l$. And finally we go back to $O$ by using the vertical {inverse} $\Gamma_c^{{-V}}$ of $\Gamma_c$. (Hence $\Gamma_c^{{-V}}(t,s)=\Gamma_c(t,1-s)$.)
$$
\centerline{\relabelbox 
\epsfysize 5cm 
\epsfbox{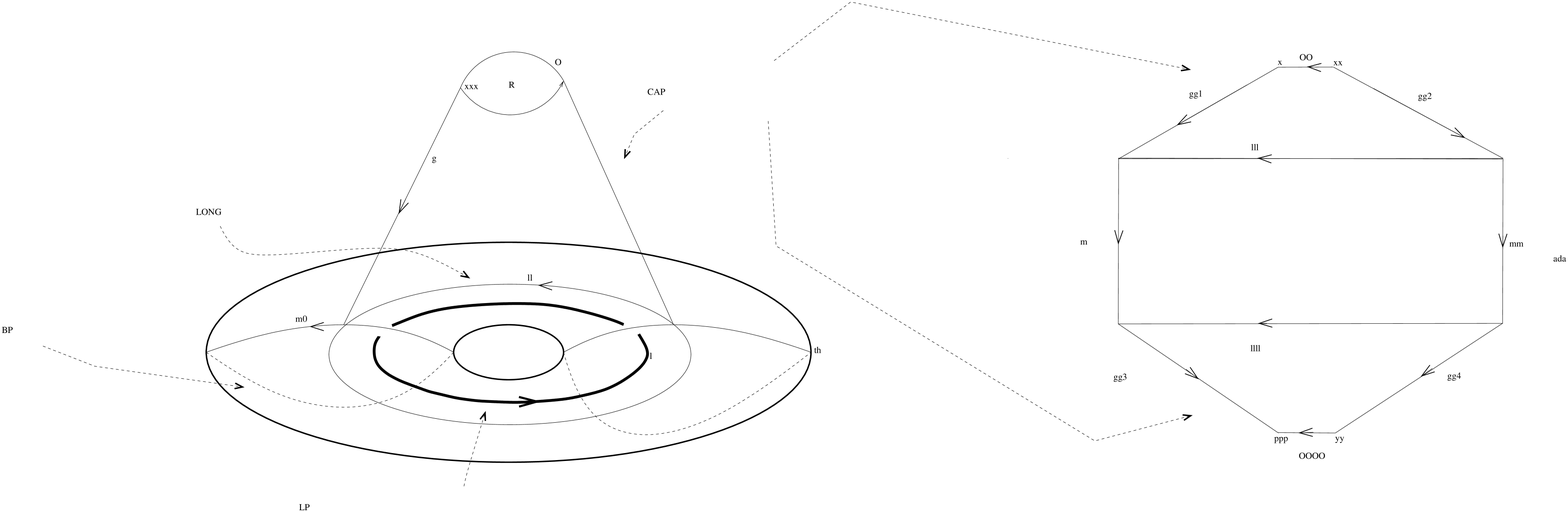}
\relabel{CAP}{$\textrm{connecting tube }\Gamma_c$}
\relabel{BP}{$\textrm{meridian }m$}
\relabel{LONG}{$\textrm{longitude }l$}
\relabel{LP}{$\textrm{loop-particle }c$}
%\relabel{p}{$\scriptstyle{*}$}
%\relabel{pp}{$\scriptstyle{*}$}
\relabel{ppp}{$\scriptstyle{*}$}
\relabel{yy}{$\scriptstyle{*}$}
\relabel{l}{$\scriptstyle{c}$}
\relabel{ll}{$\scriptstyle{l}$}
\relabel{lll}{$\scriptstyle{l}$}
\relabel{llll}{$\scriptstyle{l}$}
\relabel{m}{$\scriptstyle{m}$}
\relabel{mm}{$\scriptstyle{m}$}
\relabel{m0}{$\scriptstyle{m}$}
\relabel{g}{$\scriptstyle{\gamma}$}
\relabel{gg1}{$\scriptstyle{\gamma}$}
\relabel{gg2}{$\scriptstyle{\gamma}$}
\relabel{gg3}{$\scriptstyle{\gamma^{-1}}$}
\relabel{gg4}{$\scriptstyle{\gamma^{-1}}$}
\relabel{O}{$\scriptstyle{O}$}
\relabel{OO}{$\scriptstyle{O}$}
\relabel{OOOO}{$\scriptstyle{O}$}
\relabel{x}{$\scriptstyle{*}$}
\relabel{xx}{$\scriptstyle{*}$}
\relabel{xxx}{$\scriptstyle{*}$}
\relabel{R}{$\scriptstyle{B}$}
\relabel{ada}{$=P_c$}
\relabel{th}{${T^2_c} $}
\endrelabelbox}
$$
\vskip-1cm
\begin{figure}[H]
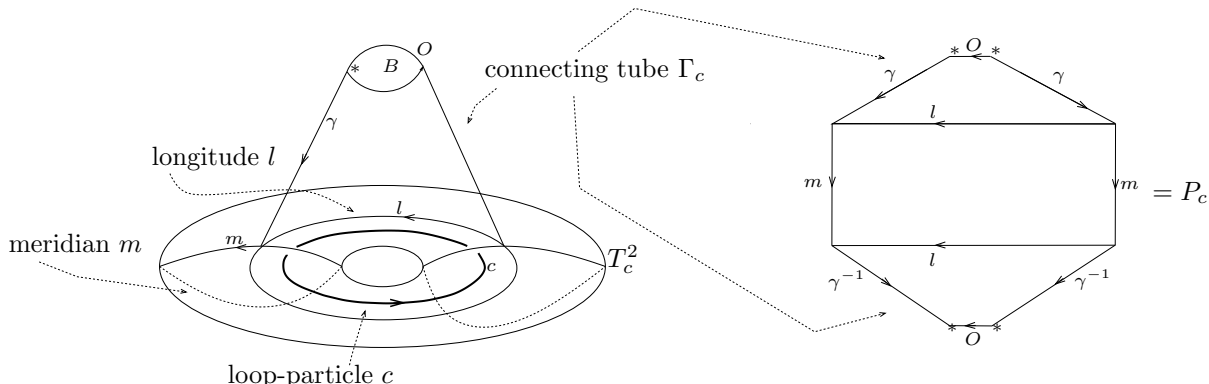

\caption{\label{cap}\protect {Construction of  a torus-map} $P_c\colon [0,1]^2 \to D^3$ in $D^3$, associated to a  loop-particle $c$.}
\end{figure}

\noindent {The 2-dimensional holonomy $2{\rm hol}(P_c)$  %of  $P_c$
has the form \eqref{torusinG},
and hence by \peq{defR} the form  \eqref{2hol1}
below. 
Note that $h$ is the 1-dimensional holonomy along
$\gamma  m \gamma^{-1}$, and $\d(R)$ is the 1-dimensional holonomy along $O$.}
\begin{equation}\label{2hol1}
\xymatrix@R=0pt{\\2{\rm hol}(P_c)=}
\xymatrix@R=5pt@C=7pt{
&\ast \ar@{->}[rr]^{\d(R)}\ar@{<-}[dd]_{{h}}& &\ast\ar@{<-}[dd]^{{h}}
\\ & &{f}&\\
 &\ast \ar@{->}[rr]_{\d(R)}&&\ast
}\xymatrix@R=1pt{\\\textrm{, where } {h}\in G, {f} \in E \textrm{ and } \d(f)={[\d(R),{h}]}.}
\end{equation}

%}}}
%{{{ 2-flux

\mdef\label{2f}
%A sensible definition of {\em magnetic primary 2-flux} associated to
%a higher-gauge unknotted loop-particle $c$
An encoding of magnetic primary 2-flux $F_c$ 
is thus
a triple $F(c)=({h},\d(R),{f})\in G \times G \times E$
such that $\d({f})=[\d{R},{h}]$.
The relation between $F(c)=({h},\d(R),{f})$ and $2{\rm hol}(P_c)$ is as in \eqref{2hol1}. 

We let $T^2_R(\Gc)$ be the set of those triples,  called {\em 2-fluxes}.

{An important property of 2-fluxes is that they form a group (as do fluxes $g \in G$ of gauge theory) under the vertical composition of squares in $\Gc$;
see \eqref{compos}. Hence:}
$(g,\d(R),a) (g',\d(R),a')=(gg', \d(R), a\,\, g \trr a').$
%A related  2-flux  $\hat{F}(c)\in T^2_R(\Gc)$ is introduced in \peq{another}.

\smallskip

\mdef\label{lastc} {As for  topological gauge theory flux \peq{Amb-1g},
%  the {magnetic primary} 2-flux
${F(c)=(h,\d(R),f)}$  %of $c$
  is not physical.
%  Indeed $F(c)$
It changes if we  perform gauge transformations in the
2-gauge field \cite[\S 4.2.1 and \S 5.1.7]{martins_picken} or change the chosen coordinate neighbourhoods along $P_c(\d{[0,1]^2)}$; see \cite[Thm. 5.1.4]{martins_picken}.  {Cf. \peq{defR}, in order to simplify the construction, we fix coordinate neighbourhoods along the base loop, and the disk $B$, and only consider gauge transformations which are trivial along $\d D^3$. In particular, we are considering only a subset of gauge transformations which leave the holonomy of the base-loop invariant. This} means that in  \eqref{leftrightconj} we take $g=1_G$ and {$e=1_E$}. Hence $F(c)$ will change as $$(h,\d(R),f)\to ({\d(a)h},\d(R),R\, {a}\, R^{-1}\,\,f\,\,  {a}^{-1}\big),$$ under gauge transformations; here $e \in E$. {We are using the 2nd Peiffer relation here; see \peq{PREL}.}}

\smallskip

{We can also modify the  connecting tube
$\G_c$, connecting the base-loop $O$ to the longitude $l$ (see Fig. \ref{cap}), to a new connecting tube $\G_c'$. This leads to transformations in the 2-flux $F(c)=(h,\d(R),f)$ like below:}
$$(h,\d(R),f)\mapsto (g,\d(R),{e}) \,\,(h,\d(R),f) \,\,(g,\d(R),{e})^{-1}=(ghg^{-1},\d(R), {e}\,\, g \trr f\,\, (ghg^{-1}) \trr {e}^{-1}\big).$$
{Here  $(g,\d(R),{e}) \in T^2_R(\Gc)$.}
{Note that  $(g,\d(R),{e})$ arises as the 2-dimensional holonomy of the {torus-map} obtained as the vertical concatenation of $\Gamma_c'$ and the vertical {inverse} of $\Gamma_c$.}

 {Putting the two types of transformations together, we can see that the appropriate group of  transformations on 2-fluxes is given by the following semidirect product:}
$$
T^2_R(\Gc)\ltimes E=
\{(g,\d(R),{e,a})\in G \times G \times E \times E\colon \d({e})
=[\d{(R)},g]\}.
$$
with group operation
$$
(g,\d(R),{e,a)}\,\, \; (g',\d(R),{e',a')}
\;= \; (gg',\d(R),{e} \,\, g \trr {e}', {a} \,\, g \trr {a}') .
$$
%

% Here the component $(g,\d(R),a) \in T^2_R(\Gc)$ deals with changes
% in the connecting tube $\Gamma_c$, and $e \in E$ deals with gauge transformations on the higher gauge field {away from $\d D^3$ or changes in the coordinate neighbourhoods used to calculate the 2-dimensional holonomy; see \cite[Thm. 5.17 and Thm. 5.4]
% {martins_picken}.}
%The group operation in $T^2_R(\Gc)\ltimes E$ is:
%$(g,\d(R),a,e)\,\,(g',\d(R),a',e')=(gg',\d(R),a \,\, g \trr a', e \,\, g \trr e')$. 
{The group $T^2_R(\Gc)\ltimes E$   acts on the set  $T^2_R(\Gc)$ of 2-fluxes as below (calculations are in \S\ref{agrpoids}):}
$${(g,\d(R),{e,a}) \t (h,\d(R),f)  \; = \;
(\d({a})ghg^{-1},\d(R),R\, {a}\, R^{-1}\,\,  {e}\,\, g \trr f\,\, (ghg^{-1}) \trr {e}^{-1}\,\, {a}^{-1}\big).}$$
{This action also arises from \eqref{leftrightconj}, where we put $x'=\d(R)$. (We are using the 2nd. Peiffer relation \peq{PREL}.)}

% (Note that $\t$ is not an action by automorphisms.) %To prove that this is an action we must use the Peiffer relations and $\d(e)=\d(R)g\d(R)^{-1}g^{-1}$; see \peq{PREL}.  
% Calculations are in \S\ref{agrpoids}.

%}}}
%{{{ TRANS+thin 2-flux

\smallskip

\mdef
We define $\TRANS(T^2_R(\Gc))$ as the action groupoid of the
action $\t$ of $T^2_R(\Gc)\ltimes E$ on $T^2_R(\Gc)$.
Objects of $\TRANS(T^2_R(\Gc))$ are given by 2-fluxes $F\in T^2_R(\Gc)$.
{Morphisms are, where $T=(g,\d(R),e,a)\in T^2_R(\Gc) \ltimes E$:}
$$
\big ( (h,\d(R),f)
  \ra{(g,\d(R),{e,a})}  (g,\d(R),{e,a}) \t
  (h,\d(R),f)   %%WAS: \ppm{ (g,\d(R),e,a) }
  \big)
  \; = \; \big(F \ra{T} T \t F\big).
$$
%\ppm{something wrong here! --- fixed now.}  

\mdef \label{2gaugetrans0} {In higher gauge theory, we not only have gauge transformations between
2-connections but also 2-gauge transformations between gauge
transformations.  Hence there is also a quotient groupoid $\overline{\TRANS(T^2_R(\Gc))}$ of  $\TRANS(T^2_R(\Gc))$, where gauge transformation connected by 2-gauge transformations are identified. Throughout most of the paper,  we neglect the role of  {2-gauge transformations}. We will come back to this in \peq{2gaugetrans}.}

\smallskip

\mdef\label{another}
Finally, we note that a different type of 2-flux
$\hat{F}(c)\in T^2_R(\Gc)$, called {\it thin 2-flux},
can be associated to $c$ in Fig. \ref{cap}.
This $\hat{F}(c)$ is given by the  2-dimensional holonomy of  a  {`degenerate' (i.e 1-dimensional)} tube $\hat{P_c}$, from $O$ to $O$. This  $\hat{P_c}~$ is  obtained from contracting $O$ to its base point $*$, along $B$, then concatenating with a {1-dimensional} tube tracing $\gamma m \gamma^{-1}$, and then using $B$ to go back to $O$ again. {Cf. the figure in \eqref{eq:fig2} for a visualisation of a tube-map homotopic to  $\hat{P_c}$.} We have a group morphism $\Theta_R\colon T^2_R(\Gc) \to T^2_R(\Gc)$  given by $(g,\d(R),e)\mapsto (g,\d(R),R\,\, g \trr R^{-1})$.
{From the assumptions in \peq{defR}, we can conclude that primary and thin 2-fluxes of a loop-particle $c$ are related as:}
$$
\hat{F}(c)=\Theta_R\big(F(c)\big) .
$$

\smallskip

%}}}
%{{{ PLEASE DO NOT DELETE THESE MATCHED-PAIR FOLD-MARK LINES, THANKS
% {
% In higher gauge theory we not only have gauge transformations between
% 2-connections but also 2-gauge transformations between gauge
% transformations.  This means that there is also a quotient groupoid, where gauge transformation connected by gauge-transformations are identified. For the moment,  we neglect the role of such {2-gauge transformations}. We will come back to this in \peq{2gaugetrans}.}

%This will be developed in a latter publication.}
% do not have an important role in this paper, but do have when considering invariants of welded knots derived from %crossed modules and 
%higher gauge theory}
%. This will be developed in a latter publication 
%\cite{prox}.

\smallskip

{Cf. \peq{HS} and \peq{whyresolve}.}
In analogy with $\C(\AUT(G))$ {in} topological gauge theory, for
an unknotted loop-particle $c$, 
{an approximation of the}
 algebra of
local symmetries of  finite 2-group topological higher gauge theory in $D^3$  is
the groupoid algebra $\C({\TRANS\big(T^2_R(\Gc)\big)})$.
Elementary loop-particles in our {simplified} model correspond to irreducible representations of
 $\C({\TRANS\big(T^2_R(\Gc)\big)})$. {The latter algebra is semisimple, as is
any finite groupoid algebra over $\C$.}% {\S\ref{semisimplicity}.}

Now recall the $G$-action on $\AUT(G)$ in  \eqref{actionnothigher}.
We have a left-action ``$.$'' of
$T^2_R(\Gc)\ltimes E$ on 
${\TRANS\big(T^2_R(\Gc)\big)}$, such that:
\beq \label{eq:leftact1}
T'. (F \ra{T } T\t F)=(F \ra{T'T} (T'T)\t F),
\eq
where $F\in T^2_R(\Gc)$ and $T,T'\in T^2_R(\Gc) \ltimes E$. 
{(We more generally {should} have a 2-group action \cite{Morton_Picken} of the underlying 2-group in
${\TRANS\big(T^2_R(\Gc)\big)}$.
This will be developed {in a forthcoming publication}.)}

\smallskip
Let ${\rm proj}\colon D^2\times [0,1] \to D^2$ be $(x,y,z)\mapsto (x,y)$.
Consider two loop-particles $c_1,c_2\in D^2 \times [0,1]$ in {\em generic position}. This means that ${\rm proj}(c^1 \cup c^2)$ is the disjoint union of two circles, such that no circle is nested inside the other, and moreover  the middle-points of the disks spanned by $c_1$ and $c_2$ have different $x$ coordinates.
\smallskip

\mdef \label{rect}
If $c_1$ and $c_2$ are in generic position,  we can consider connecting tubes $\Gamma_1$ and $\Gamma_2$, {given by {rectilinear} cylinders connecting the base-loop $O$ to their their longitudes $l_1$ and $l_2$, such that the line connecting the base-point of $O$ to the base-point of $l_{i}$ is a straight line}, as in Fig. \ref{cap2}. We call these connecting tubes {\em rectilinear connecting tubes}. Assuming that the configuration is generic, these rectilinear connecting  tubes $\Gamma_1$ and $\Gamma_2$  do not intersect the regular neighbourhoods of the other loop-particle, if these regular neighbourhoods, with boundary $T^2_{c_1}$ and  $T^2_{c_2}$, are chosen to be thin enough.
\begin{figure}[H]
\centerline{\relabelbox 
\epsfysize 4.5cm 
\epsfbox{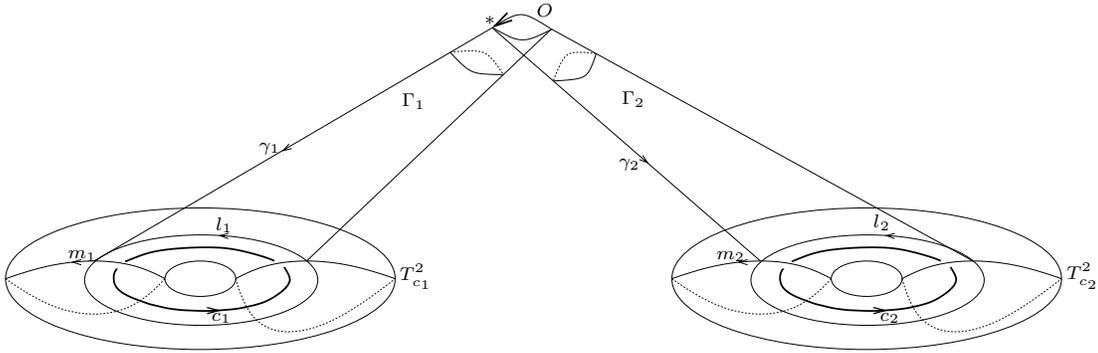}
\relabel{P}{$\scriptstyle{*}$}
\relabel{ll}{$\scriptstyle{l_1}$}
\relabel{l}{$\scriptstyle{c_1}$}
\relabel{l1}{$\scriptstyle{c_2}$}
\relabel{ll1}{$\scriptstyle{l_2}$}
\relabel{m1}{$\scriptstyle{m_2}$}
\relabel{m0}{$\scriptstyle{m_1}$}
\relabel{g}{$\scriptstyle{\gamma_1}$}
\relabel{g1}{$\scriptstyle{\gamma_2}$}
\relabel{G1}{$\scriptstyle{\Gamma_1}$}
\relabel{G2}{$\scriptstyle{\Gamma_2}$}
%\relabel{p}{$*$}
\relabel{O}{$\scriptstyle{O}$}
\relabel{T1}{$\scriptstyle{T^2_{c_1}}$}
\relabel{T2}{$\scriptstyle{T^2_{c_2}}$}
\endrelabelbox}
\caption{\label{cap2} A generic configuration {of} loop-particles $c_1$ and $c_2$.}
\end{figure}

%}}}
%{{{ Hilbert

\mdef\label{primthin} The Hilbert space 
$\C(\TRANS(T^2_R(\Gc))\tn \C(\TRANS(T^2_R(\Gc))$  describes the set of internal states for the pair of loop-particles, $c_1$ and $c_2$, in generic position. We order circles from left to right. The primary and thin 2-fluxes $F(c_i)$ and $\hat{F}(c_i)$ of the loop-particle $c_i$ are calculated by using a rectilinear tube {from} $O$ to $l_i$.

\smallskip

We now wish to consider the  {transformations} on
the Hilbert space arising when the  loop-particles $c_1$ and $c_2$
move.
Note,
cf. \peq{mpg}, that if there is an Aharonov--Bohm {like} effect, then
the pertinent mapping class group is now the loop braid group
$\LBG_2$ in two circles \S\ref{motion}.

%}}}
%{{{ the crucial AB point

\smallskip

%\mdef %\label{pa:ABconv}
%\ppm{Need to find a better way to state this crucial point (see next para):}
%In analogy with \S\ref{motivation1},
%we adopt the bookkeeping convention
%{for `non-pure' motions}
%that the only way to get a
%non-trivial 2-dimensional holonomy for a trajectory is when
%{every trajectory in the homotopy class of}
%the trajectory of one loop-particle
%passes through the 2-disk bounded by the other loop-particle.

%}}}
%{{{ AB details

\newcommand{\twohol}{2{\rm hol}}  % 2-holonomy
\newcommand{\TT}{{\mathsf T}}  % torus

\smallskip
\mdef \label{conv} \label{pa:ABconv}
Fix a motion of our system, and in particular a motion of each loop
$c_i$, and hence {of} each longitude $l_i$. 
Let $\Gamma_i^t$ be the rectilinear tube \peq{rect}
from $O$ to the longitude $l_i$, at time $t$, where $i=1,2$.
Let
 ${\rm Traj}_i^{(t_0,t_1)}$ be the `tube' in $D^3$ made by the
trajectory of the longitute $l_i$
{between $t_0$ and $t_1$}.
Let $A_i(t_0,t_1)\in T^2_R(\Gc)$ be
the 2-dimensional holonomy of the
torus $\TT_i^{t_0 , t_1}  = \Gamma_i^{t_0} {\rm Traj}_i^{(t_0,t_1)}
                                 (\Gamma_i^{t_1})^{{-V}}  $, namely:
%torus-map in $D^3$ obtained by the following vertical composition of
%tubes in $D^3$, namely
$$
A_i(t_0,t_1)  = \; 
\twohol( \Gamma_i^{t_0} {\rm Traj}_i^{(t_0,t_1)}
                                 (\Gamma_i^{t_1})^{{-V}} ), 
$$
{where $(\Gamma_i^{t_1})^{-V}$ is the vertical inverse of $\Gamma_i^{t_1}$.}

Our
model  %convention \ppm{assumption/model??}
is that the Aharonov-Bohm phase
applied to $c_i$
for a motion from $t_0$ to $t_1$
is given by:
%$A_i =(A_i(0,3),1_E) \in T^2_R(\Gc)\ltimes E$.
$A_i =(A_i(t_0,t_1),1_E) \in T^2_R(\Gc)\ltimes E$.
Thus for example if the motion exchanges the loops we will have:
%altogether:% (in the bisection notation):
\beq \label{eq:master0}
{\st{T_1}\otimes\st{T_2} \; \mapsto \; \st{A_2 T_2} \otimes \st{A_1 T_1}}.
\eq
% 
% \beq \label{eq:master0}
% \st{T}\otimes\st{T'} \; \mapsto \; \st{A_2 T'} \otimes \st{A_1 T}
% \eq

%}}}
%{{{ V-move

\mdef \label{solveamb} {The 2-dimensional holonomy along a torus-map is not homotopy invariant as explained in \peq{flatHOL}, whereas in the motion group picture for loop braids groups, the trajectories ${\rm Traj}_i^{(t_0,t_1)}$ are only defined up to isotopy \cite{baez_et_al}. Hence, when determining the 2-dimensional holonomy along $\TT_i^{t_0 , t_1} $ we must fix a representative for the homotopy class of $\TT_i^{t_0 , t_1} $.} 

Crucially for the construction in this paper, these representatives are chosen to derived from the {torus-maps} $P_c$ and $\hat{P_c}$ with which we calculate the primary and thin 2-flux of our loop particles $c$ \peq{primthin}, so we always look to the `closest' $P_c$ and $\hat{P_c}$  to $\TT_i^{t_0 , t_1} $. {Additionally we will assume that homotopically trivial  tubes give rise to trivial 2-holonomies. That such strong 
assumptions give rise to a loop-braid group representation is a {mystery} to the authors which will be investigated in future work. With such approximations a notion of Aharonov-Bohm phases can be determined from the primary $F(c)$ and thin $\hat{F}(c)$ 2-fluxes of our loop particles $c$.} We  now explain this.

\smallskip

\mdef
Suppose that {between $t=0$ and $t=1$} two loops
$c_1$ and $c_2$ exchange places as indicated in \eqref{trivial}, below:  
\vskip-0.3cm
\begin{minipage}[t]{0.85\textwidth}
\begin{equation*}
{\centerline{\relabelbox 
\epsfysize 1.6cm 
\epsfbox{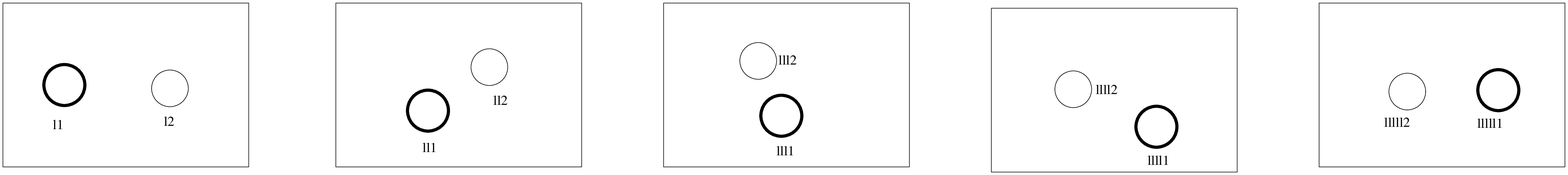}
\relabel{l1}{$\scriptstyle{c_1}$}
\relabel{l2}{$\scriptstyle{c_2}$}
\relabel{ll1}{$\scriptstyle{c_1}$}
\relabel{ll2}{$\scriptstyle{c_2}$}
\relabel{lll1}{$\scriptstyle{c_1}$}
\relabel{lll2}{$\scriptstyle{c_2}$}
\relabel{llll1}{$\scriptstyle{c_1}$}
\relabel{llll2}{$\scriptstyle{c_2}$}
\relabel{lllll1}{$\scriptstyle{c_1}$}
\relabel{lllll2}{$\scriptstyle{c_2}$}
\endrelabelbox}}
\end{equation*}
\end{minipage}
\begin{minipage}[t]{0.11\textwidth}
\quad \newline
\quad \newline
\begin{equation}\label{trivial}\end{equation}
\end{minipage}

\smallskip
\noindent
--- these, and further such figures below, are views from $O$ in the
ceiling of $D^3$.
%\smallskip \noindent
%
As the 2-dimensional holonomies associated to the tubes traced by
$c_1$ and $c_2$ in \eqref{trivial} are trivial, {since the homotopy classes of  $\TT_i^{t_0 , t_1} $ are trivial, for $i=1,2$},
then 
by \peq{solveamb} the Aharonov-Bohm phases to insert in
%\eqref{2fmeta1},
\eqref{eq:master0} are trivial.
Thus
this move  is associated to the following map
$V$     % $(-) \trl^*V_2(1)$
on the Hilbert space: 
%(just swap tensor  components):
\begin{equation}\label{2fmeta1}
  {(T_1^{-1} \t F_1 \ra{T_1}  F_1) \tn (T_2^{-1}\t F_2 \ra{T_2}  F_2)
    \xmapsto{(-)\trl^*V_1[2]} (T_2^{-1} \t F_2 \ra{T_2}  F_2) 
\tn  (T_1^{-1}\t F_1 \ra{T_1}  F_1)}.
\end{equation}
{(We just swap tensor components.)}
% \begin{equation}\label{2fmeta1}
% (F \ra{T} T\t F) \tn (F' \ra{T'} T\t F')\xmapsto{(-)\trl^*V_2(1)} (F' \ra{T'} T\t F') \tn  (F \ra{T} T\t F)
% \end{equation}
We will write $V $  as $ (-) \trl^*V_1[2]$ in anticipation of the role of $\LBG_2$.

%}}}
%{{{ analogue AB effect

% \begin{multline*}
% 
% 
%   \hskip-1cm
%   \xymatrix{& F \ar[d]|{T}
%     \\ & T\t F}
%   \xymatrix@R=0pt{\\ \\\otimes } \hskip-0.8cm
%   \xymatrix{ &(g',a') \ar[d]|{(m',w',k')}
%     \\ &(\d(m')w'g'w'^{-1}, k'+ w' \trr a' - (w'g'w'^{-1})\trr k')}
%   \xymatrix@R=0pt{\\  \\ \mapsto }\\ \\
%   \hskip-0.8cm\xymatrix{ &(g',a') \ar[d]|{(m',w',k')}
%     \\ &(\d(m')w'g'w'^{-1}, k'+ w' \trr a' - (w'g'w'^{-1})\trr k')}
%   \xymatrix@R=0pt{\\\\ \otimes }   \xymatrix{& (g,a) \ar[d]|{(w,k,m)}
%     \\ &(\d(m)wgw^{-1}, k+ w \trr a - (wgw^{-1})\trr k)}
% \end{multline*}

%}}}
%{{{ AB

\smallskip

\mdef
Now suppose  that the loop-particles $c_1$ and $c_2$ swap positions
in the way indicated in figure \eqref{lp}-\eqref{lp3} below.
%As we shall see, 
We need to determine Aharonov-Bohm phases $A_1,A_2\in T^2_R(\Gc) \ltimes E $
associated to the movement of the loop-particles.
We suppose that this movement of loop-particles happens in between $t=0$ and $t=3$. 
\vskip-0.3cm
\begin{minipage}[t]{0.85\textwidth}
\begin{equation*}
{\centerline{\relabelbox 
\epsfysize 3.7cm 
\epsfbox{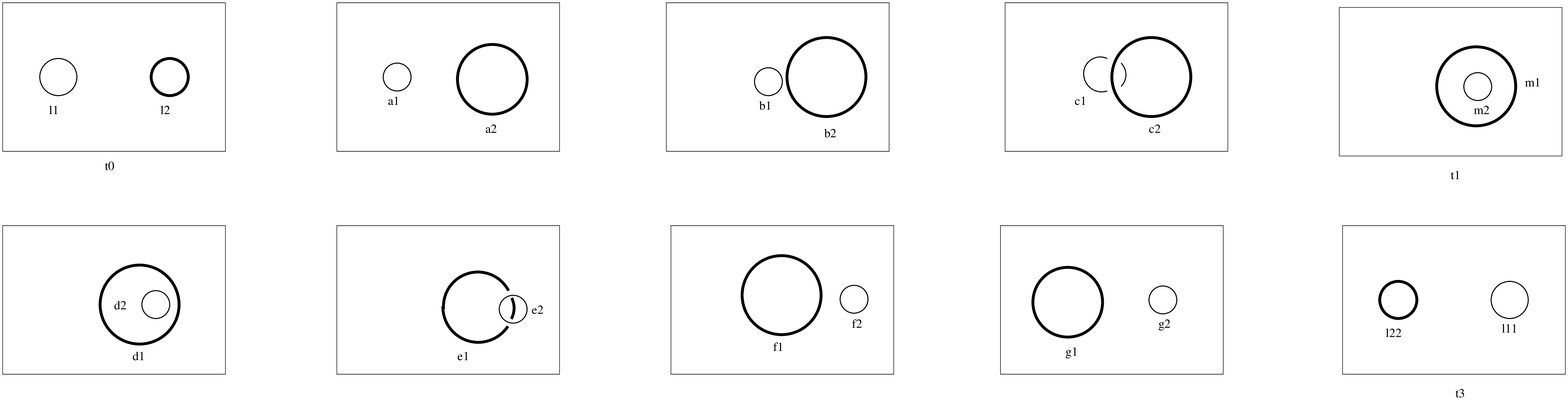}
\relabel{t1}{$\scriptstyle{t=1}$}
\relabel{t3}{$\scriptstyle{t=3}$}
\relabel{t0}{$\scriptstyle{t=0}$}
\relabel{l1}{$\scriptstyle{c_1}$}
\relabel{l11}{$\scriptstyle{c_1}$}
\relabel{l2}{$\scriptstyle{c_2}$}
\relabel{l22}{$\scriptstyle{c_2}$}
\relabel{l1}{$\scriptstyle{c_1}$}
\relabel{l2}{$\scriptstyle{c_2}$}
\relabel{a1}{$\scriptstyle{c_1}$}
\relabel{a2}{$\scriptstyle{c_2}$}
\relabel{b1}{$\scriptstyle{c_1}$}
\relabel{b2}{$\scriptstyle{c_2}$}
\relabel{m2}{$\scriptstyle{c_1}$}
\relabel{m1}{$\scriptstyle{c_2}$}
\relabel{c1}{$\scriptstyle{c_1}$}
\relabel{c2}{$\scriptstyle{c_2}$}
\relabel{d1}{$\scriptstyle{c_2}$}
\relabel{d2}{$\scriptstyle{c_1}$}
\relabel{e1}{$\scriptstyle{c_2}$}
\relabel{e2}{$\scriptstyle{c_1}$}
\relabel{f1}{$\scriptstyle{c_2}$}
\relabel{f2}{$\scriptstyle{c_1}$}
\relabel{g1}{$\scriptstyle{c_2}$}
\relabel{g2}{$\scriptstyle{c_1}$}
\endrelabelbox}}
\raisetag{50pt}
\end{equation*}
\end{minipage}
\begin{minipage}[t]{0.10\textwidth}
\quad \newline
\quad \newline
\begin{equation}\label{lp}\end{equation}
\quad \newline
\quad \newline
\begin{equation}\label{lp3}\end{equation}
\end{minipage}

%}}}
%{{{ Calcs

\smallskip

\mdef
In order to {determine} $A_1, A_2$, first 
recall \peq{2f} and \peq{another} that a loop-particle $c_i$ has two
2-fluxes, primary and thin, associated to it, denoted $F(c_i)$ and
$\hat{F}(c_i)$.
We let $F(c_i,t)$ and $\hat{F}(c_i,t)$ be their value at time $t$.
We assume that $F(c_i,t)$ and $\hat{F}(c_i,t)$ are calculated by using the rectilinear tube from $O$ to  $l_i$. Hence  $F(c_i,t)$ and $\hat{F}(c_i,t)$ are not defined for all $t$.

\smallskip

\mdef
We now calculate $A_1{(0,3)}=A_1{(1,3)}A_1{(0,1)}$
(recall \peq{prescomp} that the 2-dimensional holonomy of a
2-connection preserves the vertical composition of tubes).
Note that $A_1{(1,3)}$ is trivial
{(indeed  %in this interval there is a homotopic path
%  in which $c_1$ is always closer to $O$ than
%  $c_2$; thus
  {$\TT_1^{1,3}$} is homotopic to the constant tube at $O$)}.
On the other hand,
given our definition of thin and primary 2-flux of a loop-particle, {we have that}
$A_1{(0,1)}=\hat{F}(c_2,0)^{-1}$, the
inverse of the thin 2-flux of $c_2$. {Cf. \peq{solveamb}, this is because $\TT_1^{0,1}$ is homotopic to the inverse of the tube $\hat{P}_{c_2}$ in \peq{another}, with which we calculate the thin 2-flux of $c_2$.}
Hence, the primary 2-flux of $c_1$ for $t \ge 1$ is
$$
F(c_1,t)=\hat{F}(c_2,0)^{-1}\t F(c_1,0).
$$
{Here $\t$ denotes the conjugation action of $T^2_R(\Gc)$ on itself.}
\smallskip

\mdef
Determining $A_2(0,3)$ is more complicated. Given our conventions,
$A_2(0,1)$ is trivial.
In order to calculate $A_2(1,3)$, we substitute \eqref{lp3} by an
isotopic movement of loops, {as in \eqref{lp2} below;} {cf. \peq{solveamb}}.
%(This does not change $A_1(1,3)$ and $A_2(1,3)$ as the underpinning
%theory is % assumed to be
%topological.)
Between $t=1$ and $t=2$, the loop-particles $c_1$ and $c_2$,
exchange {`concentric position'},
with $c_2$ passing behind $c_1$, along the back part of the torus
$T^2_{c_1}$.
Hence the 2-dimensional holonomy of the trajectory of $c_2$ is related to the primary 2-flux of $c_1$ at time $t=1$. 

\vskip-0.5cm
\begin{minipage}[t]{0.85\textwidth}
\begin{equation*}
 {\centerline{\relabelbox 
\epsfysize 1.80cm 
\epsfbox{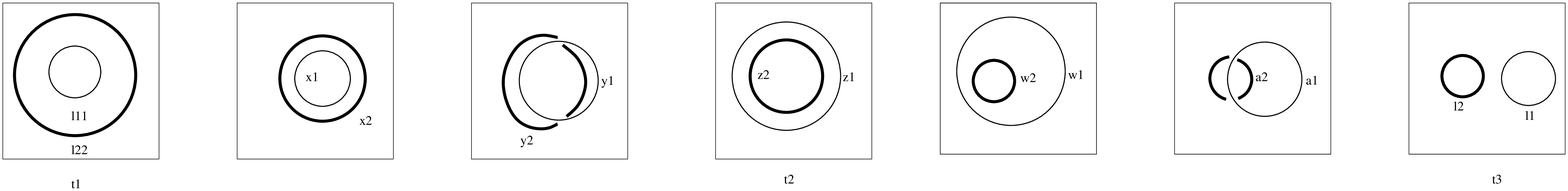}
\relabel{t1}{$\scriptstyle{t=1}$}
\relabel{t3}{$\scriptstyle{t=3}$}
\relabel{t2}{$\scriptstyle{t=2}$}
\relabel{l1}{$\scriptstyle{c_1}$}
\relabel{l11}{$\scriptstyle{c_1}$}
\relabel{l2}{$\scriptstyle{c_2}$}
\relabel{l22}{$\scriptstyle{c_2}$}
\relabel{a2}{$\scriptstyle{c_2}$}
\relabel{a1}{$\scriptstyle{c_1}$}
\relabel{a2}{$\scriptstyle{c_2}$}
\relabel{x1}{$\scriptstyle{c_1}$}
\relabel{x2}{$\scriptstyle{c_2}$}
\relabel{y1}{$\scriptstyle{c_1}$}
\relabel{y2}{$\scriptstyle{c_2}$}
\relabel{z1}{$\scriptstyle{c_1}$}
\relabel{z2}{$\scriptstyle{c_2}$}
\relabel{w1}{$\scriptstyle{c_1}$}
\relabel{w2}{$\scriptstyle{c_2}$}
\endrelabelbox}}
\end{equation*}
\end{minipage}
\begin{minipage}[t]{0.1\textwidth}
\quad \newline
\quad \newline
\begin{equation}\label{lp2}\end{equation}
\end{minipage}

\medskip

Now $A_2(1,3)=A_2(2,3)A_2(1,2)$.
{Putting together
 \peq{pa:ABconv} and \peq{solveamb}, we see that 
$A_2(1,2)$ is given by the inverse of the
primary 2-flux of $c_1$ at time $1$. This is because $\TT_2^{1 , 2} $ is homotopic to the vertical inverse of the torus-map $P_{c_1}$ at time $t=2$.}
Hence
$$
A_2(1,2)=F(c_1,1)^{-1} .
$$
Finally, %Also,
$A_2(2,3)$ is
given by
the  thin 2-flux of $c_1$ at time 2, hence $A_2(2,3)=\hat{F}(c_1,2)=\hat{F}(c_1,1)$.

Let us  give  explicit formulae. {Recall \peq{another}, if}
\begin{align*}
&F(c_1,0)=(z,\d(R),e)=F_1
&&\textrm{and}  &&F(c_2,0)=(w,\d(R),f)=F_2,
\end{align*}
then
$\hat{F}(c_1,0)=(z,\d(R),R\,\, z \trr R^{-1})
 =\Theta_R(F_1)$ and $\hat{F}(c_2,0)=(w,\d(R),R\,\,w \trr R^{-1})=\Theta_R(F_2)$. Hence:
 \begin{align*}
   A_1(0,3)=\hat{F}(c_2,0)^{-1}=\Theta_R(F(c_2,0))^{-1}%= \Theta_R(w,\d(R),f)^{-1}
=\Theta_R(F_2)^{-1}.\end{align*}
On the other hand:
\begin{align*}A_2(0,3)=A_2(2,3)\,A_2(1,2)&
= \hat{F}\big (c_1,2) \big)\,\, F\big (c_1,1 \big)^{-1}= \hat{F}\big (c_1,1 \big) \,\, F\big (c_1,1 \big)^{-1}\\
&=\Theta_R\big( {F}\big (c_1,1) \big) \,\, F\big (c_1,1 \big)^{-1}
\\&=
\Theta_R\big(\hat{F}(c_2,0)^{-1}\t F(c_1,0)\big) \,\,
\big(\hat{F}(c_2,0)^{-1}\t F(c_1,0)\big)^{-1}\\
%&=\Theta_R\big({\Theta_R(w,\d(R),f)}^{-1}\t (z,\d(R),e)\big)\,\,\,
%{\Theta_R(w,\d(R),f)}^{-1}\t {(z,\d(R),e)}^{-1}\\
&=    \Big(\Theta_R\big({\Theta_R(F_2}^{-1})\t F_1^{-1}\big)\Big)
\,\,\,
\Big( {\Theta_R(F_2)}^{-1}\t {F_1}\Big) .
\end{align*}

%}}}
%{{{ Putting together

\mdef
Putting the stages together,
the move of loop-particles in \eqref{lp} and \eqref{lp3} leads to the
following transformation in the Hilbert space
$\C({\TRANS\big(T^2_R(\Gc)\big)})\tn \C({\TRANS\big(T^2_R(\Gc)\big))}$:
\beq
{\st{T_1}\otimes\st{T_2} \mapsto \st{A_2 T_2} \otimes \st{A_1 T_1}}.
\eq
% \beq
% \st{T}\otimes\st{T'} \mapsto \st{A_2 T'} \otimes \st{A_1 T}
% \eq
{In full:}
%
%\ppm{The following may still have errors. $F,F'$??...}{Hopefully correct now.}
{
\begin{equation}\label{2fmeta2}
\begin{split}
  (T_1^{-1}\t F_1 \ra{T_1}  F_1) \tn (T_2^{-1} \t F_2 \ra{T_2}  F_2)
  &\;\xmapsto{(-)\trl^*S_1^+[2]}\;
       A_2.(T_2^{-1}\t F_2 \ra{T_2}  F_2) \;\tn\;  A_1.(T_1^{-1}\t F_1 \ra{T_1}  F_1)\\
       &\quad \,= \; (T_2^{-1}\t F_2 \ra{A_2 T_2} A_2\t F_2)
       \;\tn\;  (T_1^{-1}\t F_1 \ra{A_1 T_1} A_1 \t F_1),
\end{split}
\end{equation}
}
%\begin{equation}\label{2fmeta2}
% \begin{split}
%   (F \ra{T} T\t F) \tn (F' \ra{T'} T\t F')
%   \xmapsto{(-)\trl^*S_1^+(2)}
%       & A_2.(F' \ra{T'} T\t F') \tn  A_1.(F \ra{T} T\t F)\\
% =& (F' \ra{A_2 T'} A_2 T'\t F') \tn  (F \ra{A_1 T} A_1 T\t F)
% \end{split}
% \end{equation}
% }
%\ppm{Paul, remember to type in a fixed version of the above!}
where
 \begin{align*}
  &A_1=\big(\Theta_R(F_2)^{-1},1_E) &\textrm{ and }
  &&A_2=\Big(\Theta_R\big({\Theta_R(F_2}^{-1})\t F_1\big)\,\,\,
\big({\Theta_R(F_2)}^{-1}\t {F_1}^{-1}\big),1_E\Big).
\end{align*}

% If 2-fluxes $(g,a)$
% and $(h,b)$ are associated to circles $c_1$ and $c_2$ the associated
% 2-dimensional holonomy are $(h,0_{\ker(\partial)})$ for $c_1$ and
% $(1_G,- h^{-1}\trr a)$ for $c_2$. This leads to the following Hilbert
% space map: 
% \begin{multline*}
% \hskip-1cm\xymatrix{&(w^{-1}\partial(m)^{-1} g w,w^{-1} \trr a -w^{-1}\trr k +w^{-1}g \trr k) \ar[d]|{(w,k,m)} \\&(g,a)}\xymatrix@R=1pt{\\\\\\ \otimes} \hskip-1cm\xymatrix{&(v^{-1}\partial(n)^{-1} h v,v^{-1} \trr b -v^{-1}\trr l +v^{-1}h \trr l) \ar[d]|{(v,l,n)} \\&(h,b)}\xymatrix{\\ \mapsto}\\ \\
% \xymatrix{&(v^{-1}\partial(n)^{-1} h v,v^{-1} \trr b -v^{-1}\trr l +v^{-1}h \trr l) \ar[d]|{(v,l-h^{-1}\trr a,n)} \\&(h,b)}\xymatrix@R=1pt{\\\\\\ \otimes} \hskip-1cm \xymatrix{&(w^{-1}\partial(m)^{-1} g w,w^{-1} \trr a -w^{-1}\trr k +w^{-1}g \trr k) \ar[d]|{(h^{-1}w,k,m)} \\&(h^{-1}gh,h^{-1} \trr a)}
% \end{multline*}

%}}}
%{{{ metamorphosis
 
\begin{Remark}[Higher gauge flux metamorphosis]\label{2fluxmeta}
%There have not yet been any experimental observations attributable to
%higher gauge theory, and hence none of a higher Aharonov--Bohm effect.
%However,  
The unitary transformations $(-)\trl^*V_1[2]$ and
$(-)\trl^*S_1^+[2]$ in
$\C({\TRANS\big(T^2_R(\Gc)\big)})\tn
\C({\TRANS\big(T^2_R(\Gc)\big))}$,
defined in {Equations} \eqref{2fmeta1} and \eqref{2fmeta2},
are our proposal for a simplified higher gauge analogue of flux metamorphosis
of topological gauge theory \eqref{eq:rhorep}; see \cite{LL,Bais1,Bais2,Bais3,Bais4}.
\end{Remark}

%}}}
%{{{ loop

%Combining 
It turns out (e.g. by direct calculation) that 
the  unitary transformations $(-)\trl^*V_1[2]$
%\ppm{actually I don't understand why it is called this!?
%I do not like the presumption as you know, but even with the
%presumption it should be called $V_1(2)$, no??}
and $(-)\trl^*S_1^+[2]$ yield a 
representation of the   $2$-circle loop braid group $\LBG_2$
in $\C({\TRANS\big(T^2_R(\Gc)\big)})\tn
\C({\TRANS\big(T^2_R(\Gc)\big))}$,
as the physical setting would suggest.
This can be extended to a representation of
%the {$n$-strand } \ppm{not strands!} loop braid group
$\LBG_n$ in $\C({\TRANS\big(T^2_R(\Gc)\big))}^{n \tn}$.
One way to {\em prove} this is to note that 
we have an underpinning
{W-\biker} in $\TRANS\big(T^2_R(\Gc)\big)$ {of the form:}  
\begin{equation}\label{compact}\hskip-3.5mm
\xymatrix@R=1pt{\\\\ \big(F_1,F_2\big)\stackrel{X^+_R}{\longmapsto}}\hskip-1cm
\xymatrix{& F_1  \ar[dr]|\hole
  \ar[dr]|<<<<<{A_1\,\,\bullet \quad\,\,\,}|\hole
  & F_2  \ar[dl]\ar[dl]|<<<<<{\quad \quad \,\, \bullet\, A_2\,\,\,\,\,\,\,\,}\\
&A_2 \t F_2 & A_1 \t F_1}
\hskip1mm\raisebox{-.31in}{$, \textrm{
  for }\begin{cases} A_1=(\Theta_R(F_2)^{-1},1_E)\\
  A_2=\Big(
  {%\big(
    \Theta_R\big({\Theta_R(F_2}^{-1})\t F_1\big)}
  \,\,\big(
{\Theta_R(F_2)}^{-1}\t {F_1}^{-1}\big), 1_E\Big)\end{cases}
$}\hskip-3mm\xymatrix@R=0pt{\\ \\.}
\end{equation}
And then we can use use the general Theorems {\ref{rep12} and  \ref{uf}.}
This sets the relevance of the work in this paper for modelling Aharonov-Bohm like phenomena for loop-particles in topological higher gauge theory. 

\smallskip {It is proven in \S\ref{refBIK} that
%Perhaps surprisingly, 
all groupoids ${\TRANS\big(T^2_R(\Gc)\big))}$, where $R\in E$, are isomorphic, and  all {W-bikoids} $X^+_R$ become the same by using a  canonical groupoid isomorphism.} Hence, as far as representations of $\LBG_n$ are concerned, it suffices to consider the case $R=1_E$. In this case  {$T^2_{1_E}(\Gc)\cong G \rtimes_{\trr} A$,} where $A=\ker(\partial)\subset E$; note that {$\ker \partial$} is  an abelian subgroup. The arrows of  {$\TRANS\big(S^2(\Gc)\big)\doteq\TRANS\big(T^2_{1_E}(\Gc)\big)$} take the form:
\begin{equation}\label{transhgt}
  (g,a) \ra{(w,k,m)} \Big (\d(m)wgw^{-1}, k \,+\, w \trr a \,-\,
  (wgw^{-1})\trr k\Big),
  \;\;\;\textrm{ where }g, w \in G,\,\,\ a,k\in \ker(\partial) \textrm{ and } m \in E.
\end{equation}
(Note that we switched to additive notation.) And  the pertinent {W-bikoid $X^+_{{gr^*}}$} takes the form:
\begin{equation}\label{addX2}
\hskip-3mm\xymatrix@R=0pt{\\ \big((z,a),(w,b)\big)\stackrel{X^+_{{gr^*}}}{\longmapsto}}\hskip-1cm
\xymatrix{& (z,a)  \ar[dr]|\hole
  \ar[dr]|<<<<<<<{\overline{w}\,\,\,\,\bullet\,\, \quad}|\hole
  & (w,b)  \ar[dl]\ar[dl]|<<<<<<<{\quad \quad \,\, \bullet -\overline{w}\trr a}\\
&(w,a+b-w^{-1} \trr a)  & \big(w^{-1}zw,w^{-1} \trr a\big)}
\raisebox{-.31in}{$, \textrm{
  for }\begin{cases} \overline{w}=(w^{-1},0_{\ker(\partial)},1_E)
  \\  -\overline{w}\trr a=    (1_G,-w^{-1} \trr a,1_E)
\end{cases}\hskip-3mm\xymatrix@R=0pt{\\ .}
$}
\end{equation}
{This $X^+_{{gr^*}}$ is a spin-off of the abelian $gr$-group W-bikoid of \eqref{xgr1}, but the underlying groupoid of $X^+_{{gr^*}}$ has an additional component, related to gauge transformations on 2-connections, which is however `decoupled' from the rest of the structure. The additional level of structure present in $X^+_{{gr^*}}$ nevertheless becomes very visible when considering 2-gauge transformations between gauge transformations (see \peq{2gaugetrans} below), and is essential when dealing with invariants of welded knots derived from $X^+_{{gr^*}}$, see \cite{prox}.}

\smallskip

\mdef \label{2gaugetrans} {For simplicity we  take $R=1_E$. As mentioned in \peq{2gaugetrans}, there is an important variant of the groupoid $\TRANS\big(T^2_{1_E}{(\Gc)}\big)\cong \TRANS\big(S^2(\Gc)\big)$, 
where   gauge transformations related by 2-gauge transformations \cite[\S 2.3.3]{schreiber_waldorf2}
\cite[\S 4.3.1]{martins_picken} are identified. This yields a quotient groupoid $\overline{ \TRANS\big(S^2(\Gc)\big)}$ of $\TRANS\big(S^2(\Gc)\big)$, where:}
\begin{multline*}
\big ((g,a) \ra{(w,k,m)} (\d(m)wgw^{-1}, k+ w \trr a - (wgw^{-1})\trr k)  \big)
\\ \cong \big ((g,a) \ra{(w\d(e),k,m \, (wg)\trr e \,w\trr e^{-1} )}
  (\d(m)wgw^{-1}, k + w \trr a - (wgw^{-1})\trr k)  \big).
\end{multline*}
{Here $g, w \in G$, $ a,k\in \ker(\partial)$  and $ m,e \in E.$  
% In this groupoid  $\overline{ \TRANS\big(S^2(\Gc)\big)}$ the last component of the arrow part of a morphism is now coupled to the rest. 
The formulae in \eqref{addX2} give a bikoid structure in $\overline{ \TRANS\big(S^2(\Gc)\big)}$. Details on the construction of $\overline{ \TRANS\big(S^2(\Gc)\big)}$ and its relation to higher gauge theory will appear in \cite{prox}.}
%This bikoid gives rise to invariants of welded knots.}

%The underlying birack (in fact biquandle) of $X^+_{1_G}$ appeared previously in \cite{martins_kauffman}. From the calculations in {\it loc cit} it follows that  $X^+_{1_E}$ has more information that the {birack} defined from $G$ alone.
%.}

%In a future publication, we intend  to investigate if representations of the necklace braid group \cite{Necklace} can be derived from \biker s. We will be mainly interested in  \biker s arising from topological higher gauge theory.

%{take out the necklace reference at the end of the paper!!}

%}}}
%}}}

\section{Preliminaries}\label{M-Preliminaries}
%{{{ Prelims
%{{{ Prelims

\mdef
If $\Cc$ is a category, the class of objects of $\Cc$ is denoted
by $\Obj(\Cc)$ or  by $\Cc_0$.
The class of morphisms of $\Cc$ is denoted by $\Cc_1$ or by
$\Mor(\Cc)$.
If  $x,y \in \Cc_0$,  the set of morphisms $x \to y$ is denoted  $\hom_{\cal C}(x,y)$ or $\hom(x,y)$.

%\mdef Given a category ${\cal C}$ and objects $x,y$ of ${\cal C}$ the set of morphisms $x \to y$ is denoted  $\hom_{\cal C}(x,y)$ or $\hom(x,y)$.
\smallskip

\mdef\label{coc} Our default convention for compositions in categories and groupoids is reverse to that of function composition. Given objects $x,y,z$ of ${\cal C}$, composition is a map $(f,g) \in \hom(x,y)\times\hom(y,z) \mapsto f\star g \in \hom(x,z)$.  However the symbol $\circ$  will always denote the usual function composition. I.e., given set maps $f\colon X \to Y$ and $g \colon Y \to Z$, then $g\circ f$ will denote  the usual $(g\circ f)(x)=g(f(x))$.

\smallskip

\mdef We put $\N=\{0,1,2,\dots\}$ and $\Z^+=\{1,2,\dots\}$.

\smallskip

\mdef\label{sym} {The product we use in the symmetric group $\Sigma_n$,  of bijections $f\colon \{1,\dots, n\} \to \{1,\dots,n\}$, is $f.g=g \circ f$.}  
%Also put $\Sigma_0=\Sym(\emptyset)$, a group with one single element.

% \smallskip

% 
% \mdef\label{sym-cat} Cf. \cite[XI.3.2]{Kassel}. We define the permutation category ${\cal PERM}$ as being the monoidal category with objects set $\N$. The set of morphism $m\to m'$ is only non-empty if $m=m'$. In the latter case $\hom(m,m)=\Sigma_n$. The composition in  ${\cal PERM}$ is  given by the product in $  \Sigma_n$. Hence the composition convention in  ${\cal PERM}$ is opposite to that of function composition. This is justified by the graphical calculus introduced later in \S\ref{ss:msgc}.
% 
% \smallskip
% 
% 
% \mdef\label{sym-cat-mon} Recall that ${\cal PERM}$ is a monoidal category. The monoidal identity is $0$ and  $m\otimes n=m+n$. Given $f\colon \{1,\dots, m\}  \to \{1,\dots, m\}$ and $g\colon \{1,\dots, n\}  \to \{1,\dots, n\}$, the tensor product $(f \tn g)\colon \{1,\dots,m+n\} \to \{1,\dots,m+n\}$ is the obvious tensor product of permutations.  Tensoring with $0$ amounts to doing nothing.
% 

% \subsubsection{Conventions for groupoids}
\subsection{General conventions for groupoids}

Let
$\Gamma=\Gam$
be a groupoid \cite{MacLane,Kassel,Higgins}.	
Here, $\Gamma_1$ is the set of morphisms (arrows) of $\Gamma$,
$\Gamma_0$ is the set of objects of $\Gamma$, and
$\sigma, \tau\colon \Gamma_1 \to \Gamma_0$
denote source and target maps, respectively. Given $x,y \in \G_0$, the set of morphisms $x \to y$ hence is $\hom(x,y)=\hom_{\Gamma}(x,y)=\{\g \in \G_1 \colon \sigma(\gamma)=x \textrm { and } \tau(\gamma)=y\}$.

\mdef An arrow $\gamma$ of $\Gamma$ will sometimes be denoted as
$(x \ra{\gamma} y)$, thus $x=\sigma(\gamma)$ and
$y=\tau(\gamma)$.

\smallskip
The composition map in $\Gamma$ provides, given any triple of objects $(x,y,z) \in (\Gamma_0)^3$, a map of sets: 
\begin{align*}&\hom(x,y) \times \hom(y,z) \to \hom(x,z)\\
&\left (\big (x \ra{\gamma} y \big), \big (y \ra{\phi} z\big) \right) \longmapsto \left (x \ra{\gamma} y \right)\star \left (y \ra{\phi} z \right).
\end{align*} 

\mdef \label{compnot}The composition of  arrows in $\G$ will be denoted in a variety of ways, as indicated below:
 $$\left(x \ra{\gamma} y \right)\star \left (y \ra{\phi} z \right)=\left(x \ra{\gamma} y   \ra{\phi} z \right)= 
\left (x \ra{\gamma \star \phi} z \right).$$

\mdef \label{invnot} The inverse arrow to $(x \ra{\gamma} y)$ can, and will, be denoted in four different ways:
$$
\big (x \ra{\gamma} y\big)^{-1} =\big ( y \ra{\gamma^{-1} }x \big)
 = \big ( y \ra{\overline{\gamma} }x \big)=\overline{(x \ra{\gamma} y)}.
$$
The identity map $\iota\colon \Gamma_0 \to \Gamma_1$
%denote so $\iota$
sends $x \in \Gamma_0$ to the arrow
$(x \ra{\id_x} x)$.

\begin{Definition}[Action groupoid] \label{de:ag}
Let the group $G$ have a left-action ``$.$'' on set $X$.
The {\em action groupoid}
$\AGr{X}{G}$ 
has $X$ as set of objects.
The arrows have the form: $(x \ra{g} g.x)$, where $x \in X$ and $g\in G$.
The composition is such that:
$(x\ra{g} g.x)\star (g.x \ra{h} (hg).x)=(x \ra{hg} (hg).x)$.
(Note the order {of} the product.)
Of course, the identity map is such that $\iota(x)=(x \ra{1_G} x)\doteq \id_x$, where $1_G$ is the unit element of $G$.
\end{Definition}

\begin{Example}[$\AUT(G)$] \label{autg}
  {Let $G$ be a group. The action groupoid of the conjugation action of $G$ on $G$ is denoted by $\AUT(G)$. Hence the set of objects of $\AUT(G)$ is $G$, and  the arrows have the form $x \ra{g} gxg^{-1}$, where $g,x \in G$.
  The composition in $\AUT(G)$ {therefore} is such that:
  $$
  (x \ra{g} gxg^{-1}) \star (gxg^{-1} \ra{h} hgxg^{-1} h^{-1})
  =(x\ra{hg}  hgxg^{-1} h^{-1}).
  $$}
\end{Example}

\subsection{Groupoid algebras and their representations}\label{ga}
\begin{Definition}[Groupoid algebra]\label{groupoid_algebra}
Let $\Gamma$ be a groupoid. {Cf. \cite{ga1,willerton2008twisted,Morton}.
The groupoid algebra $\C(\Gamma)$ of $\Gamma$}
is the free $\C$ vector space $\C \Gamma_1$
over the set $\Gamma_1$ of {morphisms of $\Gamma$}. The product on generators is:
\begin{equation}\label{pga}
\big (x \ra{\gamma} y\big)\star \big (x' \ra{\gamma'} y'\big)=
\big (x \ra{\gamma} y\big) \big (x' \ra{\gamma'} y'\big)
=\delta(y,x') \big (x \ra{\gamma \star \gamma'} y' \big ), \textrm{ where } (x \ra{\gamma} y\big), \big (x' \ra{\gamma'} y'\big)\in \Gamma_1.
\end{equation}
(Here $\delta(y,x)=1$, if $y=x$, and $0$ otherwise).
If $\Gamma_0$ is finite,  then $\C(\Gamma)$ is a unital algebra with identity:
$$\displaystyle 1_{\C(\Gamma)}=\sum_{x \in \Gamma_0} \iota(x)=\sum_{x \in \Gamma_0} \big( x \ra{\id_x} x\big) .$$ 
\end{Definition}

\mdef \label{star} The groupoid algebra has a {*-structure} given by \smash{$(a \ra{\gamma} b)^*=(b \ra{\gamma^{-1}} a)$,} for all  \smash{$(a \ra{\gamma} b) \in {\G_1}$.}

\begin{Remark}\label{qdis} If $G$ a finite group, then
$\C(\AUT(G))$ (see {Example}  \ref{autg})
is  \cite{willerton2008twisted} the underlying algebra  of the quasi-triangular  Hopf algebra  $D(G)$ in  \eqref{eq:hopf},  the quantum double of the group algebra of $G$; see 
\cite{Alt,Gould,Kassel,Turaev}.  
\end{Remark}

\mdef In this paper a representation (or right-representation) of a unital algebra $A$ will mean a right $A$- module $V$, with action denoted $(v,a) \in V \times A \mapsto v.a \in A$, such that $v. 1_A=v$, for each $v \in V$. 

\begin{Definition}\label{unirep} {Let $(V,\langle,\rangle)$ be an inner product space. A representation $\rho\colon (v,k) \in V \times \C(\G) \mapsto x.k \in V $ of $\C(\Gamma)$ is called unitary if it is unitary with respect to the $*$-structure in \peq{star}. This means that:} $${\Big \langle u.(x \ra{\gamma} y), v \Big \rangle=\Big \langle u,v.(y \ra{\gamma^{-1}} x) \Big  \rangle \textrm{, for all $u,v \in V$, and all $(x \ra{\gamma} y)\in \Gamma_1$}.}$$
\end{Definition}
\mdef{ Suppose that $\Gamma$ is finite and $\rho$ is a finite dimensional representation of $\C(\G)$ on $V$. Standard techniques, as e.g. in   \cite[page 12]{Gould}, prove that we can find an inner product in $V$ with respect to which $\rho$ is unitary. Standard tecniques, as in \cite{Gould}, also prove that $\C(\Gamma)$ is semisimple. This will be addressed elsewhere.}

\begin{Example}\label{rre}
The right regular representation of $\C(\Gamma)$ has
$\C(\Gamma)$ as underlying vector space. The action is by right multiplication. This representation is unitary if we
pick the inner product in $\C(\Gamma)$ that renders different arrows
orthonormal. 
\end{Example}

\begin{Example}[Object regular representation]\label{obre} The object regular representation of $\C(\Gamma)$ has the free vector space $\C \Gamma_0$ on $\Gamma_0$ as underlying vector space. The action is such that if $x \in\Gamma_0$ and $(a \ra{\gamma} b)\in {\G_1}$, then:
 $$x . (a \ra{\gamma} b)=\delta(a,x)b. $$
This representation is unitary if we choose $\Gamma_0$ to be an orthonormal basis of $\C \Gamma_0$.

{If $x \in \Gamma_0$, then $\C\pi_0(\Gamma,x)$ is a sub-representation of $\C \Gamma_0$. Here  $\C\pi_0(\Gamma,x)$ is the connected component of $\Gamma_0$ to each $x$ belongs. As in the case of the quantum-double of a finite group algebra \cite[Chapter IV]{Gould}, the latter representation can be modified in order to include a representation of the automorphism group of $x$.}
%This will be developed elsewhere.} 
\end{Example}

\subsection{The virtual braid group and the  welded braid group}
\label{ss:vbg}

Virtual knot theory is discussed e.g. in \cite{kamada,kauffman,manturov}. In this paper all manifolds, {homeomorphisms}, isotopies, immersions, embeddings, etc, are assumed to be piecewise linear \cite{BZ,Kw}, unless otherwise specified. 
%{We will frequently adopt the (common) convention of drawing piecewise linear knots as if they were smooth.}

{Let $N$ be an oriented 1-dimensional manifold, possibly with boundary. An immersion $Q$ of $N$ in $\R\times [0,1]$ is the image $Q=\phi(N)$ of an immersion map $\phi\colon N \to \R \times [0,1]$; {i.e. $\phi$ is piecewise linear and locally injective.}} The set of multiple points $M(Q)$ of $Q$ is $M(Q)=\{z \in Q\colon \#\phi^{-1}(z)\ge 2\}$, where $\#$ denotes set cardinality. A double-point (also called a double intersection) is a multiple point such that $\#\phi^{-1}(z)=2$.

% The interval $[0,1]$ is always oriented in the positive direction. 
% Below we put $p\colon \R \times [0,1] \to [0,1]$ to be the projection map such that $p(s,t)=t$.

{The interval $[0,1]$ is  oriented in the positive direction. We put $p\colon \R \times [0,1] \to [0,1]$ to be  $p(s,t)=t$.}

\begin{Definition}[Virtual braid diagram]
A virtual braid diagram, of degree $n$, is the image $Q$ of an immersion  {map $\phi\colon \sqcup_{i=1}^n I_i  \to \R \times [0,1]$, where $I_i=[0,1]$.
We put $\phi_i\colon  I_i \to \R\times [0,1]$ to denote the restriction of $\phi$ to $I_i$. The component $Q_i\subset Q$ of the immersion $Q$ is by definition given by  $\phi(I_i)$, where $i=1,\dots,n$.}

We impose that:
{\bf (1)} multiple points {of $Q$} are all transversal \cite[\S 5.2]{RS} double intersections between different components of the immersion,
{\bf (2)} each double point of $Q$ is assigned  a label from `classical crossings'
(positive and negative) and `virtual crossings', as below: 
$$
\xymatrix@R=1pt{\\ \\\textrm{positive crossing}}\hskip-1cm
\xymatrix{ &\ar[dr]|\hole &\ar[dl] \\ && }\xymatrix@R=1pt{\\\\,} \qquad \qquad
\xymatrix@R=1pt{\\ \\\textrm{negative crossing}} \hskip-1cm
\xymatrix{ &\ar[dr] &\ar[dl]|\hole \\ && }\xymatrix@R=1pt{\\\\,}\qquad \qquad
\xymatrix@R=1pt{\\ \\\textrm{virtual crossing}}\hskip-1cm
\xymatrix{ &\ar[dr] &\ar[dl] \\ && }\xymatrix@R=1pt{\\\\;}
$$
{\bf (3)} for each $i\in \{1,\dots,n\}$, the map $p\circ \phi_i \colon [0,1] \to [0,1]$ is an orientation reversion homeomorphism; {\bf (4)}  $Q \cap (\R \times \{1\})=\{1,\dots,n\}\times \{1\}$ and  $Q \cap (\R \times \{0\})=\{1,\dots,n\}\times \{0\}$, {\bf (5)} for each $i\in \{1,\dots, n\}$, {$\phi_i(1)=(i,0)$}; {\bf (6)} the restriction of  $p\colon \R \times [0,1] \to [0,1]$ to the set of double points is injective (i.e. double points occur at different heights).
%  -- {this is a technical condition not imposed in the Introduction.}

We consider two virtual virtual braid diagrams $Q$ and $Q'$ equivalent if they can be deformed one into the other via an ambient isotopy $t\mapsto f_t$ of $\R \times [0,1]$, fixing the boundary, and such that $f_t(Q)$ is a virtual braid diagram for each $t \in [0,1]$.
\end{Definition}

\mdef\label{diagtoperm} A virtual braid diagram $Q$ {yields} a bijection $U_Q\colon \{1,\dots,n\} \to \{1,\dots, n\}$. Our convention is that $U_Q$ sends $j$ to the unique $i$ such that $\phi_i(0)=j\times \{1\}$. (Note that $\phi_i(1)=(i,0)$.)  {See also diagram} \eqref{diagtoPERM}, below.

\begin{Definition}[The monoid ${\rm M V}{{[n]}}$ of virtual braid diagrams]Given a positive integer $n$, we have a monoid ${\rm M V}{{[n]}}$ of  virtual braid diagrams in degree $n$. Given $M$ and $M'$ in ${\rm M V}{{[n]}}$,  the multiplication $MM'$ is the vertical juxtaposition of $M$ and $M'$, where $M$ stays on top of $M'$, {followed by rescalling in the height direction.} The identity of the monoid ${\rm M V}{{[n]}}$ is given by the equivalence class of the diagram $I{{[n]}}$, below: $$\xymatrix@R=1pt{\\ I{{[n]}} =}\In\xymatrix@R=1pt{\\\\\,\,.}$$
(Here, and also below, we put $(i,1)=i$, where $i=1,2,\dots,n$.)
\end{Definition}

The monoid  ${\rm M V}{{[n]}}$ is freely generated \cite{kamada} by the equivalence classes of the diagrams displayed below:
\begin{equation}\label{mongens}
\hskip-0.5cm{\xymatrix@R=1pt{\\S^+_a{{[n]}}=}}
\Spic
\xymatrix@R=1pt{ \\\textrm{, } \quad S^-_a{{[n]}}=} 
\Smpic
\xymatrix@R=1pt{ \\ \textrm{,} \quad V_a{{[n]}}=}
\Vpic\xymatrix@R=1pt{\\\\.}
\end{equation}
Here $a\in \{1,2,\dots,n-1\}$.
From now on, we will frequently not display some of the vertical strands which do not interact with the rest of the diagram, when drawing compositions of generators of the monoid  ${\rm M V}{{[n]}}$. Also note that given our convention for the multiplication in  ${\rm M V}{{[n]}}$, diagrams are read from top to bottom.

%}}}
\begin{Definition} \label{de:WBG}
%{{{ def WBG start
The {$n$-strand} welded braid group
$\WBG_n$  {\cite{kamada,KLam,fenn_et_al}}
is the monoid (easily proven to be a group) obtained from the monoid ${\rm M V}{{[n]}}$ of virtual braid diagrams, {by imposing the relations \eqref{eq:SRI}--\eqref{eq:F}, below.   Here $a,b\in \{1,\dots, n-1\}$ for \eqref{eq:SRI} and \eqref{eq:Loc}, and $a \in \{1,\dots,n-2\}$ for \eqref{eq:R3}---\eqref{eq:F}.} {Some relations are written algebraically and also diagramatically, since there will be several diagrammatic calculations later on.}
\begin{multline} \label{eq:SRI}
\mbox{\bf Classical and virtual Reidemeister II moves (RII$^{\pm}$ and VII):}\\
V_a{{[n]}}\,\, V_a{{[n]}}\stackrel{{\rm VII}}{=} {I[n]}, \qquad S^+_a{{[n]}}\,\, S^-_a{{[n]}}\stackrel{{\rm RII^+}}{=}{I[n]} \quad \textrm{ and }\qquad S^-_a{{[n]}} \,\, S^+_a{{[n]}}\stackrel{{\rm RII^-}}={I[n]}. 
\end{multline}
% $$\xymatrix@R=15pt@C=15pt{\\\qquad \textrm{I.e.}: \quad}\xymatrix@R=15pt@C=15pt{& a\ar[dr] & a+1\ar[dl]\\
%             &  a\ar[dr] & a+1\ar[dl]\\
%              & a & a+1} \xymatrix@R=15pt@C=15pt{\\\qquad=} \xymatrix@R=15pt@C=15pt{& a\ar[dr]|\hole & a+1\ar[dl]\\
%             &  a\ar[dr] & a+1\ar[dl]|\hole\\
%              & a & a+1} \xymatrix@R=15pt@C=15pt{\\\qquad=} \xymatrix@R=15pt@C=15pt{& a\ar[dr] & a+1\ar[dl]|\hole\\
%             &  a\ar[dr]|\hole & a+1\ar[dl]\\
%              & a & a+1}\xymatrix@R=15pt@C=15pt{\\\qquad=} \xymatrix@R=15pt@C=15pt{& a\ar@{<-}[dd]  &a+1 \ar@{<-}[dd]\\ \\ & a & a+1}$$
\begin{multline}  \label{eq:Loc} 
{\mbox{\bf Locality,  if $|a-b|\ge 2$ then:}}\\
\begin{split}
&V_a{{[n]}}\,\, V_b{{[n]}}=V_b{{[n]}}\,\, V_a{{[n]}},  &&V_a{{[n]}}\,\, S_b^+{{[n]}}=S_b^+{{[n]}}\,\, V_a{{[n]}},
 &V_a{{[n]}}\,\, S_b^-{{[n]}}=S_b^-{{[n]}}\,\, V_a{{[n]}},  
 \\ &S_a^+{{[n]}}\,\, S_b^-{{[n]}}=S_b^-{{[n]}}\,\, S_a^+{{[n]}}, && S_a^-{{[n]}}\,\, S_b^-{{[n]}}=S_b^-{{[n]}}\,\, S_a^-{{[n]}}, & S_a^+{{[n]}}\,\, S_b^+{{[n]}}=S_b^+{{[n]}}\,\, S_a^+{{[n]}}.
 \end{split}
\end{multline}

\begin{equation}\label{eq:R3}
\mbox{\bf Reidemeister III move (RIII):}
%$
%\xymatrix@R=1pt{\\ \\\\ \\
\;\;\;\;\;\;    {S_a^+}{{[n]}}\,\, {S_{(a+1)}^+}{{[n]}}\,\, {S_a^+}{{[n]}}\stackrel{\rm RIII}{=}{S_{(a+1)}^+}{{[n]}}\,\, {S_a^+}{{[n]}}\,\, {S_{(a+1)}^+}{{[n]}}.    %\hspace{1in}
%$
\end{equation}
$$\xymatrix{\\ &\qquad \textrm{{Graphically:}} \quad}
\xymatrix@R=15pt@C=15pt{ &a\ar[dr]|\hole & a+1\ar[dl] & a+2\ar[d]  \\
  &a\ar[d] & a+1\ar[dr]|\hole & a+2\ar[dl] \\
  &a\ar[dr]|\hole & a+1\ar[dl] & a+2\ar[d] \\
  &a & a+1 & a+2}
\xymatrix@R=1pt{ \\ \\ \\ \\ \quad \stackrel{\rm RIII}{=}}
\xymatrix@R=15pt@C=15pt{ &a\ar[d] & a+1\ar[dr]|\hole & a+2\ar[dl]  \\
  &a\ar[dr]|\hole & a+1\ar[dl] & a+2\ar[d] \\
  &a\ar[d] & a+1\ar[dr]|\hole & a+2\ar[dl] \\
  &a & a+1 & a+2}\xymatrix@R=1pt{\\\\\\\\\\\,\,\,\,\,.}
$$

%}}}
%{{{ RIII

%\item
\begin{equation}\label{VIII}
\mbox{\bf Virtual Reidemeister III  move (VIII):} \quad
%$ %\xymatrix@R=1pt{\\ \\\\ \\
\;\;\;\;\;\;    \hspace{-0.2in}
  V_a{{[n]}}\,\, V_{(a+1)}{{[n]}}\,\, V_a{{[n]}}\stackrel{\rm VIII}{=}V_{(a+1)}{{[n]}}\,\, V_a{{[n]}}\,\, V_{(a+1)}{{[n]}}     .
%$
\end{equation}
$$\xymatrix{\\ &\qquad \textrm{{Graphically:}} \quad}\xymatrix@R=15pt@C=15pt{ &a\ar[dr]& a+1\ar[dl] & a+2\ar[d]  \\
                                                                     &a\ar[d] & a+1\ar[dr] & a+2\ar[dl] \\
                                                                     &a\ar[dr] & a+1\ar[dl] & a+2\ar[d] \\
                                                                     &a & a+1 & a+2}
                                                                     \xymatrix@R=1pt{ \\ \\ \\ \quad \stackrel{\rm VIII}{=}}
                                                                     \xymatrix@R=15pt@C=15pt{ &a\ar[d] & a+1\ar[dr] & a+2\ar[dl]  \\
                                                                     &a\ar[dr] & a+1\ar[dl] & a+2\ar[d] \\
                                                                     &a\ar[d] & a+1\ar[dr] & a+2\ar[dl] \\
                                                                     &a & a+1 & a+2}\xymatrix@R=1pt{\\\\\\\\\\\,\,\,.} $$

%\item
\begin{equation} \label{eq:VRIII}
\mbox{\bf Mixed Reidemeister III move (MIII):} \quad
\;\;\;    V_a{{[n]}}\,\, V_{(a+1)}{{[n]}}\,\, {S_a^+}{{[n]}}\stackrel{\rm MIII}{=}{S_{(a+1)}^+}{{[n]}}\,\, V_a{{[n]}}\,\, V_{(a+1)}{{[n]}} .   %\hspace{1.3in}
\end{equation}
$$ \xymatrix{\\ &\qquad \textrm{{Graphically:}}} 
    \xymatrix@R=15pt@C=15pt
{ &a\ar[dr] & a+1\ar[dl] & a+2\ar[d]  \\
  &a\ar[d] & a+1\ar[dr] & a+2\ar[dl] \\
  &a\ar[dr]|\hole & a+1\ar[dl] & a+2\ar[d] \\
  &a & a+1 & a+2}
\xymatrix@R=1pt{ \\ \\ \\ \quad \stackrel{\rm MIII}{=}}
\xymatrix@R=15pt@C=15pt
{ &a\ar[d] & a+1\ar[dr]|\hole & a+2\ar[dl]  \\
   &a\ar[dr] & a+1\ar[dl] & a+2\ar[d] \\
   &a\ar[d] & a+1\ar[dr] & a+2\ar[dl] \\
   &a & a+1 & a+2}\xymatrix@R=1pt{\\\\\\\\\\\,\,\,.}
$$

% 
% \begin{multline}  \label{eq:Loc} 
% \mbox{\bf Locality,  if $|a-b|\ge 2$ then:}\\
% V_a{{[n]}}\,\ V_b{{[n]}}=V_b{{[n]}}\,\, V_a{{[n]}},  \;\;\;\;\;\;\;\; V_a{{[n]}}\,\, S_b^\pm{{[n]}}=S_b^\pm{{[n]}}\,\, V_a{{[n]}}, \;\;\;\;\;\;\;\; S_a^\pm{{[n]}}\,\, S_b^\pm{{[n]}}=S_b^\pm{{[n]}}\,\, S_a^\pm{{[n]}}.\\
%  V_a{{[n]}}\,\, S_b^\pm{{[n]}}=S_b^\pm{{[n]}}\,\, V_a{{[n]}}, \;\;\;\;\;\;\;\; S_a^\pm{{[n]}}\,\, S_b^\pm{{[n]}}=S_b^\pm{{[n]}}\,\, S_a^\pm{{[n]}}
% \end{multline}

% \begin{multline}  \label{eq:Loc} 
% {\mbox{\bf Locality,  if $|a-b|\ge 2$ then:}}\\
% \begin{split}
% &V_a{{[n]}}\,\, V_b{{[n]}}=V_b{{[n]}}\,\, V_a{{[n]}},  &&V_a{{[n]}}\,\, S_b^+{{[n]}}=S_b^+{{[n]}}\,\, V_a{{[n]}},
%  &V_a{{[n]}}\,\, S_b^-{{[n]}}=S_b^-{{[n]}}\,\, V_a{{[n]}},  
%  \\ &S_a^+{{[n]}}\,\, S_b^-{{[n]}}=S_b^-{{[n]}}\,\, S_a^+{{[n]}}, && S_a^-{{[n]}}\,\, S_b^-{{[n]}}=S_b^-{{[n]}}\,\, S_a^-{{[n]}}, & S_a^+{{[n]}}\,\, S_b^+{{[n]}}=S_b^+{{[n]}}\,\, S_a^+{{[n]}}.
%  \end{split}
% \end{multline}

\begin{equation}\label{eq:F} \hspace{-.13in}
 \mbox{\bf {Welded} Reidemeister III move (WIII):} \quad   V_a{{[n]}}\,\, {S_{(a+1)}^+}{{[n]}}\,\, {S_a^+}{{[n]}}\stackrel{\rm WIII}{=}{S_{(a+1)}^+}{{[n]}}\,\, {S_a^+} {{[n]}}\,\, V_{(a+1)}{{[n]}}.
\end{equation}
 $$\xymatrix{\\ &\qquad \textrm{{Graphically:}}}\xymatrix@R=15pt@C=15pt{ &a\ar[dr] & a+1\ar[dl] & a+2\ar[d]  \\
    &a\ar[d] & a+1\ar[dr]|\hole & a+2\ar[dl] \\
    &a\ar[dr]|\hole & a+1\ar[dl] & a+2\ar[d] \\
    &a & a+1 & a+2}
  \xymatrix@R=1pt{ \\ \\ \\ \quad \stackrel{\rm WIII}{=}}
  \xymatrix@R=15pt@C=15pt{ &a\ar[d] & a+1\ar[dr]|\hole & a+2\ar[dl]  \\
    &a\ar[dr]|\hole & a+1\ar[dl] & a+2\ar[d] \\
    &a\ar[d] & a+1\ar[dr] & a+2\ar[dl] \\
    &a & a+1 & a+2}\xymatrix@R=1pt{\\\\\\\\\\\,\,\,.}
$$
\end{Definition}
\mdef\label{twoperpectives} It is easy to see that $\WBG_n$ is isomorphic to the group formally generated by the symbols $S^+_a{{[n]}}$ and $V_a{{[n]}}$, where $a=1,\dots, n-1$, with relations \eqref{eq:SRI} to \eqref{eq:F}, desconsidering all relations involving $S^-_a{{[n]}}$, $a=1,\dots, n-1$. This is the point of view taken in
\cite{paul_et_al,baez_et_al,damiani,fenn_et_al}. We will {use the   two perspectives} (presentation of $\WBG_n$ as a group, versus presentation of $\WBG_n$ as a monoid) in this paper.

\begin{Definition}[Virtual braid group] The group with the same generators
  as $\WBG_n$,
  and the same relations, except for the {welded} Reidemeister III (WIII) move \eqref{eq:F},
  is called the virtual braid group $\VBG_n$. 
\end{Definition}
 \begin{Definition}[Braid group] The group with generators $S^\pm_a{{[n]}}$, $a=1,\dots, n-1$, and all relations in $\WBG_n$ not involving the {$V_n[a]$,}  $a=1,\dots, n-1$, is called the braid group $\BG_n$ in $n$-strands. 
\end{Definition}
\noindent Clearly, inclusion of generators provides an epimorphism $\VBG_n\to \WBG_n$.

\smallskip
\mdef\label{deftab} {Consider $\Sigma_n$  in \peq{sym},} with the usual presentation with generators $V_{a}{{[n]}}$, where $a=1,\dots,n-1$, and relations: (i) $V_a{{[n]}}\, V_{(a+1)}{{[n]}}\, V_a{{[n]}}\stackrel{}{=}V_{(a+1)}{{[n]}}\, V_a{{[n]}}\, V_{(a+1)}{{[n]}}$; (ii)
$V_a{{[n]}}\ V_b{{[n]}}=V_b{{[n]}}\, V_a{{[n]}}$ if $|a-b|\ge 2$; and (iii) $V_a{{[n]}}^2=\id$. And then, concretely, $V_a(n)=t_{a,a+1}^n\in \Sigma_n$, the transposition  exchanging $a$ and $a+1$.
\smallskip

\mdef\label{proj} {We have epimorphisms $U\colon \WBG_n,\VBG_n \to \Sigma_n$ sending both $S_a^\pm{{[n]}}$ and $V_a{{[n]}}$ to $V_a{{[n]}}$. Given $B$ in  $\WBG_n$ or $\VBG_n$, then $U(B)$, denoted $U_B$, is called the underlying permutation of $B$.  Looking at elements of  $\WBG_n$ and $\VBG_n$ as equivalence classes $[Q]$ of virtual braid diagrams $Q$, it holds that $U_{[Q]}=U_Q$ in \peq{diagtoperm}.}

 \begin{Remark}[Tournant Dangereux]\label{tornant}
%   The \red{welded} Reidemeister
%  III relation \eqref{eq:F} is not `reversible', meaning {that} 
%  $V_a{{[n]}}\,{S_{(a+1)}^+}{{[n]}}\,{S_a^+}{{[n]}}={S_{(a+1)}^+}{{[n]}}\, {S_a^+} {{[n]}}\, V_{(a+1)}{{[n]}}$ and
%  ${S_a^+}{{[n]}}\,{S_{(a+1)}^+}{{[n]}}\, V_a{{[n]}}=V_{(a+1)}{{[n]}}\, {S_a^+} {{[n]}} \, {S_{(a+1)}^+}{{[n]}}$, {which is equivalent to the Forbidden move $F_2$ in \cite{KLam} {here} called the reverse {welded} Reidemeister III move, are not equivalent.}
%  
  {The {welded} Reidemeister
 III relation \eqref{eq:F} is not equivalent to:}
\begin{equation}\label{rev}
{S_a^+}{{[n]}}\,
{S_{(a+1)}^+}{{[n]}}\,
 V_a{{[n]}}
=V_{(a+1)}{{[n]}}\, {S_a^+} {{[n]}} \, {S_{(a+1)}^+}{{[n]}}.
\end{equation}
{(We have just reversed the order of the factors in \eqref{eq:F}.) The relation in \eqref{rev} is equivalent to the ``Forbidden move $F_2$ in \cite{KLam}''.}
{The group obtained from  $\WBG_n$ by imposing \eqref{rev}
is called the ``unrestricted virtual braid group'' \cite{KLam,BBD}. We note that virtual knot theory in the presence of the {welded} Reidemeister III move and its reverse \eqref{rev} is trivial; see e.g. \cite{KT,Nelson:2001}; however non-trivial linking phenomena  may still occur \cite{NF}.} 
 %${S_a^+}{{[n]}}\,{S_{(a+1)}^+}{{[n]}}\, V_a{{[n]}}=V_{(a+1)}{{[n]}}\, {S_a^+} {{[n]}} \, {S_{(a+1)}^+}{{[n]}}$ 
% \red{the reverse welded Reidemeister III move is called the ``unrestricted virtual braid group'' \cite{BBD,KLam}.} We note that virtual knot theory in the presence of the \red{welded} Reidemeister III move and its reverse is trivial; see e.g. \cite{KT,Nelson:2001}, however linking phenomena  still occur \cite{NF}.} 
\end{Remark}
%}}}

\begin{Definition}[{The monoidal categories ${\WB}$ and ${\VB}$}]\label{wbcat} 
% {Cf. \peq{sym-cat-mon} and \peq{sym-cat}. 
We have strict monoidal categories {${\WB}$ and ${\VB}$}, the welded braid category and the virtual braid category. Objects in  ${\WB}$ and ${\VB}$ are given by non-negative integers. The sets of morphism $m\to m'$ are non-empty if, and only if, $m=m'$. Also put $\hom_{{\WB}}(m,m)=\WBG_m$ and $\hom_{{\VB}}(m,m)=\VBG_m$. On objects, the tensor product  is  $n\otimes n'=n+n'$. Given morphisms $B\colon n \to n$ and $B'\colon n'\to n'$, in ${\WB}$ or ${\VB}$, their tensor product $(B\otimes B')\colon (n+n') \to (n+n')$  is derived from  the horizontal juxtaposition $Q_B Q_B'$ of virtual braid diagrams $Q_B$ and $Q_{B'}$, for $B$ and for $B'$, moving crossings up and down, if necessary, in order that all crossings appear at different heights.

\noindent By \eqref{eq:Loc}, the tensor product of morphisms is well defined and we  have  monoidal categories ${\WB}$ and ${\VB}$.
\end{Definition}

% \mdef \label{undercat} We have forgetful monoidal functors ${\cal U}$ from {${\WB}$ and ${\VB}$} to ${\cal PERM}$ in \peq{sym-cat-mon}. They are the identity on objects and send a virtual or welded braid to its underlying permutation; see \peq{diagtoperm} and \peq{proj}.

\mdef \label{catpres} {The monoidal categories  {${\WB}$ and ${\VB}$}  can be presented by generators and relations, as in \cite[\S XII]{Kassel}. The generators are $S^\pm[2]$ and  $V[2]$. For ${\VB}$ the only relations are the ones in \eqref{eq:SRI}, for $n=2$, and \eqref{eq:R3}--\eqref{eq:VRIII}, for $n=3$, where $\SSS{2}{3} = 1_1 \otimes \SSS{1}{2}$, $\SSS{1}{3} =  \SSS{1}{2} \otimes 1_1$ and so on. (We note that the locality relation   \eqref{eq:Loc} is automatically satisfied in a monoidal category.) In order to present ${\WB}$, we must also add relation \eqref{eq:F}.}

\subsection{{The loop braid group: % as the motion group of a
    %disjoint union of circles:
    isomorphism with the welded braid group}} \label{motion}
%{{{ motion

Papers addressing the loop braid group, {or close relatives}, include
\cite{baez_et_al,damiani,GS,brendle_hatcher}.
We essentially follow  \cite{damiani}.

%We temporarily work in the smooth category. 
%It will be convenient to pick out the plane $\RR^2 \times \{ 0 \}
%\subset \RR^3$.
{In this subsection, our convention for the 3-disk is: $ D^3=\{z\in \R^2\colon ||z||\leq 2\} \times [-1,1]$.}
Let $n\in \Z^+$. Let $C_n \subset D^3  $ 
be a disjoint union of unlinked circles in  {$D^3\cap (\R^2\times\{0\})$,}
oriented counterclockwise.
For definiteness put
$S^1_j =(\{z \in {\R^2}\colon||z||=1/(3n)\}+j/n)\times \{0\}$
and
$C_n=\bigcup_{j=1}^n S^1_j$, {which is therefore oriented.} 

\mdef
{Let $\Homeo(D^3,C_n)$ be the group  of orientation preserving
homeomorphisms $D^3 \to D^3$, which restrict to an orientation preserving homeomorphism $C_n  \to C_n$,
and which are the
identity over $\d D^3$. The group operation in $\Homeo(D^3,C_n)$ is composition.
Two homeomorphisms $f,g\colon \Homeo(D^3,C_n)$ are said to be
{pair-isotopic}, if there exists a map
$H\colon D^3 \times [0,1] \to D^3$ such that
$(z,t) \in D^3 \mapsto H(z,t,s)\in  D^3$
is in
$\Homeo(D^3,C_n)$, for each $s\in [0,1]$.} 
% 
%}}}
% %{{{ movies
% 
% Here are `movie reels' \cite{} of two 
% %representatives of 
% elements in case $n=4$: 
% $$
% \centerline{\relabelbox 
% \epsfysize 4.3cm 
% \epsfbox{\epsdir/Tor2a.eps}
% \endrelabelbox}
% $$
% 
% %}}}
%{{{ MCG

\begin{Definition}
{Let $\MCG(D^3,C_n)$ (the ``mapping class group of $(D^3,C_n)$'')
be the group of homeomorphisms $f\in \Homeo(D^3,C_n)$,
considered up to pair isotopy. The group law in $\MCG(D^3,C_n)$ is
induced by the composition in $\Homeo(D^3,C_n)$, which descends to the
quotient under pair isotopy. In this paper, $\MCG(D^3,C_n)$ is also
denoted $\LBG_n$ and called the ``loop braid group {(in $n$ circles)}''.}  
\end{Definition}
As we shall illustrate shortly,
 $\MCG(D^3,C_n)$ can be thought of as a group of
braidings of loop world-sheets in {(3+1)-dimensions}, generalising the ordinary braid group
regarded as a group of braidings of point world-lines in {(2+1)-dimensions}; {see \cite{baez_et_al,damiani}}.
Hence the term `loop braid group' and the  notation $\LBG_n$.
%{In light of \ref{phys}, $\LBG_n$ is central to this paper, when it comes to applications to loop-particles braiding  in $D^3$.}

For some manipulations, and for example representation theoretically,
the given realisation of $\LBG_n$ is difficult to work with. 
Next we prepare %to give 
a useful presentation.

\begin{Theorem} [{Baez et al  \cite{baez_et_al}, Damiani \cite{damiani}}]
\label{main0}
There exists an isomorphism:
$$
T\colon B \in  \WBG_n \longmapsto T_B \in  \MCG(D^3,C_n) =\LBG_n.
$$
{This isomorphism sends {a group generator \peq{twoperpectives}} $g\in \WBG_n$ of the form  $S_a^+{{[n]}}$ or $V_a{{[n]}}$,  
to the pair-isotopy class of the  homeomorphism
$f^g=\phi_{t=1}^g\colon (D^3,C_n) \to (D^3,C_n)$ at the end of the isotopies $\Phi^g=\big(t\in [0,1]\mapsto \phi_t^g\in \Homeo(D^3,\emptyset)\big)$,  
of $D^3$, relative to {its} boundary, indicated in the Fig. \ref{ISotopy}, below
%in Fig. \ref{ISotopy}.}
(only a few virtual braid strands and circles are displayed, but the general picture should be clear).  
}
$$\centerline{\relabelbox 
\epsfysize 6.25cm 
\epsfbox{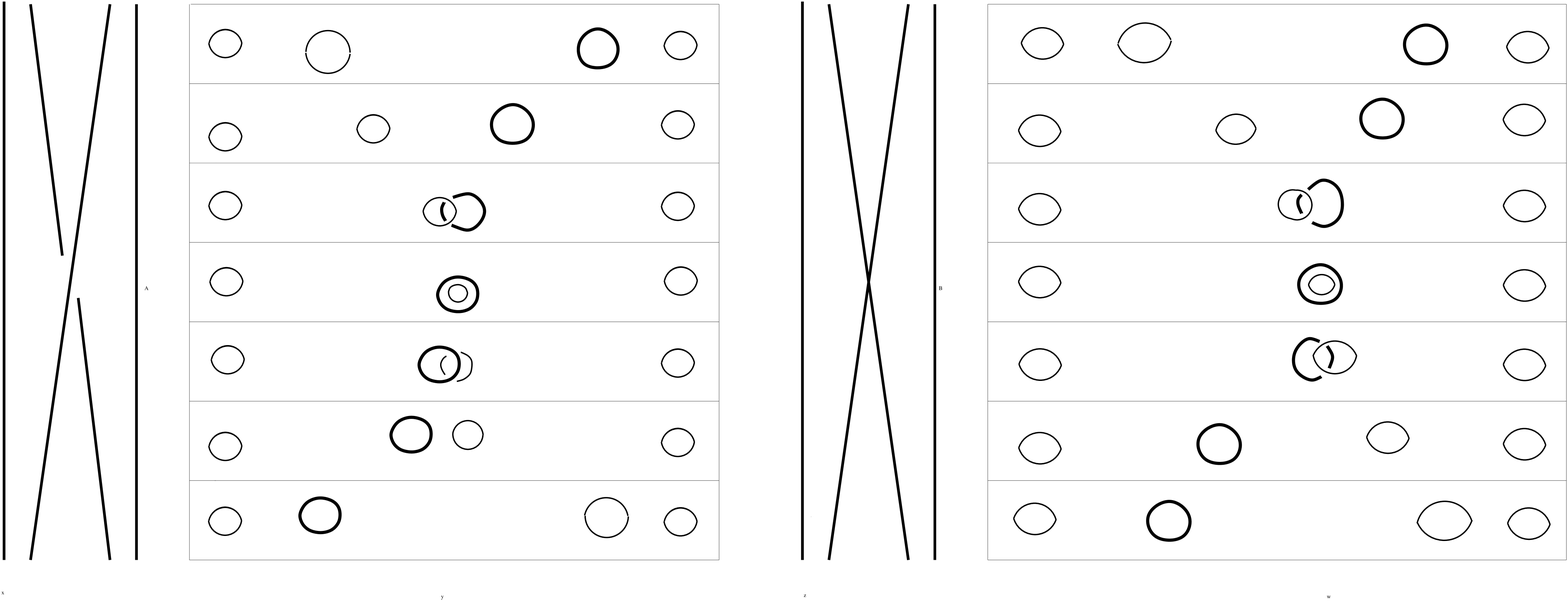}
\relabel{x}{$g=S^+_a{{[n]}}$}
\relabel{y}{$\Phi^g$}
\relabel{z}{$g=V_a{{[n]}}$}
\relabel{w}{$\Phi^g$}
\endrelabelbox}$$
\vskip-1cm
\begin{figure}[H]
 {\caption{{{A generator $g \in \WBG_{n}$  gives rise to an isotopy $\Phi^{g}=\big(t\in [0,1]\mapsto \phi_{\,t}^{\,g}\in \Homeo(D^3,\emptyset)\big)$ of $D^3$.}\label{ISotopy}}}}
 \end{figure} 
\end{Theorem}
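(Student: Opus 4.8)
The plan is to construct $T$ on the group generators via the isotopies of Figure \ref{ISotopy}, verify that it descends to a well-defined homomorphism $\WBG_n \to \MCG(D^3,C_n)$, and then prove bijectivity by routing both groups through a common algebraic model: the group of symmetric (permutation-conjugacy) automorphisms of the free group $F_n = \pi_1(D^3 \setminus C_n)$, which is free on the $n$ meridians $x_1,\dots,x_n$ of the circles.

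First I would make the isotopies $\Phi^g$ precise. For $g = V_a[n]$ the motion rotates the adjacent circles $S^1_a, S^1_{a+1}$ about their common centre inside the horizontal plane $\R^2 \times \{0\}$, exchanging their positions; for $g = S_a^+[n]$ the motion contracts $S^1_a$, threads it through $S^1_{a+1}$, and re-expands it on the far side. Each $\Phi^g$ is an ambient isotopy of $D^3$ relative to $\d D^3$, so $f^g = \phi^g_{t=1}$ lies in $\Homeo(D^3,C_n)$ and determines a class $T_g \in \MCG(D^3,C_n)$. To see that $T$ is well defined on $\WBG_n$ I would check that the classes $T_g$ satisfy every defining relation \eqref{eq:SRI}--\eqref{eq:F}, exhibiting for each one an explicit pair-isotopy between the two composite motions. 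The commuting relations \eqref{eq:Loc} are immediate since the supports are disjoint; RII$^\pm$ and VII are standard ``undo'' isotopies; and the three-strand relations \eqref{eq:R3}, \eqref{VIII}, \eqref{eq:VRIII}, \eqref{eq:F} are verified by the usual movie arguments. This produces a homomorphism $T$ agreeing with the prescription in the statement.

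For bijectivity I would use the Dahm homomorphism $D \colon \MCG(D^3,C_n) \to \mathrm{Aut}(F_n)$ induced by the action of a homeomorphism on $\pi_1$ of the complement. Tracking the effect of $f^{V_a}$ and $f^{S_a^+}$ on meridians shows that $D \circ T$ sends $V_a[n]$ to the transposition automorphism swapping $x_a, x_{a+1}$ and $S_a^+[n]$ to a basis-conjugating automorphism (e.g. $x_a \mapsto x_a x_{a+1} x_a^{-1}$, $x_{a+1} \mapsto x_a$, the rest fixed); the orientation-preserving condition on $C_n$ guarantees each meridian goes to a conjugate of a meridian rather than its inverse. By the theorems of Dahm and Goldsmith the image of $D$ is exactly the group $\Sigma\mathrm{Aut}(F_n)$ of symmetric automorphisms, and $D$ is injective. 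On the algebraic side this same assignment is the standard faithful action realizing $\WBG_n \cong \Sigma\mathrm{Aut}(F_n)$; matching generators shows $D \circ T$ is precisely this isomorphism. Hence $T$ is injective because $D \circ T$ is, and surjective because $D$ and $D\circ T$ have the same image while $D$ is injective, so $T$ is an isomorphism.

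I expect the main obstacle to be the injectivity of the Dahm homomorphism $D$, i.e. the assertion that a homeomorphism of $(D^3,C_n)$ acting trivially on $\pi_1(D^3 \setminus C_n)$ is pair-isotopic to the identity. This is the genuinely three-dimensional input: it rests on the asphericity of the complement and on Waldhausen-type rigidity for mapping class groups of irreducible, sufficiently large (Haken) $3$-manifolds, applied to $D^3 \setminus N(C_n)$ with its boundary pattern. By contrast, realizing every symmetric automorphism geometrically (surjectivity of $D$ onto $\Sigma\mathrm{Aut}(F_n)$) is comparatively routine once the generators above are realized, and the purely algebraic identification $\WBG_n \cong \Sigma\mathrm{Aut}(F_n)$ is a standard computation with the defining relations \eqref{eq:SRI}--\eqref{eq:F}.
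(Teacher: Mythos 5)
The paper itself does not prove Theorem \ref{main0}: the result is imported from \cite{baez_et_al} and \cite{damiani}, with \cite{GS} and \cite{Wil} cited only to reconcile the motion-group, mapping-class-group, smooth and PL formulations. Your outline is therefore best compared with the proofs in that cited literature, and its overall architecture is exactly theirs: define $T$ on generators by explicit motions, verify the relations \eqref{eq:SRI}--\eqref{eq:F} by isotopies, then compose with the Dahm homomorphism $D\colon \MCG(D^3,C_n)\to \mathrm{Aut}(F_n)$, using Dahm--Goldsmith for the image and injectivity of $D$, and the Fenn--Rim\'anyi--Rourke/McCool identification of $\WBG_n$ with the permutation-conjugacy subgroup $\Sigma\mathrm{Aut}(F_n)$ \cite{fenn_et_al,Sav}. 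Your generator images agree, up to crossing conventions, with the action recorded in \S\ref{bandh} of the paper.

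However, your justification of the one step you correctly single out as the crux --- injectivity of $D$ --- is wrong as stated. The complement $D^3\setminus C_n$ is \emph{not} aspherical: the paper itself computes $\pi_2(D^3\setminus C_n,*)$ to be a free module of rank $n$ over $\Z[\pi_1(D^3\setminus C_n,*)]$ (the ``balloons'' of Fig. \ref{K}). Likewise $D^3\setminus N(C_n)$ is reducible --- each balloon is an essential $2$-sphere --- so it is not Haken, and Waldhausen rigidity for irreducible, sufficiently large $3$-manifolds simply does not apply to it. The genuine content of Dahm's and Goldsmith's injectivity proofs is precisely the management of these essential spheres, via Laudenbach-type ``homotopy implies isotopy'' results for reducible $3$-manifolds, including showing that twists along the essential spheres act trivially in the relevant mapping class group. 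As argued, that step fails; it must be replaced by a citation to Dahm/Goldsmith (cf. \cite{GS,damiani}) or by an argument of Laudenbach type. A smaller inaccuracy: faithfulness of the action $\WBG_n\to \Sigma\mathrm{Aut}(F_n)$ is itself a theorem of \cite{fenn_et_al} (building on McCool), not a ``standard computation with the defining relations''; checking the relations gives well-definedness and surjectivity onto $\Sigma\mathrm{Aut}(F_n)$, but not injectivity.
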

\noindent {We note that the convention for loop braiding in Fig. \ref{ISotopy} is opposite to that in Equations \eqref{lp}--\eqref{lp3} in \S\ref{physmot}.}

\medskip

{This theorem was stated in \cite{baez_et_al} for smooth motion groups. The interpretation of smooth motion groups in  terms of topological mapping class groups appears in \cite{damiani}. In order to connect the two points of view, we can use results in \cite{GS}, to pass from mapping class groups to motion groups,  and \cite{Wil}, to go from the continuous setting to the  smooth setting.}
{In \cite{brendle_hatcher} it is proven that $\WBG_n$ has another topological realisation as the fundamental group of the configuration space of rings in $D^3$.}

{Several other groups are isomorphic to $\WBG_n$, e.g. the group of conjugating automorphisms of a free group \cite{Sav} -- {we will go back to this in \S\ref{bandh}.} For further discussion see \cite{damiani}.}

% 
%}}}
%}}}
\section{Bikoids and welded bikoids (W-\biker s)}\label{M-bk}
%{{{ Bikoids

\subsection{The wreath product groupoid $\gwrr{n}=\gwr{n}$ of a groupoid $\Gamma$} \label{thegr}
%{{{ GAMMA

%\newcommand{\N}{{\mathbb N}}

%Let $n$ be a  positive integer
Let $\Gamma = \Gam$ be a groupoid.
For $n \in \Z^+$,
we write $(x_1,\dots,x_n)=\underline{x} \in \Gamma_0^n$ and
$(\gamma_1,\dots,\gamma_n)=\underline{\gamma} \in \Gamma_1^n$.
The {\em product groupoid }
$\Gamma^n$  is  $\Gamma^n=(\Gamma_1^n , \Gamma_0^n , \underline{\sigma} ,
 \underline{\tau} ,  \underline{\iota}, \underline{\star} )$,
with  product composition.
Morphisms in $\Gamma^n$ can be represented in a variety of ways, as indicated below:
$$\big(\underline{\gamma}\colon \underline{x} \to \underline{y}\big)=
\big(\underline{x} \ra{\underline{\gamma}} \underline{y} \big)
\;\;=\;\; \big( (x_1,\dots, x_n) \ra{(\gamma_1,\dots, \gamma_n)} (y_1,\dots, y_n)\big)
\;\;=\;\; \big ( (x_1 \ra{\gamma_1} y_1) \tn \dots \tn (x_n \ra{\gamma_n} y_n)\big).
$$

\mdef \label{pa:gf} Cf. \peq{sym} for our convention on the symmetric group $\Sigma_n$.
We have a left-action of  $\Sigma_n$ on $\Gamma^n$
by groupoid functors, obtained by permuting components.
Namely if
$f\in \Sigma_n$ and $({\underline{\gamma}}\colon\underline{x} \to \underline{z}) \in \Gamma_1^n$,
we put:
\begin{multline*}
  f \trr \big ( (x_1,x_2,\dots x_n)
  \ra{(\gamma_1,\gamma_2,\dots, \gamma_n)}(z_1,z_2,\dots z_n)\big) =\big ( f \trr (x_1,x_2,\dots x_n)
  \ra{ f \trr (\gamma_1,\gamma_2,\dots, \gamma_n)} f \trr (z_1,z_2,\dots z_n) \big)\\
  \doteq\big ( (x_{f(1)},x_{f(2)},\dots x_{f(n)})
  \ra{(\gamma_{f(1)},\gamma_{f(2)},\dots, \gamma_{f(n)})}(z_{f(1)},z_{f(2)},\dots z_{f(n)})\big).
\end{multline*}
\begin{Definition}
The `wreath product groupoid' 
$\gwrr{n} \; {=} \; \Gamma^n \rtimes \Sigma_n$,
with respect to the action in \peq{pa:gf}, is defined as follows. The set of objects is  $\Gamma_0^n$ and the set of morphisms is  $\Gamma_1^n \times \Sigma_n$.
Arrows have the form:
$$
\Big (\big(\sigma{(\gamma_1)},\dots, \sigma{(\gamma_n)}\big)
\ra{\big((\gamma_1, \dots, \gamma_n),f\big)}
\big(\tau{(\gamma_{f^{-1}(1)}),\dots, \tau(\gamma_{f^{-1}(n)})}\big)\Big )
=\big(\underline{\sigma}(\underline{\gamma})
\ra{(\underline{\gamma},f)}
 f^{-1} \trr\big(\underline{\tau}(\underline{\gamma}) \big).
$$
And given another arrow in $\gwrr{n}$, namely
$\big(\underline{\sigma}(\underline{\phi})
\ra{(\underline{\phi},g)} \,\,   g^{-1} \trr \underline{\tau}(\underline{\phi})\big)$,
with
$\underline{\sigma}(\underline{\phi})=  f^{-1} \trr\underline{\tau}(\underline{\gamma})$, 
the composition is:
\begin{equation}\label{compGN}
\Big(\underline{\sigma}(\underline{\gamma})\ra{(\underline{\gamma},f)}
 f^{-1}\trr \underline{\tau}(\underline{\gamma})\Big) 
\star 
\Big ( f^{-1}\trr\underline{\tau}(\underline{\gamma})
\ra{(\underline{\phi},g)} g^{-1}\trr \underline{\tau}(\underline{\phi}) \Big)
=\Big(\underline{\sigma}(\underline{\gamma})
\ra{(\underline{\gamma}\,\star \, (f\trr \underline{\phi}), f.g)} g^{-1}\trr\underline{\tau}(\underline{\phi})\Big).
\end{equation}
%We also put $\Gamma^0=\gwrr{0}$ to be the groupoid with only one object and only one arrow. 
(These apparently  awkward conventions are justified by the graphical calculus in \S \ref{ss:msgc}.)
\end{Definition}

\subsection{Graphical calculus for  $\Gamma^n \rtimes \Sigma_n=\gwrr{n}$ }\label{ss:msgc}

{Recall \peq{diagtoperm} and \peq{proj}. %and \peq{undercat}. 
Any welded (or virtual) braid $[Q]$, and any virtual braid diagram $Q$, gives rise to a permutation $U_Q$. Hence permutations can
be represented by virtual braids diagrams.
Thus we can use a graphical calculus to represent morphisms in
$\Gamma^n \rtimes \Sigma_n$, cf. \cite{MarWooLev}. 
We explain the graphical calculus  for $n=3$ (the other cases extend in the obvious way).}
  {This provides visual help for some   calculations in this paper.}
% 
%   Note that this graphical calculus is only here to provide some visual help for some of the  calculations in this paper, as an algebraic proof can always be given.}

{Firstly note that bijections $f\colon \{1,2,3\} \to \{1,2,3\}$
can be represented by virtual braid diagrams as  below:}

\begin{minipage}[t]{0.85\textwidth}
$$\centerline{\relabelbox 
\epsfysize 2cm 
\epsfbox{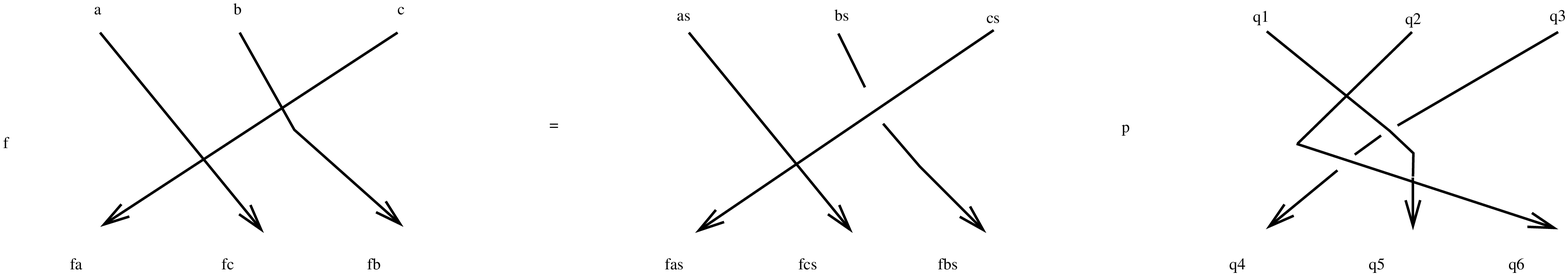}
\relabel{a}{$\scriptstyle{1}$}
\relabel{b}{$\scriptstyle{2}$}
\relabel{c}{$\scriptstyle{3}$}
\relabel{fa}{$\scriptstyle{f(3)=1}$}
\relabel{fb}{$\scriptstyle{f(2)=3}$}
\relabel{fc}{$\scriptstyle{f(1)=2}$}
\relabel{as}{$\scriptstyle{1}$}
\relabel{bs}{$\scriptstyle{2}$}
\relabel{cs}{$\scriptstyle{3}$}
\relabel{fas}{$\scriptstyle{f(3)=1}$}
\relabel{fbs}{$\scriptstyle{f(2)=3}$}
\relabel{fcs}{$\scriptstyle{f(1)=2}$}
\relabel{f}{$f=$}
\relabel{=}{$=$}
\relabel{p}{$=$}
\relabel{q1}{$\scriptstyle{1}$}
\relabel{q2}{$\scriptstyle{2}$}
\relabel{q3}{$\scriptstyle{3}$}
\relabel{q4}{$\scriptstyle{f(3)=1}$}
\relabel{q6}{$\scriptstyle{f(2)=3}$}
\relabel{q5}{$\scriptstyle{f(1)=2}$}
\endrelabelbox}
\hspace{.5in}
$$
\end{minipage}
\begin{minipage}[t]{0.12\textwidth}
\begin{equation}\label{diagtoPERM} 
\end{equation}
\end{minipage}

\smallskip
\noindent Here only initial and end-points of strands matter. 
% (This graphical representations of permutations is of course highly redundant.) We are following the conventions  in \peq{diagtoperm} to translate between virtual braid diagrams and permutations. 
In this diagrammatic presentation, the product $f.g$ of two permutations
is given by the vertical juxtaposition where $f$ stays on top of $g$. %\red{(Recall convention \peq{sym} for the symmetric group.)}
{(In order for this to work,  we must use convention \peq{sym} for the symmetric group.)}
%The tensor product $f\otimes g$ \peq{sym-cat-mon} of two permutations is given by the horizontal juxtaposition of virtual braid diagrams, where $f$
%is to the left of $g$ (cf. e.g. \cite[pg. 262]{MacLane}).

%In the same token, t
\smallskip

The morphisms of $\Gamma^3\rtimes \Sigma_3$ can be represented
by diagrams as  in Fig. %\ref{compost},
\ref{permc}.
{Only initial and end-points of strands matter, encoding
permutation components of morphisms in $\Gamma^3 \rtimes \Sigma_3$. Hence, we can switch positive crossings with negative or virtual crossings without changing the morphism in  $\Gamma^3\rtimes \Sigma_3$ {that} we are describing. }
%(As we will see  in \S\ref{s:biker}, this {redundant} crossing information is very useful for tracking reasons.)

%}}}
%{{{ fig

%\begin{minipage}{.80\textwidth}

\begin{figure}[H]
\centerline{\relabelbox 
\epsfysize 2.9cm 
\epsfbox{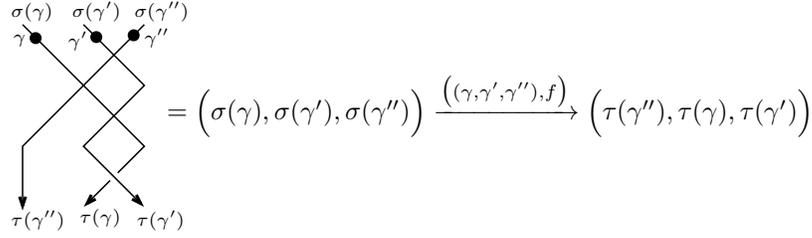}
\relabel{a}{$\scriptstyle{\sigma(\gamma)}$}
\relabel{b}{$\scriptstyle{\sigma(\gamma')}$}
\relabel{c}{$\scriptstyle{\sigma(\gamma'')}$}
\relabel{fa}{$\scriptstyle{\tau(\gamma'')}$}
\relabel{fb}{$\scriptstyle{\tau(\gamma')}$}
\relabel{fc}{$\scriptstyle{\tau(\gamma)}$}
\relabel{g}{$\scriptstyle{\gamma}$}
\relabel{gg}{$\scriptstyle{\gamma'}$}
\relabel{ggg}{$\scriptstyle{\gamma''}$}
\relabel{M}{$=\Big(\sigma(\g),\sigma(\g'),\sigma(\g'')\Big)\ra{\big ( (\gamma,\gamma',\gamma''),f\big) } \Big(\tau(\g''),\tau(\gamma),\tau(\g')\Big)$}
\endrelabelbox}
\caption{\label{permc} A morphism in $\Gamma^3\rtimes \Sigma_3$. Here $f\in \Sigma_3$ is such that $f(1)=2,f(2)=3$ and $f(3)=1$.  }
\end{figure}
%\end{minipage}
%\begin{minipage}{.14\textwidth}
% 
% \begin{figure}[H]
% \centerline{\relabelbox 
% \epsfysize 4cm 
% \epsfbox{permdp.eps}
% \relabel{aaaas}{$\scriptstyle{\sigma(\gamma)}$}
% \relabel{bbbbs}{$\scriptstyle{\sigma(\gamma')}$}
% \relabel{ccccs}{$\scriptstyle{\sigma(\gamma'')}$}
% \relabel{js}{$\scriptstyle{\gamma}$}
% \relabel{jjs}{$\scriptstyle{\gamma'}$}
% \relabel{jjjs}{$\scriptstyle{\gamma''}$}
% \relabel{faaaas}{$\scriptstyle{\tau(\gamma)}$}
% \relabel{fbbbbs}{$\scriptstyle{\tau(\gamma'')}$}
% \relabel{fccccs}{$\scriptstyle{\tau(\gamma')}$}
% \endrelabelbox}
% \caption{\label{compost} A morphism of form $a\otimes b$ in $\Gamma^3\rtimes \Sigma_3$.}
% \end{figure}
% 
% \end{minipage}

%}}}
%{{{ MORE

\noindent
For $\gamma = \mbox{id}$ we may omit this and the blob ``$\bullet$'' entirely.

{Assume that two morphisms in 
$\Gamma^3 \rtimes \Sigma_3$ are composable. (In the case below this means
$\sigma(\phi)=\tau(\gamma'')$, $\sigma(\phi')=\tau(\gamma')$,
and $\sigma(\phi'')=\tau(\gamma)$.) Graphically their composition is via vertical juxtaposition, as below:}

\begin{figure}[H]
\centerline{\relabelbox 
\epsfysize 4.5cm 
\epsfbox{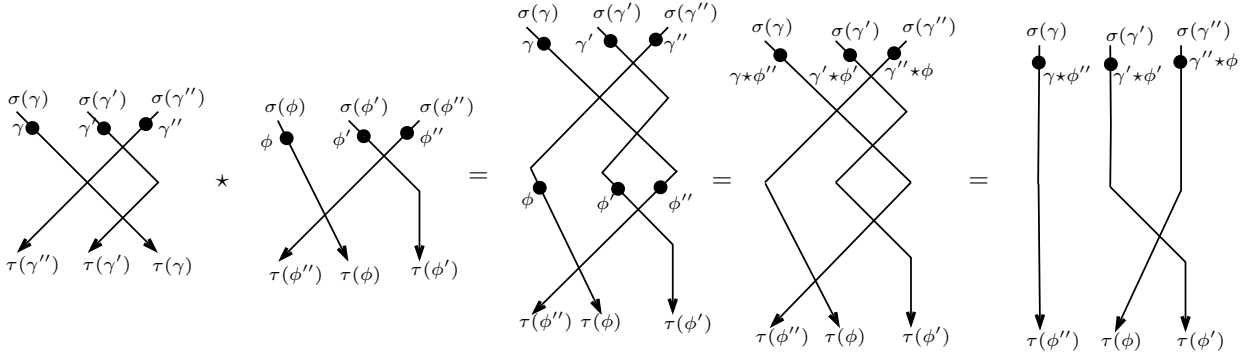}
\relabel{a}{$\scriptstyle{\sigma(\gamma)}$}
\relabel{b}{$\scriptstyle{\sigma(\gamma')}$}
\relabel{c}{$\scriptstyle{\sigma(\gamma'')}$}
\relabel{aaa}{$\scriptstyle{\sigma(\gamma)}$}
\relabel{bbb}{$\scriptstyle{\sigma(\gamma')}$}
\relabel{ccc}{$\scriptstyle{\sigma(\gamma'')}$}
\relabel{aaaa}{$\scriptstyle{\sigma(\gamma)}$}
\relabel{bbbb}{$\scriptstyle{\sigma(\gamma')}$}
\relabel{cccc}{$\scriptstyle{\sigma(\gamma'')}$}
\relabel{aaaas}{$\scriptstyle{\sigma(\gamma)}$}
\relabel{bbbbs}{$\scriptstyle{\sigma(\gamma')}$}
\relabel{ccccs}{$\scriptstyle{\sigma(\gamma'')}$}
\relabel{fa}{$\scriptstyle{\tau(\gamma'')}$}
\relabel{fb}{$\scriptstyle{\tau(\gamma')}$}
 \relabel{fc}{$\scriptstyle{\tau(\gamma)}$}
 \relabel{aa}{$\scriptstyle{\sigma(\phi)}$}
 \relabel{bb}{$\scriptstyle{\sigma(\phi')}$}
 \relabel{cc}{$\scriptstyle{\sigma(\phi'')}$}
\relabel{g}{$\scriptstyle{\gamma}$}
\relabel{gg}{$\scriptstyle{\gamma'}$}
\relabel{ggg}{$\scriptstyle{\gamma''}$}
\relabel{i}{$\scriptstyle{\phi}$}
\relabel{ii}{$\scriptstyle{\phi'}$}
\relabel{iii}{$\scriptstyle{\phi''}$}
\relabel{h}{$\scriptstyle{\gamma}$}
\relabel{hh}{$\scriptstyle{\gamma'}$}
\relabel{hhh}{$\scriptstyle{\gamma''}$}
\relabel{j}{$\scriptstyle{\gamma\star\phi''}$}
\relabel{jj}{$\scriptstyle{\gamma'\star \phi'}$}
\relabel{jjj}{$\scriptstyle{\gamma''\star \phi}$}
\relabel{js}{$\scriptstyle{\gamma\star\phi''}$}
\relabel{jjs}{$\scriptstyle{\gamma'\star \phi'}$}
\relabel{jjjs}{$\scriptstyle{\gamma''\star \phi}$}
\relabel{faa}{$\scriptstyle{\tau(\phi'')}$}
\relabel{fbb}{$\scriptstyle{\tau(\phi)}$}
\relabel{fcc}{$\scriptstyle{\tau(\phi')}$}
\relabel{faaa}{$\scriptstyle{\tau(\phi'')}$}
\relabel{fbbb}{$\scriptstyle{\tau(\phi)}$}
\relabel{fccc}{$\scriptstyle{\tau(\phi')}$}
\relabel{faaaa}{$\scriptstyle{\tau(\phi'')}$}
\relabel{fbbbb}{$\scriptstyle{\tau(\phi)}$}
\relabel{fcccc}{$\scriptstyle{\tau(\phi')}$}
\relabel{faaaas}{$\scriptstyle{\tau(\phi'')}$}
\relabel{fbbbbs}{$\scriptstyle{\tau(\phi)}$}
\relabel{fccccs}{$\scriptstyle{\tau(\phi')}$}
\relabel{f}{$\scriptstyle{\phi}$}
\relabel{ff}{$\scriptstyle{\phi'}$}
\relabel{fff}{$\scriptstyle{\phi''}$}
\relabel{bull}{$\star$}
\relabel{=}{$=$}
\relabel{ig}{$=$}
\relabel{igg}{$=$}
\endrelabelbox}
\caption{{Graphical notation for composing morphisms in $\Gamma^3\rtimes \Sigma_3$.} %{Note that the last morphisms is the tensor product $\big(\sigma(\gamma) \ra{(\gamma \star \phi'',1_{\Sigma_1})} \tau(\phi'') \big)\tn 
%\big ( (\sigma(\g'),\sigma(\g'')) \ra{ \big((\g'\star \phi',\g''\star \phi), t^2_{1,2}\big)}(\tau(\phi),\tau(\phi')) \big).$ Here $t^2_{1,2}\in \Sigma_2$ is a transposition.}
}
\label{composing}
\end{figure}
\noindent Hence,  blobs in a diagram can be moved freely along a strand, including pass a crossing, as long as no other blob gets in the way. We can compose the arrows colouring any two contiguous blobs   in the same strand.

\subsection{Conventions and notation for biracks and welded biracks (W-biracks) }\label{c:biracks}
%{{{ CONV 1: birack
References on quandles, biquandles and  biracks  include:
\cite{fenn_barth,fenn_biquandle,nelson,unsolved,C1,C2,C3,nelson_bilinear,reduce}.

\begin{Definition}[Birack, W-birack]
  \label{de:birack}
A {\em birack}  $(X,/,\backslash)$ is  a set $X$ with two maps $X\times X \to X$, denoted
$(x,y) \mapsto x \backslash  y $, called ``$x$ under $y$'', and
$(x,y) \mapsto y  / x $
called ``$y$ over $x$''.
  The conditions to be satisfied are:
\begin{enumerate}
\item Fixing $a \in X$, the maps 
$f_a, f^a : X \rightarrow X$ given by 
 $f_a : x  \mapsto x/a$ and
  $f^a : x  \mapsto x\backslash a$ each are invertible.  
\item\label{cond2} The map
  $S\colon  X \times X \rightarrow  X \times X$ given by 
$(x,y)  \mapsto (y/x,x\backslash y)$
  (called \cite{fenn_biquandle} a ``switch'')
 is invertible.
\item Putting
  $S_1=S \times \id_X \colon X \times X \times X \to X \times X \times X$ and 
 $S_2=\id_X  \times S $ %\colon X \times X \times X \to X \times X \times X$
   then %the set-theoretic Yang-Baxter equation holds:
               $S_1\circ S_2 \circ S_1 = S_2\circ S_1 \circ S_2 $.
\end{enumerate}

%{{{ welded birack

A birack is ``welded'', %(and $(X,/,\backslash)$ is  a W-birack)
or a {\em W-birack},
if for each $x,y,z$ the following   hold:
%\begin{equation}
\begin{align} \label{eq:wba}
  &(z/x)/y=(z/y)/x,
  & y\backslash z=y \backslash (z/x)), && x\backslash z=x \backslash (z/y) . 
\end{align}
%\end{equation}
{The last equation is of course redundant. It is left since it facilitates 
some of the diagrammatic proofs later.}
\end{Definition}

%}}}
%{{{ example

\smallskip

\mdef%[Conjugation quandle]
 \label{qq}
Let $G$ be a  group.  %The canonical example of a W-biquandle
%has $G$ as underlying set.
The  operations  $h/g=h$ and $g\backslash h=h^{-1} gh$
make $G$ a W-birack
(also a quandle \cite{fenn_biquandle}).

\smallskip

%The counting invariant  is the
%number of maps from the knot group of a welded knot into $G$.
%\end{Example}

%}}}
%{{{ CALC picture

%\begin{Remark}
\mdef For any birack $X$, we may visualize 
the maps $S$ and $t_{1,2}: X^2 \rightarrow X^2$ given by 
$t_{1,2}: (x,y)\mapsto (y,x) $
%as
%$S\colon (x,y) \in X \times X \mapsto (y/x,x\backslash y)\in X\times X$
%and
%$t_{1,2}: (x,y) \in X \times X \mapsto (y,x) \in X \times X$
%can be visualised 
%, respectively, 
as sending $(x,y)\in X \times X$ to
the pairs appearing at the  bottom  of the  diagrams:
$$
\xymatrix@R=10pt@C=15pt{&x\ar[dr]|\hole
  &y\ar[dl]\\ & y/x & x \backslash y}
\xymatrix@R=1pt{ \\ \\ \quad \textrm{ and }}
\xymatrix@R=15pt@C=15pt{&x\ar[dr] &y\ar[dl]\\ & y & x}\xymatrix@R=1pt{\\\\.}
$$
In particular   $S_1 \circ S_2 \circ S_1 = S_2 \circ S_1 \circ S_2 $ becomes
identification of the bottom rows here,  for each $x,y,z \in X$:
\begin{equation} \label{eq:stybe}
\hskip-0.8cm\xymatrix@R=14pt@C=8pt{ &x\ar[dr]|\hole
  & y\ar[dl] & z\ar[d]
  \\
  &y/x\ar[d] & x\backslash y\ar[dr]|\hole & z\ar[dl]
  \\
  &y/x\ar[dr]|\hole & z/(x\backslash y)\ar[dl]
  & (x\backslash y)\backslash z\ar[d]
  \\
  & (z/(x\backslash y))/(y/x) &(y/x)\backslash (z/(x\backslash y))
  & (x\backslash y)\backslash z}
\xymatrix@R=1pt{ \\ \\ \\ \\ \\\quad =}\hskip-0.5cm
\xymatrix@R=14pt@C=8pt{ &x\ar[d] & y\ar[dr]|\hole & z\ar[dl]  \\
  &x\ar[dr]|\hole & z/y\ar[dl] & y\backslash z\ar[d] \\
  &(z/y)/x\ar[d] & x\backslash(z/y)\ar[dr]|\hole & y\backslash z\ar[dl] \\
  &(z/y)/x & (y\backslash z)/(x\backslash(z/y)) & 
  (x\backslash(z/y))\backslash (y\backslash z)}\xymatrix@R=1pt{\\\\\\\\\\.}
\end{equation} 
%given any $x,y,z \in X$, so the axiom %set-theoretical Yang-Baxter equation
%requires that the bottom colourings of these two diagrams coincide. 

%}}}
%{{{ welded picture

%A birack is welded if and only if
The welded birack axiom (\ref{eq:wba}) says that 
the bottom
colours of the  diagrams below coincide,  if  $x,y,z \in X$:
\begin{equation} \label{eq:webi}
\hskip-1cm\xymatrix@R=14pt@C=10pt{ 
&x\ar[dr] & y\ar[dl] & z\ar[d]  \\
&y\ar[d] & x \ar[dr]|\hole & z\ar[dl] \\
&y\ar[dr]|\hole & z/ x\ar[dl] & x\backslash z\ar[d] \\
&(z/x) /y & y \backslash (z/x) & x\backslash z}
\xymatrix@R=1pt{ \\ \\ \\ \\ \quad =}
\xymatrix@R=12pt@C=10pt{ &x\ar[d] & y\ar[dr]|\hole & z\ar[dl]  \\
  &x\ar[dr]|\hole & z/y\ar[dl] & y\backslash z\ar[d] \\
  &(z/y)/x\ar[d] & x\backslash(z/y)\ar[dr] & y\backslash z\ar[dl] \\
  &(z/y)/x & y\backslash z & x \backslash (z/y)}\xymatrix@R=1pt{\\\\\\\\\\.}
\end{equation}
Note that  %this graphical calculus imediately gives that
the bottom colours  %of the following diagrams always coincide (regardeless
%of the birack and welded birack axioms):
{in \eqref{eq:easy}} coincide identically, given any triple $(x,y,z) \in X^3$:
\begin{equation} \label{eq:easy}
\hskip-1cm\xymatrix@R=12pt@C=18pt{ 
&x\ar[dr] & y\ar[dl] & z\ar[d]  \\
&y\ar[d] & x \ar[dr] & z\ar[dl] \\
&y\ar[dr]|\hole & z \ar[dl] & x\ar[d] \\
&z/y & y \backslash z &  x}
\xymatrix@R=1pt{ \\ \\ \\ \\ \quad =}
\xymatrix@R=10pt@C=18pt{ &x\ar[d] & y\ar[dr]|\hole & z\ar[dl]  \\
  &x\ar[dr] & z/y\ar[dl] & y\backslash z\ar[d] \\
  &z/y \ar[d] & x \ar[dr] & y\backslash z\ar[dl] \\
  &z/y  & y\backslash z & x }\xymatrix@R=1pt{\\\\\\\\\\.}
\end{equation}

{(See \S\ref{ss:vbg}.) The following is standard
\cite{fenn_biquandle,nelson,unsolved,fenn_barth}. In this paper we will extend this to bikoids.}

\begin{Lemma}%[Virtual braid group actions from biracks]
  \label{birackact}
If $(X,/,\backslash)$ is a birack, then  %the set
$X^n=X \times \dots \times X$
has a {right-action}   ``$\trl$''
of $\VBG_n$
% the virtual braid group in $n$ strands.
%This action has the form:
given by:
\begin{align*}
(x_1,\dots,x_{a-1}, x_a,x_{a+1},x_{a+2},\dots,x_n) \; \trl \; S_a^+{{[n]}} \;\;
  &= \;\; (x_1,\dots,x_{a-1}, x_{a+1}/x_a,x_a\backslash x_{a+1},x_{a+2},\dots x_n),
\\
(x_1,\dots,x_{a-1}, x_a,x_{a+1},x_{a+2},\dots,x_n) \;\trl\; V_a{{[n]}} \;\;
    &=\;\; (x_1,\dots,x_{a-1}, x_{a+1},x_a,x_{a+2},\dots x_n).
\end{align*}
\noindent
This action descends to an action of the welded braid group $\WBG_n$
if, and only if,   $(X,/,\backslash)$ is welded. 
\end{Lemma}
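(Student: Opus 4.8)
The plan is to exploit the presentation of $\VBG_n$ by the generators $S_a^+[n]$ and $V_a[n]$ (with $S_a^-[n]$ forced to be the inverse of $S_a^+[n]$), as in \peq{twoperpectives} specialised to $\VBG_n$, and to verify that the assignment in the statement respects each defining relation. First I would check that every generator acts as a bijection of $X^n$: the map $V_a[n]$ transposes the coordinates in positions $a,a+1$, hence acts as an involution, while $S_a^+[n]$ acts on positions $a,a+1$ by the switch $S$ of Definition~\ref{de:birack}, which is invertible by the second birack axiom. The inverse switch supplies the action of $S_a^-[n]$, so the two Reidemeister~II relations in \eqref{eq:SRI} hold by construction. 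Because the right-action convention, the top-to-bottom reading of braid diagrams, and the ``input on top'' convention for the birack pictures are all compatible, a word acts by applying the generator maps successively from top to bottom; thus to validate a relation $w_1=w_2$ it suffices to show $\underline{x}\trl w_1=\underline{x}\trl w_2$ for every $\underline{x}=(x_1,\dots,x_n)\in X^n$, and by invertibility of the generators the resulting assignment then extends to a right action of the whole group.

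The relation-by-relation check is routine and is precisely the content of the diagrams already displayed in \S\ref{c:biracks}. The locality relations \eqref{eq:Loc} hold because for $|a-b|\ge 2$ the two maps act on disjoint coordinates and therefore commute. The virtual relation \eqref{VIII} is the Coxeter braid relation for adjacent transpositions, valid since $V_a[n]$ realises $t_{a,a+1}$ as in \peq{deftab} (both sides reverse the order of three coordinates). The mixed relation \eqref{eq:VRIII} holds in \emph{every} birack with no extra hypothesis: a direct computation, recorded in \eqref{eq:easy}, shows that both $V_a[n]\,V_{a+1}[n]\,S_a^+[n]$ and $S_{a+1}^+[n]\,V_a[n]\,V_{a+1}[n]$ send $(x,y,z)$ in positions $a,a+1,a+2$ to $(z/y,\ y\backslash z,\ x)$. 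Finally, tracing both words of the classical Reidemeister~III relation \eqref{eq:R3} on $(x,y,z)$ produces exactly the set-theoretic Yang--Baxter equation $S_1\circ S_2\circ S_1=S_2\circ S_1\circ S_2$, i.e. condition~3 of Definition~\ref{de:birack}, as displayed in \eqref{eq:stybe}. This establishes the $\VBG_n$-action.

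For the second assertion, recall that $\WBG_n$ is obtained from $\VBG_n$ by adjoining the welded relation \eqref{eq:F}, so the $\VBG_n$-action descends to $\WBG_n$ if and only if \eqref{eq:F} is also respected. Computing both sides on $(x,y,z)$ in positions $a,a+1,a+2$ gives
$$
\underline{x}\trl\big(V_a[n]\,S^+_{a+1}[n]\,S^+_a[n]\big)=\big((z/x)/y,\ y\backslash(z/x),\ x\backslash z\big)
$$
and
$$
\underline{x}\trl\big(S^+_{a+1}[n]\,S^+_a[n]\,V_{a+1}[n]\big)=\big((z/y)/x,\ y\backslash z,\ x\backslash(z/y)\big),
$$
which is the pair of diagrams in \eqref{eq:webi}. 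Equating coordinates, these agree for all $x,y,z\in X$ exactly when $(z/x)/y=(z/y)/x$, $\ y\backslash z=y\backslash(z/x)$ and $x\backslash z=x\backslash(z/y)$, that is, exactly the welded axioms \eqref{eq:wba}. Hence if $X$ is welded the action descends; conversely, for $n\ge 3$, letting $x,y,z$ range freely over the three distinguished coordinates forces \eqref{eq:wba}, so descent implies that $X$ is welded. (For $n\le 2$ there is no instance of \eqref{eq:F}, the epimorphism $\VBG_n\to\WBG_n$ is an isomorphism, and the claim is vacuous there.)

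I expect the only real obstacle to be bookkeeping: one must fix the conventions --- right versus left action, the order of the two outputs $y/x$ and $x\backslash y$ of the switch, and the top-to-bottom reading --- tightly enough that the three Reidemeister~III type relations \eqref{eq:R3}, \eqref{eq:VRIII} and \eqref{eq:F} line up respectively with the Yang--Baxter equation \eqref{eq:stybe}, the identically-valid relation \eqref{eq:easy}, and the welded axioms \eqref{eq:webi}, rather than coming out transposed or inverted. Once these conventions are pinned down and the coordinatewise computations carried out, the remainder of the argument is immediate.
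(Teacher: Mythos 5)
Your proposal is correct and follows essentially the same route as the paper's proof: both use the presentation of $\VBG_n$ from \peq{twoperpectives}, get invertibility of the $S_a^+[n]$ action from condition \ref{cond2} of Def.~\ref{de:birack}, and reduce RIII, the mixed move, and the welded move to Equations \eqref{eq:stybe}, \eqref{eq:easy} and \eqref{eq:webi}/\eqref{eq:wba} respectively, with the remaining relations dismissed as trivial. The only differences are that you spell out the coordinate computations and the converse (``only if'') direction more explicitly than the paper does, which is a harmless elaboration rather than a change of method.
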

\proof{We follow the formulation in \peq{twoperpectives} for $\VBG_n$. Firstly $(-) \trl S^+_a{{[n]}}$ is invertible by item \ref{cond2} of Def. \ref{de:birack}.
%  Therefore:
Then note that %$(x_1,\dots,x_{a-1}, x_a,x_{a+1},x_{a+2},\dots,x_n)\trl S_a^+{{[n]}}$ and
%$(x_1,\dots,x_{a-1}, x_a,x_{a+1},x_{a+2},\dots,x_n)\trl V_a^+{{[n]}}$
$\underline{x} \trl S_a^+{{[n]}}$ and $\underline{x} \trl V_a{{[n]}}$, where $\underline{x}=(x_1,\dots,x_{a-1}, x_a,x_{a+1},x_{a+2},\dots,x_n)$,
are the bottom colours of the following diagrams:
$$
\xymatrix@R=15pt@C=0pt{
  & x_1\ar[d]
  &\dots\ar[d]<1.2ex> \ar[d]<-1.2ex> \ar[d]<0ex>
  &x_{a-1}\ar[d] & x_a \ar[drrr]|\hole
  &&& x_{a+1}\ar[dlll] &x_{a+2}\ar[d]
  & \dots\ar[d]<1.2ex> \ar[d]<-1.2ex> \ar[d]<0ex> & \ar[d] x_n
  \\ 
  & x_1 &\dots &x_{a-1}& x_{a+1}/x_a
  &&& x_{a}\backslash x_{a+1} &x_{a+2}& \dots
  & x_n
}
\xymatrix@R=1pt{\\ \\
  \quad \textrm{ and } \qquad } \hskip-0.4cm
\xymatrix@R=15pt@C=0pt{ & x_1\ar[d]
  &\dots\ar[d]<1.2ex> \ar[d]<-1.2ex> \ar[d]<0ex>
  &x_{a-1}\ar[d] & x_a \ar[drrr] &&& x_{a+1}\ar[dlll]
  &x_{a+2}\ar[d]& \dots\ar[d]<1.2ex> \ar[d]<-1.2ex> \ar[d]<0ex>
  & \ar[d] x_n
  \\ 
& x_1 &\dots &x_{a-1}& x_{a+1} &&& x_{a} &x_{a+2}& \dots & x_n
}\xymatrix@R=1pt{\\\\.}
$$
%\ppm{WRONG! FIX --- DONE!} 
The  check of RIII \eqref{eq:R3} from Def. \ref{de:WBG} boils down to (\ref{eq:stybe});
VRIII \eqref{eq:VRIII} boils down to (\ref{eq:easy}). {Equations \eqref{eq:SRI},  \eqref{eq:Loc} and \eqref{VIII}} are trivial. Equation \eqref{eq:F} follows from \eqref{eq:webi}, which happens if, and only if, $(X,/,\backslash)$ is welded.  
\qed
}

\medskip

\mdef\label{lin} {{By extending linearly}, given a birack, we hence have a right representation $\trl$ of $\VBG_n$ on the vector space $\kappa(X^n)\cong \kappa(X)^{n \otimes}.$ Here $\kappa$ is a field. This descends to a representation of $\WBG_n$ if $(X,/,\backslash)$  is welded.}

%}}}

\medskip
{ We will be particularly interested in biracks for which the reverse form of the {welded} Reidemeister III move in \eqref{rev}  does not hold for the action $\trl$ (Lem. \ref{birackact}) of $\WBG_n$ on $X^n$. These are called {\em essential biracks}.}

\begin{Definition}\label{essential1} {(We follow \cite{fenn_barth}.)}  A W-birack $(X,/,\backslash)$ is called {\it essential} if it \underline{does not hold that}:
\begin{align*}
\forall x,y,z \in X^3\colon & (x\backslash y) \backslash z=  (x\backslash z) \backslash y, &\textrm{ and } &&y/x=y/(x \backslash z), &&\textrm{ and } && z/(x \backslash y)=z/x.
\end{align*}
%Otherwise, $(X,/,\backslash)$ is called {\it unessential}
\end{Definition}
 \mdef\label{whyess}{Essential W-biracks are the most interesting  when it comes to applications to {$\WBG_n$}. This is because otherwise  $\trl$ in Lem. \ref{birackact} will descend to a representation of the unrestricted virtual braid group; see Rem. \ref{tornant}.}

% 
% 
% \mdef\label{essential2} Cf. Rem. \ref{tornant}. 
%  Essential W-biracks are the most interesting  \red{when it comes to applications to \red{$\WBG_n$}. This is because otherwise the reverse form of the \red{welded} Reidemeister III move in \eqref{rev}}  holds for the action $\trl$ (Lem. \ref{birackact}) of $\WBG_n$ on $X^n$, \red{and  we end up with representations of the unrestricted virtual braid group.}

%}}}
\subsection{\Biker s: definition}\label{s:biker}
%{{{ DEF

%Here one should compare with the construction of {\em braided vector spaces}
%as for example in \cite{WallaceGrommit} or \cite{}.

\begin{Definition}[\Biker]
  \label{de:biker}
%{{{ 1st step
%\red{(Recall \S\ref{thegr} and \S\ref{ss:msgc}.)}
A \biker\  $(\Gamma,X^+)$ is a groupoid
$\Gamma = \Gam$,
together with
set maps $L, R \colon{\Gamma_0\times \Gamma_0}\to \Gamma_1$,
such that the following hold.
Firstly,
$\sigma(L(x,y))=x$, $\sigma(R(x,y))=y$. Secondly, $L$ and $R$ define a birack 
$(\Gamma_0,/,\backslash)$, called the {\em underlying birack of   $(\Gamma,X^+)$}, by
$x \backslash y = \tau(L(x,y))$ and $y/x = \tau( R(x,y))$.  
That is, the {source} and {target} of $L(x,y)$ and $R(x,y)$ are as indicated below:
$$(x\ra{L(x,y)} x \backslash y) \textrm{ and }
(y \ra{R(x,y)} y / x).$$

Next define the following morphisms of the wreath product $\Gamma^2 \rtimes \Sigma_2=\gwrr{2}$ {(recall \peq{deftab}, \S \ref{thegr} and \S\ref{ss:msgc})}:
\begin{align}
Y^+(x,y)  &= \; {\Big ( x\tn y \ra{\big( (L(x,y),R(x,y)),1_{\Sigma_2}\big)} x\backslash y\tn y/x \Big)},\label{DEFY}\\
X^+(x,y) &=     Y^+(x,y) \star \big(x\backslash y \tn y/x 
\ra{ \big((\id_{x\backslash y},\id_{y/x}), {t_{1,2}^2} \big)} y/x \tn x\backslash y) \label{DEFX1}\\
&=\big( {x \tn y}\ra{\big( (L(x,y),R(x,y)),{t_{1,2}^2}\big)}  y/x \tn x\backslash y\big)  .\label{DEFX}
\end{align}
Note  $\underline{\sigma}(X^+(x,y))={(x,y)=x \tn y}$.
%{{{ X^+ notation
We use the notations (recall the diagrammatic calculus for $\gwrr{2}$ in \S\ref{ss:msgc}):
\begin{equation}\label{ab}
\xymatrix@R=1pt{\\\\ X^+(x,y)=}
\xymatrix{& x  \ar[dr]|\hole
  \ar[dr]|<<<<<{L(x,y)\,\bullet\quad \quad\,\,\,}|\hole
  & y  \ar[dl]\ar[dl]|<<<<<{\quad \quad \,\, \bullet R(x,y)}\\
& y/ x &x\backslash y
}\xymatrix@R=1pt{\\ \\\\\qquad=}
\xymatrix{& x  \ar[dr]
  \ar[dr]|<<<<<{L(x,y)\,\bullet\quad \quad\,\,\,}|\hole
  & y  \ar[dl]\ar[dl]|<<<<<{\quad \quad \,\, \bullet R(x,y)}\\
& y/ x &x\backslash y
}\xymatrix@R=1pt{\\\\\quad.}
\end{equation}

We impose that for  each $x,y,z \in \Gamma_0$ the equation below holds in 
$\Gamma^3 \rtimes \Sigma_3{=\gwrr{3}}$:

\begin{equation} \label{eq:r3}
\hskip-1cm
\xymatrix@R=25pt@C=8pt{ &
  x\ar[dr]|<<<<<<<<<{L(x,y)\,\,\,\bullet\qquad\,\,\,\,\,}|\hole\ar[dr]|\hole
  & y\ar[dl]\ar[dl]|<<<<<<<<{\qquad\,\,\,\,\,\,\, \bullet\,\,\ \,R(x,y)} & z\ar[d]  \\
  &y/x\ar[d] & x\backslash y\ar[dr]|<<<<<<{L(x\backslash
    y,z)\,\bullet\qquad\,\,\quad}|\hole\ar[dr]|\hole
  & z\ar[dl]|<<<<<<<{\qquad \,\,\,\,\,\,\,\,\, \bullet\,R(x\backslash y,z)}\ar[dl] \\
  &y/x\ar[dr]|<<<<<<<<<{L(y /x,z/(x\backslash
    y))\,\,\bullet\qquad\qquad\quad\,\,\,\, \,\,}|\hole \ar[dr]|\hole
  & z/(x\backslash y)\ar[dl]|<<<<<<<{\qquad \qquad\, \,\,\,
    \quad \,\bullet \, R(y /x,z/(x\backslash y))}\ar[dl] & (x\backslash y)\backslash z\ar[d] \\
  & (z/(x\backslash y))/(y/x) &(y/x)\backslash (z/(x\backslash y))
  & (x\backslash y)\backslash z}
\xymatrix@R=1pt{ \\ \\ \\ \\ \,\, =}
\xymatrix@R=25pt@C=10pt{ &x\ar[d] &
  y\ar[dr]|<<<<<<<<<{L(y,z)\,\,\,\,\bullet\qquad\,\,\,\,\,\,}|\hole\ar[dr]|\hole
  & z\ar[dl]|<<<<<<<<{\qquad\,\, \,\,\,\bullet \,\,\,R(y,z)}\ar[dl]  \\
  &x\ar[dr]|<<<<<<<{L(x,z/y)\,\,\bullet\qquad\,\,\,\,\quad}|\hole\ar[dr]|\hole
  & z/y\ar[dl]|<<<<<<<{\qquad \,\,\,\,\,\,\,\,\,
    \bullet\,\,R(x, z /y)}|\hole\ar[dl] & y\backslash z\ar[d] \\
  &(z/y)/x\ar[d]
  & x\backslash(z/y)\ar[dr]|<<<<<<{L(x \backslash(z/y),y \backslash
    z)\,\,\bullet\qquad\qquad\quad\,\,\,\, \,\,}|\hole\ar[dr]|\hole
  & y\backslash z\ar[dl]\ar[dl]|<<<<<<<<{\qquad \,\qquad\,
    \,\,\,\,\,\,\,\quad
    \bullet \,\,\,\,R(x \backslash(z/y),y \backslash z)} \\
  &(z/y)/x & (y\backslash z)/(x\backslash(z/y))
  & (x\backslash(z/y))\backslash (y\backslash z)}\hskip-0.5cm\xymatrix@R=1pt{\\\\\\\\\\\\\\\\.}
\end{equation}
%cf. Rem.\ref{}.
%}}}
%\end{itemize}
%{{{ alt

\noindent %Since  $(\Gamma_0,/,\backslash)$ is   %, by assumption,
%a birack,
% \noindent(Cf. the monoidal structure in $\Gamma \ltimes {\cal PERM}$ in \S\ref{ss:msgc-1} and the graphical calculus of \S\ref{ss:msgc};  recall that  undecorated lines carry identities in $\G$.)
 %{\noindent(Cf. the  the graphical calculus of \S\ref{ss:msgc}; } 
{Recall that  undecorated lines carry identities in $\G$.
 Hence Equation (\ref{eq:r3}) means that for each $x,y,z \in \Gamma_0$:}
\begin{equation}\label{compbikoids}
\begin{split}
L(x,y)\star L(x \backslash y,z)&
    =L(x,z/y)  \star L(x \backslash(z/y),y \backslash z),
  \\
R(x,y)\star L(y /x,z/(x\backslash y))
  &=L(y,z)\star R(x \backslash(z/y),y \backslash z),
  \\
R(x\backslash y,z) \star R(y /x,z/(x\backslash y)) &= R(y,z) \star R(x,z /y).
\end{split}
\end{equation}
\end{Definition}

\begin{Remark}[Holonomy arrows] The $L(x,y)$ and $R(x,y)$ arrows are called {\em holonomy arrows}.
{In the context of finite group and finite 2-group topological field
theory, they encode Aharonov-Bohm phases {\cite{LL,Bais1,Bais2}} arising from flat connection holonomy and flat 2-connection 2-dimensional holonomy obtained when point-particles move in 2-dimensional space {and loop-particles} move in
3-dimensional space -- see \S\ref{motivation1} and \S\ref{phys}.}
\end{Remark}
\begin{Remark}We can define a \biker\ $(\Gamma,X^+)$ as being given by a birack structure on $\Gamma_0$, together with morphisms  $(L(x,y)\colon x\to x \backslash y) \textrm{ and }
(R(x,y)\colon y \to y / x),$ in $\G$, for each $x,y\in \Gamma_0$, satisfying {Equation \eqref{compbikoids}. The wreath groupoid approach chosen, and its graphical calculus, greatly %however 
facilitates the proofs to come.}
\end{Remark}

\mdef\label{refnow1} Let $(\Gamma,X^+)$ be a \biker. 
Consider the underlying birack $(\Gamma_0,/, \backslash)$ of
$(\Gamma,X^+)$.
In particular the map
$  (x,y)\in \Gamma_0\times \Gamma_0 
  \mapsto (y/x, x \backslash y)\in \Gamma_0\times \Gamma_0$
is bijective (Def. \ref{de:birack}). We put (the inverse is taken in $\Gamma^2\rtimes \Sigma_2$):
$$
X^{-}(y/x, x \backslash y)
=\Big ( (x,y)\ra{X^+(x,y)} (y/x, x \backslash y)\Big)^{-1}
=\big ( (y/x, x \backslash y)\ra{X^-(y / x,x \backslash y)} (x,y) \big) \in \Mor(\Gamma^2\rtimes \Sigma_2) .
$$
We use the following diagrammatic notations for $X^{-}(y/x, x \backslash y)$:
$$
\xymatrix@R=1pt{\\\\ X^-(y/x,x\backslash y)=}
\xymatrix{& y/x  \ar[dr]\ar[dr]|<<<<{\overline{R(x,y)}\,\bullet\quad \quad\,\,\,} & x\backslash y  \ar[dl]|\hole\ar[dl]|<<<<{\quad \quad \,\, \bullet \overline{L(x,y)}}|\hole\\
  & x & y}\xymatrix@R=1pt{\\\\ \\\qquad=}
\xymatrix{& y/x  \ar[dr]\ar[dr]|<<<<{\overline{R(x,y)}\,\bullet\quad \quad\,\,\,}|\hole & x\backslash y  \ar[dl]\ar[dl]|<<<<{\quad \quad \,\, \bullet \overline{L(x,y)}}\\
& x & y}\xymatrix@R=1pt{\\\\\,\,\,.}$$

\mdef \label{uselater}Thus
$\overline{L(x,y)}\star L(x,y)=\id_{x\backslash y}$ and
$L(x,y)\star \overline{L(x,y)}=\id_x$. And the same for $R(x,y)$ and $\overline{R(x,y)}$.

%}}}
%\end{itemize}

%}}}
%{{{ WELDED

\begin{Definition}[Welded \biker] \label{de:wbiker}
{(Recall Def. \ref{de:birack} and Equation \eqref{eq:webi}.)}
A \biker\ 
$(\Gamma,X^+)$ is called welded, or a {\em \wwbiker}, if $(\Gamma_0,/,\backslash)$
is a welded birack, and for each $x,y,z \in \Gamma_0$ it holds that:
\begin{equation} \label{eq:wr1}
\hskip-2cm
\xymatrix@R=23pt@C=15pt{
  &x\ar[dr]|<<<<<{\id\,\,\bullet\quad}|\hole\ar[dr]
  & y\ar[dl]\ar[dl]|<<<<<{\quad \bullet \id\,\,\,} & z\ar[d]\ar[d]|{\quad\bullet \id\,\,}  \\
  &y \ar[d]\ar[d]|{\quad\bullet \id\,\,}
  & x \ar[dr]|<<<<{L(x,z)\,\,\bullet\quad\quad\,\,\,\,\,}|\hole\ar[dr]|\hole & z\ar[dl]|<<<<<{\quad\,\, \,\,\,\,\,\,\,\, \bullet\,R(x,z)}\ar[dl] \\
  &y \ar[dr]|<<<<<<{L(y ,z/ x)\,\,\bullet\quad\quad\quad\,\,\,\,}|\hole \ar[dr]|\hole
  & z/x \ar[dl]|<<<<<{\quad \quad\, \,\, \quad \bullet \, R(y,z / x )}\ar[dl]
  & x\backslash  z\ar[d]\ar[d]|{\quad\bullet \id\,\,} \\
  & (z/ x) / y
  &y \backslash (z/x)
  & x\backslash z }
\xymatrix@R=1pt{ \\ \\ \\ \\ \quad =}
\xymatrix@R=23pt@C=10pt{ &x\ar[d]\ar[d]|{\quad\bullet \id\,\,}
  & y\ar[dr]|<<<<<{L(y,z)\,\,\bullet\qquad\,\,\,}|\hole\ar[dr]|\hole
  & z\ar[dl]|<<<<<<{\qquad\,\,\, \bullet \,R(y,z)}\ar[dl]  \\
  &x\ar[dr]|<<<<<{L(x,z/y)\,\,\,\bullet\qquad\,\,\,\,\quad}|\hole\ar[dr]|\hole
  & z/y\ar[dl]|<<<<<{\qquad \,\,\,\,\,\,\,\,\, \bullet\,R(x, z /y)}|\hole\ar[dl]
  & y\backslash z\ar[d]\ar[d]|{\quad\bullet \id\,\,} \\
  &(z/y)/x\ar[d]\ar[d]|{\quad\bullet \,\id\,}
  & x\backslash(z/y)\ar[dr]|<<<{\,\,\,\,\,\,\id\,\,\bullet\quad\, \,\,\,\,}\ar[dr]
  & y\backslash z\ar[dl]\ar[dl]|<<<<{\,\quad \bullet \,\id\,\,\,\,} \\
  &(z/y)/x & y \backslash z & x\backslash (z/y)}\xymatrix@R=1pt{\\\\\\\\\\\\\\\,\,.}
\end{equation}
Since  $(\Gamma_0,/,\backslash)$  is a W-birack, the
 equation above means that for each $x,y,z \in \Gamma_0$ {we have that:} 
\begin{align}\label{weldedtoR}
  &L(x,z)=L(x,z/y),
  & L(y,z)=L(y,z/x), && R(x,z)\star R(y,z/x)=R(y,z)\star R(x,z/y).
\end{align}
\end{Definition}

%}}}
%{{{ EXAMPLES

%For examples, consider the following. 

\begin{Proposition}[Finite group \wwbiker] \label{groupalg}
Let $G$ be a group. 
We have a \wwbiker\ structure $X^+_G$ in  the groupoid
$\AUT(G)$ of {Example}  \ref{autg}. The underlying birack $(\Gamma_0 , /,\backslash) = (G,/,\backslash)$ is given by  $y/x =y$ and $ x \backslash y =y^{-1} x y $, cf. \peq{qq}.
%\[
%\begin{cases}  y/x =y \\ \\  x \backslash y =y^{-1}xy \end{cases}
 % y/x =y
 % \mbox{ , } \qquad    x \backslash y =y^{-1}xy
%,\]
{The} holonomy morphisms are {as follows} (below we put $\overline{y}=y^{-1}$, where $y \in G$):
%\vskip-1cm
\begin{equation}\label{fgwb}\xymatrix@R=0pt{ \\ X^+_G(x,y) =}
\xymatrix{& x \ar[dr]^{\,\,\,\bullet\, 1_G}|\hole &y\ar[dl]_{\overline{y}\,\,\bullet\,\,\,\,}\\
  &y & y^{-1}xy      }   \xymatrix@R=1pt{\\\\.}
\ignore{{
\quad \quad
\begin{CD}\\\\\\ \ &&&& \textrm{ thus }\quad
  \begin{cases}y/x =y \\ \\                                                                                               x \backslash y =y^{-1}xy\end{cases}\end{CD}
}}
\end{equation}
%These \biker s, called ``finite group \biker s'', are  welded.
\end{Proposition}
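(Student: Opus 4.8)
The plan is to exhibit the structure explicitly and then verify the defining identities of Definitions \ref{de:biker} and \ref{de:wbiker} by a direct computation in $\AUT(G)$. Take the underlying birack to be the conjugation W-birack of \peq{qq}, so that $y/x=y$ and $x\backslash y=y^{-1}xy$ on $\Gamma_0=G$; this is already a \emph{welded} birack by \peq{qq} (indeed, since $a/b=a$, the welded birack axioms \eqref{eq:wba} collapse to tautologies). For the holonomy arrows of \eqref{fgwb} set, in the notation of Example \ref{autg},
\[
L(x,y)=\big(x \ra{y^{-1}} y^{-1}xy\big), \qquad R(x,y)=\big(y \ra{1_G} y\big).
\]
These are forced by the target conditions $\tau(L(x,y))=x\backslash y$ and $\tau(R(x,y))=y/x$, and by construction $\sigma(L(x,y))=x$, $\sigma(R(x,y))=y$, so the source/target requirements of Def. \ref{de:biker} hold automatically. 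It therefore remains only to verify the three bikoid identities \eqref{compbikoids} and the three welded identities \eqref{weldedtoR}.

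\textbf{The computational engine.} Recall from Example \ref{autg} that an arrow of $\AUT(G)$ is determined by its source together with its conjugating label, and that composing a label $g$ followed by a label $h$ yields the label $hg$ (note the order). Two features of the present birack make the verification mechanical: first, $a/b=a$ forces all the ``over'' outputs to collapse ($z/y=z$, $z/(x\backslash y)=z$, $z/x=z$, while $x\backslash(z/y)=z^{-1}xz$, etc.); second, every $R$-arrow is a loop at its base object carrying the label $1_G$. Since each side of an identity has a forced common source, it suffices to match the product of conjugating labels along the composite. For the mixed ($RL=LR$) identity this telescopes: on the left, $R(x,y)\star L(y,z)$ carries the label $z^{-1}\cdot 1_G=z^{-1}$, while on the right $L(y,z)\star R(z^{-1}xz,\,z^{-1}yz)$ carries $1_G\cdot z^{-1}=z^{-1}$, so the two agree. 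The $LL$ identity telescopes to the common label $z^{-1}y^{-1}$: on the right the factor $z^{-1}y^{-1}z$ coming from $L(x\backslash z,\,y\backslash z)$ combines with the $z^{-1}$ from $L(x,z)$ to give $z^{-1}y^{-1}z\cdot z^{-1}=z^{-1}y^{-1}$, precisely because the composite label is $hg$ and not $gh$. The $RR$ identity reduces to $1_G=1_G$.

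\textbf{Welded part.} Because $a/b=a$, the first two identities of \eqref{weldedtoR} read $L(x,z)=L(x,z)$ and $L(y,z)=L(y,z)$, hence hold trivially, while the third is exactly the $RR$-computation above ($1_G=1_G$). Together with the fact that $(G,/,\backslash)$ is a welded birack (\peq{qq}), Def. \ref{de:wbiker} is satisfied, so $(\AUT(G),X^+_G)$ is a \wwbiker. Equivalently, one may read all of these verifications directly off the diagrams \eqref{eq:r3} and \eqref{eq:wr1}: label each strand by its conjugating element and each node by its object, and the required equalities become the manifest telescoping of conjugations read from top to bottom.

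\textbf{Main obstacle.} There is no conceptual difficulty here; the single point demanding care is the reversed composition convention in $\AUT(G)$ (the composite of labels $g$ then $h$ is $hg$, not $gh$) combined with the conjugation convention $a\backslash b=b^{-1}ab$. This is exactly what makes the accumulated $L$-labels appear in the correct order (e.g.\ $z^{-1}y^{-1}$ rather than $y^{-1}z^{-1}$), and it is the only place where an index error could slip in. I would therefore carry out the bookkeeping through the diagrammatic calculus of \S\ref{ss:msgc} rather than purely algebraically, since tracking the objects along the top row and the conjugating labels along each strand makes the telescoping visually evident and eliminates that risk.
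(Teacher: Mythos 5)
Your proof is correct and follows essentially the paper's own route: the paper verifies \eqref{eq:r3} by writing out both sides as the diagrams \eqref{dIA1}--\eqref{dIA2} in $\Gamma^3\rtimes\Sigma_3$ and reducing to exactly the telescoping identity you isolate, $\overline{z^{-1}yz}\;\overline{z}=\overline{z}\,\overline{y}$, with the reversed composition convention of Example \ref{autg} playing precisely the role you flag as the main pitfall. The only difference is that the paper does not check the welded axiom \eqref{eq:wr1} on the spot but defers it to the more general crossed-module construction of \S\ref{bs2} (Thm. \ref{bik-proof}), whereas you verify \eqref{weldedtoR} directly and observe it is trivial because $a/b=a$ and every $R$-label is $1_G$ --- a minor gain in self-containment, not a different method.
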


\begin{proof}
{Recalling \peq{qq}, $(G,/,\backslash)$ is a W-birack. Equation \eqref{eq:r3} -- equivalently  \eqref{compbikoids} -- is true since:}
\begin{equation}\label{dIA1}
\xymatrix@R=20pt{& x\ar[dr]|\hole\ar[dr]|<<<<<{\overline{y}\, \,\bullet\,\,\,\,\,\,}|\hole & y \ar[dl]\ar[dl]|<<<<<{\,\,\,\,\,\,\,\,\,\bullet\,\,1_G} &z\ar[d]\ar[d]|{\,\,\,\,\,\,\, \bullet 1_G}\\
& y\ar[d]\ar[d]|{\,\,\,\,\,\,\, \bullet 1_G} & y^{-1}xy\ar[dr]|\hole\ar[dr]|<<<<{\overline{z} \,\,\,\,\bullet\,\,\,\,\,}|\hole &z\ar[dl]\ar[dl]|<<<<<{\quad	\,\,\bullet\,\,1_G}\\    
& y\ar[dr]|\hole\ar[dr]|<<<<<{\overline{z}\,\, \bullet\,\,\,\,}|\hole & z\ar[dl]\ar[dl]|<<<<<{\quad \,\,\bullet\,\,1_G} &z^{-1} y^{-1}xyz\ar[d]\ar[d]|{\,\,\,\,\,\,\, \bullet 1_G}\\                         
& z & z^{-1}yz &z^{-1}y^{-1}xyz
}\xymatrix@R=1pt{\\ \\ \\ \\ \quad =}\xymatrix@R=20pt{& x\ar[dr]\ar[dr]|<<<<<<<<{\overline{z}\,\overline{y}\,\, \,\,\bullet\,\,\,\,\,\,\,\,\,\,\,}|\hole & y \ar[dl]\ar[dl]|<<<<<<<{\,\,\,\,\,\,\,\,\bullet\,\,\,\overline{z}\,\,\,\,} &z\ar[d]\ar[d]|{\,\,\,\,\,\,\, \bullet 1_G}\\
& z^{-1}yz\ar[d]\ar[d] & z^{-1}y^{-1}xyz\ar[dr]&z\ar[dl]\\    
& z^{-1}yz\ar[dr] & z\ar[dl] &z^{-1} y^{-1}xyz\ar[d]\\                         
& z & z^{-1}yz &z^{-1}y^{-1}xyz
 } \xymatrix@R=1pt{\\\\\\\\\\\\\ .}
 \end{equation}
(Recall our convention for composition in $\AUT(G)$; see {Example}  \ref{autg}.) Whereas:
\begin{equation}\label{dIA2}
\xymatrix@C=20pt@R=20pt{& x \ar[d]\ar[d]|{\,\,\,\,\,\,\, \bullet 1_G} & y \ar[dr]|\hole\ar[dr]|<<<<<{\,\,\,\,\overline{z}\,\,\,\, \bullet\,\,\,\,\,\,\,\,\,\,}|\hole &z\ar[dl]\ar[dl]|<<<<<{\quad \,\bullet\,\,1_G}
\\
& x \ar[dr]|\hole\ar[dr]|<<<<{\overline{z}\, \bullet\,\,\,\,}|\hole & z\ar[dl]\ar[dl]|<<<<<{\quad \bullet\,1_G} &z^{-1}yz\ar[d]\ar[d]|{\,\,\,\,\,\,\, \bullet 1_G}\\    
& z\ar[d]\ar[d]|{\,\,\,\,\,\,\, \bullet 1_G} & z^{-1}xz\ar[dr]|\hole\ar[dr]|<<<<<{\overline{z^{-1}yz}\, \,\,\,\bullet\,\,\,\quad\quad  }|\hole &z^{-1}yz\ar[dl]\ar[dl]|<<<<{\quad\bullet\, 1_G}
\\                         
& z & z^{-1}yz& z^{-1}y^{-1} xyz 
}
\xymatrix@R=1pt{\\ \\ \\ \\\\ \quad =}
\xymatrix@C=5pt@R=20pt{& x \ar[d]\ar[d]|<<<<{ \overline{z^{-1}yz}\,\overline{z} \,\bullet\,\,\,\,\,\,\,\quad\quad} & y \ar[dr]\ar[dr]|<<<<<{\,\,\,\,\overline{z}\,\, \bullet\,\,\,\,\,\,\,\,\,}|\hole &z\ar[dl]\ar[dl]|<<<<<{\quad \,\,\bullet\,\,1_G}
\\
& z^{-1}y^{-1} xyz \ar[dr] & z\ar[dl] &z^{-1}yz\ar[d]\\    
& z\ar[d] & z^{-1}y^{-1} xyz\ar[dr] &z^{-1}yz \ar[dl]
\\                         
& z & z^{-1}yz& z^{-1}y^{-1} xyz 
}\xymatrix@R=1pt{\\\\.}
\end{equation}
{Since $\overline{z^{-1}yz}\,\overline{z}=z^{-1}y^{-1} z z^{-1}=\overline{z}\,\overline{y}$,
the  morphisms in $\gwrr{3}$ in the right-hand-sides of \eqref{dIA1} and \eqref{dIA2} coincide.
That the bikoid $X^+_G$ is welded is proved similarly; we will prove this 
in a more
general context in \S\ref{bs2}.}
\end{proof}

\noindent The W-\biker\ $(\AUT(G),X^+_G)$ is closely related to topological gauge theory in $D^2$; see \S\ref{motivation1}.

\smallskip

The proof of the following result is an easy exercise:
\begin{Proposition}\label{bigbirack}
Let  $(\Gamma,X^+)$ be a \biker. Then ${\G_1}$, the set of morphisms of $\Gamma$, is a birack with:
\begin{align*}
\big(x'\ra{\g} x) \backslash (y'\ra{\phi} y)&=(x'\ra{\gamma} x)\star (x \ra{L(x,y)} x\backslash y)=(x'\ra{\gamma\star L(x,y)} x\backslash y),\\
\big(y'\ra{\phi} y) / (x'\ra{\g} x)&=(y'\ra{\phi} y)\star (y \ra{R(x,y)} y / x)=(y'\ra{\phi\star R(x,y)} y / x).
 \end{align*}
Furthermore $(\Gamma,X^+)$ is a welded \biker\ if, and only if, $({\G_1},/,\backslash)$ is a welded birack.
\end{Proposition}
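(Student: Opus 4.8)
The plan is to verify directly the three defining conditions of Definition \ref{de:birack} for the proposed operations on $\G_1$, and then to match the welded conditions against Definition \ref{de:wbiker}. The guiding principle is that the target map $\tau\colon \G_1 \to \G_0$ intertwines the operations on $\G_1$ with the underlying birack operations on $\G_0$: indeed $\tau\big((x'\ra{\g}x)\backslash (y'\ra{\phi}y)\big)=\tau(L(x,y))=x\backslash y$, and similarly $\tau\big((y'\ra{\phi}y)/(x'\ra{\g}x)\big)=y/x$. Hence each identity to be proved splits into a statement about \emph{targets}, settled by the fact that $(\G_0,/,\backslash)$ is already a birack, and a statement about the \emph{arrows}, settled by the holonomy data. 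Throughout I would keep the reversed composition convention \peq{coc} firmly in mind, so the holonomy factors are composed left-to-right in the correct order.

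For condition 1, fix $a=(u'\ra{\psi}u)$ and write $\beta=(y'\ra{\phi}y)$. Then $f_a(\beta)=\beta/a=(y'\ra{\phi\star R(u,y)}y/u)$, which is invertible: from the target $y/u$ one recovers $y$ because $t\mapsto t/u$ is invertible on $\G_0$ by condition 1 for $(\G_0,/,\backslash)$, and then one post-composes with $\overline{R(u,y)}$, using \peq{uselater}, to recover $\phi$. The map $f^a\colon \beta\mapsto \beta\backslash a=(y'\ra{\phi\star L(y,u)}y\backslash u)$ is handled identically with $L$ and $\backslash$. For condition 2, the switch $S(\alpha,\beta)=(\beta/\alpha,\alpha\backslash\beta)$ is inverted by first applying the inverse of the switch of $(\G_0,/,\backslash)$ to the pair of targets to recover $(x,y)=(\tau(\alpha),\tau(\beta))$, and then post-composing the two arrow-components with $\overline{R(x,y)}$ and $\overline{L(x,y)}$ respectively; this is precisely the inverse morphism $X^-$ recorded in \peq{refnow1}.

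The heart of the argument, and the only step demanding care, is condition 3. Here I would take a triple $\alpha=(x'\ra{\g}x)$, $\beta=(y'\ra{\phi}y)$, $\delta=(z'\ra{\psi}z)$ and compute all three components of $S_1\circ S_2\circ S_1$ and of $S_2\circ S_1\circ S_2$ explicitly. In each component the target reproduces the corresponding entry of the two sides of \eqref{eq:stybe}, so the two sides agree on targets because $(\G_0,/,\backslash)$ is a birack. The arrow in each component is a single common prefix ($\psi$ in the first component, $\phi$ in the second, $\g$ in the third) followed by a product of exactly two holonomy arrows; since the prefix is arbitrary and identical on both sides, the required equality reduces to the three identities of \eqref{compbikoids} -- the first component to the $R$--$R$ identity, the third to the $L$--$L$ identity, and the middle to the mixed $R$--$L$ identity (e.g. the middle reduces to $R(x,y)\star L(y/x,z/(x\backslash y))=L(y,z)\star R(x\backslash(z/y),y\backslash z)$). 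As these hold by the \biker\ axiom \eqref{eq:r3}, condition 3 follows and $(\G_1,/,\backslash)$ is a birack.

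Finally, for the welded statement I would run the same component bookkeeping on the three W-birack identities of Definition \ref{de:birack} applied to $\alpha,\beta,\delta\in\G_1$. On targets these become exactly the three W-birack identities for $(\G_0,/,\backslash)$; on arrows, after cancelling a common arbitrary prefix, they become exactly the three equations of \eqref{weldedtoR} (for instance $\beta\backslash\delta=\beta\backslash(\delta/\alpha)$ collapses to $L(y,z)=L(y,z/x)$). Since being a \wwbiker\ means precisely the conjunction of ``$(\G_0,/,\backslash)$ welded'' with \eqref{weldedtoR}, the stated equivalence drops out. I expect the sole real obstacle to be the combinatorial bookkeeping of condition 3; once the prefix-cancellation observation is in place, everything reduces mechanically to \eqref{eq:stybe}, \eqref{compbikoids} and \eqref{weldedtoR}.
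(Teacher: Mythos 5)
Your proposal is correct. Note that the paper itself supplies no argument here --- Proposition \ref{bigbirack} is prefaced only by ``The proof of the following result is an easy exercise'' --- so there is no written proof to compare against; your direct verification is exactly the argument the authors are gesturing at. The structure you use (split each identity into a statement about targets, settled by the underlying birack $(\G_0,/,\backslash)$, and a statement about arrows, settled after left-cancellation of the common prefix $\gamma$, $\phi$ or $\psi$ by the holonomy identities \eqref{compbikoids} for condition 3 and \eqref{weldedtoR} for the welded equivalence) checks out in every component, including the correspondence you state between the three components of $S_1\circ S_2\circ S_1=S_2\circ S_1\circ S_2$ and the three equations of \eqref{compbikoids}; the only cosmetic slip is that your described inverse of the switch returns the pair $(\beta,\alpha)$ rather than $(\alpha,\beta)$ before swapping, which is harmless bookkeeping.
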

\mdef To $({\G_1},/,\backslash)$ we call the {\em upper birack} of  $(\Gamma,X^+)$. %(Recall that to 
%to $(\Gamma^9,/,\backslash)$ we call the underlying birack of  $(\Gamma,X^+)$.

 \smallskip
\mdef\label{birackact2} {By combining Prop. \ref{bigbirack} with Lem. \ref{birackact}, we hence have a  representation $\trll$ of $\VBG_n$ on $\C\Gamma_1^n\cong (\C\Gamma_1)^{n \otimes}$, derived from the upper birack of a bikoid {$(\Gamma,X^+)$}. Explicit formulae are in  \peq{upper-rep}. {The representation $\trll$ of $\VBG_n$ descends to a representation of $\WBG_n$ when $(\Gamma,X^+)$ is welded.}  This $\trll$ will be  generalised in \S\ref{representations}.

\smallskip
{(Recall \peq{whyess}).  We will be particularly interested in essential W-bikoids. These are the W-bikoids for which the representation  $\trll$ does not descend from $\WBG_n$ to the unrestricted virtual braid group in Rem. \ref{tornant}.} 

\begin{Definition}\label{essential4}A W-bikoid\ is called {\it essential} if its upper birack is essential (as in  Def. \ref{essential1}).
\end{Definition}

 \begin{Example}\label{Gisessential} {A quick calculation shows that if $G$ is non-abelian then the W-bikoid $X^+_G$ in Prop. \ref{groupalg} is an essential W-bikoid. The lower birack of $X^+_G$, displayed in \peq{qq}, may be  non-essential when $G$ is non-abelian, if it happens that for all  $x,y,z \in G$, it holds that $(xy)z(xy)^{-1}=(yx)z(yx)^{-1}.$}
 \end{Example}

% 
% \mdef\label{transport} Let $\G=\Gam$ and $\G'=\Gamp$ be groupoids. Let us be given a groupoid inclusion $g\colon \G\to \G'=(g_0\colon \G_0 \to \G_0'\,,\, g_1\colon \G_1 \to \G_1')$. Hence $g_0$ and $g_1$ are injective and preserve all structure maps in $\G$ and $\G'$. Suppose in addition that $g_0$ is surjective. If we have a \biker\ (or W-\biker) $X^+_\G=X^+$ on $\Gamma$ then $X^+_\G$ can be transported to a \biker\ (or W-\biker)  $X^+_{\G'}$ in $\G'$, in the obvious way. \red{We put:} $$\xymatrix@R=1pt{\\\\ X^+_{\G'}(x',y')=}
% \xymatrix{& x'  \ar[dr]|\hole
%   \ar[dr]|<<<<<<<<<<<{g_1\big(L\big(g_0^{-1}(x'),g_0^{-1}(y')\big)\big)\,\bullet\qquad\quad \qquad\qquad \qquad}|\hole
%   & y'  \ar[dl]\ar[dl]|<<<<<<<<<<<<{\quad \quad \qquad \qquad\,\,\,\quad \quad \bullet g_1\big (R\big(g_0^{-1}(x'),g_0^{-1}(y')\big)\big)}\\
% & g_0(g_0^{-1}(y')/ g_0^{-1}(x')) &g_0\Big(g_0^{-1}(x')\backslash g_0^{-1}(y')\Big)
% }\xymatrix@R=1pt{\\\\\\\quad.}
% $$

%}}}
\subsection{\Biker s from abelian $gr$-groups} \label{abeliangrgroups}

There will be heavy  use of semidirect products in this document. Here are our conventions.

\mdef\label{sdc1} Let the group $G$ left-act by automorphisms on the group  $E$. Such action is denoted by:
$$(g,e) \in G \times E \mapsto g \trr e \in E. $$
Our convention for the semidirect product $G \ltimes E=  G\ltimes_\trr E$ is:
$$(g,a)(h,b)=(gh,a\,\,g \trr b), \textrm{ for each } g,h \in G \textrm{ and } a,b \in E. $$
Hence inverses in $G \ltimes_\trr E$ are: $(g,e)^{-1}=(g^{-1},g^{-1} \trr e^{-1}).$ 

\smallskip

\mdef\label{sdc2}  Let $g \in G$ and $e \in E$. We will frequently put $g=(g,1_E)$, where $1_E$ is the identity of $E$, and $e=(1_G,e)$. Hence $(g,e)=e\,\,g=g\,\,(g^{-1}\trr e) $. Also recall that given left-actions $\t$ of $G$ and 
of $E$ on a set $X$, then $(g,e) \t x\doteq e \t ( g\t x)$ is an action of $G\ltimes_{\trr} E$ on $X$ if, and only if, for each $g\in G$, $e \in E$ and $x \in X$: 
\begin{equation}\label{sdirectcomp}
e \t ( g\t x)=g \t \,\, \big(( g^{-1} \trr e) \t x\big).
\end{equation}
%{{{ GR

\newcommand{\agg}{abelian $gr$-group}

\begin{Definition}[Abelian $gr$-group]\label{abgrg} An abelian $gr$-group is given by pair $(G,A)$, or more correctly a triple $(G,A,\trr)$, where $G$ and $A$ are groups, with $A$ abelian, and $\trr$ is a left-action  of $G$ on $A$ by automorphisms.
Morphisms  $\phi\colon (G,A) \to (H,B)$ of abelian $gr$-groups are
  defined as pairs of group maps
  $\phi_1\colon G \to H$ and $\phi_2\colon A \to B$ preserving  group actions, namely $\phi_2(g \trr a)=\phi_1(g) \trr \phi_2(a)$, for each $g \in G$ and $a \in A$. The set of morphisms of abelian $gr$-groups between  $(G,A)$ and $(H,B)$ is denoted $\hom_{gr}\big((G,A),(H,B)\big)$.
\end{Definition}

\begin{Example}\label{gr1} {E.g., for $p$ a prime and $m \in \Z^+$,  put $G={\rm GL}(m,\Z_p)$, the group of invertible {$m\times m$} matrices in the field $(\Z_p,+,\times)$, and $A=(\Z_p^m,+)$. The action is by matrix multiplication.}
 \end{Example}
\smallskip
Let $(G,A)$ be an \agg, and $G \ltimes A=G \ltimes_\trr A$. Cf. Def. \ref{autg},
let
$\TRANS(G,A)=\AUT(G\ltimes A)$, the action groupoid  (Def. \ref{de:ag}) of the conjugation action of
$G \ltimes A$ on itself.
Thus arrows of
$\TRANS(G , A)$ 
 are: 
$$
\big((g,a)\ra{(w,k)} (w,k)(g,a)(w,k)^{-1}\big)
  = \big(wgw^{-1},k+w \trr a-(wgw^{-1}) \trr a)\big),
      \textrm{ where }  g,w \in G \textrm{ and } a, k \in A.
$$

\begin{Theorem}[Abelian $gr$-group \biker s]\label{AbelianGRgroups}
We have a \wwbiker\  $( \TRANS(G,A) , X^+_{gr})$,
given by:
\begin{equation}\label{kau-mar}
\begin{split}
\xymatrix@R=1pt{\\\\ X^+_{gr}\Big((z,a),(w,b)\Big)=}\hskip-1cm
\xymatrix{& (z,a)  \ar[dr]|\hole
  \ar[dr]|<<<<<<<<<{(w^{-1},0_A)\,\,\,\,\,\bullet\,\,\,
    \qquad\quad\,\,\,}|\hole
  & (w,b)  \ar[dl]\ar[dl]|<<<<<<<<<{\qquad \qquad \,\,\,\, \,\,\bullet \,\,\,(1_G , -{w}^{-1}\trr a)}\\
  &(w,a+b-w^{-1} \trr a)  & \big(w^{-1}zw,w^{-1} \trr a\big)}\xymatrix@R=1pt{\\\\\\.}
\end{split}
\end{equation}
{Hence the underlying birack of $X^+_{gr}$ is such that:}
\begin{align}\label{kau-mar2}
&(z,a) 	\backslash (w,b)=(w^{-1}zw,w^{-1} \trr a) & \textrm{and}  &&(w,b)/(z,a)=(w,a+b-w^{-1} \trr a).
\end{align}

\end{Theorem}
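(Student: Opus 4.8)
The plan is to verify directly the axioms of Definitions \ref{de:biker} and \ref{de:wbiker} for the data in \eqref{kau-mar}, working throughout in the wreath groupoid $\gwrr{3}$ with the graphical calculus of \S\ref{ss:msgc}, which reduces every identity to a strandwise comparison of holonomy-arrow composites. First I would record the ambient semidirect-product data: with $A$ written additively, $(g,\alpha)(h,\beta)=(gh,\alpha+g\trr\beta)$ and $(g,\alpha)^{-1}=(g^{-1},-g^{-1}\trr\alpha)$, and I would re-derive the conjugation action that determines the targets of arrows in $\TRANS(G,A)=\AUT(G\ltimes A)$. Feeding the label $(w^{-1},0_A)$ into the object $(z,a)$ yields target $(w^{-1}zw,w^{-1}\trr a)$, and feeding $(1_G,-w^{-1}\trr a)$ into $(w,b)$ yields target $(w,a+b-w^{-1}\trr a)$, which are exactly the birack operations \eqref{kau-mar2}. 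This confirms that $L$ and $R$ have the sources and targets demanded by Definition \ref{de:biker}.

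Next I would check that $(G\ltimes A,/,\backslash)$ of \eqref{kau-mar2} is a W-birack. Invertibility of $f_{(w,b)}\colon(z,a)\mapsto(z,a)/(w,b)$ and of $f^{(w,b)}\colon(z,a)\mapsto(z,a)\backslash(w,b)$ (items 1--2 of Definition \ref{de:birack}) is immediate, since the first merely shifts the $A$-component by a $(w,b)$-dependent constant while the second conjugates the $G$-component and applies the bijection $w^{-1}\trr(-)$; explicit inverses are written down at once. The welded birack axiom \eqref{eq:wba} is a short substitution: $(z/x)/y=(z/y)/x$ holds because $A$ is abelian, whereas $y\backslash z=y\backslash(z/x)$ and $x\backslash z=x\backslash(z/y)$ hold because $u\backslash v$ depends on $v$ only through its $G$-component and $/$ leaves $G$-components unchanged. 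The remaining birack axiom, the Yang--Baxter/RIII identity \eqref{eq:stybe}, need not be treated separately: it is precisely the object-level (target) reading of the bikoid relation \eqref{eq:r3}, which is proved in the next step.

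The heart of the proof is the three equations \eqref{compbikoids}, checked at the level of arrows in $\TRANS(G,A)$. By Example \ref{autg} the composite of two arrows multiplies their labels in reverse order inside $G\ltimes A$, and by \S\ref{ss:msgc} the two sides of \eqref{eq:r3} carry the same permutation (the braid relation in $\Sigma_3$), so only the three strandwise label-composites remain to be matched. Writing $x=(p,\alpha)$, $y=(q,\beta)$, $z=(r,\gamma)$, the pure-$L$ equation has all labels of the form $(\,\cdot\,,0_A)$ and collapses to an identity of words in $G$, both sides giving the label $((qr)^{-1},0_A)$; the pure-$R$ equation has all labels in $\{1_G\}\times A$, so the reverse-order product degenerates to addition in $A$ and both sides agree; the mixed equation interleaves the two and is the one place where the $G$- and $A$-components must be propagated simultaneously through a single conjugation. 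The welded-bikoid identities \eqref{weldedtoR} are handled identically: $L(x,z)=L(x,z/y)$ and $L(y,z)=L(y,z/x)$ are immediate, since $L$'s label and action see only the $G$-component of the second argument, which $/$ fixes, while the final identity $R(x,z)\star R(y,z/x)=R(y,z)\star R(x,z/y)$ reduces, after the same degeneration to $A$, to the equality of the two summands $-r^{-1}\trr\alpha$ and $-r^{-1}\trr\beta$ taken in either order.

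I expect the main obstacle to be exactly this last family of checks: faithfully tracking the $A$-valued components of the $R$-arrows through the semidirect-product conjugation and the reverse-order label composition. The mixed equation of \eqref{compbikoids} is the single most delicate computation, since there a nontrivial $G$-label $(w^{-1},0_A)$ and a nontrivial $A$-label $(1_G,\cdot)$ must pass one another. The conceptual point that makes everything close up, and the reason the structure is a genuinely \emph{welded} bikoid rather than merely a virtual one, is that every $R$-label lies in the abelian factor $\{1_G\}\times A$, so the $R$-composites are order-independent; this is precisely \eqref{eq:webi} at the object level and \eqref{eq:wr1} at the arrow level, and it is the hypothesis that $A$ is abelian that guarantees it.
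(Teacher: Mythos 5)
Your proposal is correct, and it takes a genuinely different route from the paper, because the paper never proves Theorem \ref{AbelianGRgroups} directly: its proof is omitted, and the statement is instead recovered from the more general crossed-module Theorem \ref{bik-proof}. Concretely, an abelian $gr$-group $(G,A,\trr)$ gives a crossed module with trivial boundary map \peq{grtox}; for that crossed module one has $S^2(\Gc)=G\ltimes A$, the inclusion ${\rm Inc}\colon \TRANS(G,A)\to\TRANS(S^2(\Gc))$ of Rem.~\ref{refA} is bijective on objects, and $X^+_{gr}$ corresponds to the W-\biker\ $X^+_{gr^*}$ of \eqref{X2} under transport \peq{transport}. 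Your direct verification inside $\TRANS(G,A)$ is sound -- I checked the strandwise computations you outline: with $x=(p,\alpha)$, $y=(q,\beta)$, $z=(r,\gamma)$, both sides of the pure-$L$ equation of \eqref{compbikoids} compose (in reverse order, per Example \ref{autg}) to the label $((qr)^{-1},0_A)$, both sides of the mixed equation compose to $(r^{-1},-(r^{-1}q^{-1})\trr\alpha)$, and the pure-$R$ equation and the welded identities \eqref{weldedtoR} reduce to sums in $A$ that agree by commutativity. That last point is exactly the mechanism of the paper's general proof, where weldedness hinges on $\ker(\partial)$ being abelian; so your argument is in effect the paper's argument specialised and stripped of the crossed-module machinery (no Peiffer relations are needed when $\partial$ is trivial). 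What the paper's route buys is economy and structure: Theorem \ref{bik-proof} has to be proved anyway, nothing is verified twice, and $X^+_{gr}$ is exhibited as a depleted version of the more fundamental $X^+_{gr^*}$; what your route buys is a short, self-contained proof readable before \S\ref{ss:4}. Two harmless imprecisions to fix if you write this up: the shift effected by $f_{(w,b)}$ depends also on the $G$-component of its argument, not only on $(w,b)$ (bijectivity survives because that component is preserved), and invertibility of the switch $S$ (item \ref{cond2} of Def.~\ref{de:birack}) deserves its own explicit inverse rather than being bundled with that of the maps $f_a$, $f^a$.
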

\noindent {We  omit the proof, as we will treat a more general case
(\biker s derived from crossed modules) in {Thm. \ref{bik-proof}.}}

\begin{Proposition}{(Cf. Def. \ref{essential4}.)}
 %Let $(G,A)$ be an abelian $gr$-group.
 The W-\biker\ $X^+_{gr}$ in \eqref{kau-mar} is essential if, and only if, \underline{it does not hold that}:
\begin{align*}\label{essiff}
{\forall 
y,z \in G, \forall a \in A:\quad}
 &{yz=zy}, &&{y^{-1}\trr a=(y^{-1}z^{-1})\trr a,} &\textrm{and }&&{z^{-1}\trr a =(z^{-1}y^{-1})\trr a}.
\end{align*}
% % 
% \begin{align*}\forall x,y,z \in G, \forall a \in A:\quad
%   &\red{yz=zy}, &&a-y^{-1}\trr a=z^{-1}\trr a-y^{-1}z^{-1}\trr a, &\textrm{and }&&a-z^{-1}\trr a =y^{-1}\trr a-z^{-1}y^{-1}\trr a.
%\end{align*}
{Hence if $G$ is non-abelian, or if the action of $G$ on $A$ is non-trivial, then the W-\biker\ $X^+_{gr}$  in \eqref{kau-mar} is essential.}
\end{Proposition}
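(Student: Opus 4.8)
The plan is to unwind Definitions \ref{essential4} and \ref{essential1} for the upper birack $({\G_1},/,\backslash)$ of $X^+_{gr}$ (Prop. \ref{bigbirack}) and to reduce the three resulting equalities of morphisms in $\gwrr{3}$ to the three displayed conditions on $G$ and $\trr$. By Def. \ref{essential4}, $X^+_{gr}$ is essential precisely when it is \emph{not} the case that, for all $\mathbf{x},\mathbf{y},\mathbf{z}\in {\G_1}$, all three equations of Def. \ref{essential1} hold for the upper birack. So the first step is to rewrite those three equations using the formulas of Prop. \ref{bigbirack}. The key simplifying observation is that in the action groupoid $\TRANS(G,A)=\AUT(G\ltimes A)$ an arrow is completely determined by its source together with its group-element label (cf. Example \ref{autg}); hence equality of two morphisms with the same source reduces to equality of their labels, and the target/object equalities are then automatic. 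This collapses the arrow prefactors $\gamma,\phi,\psi$ of $\mathbf{x},\mathbf{y},\mathbf{z}$ and leaves exactly three conditions on the holonomy morphisms $L,R$ evaluated at objects $x,y,z\in G\ltimes A$, namely
\[
L(x,y)\star L(x\backslash y,z)=L(x,z)\star L(x\backslash z,y),\qquad R(x,y)=R(x\backslash z,y),\qquad R(x\backslash y,z)=R(x,z).
\]

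Next I would substitute the explicit holonomy arrows from \eqref{kau-mar}, i.e. $L((z,a),(w,b))=(w^{-1},0_A)$ and $R((z,a),(w,b))=(1_G,-w^{-1}\trr a)$, together with the composition rule of $\AUT(G\ltimes A)$ and the semidirect-product conventions of \peq{sdc1}. Writing $x=(g,a)$, $y=(h,b)$, $z=(k,c)$, and recording $x\backslash y=(h^{-1}gh,h^{-1}\trr a)$ and $x\backslash z=(k^{-1}gk,k^{-1}\trr a)$ from \eqref{kau-mar2}, a short computation gives the three conditions. The $L$-equation composes (in the reverse order of \peq{coc}) to $(k^{-1}h^{-1},0_A)=(h^{-1}k^{-1},0_A)$, i.e. $hk=kh$; the first $R$-equation reads $(1_G,-h^{-1}\trr a)=(1_G,-(h^{-1}k^{-1})\trr a)$, i.e. $h^{-1}\trr a=(h^{-1}k^{-1})\trr a$; and the second $R$-equation reads $(1_G,-(k^{-1}h^{-1})\trr a)=(1_G,-k^{-1}\trr a)$, i.e. $(k^{-1}h^{-1})\trr a=k^{-1}\trr a$. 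Renaming $h=y$, $k=z$ yields exactly the three displayed equalities, establishing the asserted equivalence.

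Finally, for the concluding assertion I would observe that each of the two $R$-conditions already forces $\trr$ to be trivial: applying $h\trr$ to $h^{-1}\trr a=(h^{-1}k^{-1})\trr a$ gives $a=k^{-1}\trr a$ for all $k,a$, and similarly applying $k\trr$ to the third condition gives $h^{-1}\trr a=a$. Meanwhile the $L$-condition $hk=kh$ (for all $h,k$) is abelianness of $G$. Hence the three conditions hold simultaneously if and only if $G$ is abelian and $\trr$ is trivial, so their negation—essentiality—holds whenever $G$ is non-abelian \emph{or} $\trr$ is non-trivial.

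The one delicate point, which I expect to be the main (though routine) obstacle, is the book-keeping: getting the conjugations in $G\ltimes A$, the inverses $(w,e)^{-1}=(w^{-1},w^{-1}\trr e^{-1})$, and the reverse-order composition of \peq{coc} all correct, and confirming that the object-level equalities of Def. \ref{essential1} are genuinely subsumed by the label equalities above. This is most safely carried out with the diagrammatic calculus of \S\ref{ss:msgc}, exactly as in the proof of Prop. \ref{groupalg}, where the blobs on the strands track the holonomy labels and the crossings track only the underlying permutation.
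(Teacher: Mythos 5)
Your proposal is correct and takes essentially the same approach as the paper, whose entire proof is the sentence ``This follows by easy calculations'' --- your argument is precisely those calculations carried out in full. The reduction of the upper-birack equalities of Def.~\ref{essential1} to label equalities in the action groupoid (using that an arrow of $\AUT(G\ltimes A)$ is determined by its source and its group label), the substitution of the holonomy arrows of \eqref{kau-mar} with the reverse-order composition convention, and the closing observation that either $R$-condition forces the action $\trr$ to be trivial while the $L$-condition forces $G$ abelian, are all accurate and suffice for both the stated equivalence and the final implication.
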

\begin{proof}This follows by easy calculations.\end{proof}

\begin{Example}\label{gr2} It is  possible that  $X^+_{gr}$ \ in \eqref{kau-mar} is essential when $G$ is abelian. E.g. take $G=\Z_2=(\{1,-1\},\times)$ and $A=\Z_3=(\{0,1,-1\},+)$. Hence $\Z_2$ acts on $\Z_3$  as $x\trr a=xa$. The associated W-\biker\ is essential.  %E.g. take $z,y=-1$ and $a=1$.
\end{Example}
%}}}}

\subsubsection{{Algebraic topological} interpretation of abelian $gr$-group
  \wwbiker s: `balloons and hoops'}\label{bandh}
%{{{ GEOM

{We freely use \S \ref{motion} and Thm. \ref{main0}. We use} the same notation for $B \in \WBG_n$ and its image $T_B\in \LBG_n$. 

Recall \cite{martins_kauffman,satoh} that the underlying birack \peq{qq} of the finite group \biker\
$(\AUT(G),X^+_G)$ of Prop. \ref{groupalg} essentially computes the   cardinality of the set of group maps from the knot group of a welded knot into $G$.

{Let $n\in \Z^+$. The underlying birack \eqref{kau-mar2} of the abelian $gr$-group \biker\  $X^+_{gr}=( \TRANS(G,A) , X^+_{gr})$ of Thm.  \ref{AbelianGRgroups} is strongly related to the action of $\pi_1(D^3\setminus C_n,*)$ on $\pi_2(D^3\setminus C_n,*)$. {We now explain this.}}

%It is at this
%point that we need  the formulation for the loop braid group for
%loop braid groups as explained in \ref{motion}. 

 {Consider the base point $*=(0,0,1)\in \d(D^3)$ for $D^3$.
Hence  any homeomorphism $f\colon (D^3,C_n) \to (D^3,C_n)$ fixes $*$. So do the isotopies we consider.}
Homotopically, $(D^3\setminus C_n,*)$
is a wedge product of $n$ circles and $n$ 2-spheres.
The fundamental group $\pi_1(D^3\setminus C_n,*)$
is the free product $\vee_{i=1}^n \Z_i$, where
$\Z_i=\{m_i^k\}_{k \in \Z} \cong (\Z,+)$.
Here $m_i\in \pi_1(D^3 \setminus C_n,*)$
{is  
associated with the oriented unknotted circle {$S^1_i$,} in the usual way; see Fig. \ref{K}.}
\begin{figure}[H]
\centerline{\relabelbox 
\epsfysize 3.5cm 
\epsfbox{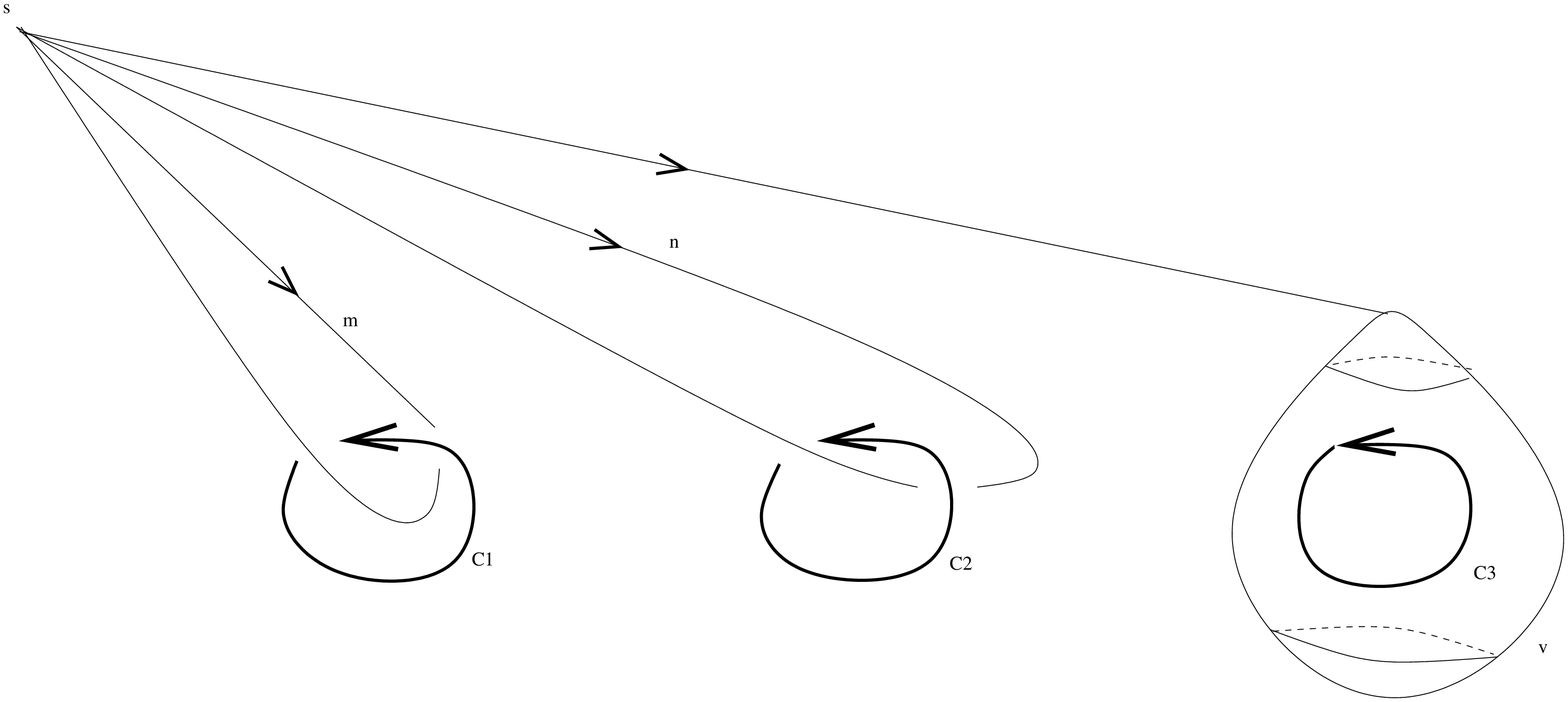}
\relabel{s}{$\scriptstyle{*}$}
\relabel{m}{$\scriptstyle{m_i}$}
\relabel{n}{$\scriptstyle{m_j}$}
\relabel{C1}{$\scriptstyle{S^1_i}$}
\relabel{C2}{$\scriptstyle{S^1_j}$}
\relabel{C3}{$\scriptstyle{S^1_k}$}
\relabel{v}{$\scriptstyle{v_k}$}
\endrelabelbox}
\caption{Elements  {$m_i,m_j\in \pi_1(D^3\setminus C_n,*)$, $v_k\in \pi_2(D^3\setminus C_n,*)$ given by the unknotted circles $S^1_i,S^1_j$,$S^1_k$.}\label{K}}
\end{figure}
\noindent{(This is  the usual Wirtinger presentation of the fundamental group of a knot complement, see e.g. \cite{Rolfsen}.)} The group-algebra of
$\pi_1(D^3\setminus C_n,*)$,
with coefficients in $\Z$, is thus isomorphic to the algebra of Laurent polynomials
$\Z\{m_1,m_1^{-1},\dots,m_n,m_n^{-1}\}$, in the non-commuting variables
$m_1,\dots,m_n$.

The second homotopy group $\pi_2(D^3\setminus C_n,*)$
is isomorphic to the second homology group of the universal cover of $D^3\setminus C_n$, which recall is homotopic to the  wedge product of $n$ 1-spheres $S^1$ and $n$ 2-spheres $S^2$. Therefore  $\pi_2(D^3\setminus C_n,*)$ is isomorphic to  the free
$\Z\{m_1,m_1^{-1},\dots,m_n,m_n^{-1}\}$--module
on the variables $v_1,\dots,v_n$; {\cite[Example 4.27]{hatcher} treats a particular case of this.}
{Geometrically,
each $v_k \in \pi_2(D^3\setminus C_n,*)$ traces a `balloon'
encircling the circle $S^1_k$; see Fig. \ref{K}.}
{(A similar `balloons and hoops' picture appears in \cite{Ballons}.)}

Given a {path}-connected pointed space $(X,*)$, the abelian $gr$-group
made out of $\pi_1(X,*)$ acting on $\pi_2(X,*)$, in the usual way,
is denoted by $\pi_{1,2}(X,*)$. The previous two paragraphs imply {(given the freeness of $\pi_1(D^3\setminus C_n,*)$ and the freeness of $\pi_2(D^3\setminus C_n,*)$ as a $\Z(\pi_1(D^3\setminus C_n,*))$-module)} {the following}:

\begin{Lemma}\label{thekey}{
  Let $(G,A)$ be an abelian $gr$-group.
  Abelian $gr$-group maps
  $\phi=(\phi_1,\phi_2)\colon \pi_{1,2}(D^3\setminus C_n,*)
  \rightarrow (G,A)$
  are in  bijective correspondence with sequences of the form:
  $\big((g_1,a_1), \dots, (g_n,a_n)\big)$, where $g_i \in G$ and $a_i \in A$, for each $i=1,\dots,n$. 
The bijection is defined  by   $g_i=\phi_1(m_i)$ and $a_i=\phi_2(v_i)$, for $i=1,\dots,n$.}
\end{Lemma}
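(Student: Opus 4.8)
The plan is to recognise this as a universal-property statement resting on the two freeness facts established just before the Lemma: that $\pi_1(D^3\setminus C_n,*)$ is free as a group on $m_1,\dots,m_n$, and that $\pi_2(D^3\setminus C_n,*)$ is free as a $\Z(\pi_1(D^3\setminus C_n,*))$-module on $v_1,\dots,v_n$. First I would define the forward map, sending an abelian $gr$-group morphism $\phi=(\phi_1,\phi_2)$ to the sequence $\big((\phi_1(m_1),\phi_2(v_1)),\dots,(\phi_1(m_n),\phi_2(v_n))\big)$, and then prove it is a bijection by producing an explicit inverse. Everything reduces to unwinding what a morphism of abelian $gr$-groups is, in the sense of Definition \ref{abgrg}.

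For surjectivity, given a sequence $\big((g_1,a_1),\dots,(g_n,a_n)\big)$, I would first invoke the universal property of the free group to obtain the unique group homomorphism $\phi_1\colon \pi_1(D^3\setminus C_n,*)\to G$ with $\phi_1(m_i)=g_i$. Pulling back the action $\trr$ of $G$ on $A$ along $\phi_1$ turns the abelian group $A$ into a module over the group ring $\Z(\pi_1(D^3\setminus C_n,*))$; this step uses precisely that $G$ acts on $A$ by automorphisms and that $A$ is abelian. Since $\pi_2(D^3\setminus C_n,*)$ is the free module on $v_1,\dots,v_n$, there is then a unique module homomorphism $\phi_2$ with $\phi_2(v_i)=a_i$. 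The key observation is that $\Z(\pi_1)$-linearity of $\phi_2$ is exactly the compatibility condition $\phi_2(\gamma\trr v)=\phi_1(\gamma)\trr\phi_2(v)$ required of a morphism of abelian $gr$-groups: additivity of $\phi_2$ is the group-homomorphism condition, and equivariance for the pulled-back structure is the action-preservation condition. Hence $(\phi_1,\phi_2)$ is a genuine morphism mapping to the prescribed sequence.

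For injectivity, if two morphisms yield the same sequence, their first components agree on the generators $m_i$ and hence coincide by freeness of $\pi_1$; their second components are then module homomorphisms for the same pulled-back module structure, agreeing on the module generators $v_i$, so they coincide by freeness of $\pi_2$. The one point requiring care, and the only genuine subtlety, is that $\phi_2$ is \emph{not} independent of $\phi_1$: the $\Z(\pi_1)$-module structure on $A$ under which $\phi_2$ is linear is itself determined by $\phi_1$, equivalently by the tuple $(g_1,\dots,g_n)$. I would therefore emphasise that for each fixed choice of the $g_i$ the admissible $\phi_2$ range freely over all choices of the $a_i\in A$, so that the total correspondence is with the pairs $(g_i,a_i)$ exactly as claimed, with no collapse arising from the interplay of the two freeness statements.
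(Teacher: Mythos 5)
Your proposal is correct and follows exactly the argument the paper intends: the Lemma is stated there as an immediate consequence of the freeness of $\pi_1(D^3\setminus C_n,*)$ as a group on $m_1,\dots,m_n$ and of $\pi_2(D^3\setminus C_n,*)$ as a $\Z(\pi_1(D^3\setminus C_n,*))$-module on $v_1,\dots,v_n$, which is precisely the pair of universal properties you invoke. Your write-up simply makes explicit (and correctly flags) the one subtlety the paper leaves implicit, namely that the module structure on $A$ under which $\phi_2$ is linear is the one pulled back along $\phi_1$.
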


The loop braid group $\LBG_n$ in \S\ref{motion}
has a natural left-action $\t$ in $\pi_{1,2}(D^3\setminus C_n,*)$ by abelian $gr$-group maps. {In another language, we have a group map from  $\LBG_n$ to the automorphism group of the abelian $gr$-group $\pi_{1,2}(D^3\setminus C_n,*)$. This extends the embedding of the loop braid group into the automorphism group of a free group -- the bit consisting of basis conjugating automorphisms -- as explained in \cite[\S 4]{damiani} and \cite{Sav}.}

Let us give details. {A homeomorphism} $f\colon (D^3, C_n) \to (D^3, C_n)$, considered up to pair-isotopy, functorially gives an isomorphism $f_*\colon \pi_{1,2}(D^3\setminus C_n,*) \to \pi_{1,2}(D^3\setminus C_n,*).$ 
And then given $m\in \pi_{1}(D^3\setminus C_n,*)$ and $v\in \pi_{2}(D^3\setminus C_n,*)$, we put $f\t m=f_*(m)$ and $f\t v=f_*(v)$.
%we just compose an element of $\pi_1(D^3\setminus C_n,*)$ and of $\pi_2(D^3\setminus C_n,*)$ with $f$.
{Hence} $\LBG_n$ right-acts in
$\hom_{gr}\big( \pi_{1,2}(D^3\setminus C_n,*),(G,A)\big)$, namely given $f \in \LBG_n$, we send $\phi\in \hom_{gr}\big( \pi_{1,2}(D^3\setminus C_n,*),(G,A)\big) $ to $\phi\circ f_* \doteq\phi \l f.$ 

Looking at {Fig. \ref{ISotopy} in Thm. \ref{main0},} we can see that  $S_i^+{{[n]}},V_i{{[n]}}\in \LBG_n$  act in $\pi_{1,2}(D^3\setminus C_n,*)$ as:
\begin{align*}
& V_i{{[n]}} \t m_j=m_j, \textrm{ if } j\notin {i,i+1}, 
&V_i{{[n]}}\t m_{(i+1)}=m_i,
&&  V_i{{[n]}}\t m_{i}=m_{(i+1)}, \\
& V_i{{[n]}} \t v_j=v_j, \textrm{ if } j\notin {i,i+1}, 
&  V_i{{[n]}} \t v_{(i+1)}=v_i,
&&  V_i{{[n]}} \t v_{i}=v_{(i+1)},
\\
& S^+_i{{[n]}} \t m_j=m_j, \textrm{ if } j\notin {i,i+1},
&  S^+_i{{[n]}} \t m_{(i+1)} =m_i,
&& S^+_i{{[n]}}\t m_{i} =m_i^{-1}m_{i+1}{m_i},\\
& S^+_i{{[n]}} \t v_j =v_j, \textrm{ if } j\notin {i,i+1},
& &&  S^+_i{{[n]}}\t v_{i}=m_i^{-1}\trr v_{(i+1)}.
\end{align*}

{The hardest action to address is $S^+_i{{[n]}} \t v_{(i+1)}$.
Cf. the left bit of Fig. \ref{ISotopy}, when the circle $S^i_i$ goes inside
the  circle $S^1_{i+1}$,  it `drags' the balloon $v_{(i+1)}$ with it. 
Since the isotopy connecting the top and bottom of the left part of Fig \ref{ISotopy}  can be made local, the sum $v_i +v_{i+1}$ in $\pi_2(D^3\setminus C_n,*)$, which can be visualised as
a large balloon encircling the circles $S_i^1$ and $S_{i+1}^1$, remains stable
during  the isotopy, since the isotopy can be made local enough so that it happens well inside the balloon $v_i +v_{i+1}$. In a more precise language, this means:}
$$ S^+_{i}{{[n]}}\t (v_i+v_{(i+1)})=v_i+v_{(i+1)}.$$
Since
$S^+_{i}{{[n]}}\t (v_i+v_{(i+1)})=S^+_{i}{{[n]}}\t v_i  +S^+_{i}{{[n]}}\t v_{(i+1)}$ and $ S^+_i{{[n]}}\t v_{i}=m_i^{-1}\trr v_{(i+1)}$, it follows that:
$$
 S^+_i{{[n]}} \t v_{(i+1)}=v_i+v_{(i+1)}-m_i^{-1}\trr v_{(i+1)}.
$$

\noindent{Comparing with {Equation} \eqref{kau-mar}, this implies, by noting that the $V_a{{[n]}}$ and the $S^+_a{{[n]}}$ generate $\LBG_n$, that:}
\begin{Theorem}[Topological interpretation of abelian $gr$-group W-\biker s]
Let $(G,A)$ be an abelian $gr$-group.
Consider the right-action  $\trl$ of the loop braid group $\LBG_n\cong \WBG_n$
on $(G\times A)^n$ derived (Lem. \ref{birackact}) from  the underlying
birack \eqref{kau-mar2} of the welded \biker\  $X_{gr}^+$ of \eqref{kau-mar}.
Looking at the elements in $(G\times A)^n$ as being \agg\ maps
$\phi\colon \pi_{1,2}(D^3\setminus C_n,*)
\rightarrow (G,A)$ -- by Lem. \ref{thekey} --
{then given a $B\in \LBG_n$}  it holds that
{$\phi\trl B=\phi\circ B_*$,} where $B_*\colon \pi_{1,2}(D^3\setminus C_n,*)\to {\pi_{1,2}}(D^3\setminus C_n,*)$ is the induced map on homotopy groups.
\end{Theorem}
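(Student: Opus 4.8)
The plan is to observe that \emph{both} sides of the asserted identity are \emph{right} actions of $\LBG_n\cong\WBG_n$ on the set $\hom_{gr}\big(\pi_{1,2}(D^3\setminus C_n,*),(G,A)\big)\cong (G\times A)^n$, and then to reduce to the generators. On one side, $\phi\trl B$ is the right action of Lem. \ref{birackact} attached to the underlying birack \eqref{kau-mar2} of $X^+_{gr}$, transported to $gr$-group maps via the bijection of Lem. \ref{thekey}. On the other side, $B\mapsto(\phi\mapsto\phi\circ B_*)$ is a right action: since $\t$ is a left action, $(BB')_*=B_*\circ B'_*$, whence $(\phi\circ B_*)\circ B'_*=\phi\circ(BB')_*$, and the identity of $\LBG_n$ acts trivially. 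Because the set of $B$ on which two right actions agree is a subgroup (using that each $B_*$ is invertible for the inverse-closure step), and because $\WBG_n$ is generated by the $S^+_a{{[n]}}$ and the $V_a{{[n]}}$ by \peq{twoperpectives}, it suffices to check $\phi\trl g=\phi\circ g_*$ for $g\in\{S^+_a{{[n]}},V_a{{[n]}}\}$.

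For the virtual generator $V_a{{[n]}}$ the check is immediate. By the geometric formulas recorded just above the theorem, $(V_a{{[n]}})_*$ interchanges $m_a\leftrightarrow m_{a+1}$ and $v_a\leftrightarrow v_{a+1}$ and fixes the rest; so under the identification $g_i=\phi_1(m_i)$, $a_i=\phi_2(v_i)$ the map $\phi\mapsto\phi\circ(V_a{{[n]}})_*$ transposes the $a$-th and $(a+1)$-th entries of the sequence, which is exactly the birack action of $V_a{{[n]}}$ in Lem. \ref{birackact}.

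The substance is the classical generator $S^+_a{{[n]}}$. Here I would expand $\phi\circ(S^+_a{{[n]}})_*$ using the four geometric formulas for $S^+_a{{[n]}}\t m_i$ and $S^+_a{{[n]}}\t v_i$, together with the defining properties of a morphism of abelian $gr$-groups (Def. \ref{abgrg}): $\phi_1$ is a homomorphism, $\phi_2$ is additive, and $\phi_2(m\trr v)=\phi_1(m)\trr\phi_2(v)$. Writing $g_a=\phi_1(m_a)$, $a_a=\phi_2(v_a)$, and so on, this yields the new $a$-th and $(a+1)$-th entries in closed form, which one matches termwise against the new entries $x_{a+1}/x_a$ and $x_a\backslash x_{a+1}$ produced by Lem. \ref{birackact} through \eqref{kau-mar2}. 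Only the $a,a+1$ slots change, so the comparison reduces to a finite check in $G\times A$.

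The hard part is purely the bookkeeping of conventions, and this is where the real care lies. The geometric action in Fig. \ref{ISotopy} is, as flagged after Thm. \ref{main0}, of \emph{opposite} handedness to the algebraic crossing convention; moreover the Wirtinger generators $m_i$ are oriented, and composition is taken in reverse order \peq{coc}, \peq{sym}. Combined with the left/right distinction between $\t$ and $\trl$, these choices govern which conjugation direction ($g_a^{-1}(\cdot)g_a$ versus $g_a(\cdot)g_a^{-1}$) and which twisted sign ($w^{-1}\trr$ versus $w\trr$) appear, and they must be pinned down so that the two $a$-slot entries actually coincide rather than differing by an inverse or a reversal. The entire reason for presenting $X^+_{gr}$ in the wreath-groupoid form \eqref{kau-mar} with its graphical calculus (\S\ref{ss:msgc}) is to make this reconciliation transparent; once the conventions are fixed, the generator computation is routine and the subgroup reduction above completes the proof.
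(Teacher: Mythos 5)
Your proposal is correct and follows essentially the same route as the paper: the paper likewise reduces to the generators $S^+_a[n]$ and $V_a[n]$ (noting that they generate $\LBG_n$), takes as input the geometric formulas for the induced action on the Wirtinger generators $m_i$ and the balloons $v_i$ — the only nontrivial one being $S^+_a[n]\t v_{a+1}=v_a+v_{a+1}-m_a^{-1}\trr v_{a+1}$, obtained from the locality of the isotopy inside the big balloon $v_a+v_{a+1}$ — and then matches these against the birack \eqref{kau-mar2} through the bijection of Lem. \ref{thekey}. Your explicit subgroup argument for why checking generators suffices, and your flagging of the handedness/composition conventions as the delicate point of the termwise match, are if anything slightly more careful than the paper's one-line ``comparing with Equation \eqref{kau-mar}, this implies\dots'', but the substance and structure of the argument are the same.
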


%}}}
%}}}
%}}}
\section{Virtual and welded braid group representations from \biker s}\label{representations}Throughout this section we will fix a groupoid $\Gamma$ and a \biker\ $(\Gamma,X^+)$; see} Def. \ref{de:biker}. 
% 
% We will follow two different approaches for defining representations of the virtual  braid group derived from $(\Gamma,X^+)$. The first approach \S \ref{fromR} is  similar to that of the construction of representations of the braid group derived from the R-matrix of a quantum group \cite{Kassel,RT}. The second \S \ref{fromM} approach is  more conceptual, and uses an underpinning monoid morphism from  $\VBG_n$  to the group of bisections of $\gwr{n}=\gwrr{n}$.

\subsection{Representations of $\WBG_n$ derived from \biker s}\label{fromR}
{Recall the groupoid algebra $\C(\G)$ in Def. \ref{groupoid_algebra}.} We define the following element $\Rc\in \C(\Gamma) \tn \C(\Gamma)$:
\begin{equation}\label{def2ofR}
\Rc =\sum_{x,y \in \Gamma_0}(x \ra{L(x,y)} x\backslash y) \tn (y \ra{R(x,y)} y/x).
\end{equation}
\begin{Lemma}\label{propsofR}
The element $\Rc$ is invertible and its inverse is {(recall the notation introduced in \peq{invnot}):}
\begin{equation}\label{def2ofRm}
\Rc^{-1} =\sum_{a,b \in \Gamma_0}(a\backslash b \ra{\overline{L(a,b)}} a) \tn ( b/a  \ra{\overline{R(a,b)}} b).
\end{equation}\label{propR}
Moreover $\Rc$ satisfies the relation below:
\begin{equation}\label{prop2R}
\Rc_{12}\Rc_{13}\Rc_{23}=\Rc_{23}\Rc_{13}\Rc_{12}, \textrm{ in }  \C(\Gamma)\tn\C(\Gamma)\tn \C(\Gamma).
\end{equation}
Here $\Rc_{12}=\Rc \tn \id_{\C(\Gamma)}$, $\Rc_{23}=  \id_{\C(\Gamma)}\tn \Rc$ and $\Rc_{13}=\sum_{x,y \in \Gamma_0}(x \ra{L(x,y)} x\backslash y) \tn \id_{\C(\Gamma)}\tn (y \ra{R(x,y)} y/x). $
Furthermore, if  the \biker\ $(\Gamma,X^+)$ is welded, then in $\C(\Gamma)\tn\C(\Gamma)\tn \C(\Gamma)$:
\begin{equation}\label{prop2Rw}
\Rc_{13}\Rc_{23}=\Rc_{23}\Rc_{13}. 
\end{equation}
\end{Lemma}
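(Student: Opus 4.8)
The plan is to verify the three assertions separately, in each case reducing the algebraic identity in $\C(\Gamma)^{\tn\bullet}$ to an already-established birack or bikoid axiom via the slot-wise groupoid-algebra product \eqref{pga}.

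For invertibility I would simply multiply $\Rc$ by the candidate \eqref{def2ofRm}. The product $\Rc\,\Rc^{-1}$ is a double sum over pairs $(x,y)$ and $(a,b)$, and the delta factors in \eqref{pga} force $x\backslash y = a\backslash b$ in the first tensor slot and $y/x = b/a$ in the second. Since the switch $S\colon (x,y)\mapsto (y/x,x\backslash y)$ is a bijection (item \ref{cond2} of Def. \ref{de:birack}), both constraints hold simultaneously exactly when $(x,y)=(a,b)$. On this diagonal, \peq{uselater} gives $L(x,y)\star\overline{L(x,y)}=\id_x$ and $R(x,y)\star\overline{R(x,y)}=\id_y$, so each surviving term collapses to $\id_x\tn\id_y$, and summing over $(x,y)$ produces $1_{\C(\G)}\tn 1_{\C(\G)}$. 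The reverse product $\Rc^{-1}\Rc$ is handled identically using $\overline{L(x,y)}\star L(x,y)=\id_{x\backslash y}$ and the analogue for $R$.

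For the braid relation \eqref{prop2R} I would expand both $\Rc_{12}\Rc_{13}\Rc_{23}$ and $\Rc_{23}\Rc_{13}\Rc_{12}$ by propagating a fixed triple of source objects $(x,y,z)\in\Gamma_0^3$ through the three successive multiplications; each delta factor determines the intermediate objects uniquely, so both sides become a single sum over $(x,y,z)$ and it suffices to match the three slot-wise composite arrows term by term. The left side yields, in the three slots, the composites labelled $L(x,y)\star L(x\backslash y,z)$, then $R(x,y)\star L(y/x,z/(x\backslash y))$, then $R(x\backslash y,z)\star R(y/x,z/(x\backslash y))$; the right side yields $L(x,z/y)\star L(x\backslash(z/y),y\backslash z)$, then $L(y,z)\star R(x\backslash(z/y),y\backslash z)$, then $R(y,z)\star R(x,z/y)$. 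Equality of the three target objects is precisely the switch Yang--Baxter identity \eqref{eq:stybe} (equivalently item 3 of Def. \ref{de:birack}), while equality of the three arrow labels is exactly the three bikoid equations \eqref{compbikoids}. This is nothing but reading the bikoid move \eqref{eq:r3} inside $\C(\G)^{\tn 3}$ rather than inside $\gwrr{3}$, so the diagrammatic calculus of \S\ref{ss:msgc} makes it transparent. The welded identity \eqref{prop2Rw} runs the same way on two factors: propagating $(x,y,z)$ gives for $\Rc_{13}\Rc_{23}$ the slot arrows $L(x,z)$, $L(y,z/x)$, $R(x,z)\star R(y,z/x)$, and for $\Rc_{23}\Rc_{13}$ the slot arrows $L(x,z/y)$, $L(y,z)$, $R(y,z)\star R(x,z/y)$; the targets agree by the welded birack axioms \eqref{eq:wba} and the labels agree by the welded bikoid relations \eqref{weldedtoR}.

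The only genuinely delicate part is the bookkeeping: one must confirm that after each product the delta functions leave exactly one surviving term per initial triple, so that no spurious cross terms appear, and one must keep the nonadjacent slot assignments of $\Rc_{13}$ straight as the intermediate objects are rewritten. Once the source objects are propagated correctly there is no further analytic content, since each equality reduces verbatim to \eqref{eq:stybe}/\eqref{compbikoids} or to \eqref{eq:wba}/\eqref{weldedtoR}; I expect the graphical notation of \S\ref{ss:msgc} to make this propagation essentially automatic.
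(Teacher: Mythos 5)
Your proposal is correct and follows essentially the same route as the paper's proof: invertibility via the delta constraints of \eqref{pga} plus bijectivity of the switch and the cancellations of \peq{uselater}, and the relation \eqref{prop2R} by expanding both triple products, collapsing the deltas to a single sum over $(x,y,z)\in\Gamma_0^3$, and matching the slot-wise composites against \eqref{eq:stybe} and \eqref{compbikoids}. The only difference is that where the paper dismisses the welded identity \eqref{prop2Rw} with ``follows analogously,'' you write out the two expansions explicitly and reduce them to \eqref{eq:wba} and \eqref{weldedtoR}, which is exactly the intended argument.
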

% \noindent Note $\Rc_{12}\Rc_{13}\Rc_{23}=\Rc_{23}\Rc_{13}\Rc_{12}$ is satisfied by R-matrices in quasi-triangular bialgebras;  \cite[Thm.VIII.2.4]{Kassel}.
{Note that Equation \eqref{prop2R} is satisfied by R-matrices in quasi-triangular bialgebras;  see e.g. \cite[Thm.VIII.2.4]{Kassel}.}
\begin{proof}
 We have:
 \begin{align*}
  \Rc \Rc^{-1}&=\sum_{x,y,a,b \in \Gamma_0}\Big(
  (x \ra{L(x,y)} x\backslash y) \tn (y \ra{R(x,y)} y/x)\Big) \Big((a\backslash b \ra{\overline{L(a,b)}} a) \tn ( b/a  \ra{\overline{R(a,b)}} b)\Big)\\ 
  &=\sum_{x,y,a,b \in \Gamma_0}
  \Big( (x \ra{L(x,y)} x\backslash y) (a\backslash b \ra{\overline{L(a,b)}} a) \Big) \tn \Big(
  (y \ra{R(x,y)} y/x)(b/a  \ra{\overline{R(a,b)}} b)\Big)\\
  &\stackrel{(\#1)}{=}\sum_{\substack{x,y,a,b \in \Gamma_0\\
  x\backslash y=a \backslash b\, \wedge\, y/x =b/a 
  }}
  \Big( (x \ra{L(x,y)} x\backslash y) (a\backslash b \ra{\overline{L(a,b)}} a) \Big) \tn \Big(
  (y \ra{R(x,y)} y/x)(b/a  \ra{\overline{R(a,b)}} b)\Big)\\
    &\stackrel{(\#2)}{=}\sum_{\substack{x,y,a,b \in \Gamma_0\\ 
    %x\backslash y=a \backslash b\, \wedge\, y/x =b/a\\ 
    x=a\, \wedge\, y=b}}
  \Big( (x \ra{L(x,y)} x\backslash y) (a\backslash b \ra{\overline{L(a,b)}} a) \Big) \tn \Big(
  (y \ra{R(x,y)} y/x)(b/a  \ra{\overline{R(a,b)}} b)\Big)\\
    &\stackrel{(\#3)}
    {=}\sum_{x,y \Gamma_0}
  \big( x \ra{\id_x} x \big) \tn \big( y\ra{\id_y} y\big)=1_{\C(\G)} \tn 1_{\C(\G)}.
 \end{align*}
Here $(\#1)$ follows from the definition of the product \eqref{pga} in $\C(\G)$, and $(\#2)$ from the fact that {$(x,y) \mapsto ( y / x,x\backslash y)$} is a bijection,  by definition of biracks. Step $(\# 3)$ follows from \peq{uselater}.

That $\Rc \Rc^{-1}=1_{\C(\G)} \tn 1_{\C(\G)}$ is proven analogously.

Now note: 
\begin{align*}
 \Rc_{12}&\Rc_{13}\Rc_{23}=\smash{\sum_{x,y, a,z, u,v\in \G_0}}
 \Big( (x \ra{L(x,y)} x\backslash y) \tn (y \ra{R(x,y)} y/x) \tn 1_{\C{(\G)}}\Big)\\
  &\phantom{-----------------}\Big( (a \ra{L(a,z)} a\backslash z)  \tn  1_{\C{(\G)}}\ \tn (z \ra{R(a,z)} z/a) \Big)\\
 &\phantom{--------------------}\Big( 1_{\C{(\G)}}\tn (u \ra{L(u,v)} u\backslash v) \tn (v \ra{R(u,v)} v/u) \Big)\\
&=\smash{ \sum_{{x,y, a,z, u,v\in \G_0}}}
 \Big( 
 (x \ra{L(x,y)} x\backslash y) (a \ra{L(a,z)} a\backslash z)  \Big) \tn\Big( (y \ra{R(x,y)} y/x) (u \ra{L(u,v)} u\backslash v) \Big) \\&\phantom{-----------------------}
\tn\Big( (z \ra{R(a,z)} z/a) (v \ra{R(u,v)} v/u) \Big)
 \\&\stackrel{(\#1)}{=} \smash{\sum_{\substack{x,y,a,z, u,v\in \G_0\\ 
a=x \backslash y\,\wedge\, u= y/x\, \wedge \, v=z/a
}}}
 \Big( 
 (x \ra{L(x,y)} x\backslash y) (a \ra{L(a,z)} a\backslash z   )\Big)
 \tn\Big( (y \ra{R(x,y)} y/x) (u \ra{L(u,v)} u\backslash v) \Big)\\ &\phantom{---------------------------}\tn\Big( (z \ra{R(a,z)} z/a) (v \ra{R(u,v)} v/u) \Big)\\
 &\stackrel{(\#2)}{=}\sum_{x,y,z  \in \Gamma}\Big (x \ra{L(x,y)\star L(x \backslash y,z)} (x\backslash y)\backslash z\Big) \tn \Big(y \ra {R(x,y)\star L(y /x,z/(x\backslash y))}(y/x)\backslash (z/(x\backslash y)) \Big)\\ &\phantom{-----------------------} \tn \Big (z \ra{R(x\backslash y,z) \star R(y /x,z/(x\backslash y))}(z/(x\backslash y))/(y/x)\Big)\\
 &\stackrel{(\#3)}{=}\sum_{x,y,z  \in \Gamma}\Big (x \ra{L(x,z/y)  \star L(x \backslash(z/y),y \backslash z)}  (x\backslash(z/y))\backslash (y\backslash z)  \Big)
 \tn \
 \Big(y \ra {L(y,z)\star R(x \backslash(z/y),y \backslash z)   } (y\backslash z)/(x\backslash(z/y)) \Big)\\ &\phantom{---------------------------} \tn 
 \Big (z \ra{R(y,z) \star R(x,z /y)
} {(z/y)/x}\Big)\\
&%\stackrel{(\#4)}
{=}\Rc_{23}\Rc_{13}\Rc_{12}.
\end{align*}
Here $(\#1)$ and $(\#2)$ follow from  {\eqref{pga}}. On the other hand $(\#3)$ follows from %the definition of biracks and \biker s, namely  
\eqref{eq:stybe}, {(\ref{eq:r3})} and \eqref{compbikoids}.  %And $(\#4)$ follows by  calculations done all to way up to $(\#3)$.

That $\Rc_{13}\Rc_{23}=\Rc_{23}\Rc_{13}$ if the \biker\ is welded follows analogously, by using {\eqref{eq:wba}, \eqref{eq:webi}, \eqref{eq:wr1}  and \eqref{weldedtoR}.}
\end{proof}

We can hence can define  representations of the virtual braid group $\VBG_n$ derived from a \biker\ $(\G,X^+)$. {The following is one of our main results.}
\begin{Theorem}\label{rep12} Let $(\G,X^+)$ be a bikoid.
 Let $V$ be a right-representation of the groupoid algebra $\C(\G)$; \S\ref{ga}. Let $n \in \Z^+$. We have a representation $\trl^*$ of $\VBG_n$ on $V^{n \tn}$, which on the  generators of $\VBG_n$ has the form:
\begin{equation}\label{act2c}(v_1 \tn \dots\tn v_{a-1}  \tn v_a \tn v_{a+1}  \tn v_{a+2}\tn  \dots \tn v_n)\trl^* V_a{{[n]}}
=v_1 \tn \dots \tn  v_{a-1}
  \tn v_{a+1}\tn v_a \tn v_{a+2} \tn \dots \tn v_n,
\end{equation}
 \begin{multline}\label{act2b}
(v_1 \tn \dots\tn v_{a-1}  \tn v_a \tn v_{a+1}  \tn v_{a+2}\tn  \dots \tn v_n)\trl^* S^+_a{{[n]}}\\
=\displaystyle\sum_{x,y \in \Gamma_0}v_1 \tn \dots \tn  v_{a-1}
  \tn v_{a+1}.\big (y \ra{R(x,y)} y /x\big)\tn v_a.
    \big (x \ra{L(x,y)} x\backslash y \big) \tn v_{a+2} \tn \dots \tn v_n.
\end{multline}
 Moreover, $\trl^*$ descends to a representation of  $\WBG_n$ on $V^{n \tn}$ if $(\Gamma,X^+)$ is welded. 

  These are unitary representations if $V$ is a unitary representation of the groupoid algebra; see Def. \ref{unirep}.
 \end{Theorem}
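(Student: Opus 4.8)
The plan is to recognise $\trll$ as the braid-group representation attached to the R-matrix $\Rc$ of Lemma \ref{propsofR}, so that the hard identities \eqref{prop2R} and \eqref{prop2Rw} do all the work. Write $(-).\Rc\colon V\tn V \to V\tn V$ for the operator given by right multiplication by $\Rc$ through the module structure of $V$, let $P\colon V \tn V \to V \tn V$ be the flip $P(v\tn w)=w\tn v$, and set $\check{\Rc}=P\circ\big((-).\Rc\big)$, so that $\check{\Rc}(v\tn w)=\sum_{x,y}w.R(x,y)\tn v.L(x,y)$. Comparing with \eqref{act2c}--\eqref{act2b}, the asserted formulas say exactly that $V_a{{[n]}}$ should act on $V^{n\tn}$ by $P$ applied to the factors $a,a+1$ (the identity elsewhere), written $P_a$, and that $S^+_a{{[n]}}$ should act by $\check{\Rc}$ applied to those factors, written $\check{\Rc}_a$. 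Using the group presentation of $\VBG_n$ as in \peq{twoperpectives}, to produce a right representation it suffices to assign invertible operators to the generators $V_a{{[n]}}$ and $S^+_a{{[n]}}$ so that the defining relations \eqref{eq:SRI}--\eqref{eq:VRIII} are respected; because the action is on the right, each relation becomes an equality of the corresponding reversed compositions of operators. Invertibility of $\check{\Rc}_a$ is immediate from the invertibility of $\Rc$ (Lemma \ref{propsofR}), and letting $S^-_a{{[n]}}$ act by $\check{\Rc}_a^{-1}$ settles RII$^\pm$ in \eqref{eq:SRI}.

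Next I would dispose of the relations that do not involve $\Rc$ essentially. The move VII in \eqref{eq:SRI} is $P_a^2=\id$; the move VIII \eqref{VIII} and the locality relations \eqref{eq:Loc} reduce to the relations among the flips $P_a$ (precisely the Coxeter relations of $\Sigma_n$) together with the trivial fact that operators supported on disjoint pairs of tensor factors commute. The mixed move MIII \eqref{eq:VRIII} unwinds, after inserting the definitions and the reversed-composition rule, to the identity $\check{\Rc}_a\,P_{a+1}\,P_a=P_{a+1}\,P_a\,\check{\Rc}_{a+1}$ on $V^{n\tn}$. Rearranged, this is the naturality statement that conjugating $\check{\Rc}_a$ by the flips carrying the pair $(a,a{+}1)$ to $(a{+}1,a{+}2)$ produces $\check{\Rc}_{a+1}$; it holds because $\check{\Rc}$ is built from the single fixed element $\Rc\in\C(\G)\tn\C(\G)$ together with the flip, independently of which pair of factors it acts on, and so requires nothing special about $\Rc$.

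The only relation that genuinely uses the bikoid axioms is RIII \eqref{eq:R3}. Under the dictionary above it becomes the braid relation $\check{\Rc}_a\,\check{\Rc}_{a+1}\,\check{\Rc}_a=\check{\Rc}_{a+1}\,\check{\Rc}_a\,\check{\Rc}_{a+1}$ for the operators $\check{\Rc}_i$. By the standard computation relating $\check{\Rc}=P\circ((-).\Rc)$ to the Yang--Baxter equation (the mechanism behind \cite[Thm.~VIII.2.4]{Kassel}, cited after Lemma \ref{propsofR}), this braid relation is the image on $V^{n\tn}$, via the module structure, of the identity $\Rc_{12}\Rc_{13}\Rc_{23}=\Rc_{23}\Rc_{13}\Rc_{12}$, which is \eqref{prop2R}; this establishes the representation of $\VBG_n$. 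For the welded case one adds WIII \eqref{eq:F}, which translates, after cancelling flips by means of the Coxeter relations, into the operator identity that is exactly the image of $\Rc_{13}\Rc_{23}=\Rc_{23}\Rc_{13}$; by the welded part of Lemma \ref{propsofR} this holds whenever $(\G,X^+)$ is welded, so $\trll$ then descends to $\WBG_n$.

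Finally, for unitarity I would equip $V^{n\tn}$ with the tensor-product inner product. The flip $P$ preserves it, so every $V_a{{[n]}}$ acts unitarily. For $S^+_a{{[n]}}=\check{\Rc}_a$ it suffices to check that right multiplication by $\Rc$ is unitary on $V\tn V$. By Def.~\ref{unirep} the adjoint of right multiplication by an arrow $(x\ra{\gamma}y)$ is right multiplication by its $*$-conjugate $(y\ra{\gamma^{-1}}x)$ of \peq{star}; hence the adjoint of $(-).\Rc$ is $(-).\Rc^*$ with $\Rc^*=\sum_{x,y}(x\backslash y\ra{\overline{L(x,y)}}x)\tn(y/x\ra{\overline{R(x,y)}}y)$, and comparison with \eqref{def2ofRm} gives $\Rc^*=\Rc^{-1}$. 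Thus right multiplication by $\Rc$ is unitary, and $\check{\Rc}_a$, a composite of unitaries, is unitary as well. The proof has no single hard estimate: the Yang--Baxter and welded identities are already packaged in Lemma \ref{propsofR}, and the real obstacle is the careful bookkeeping of the reversed right-action convention and of the flips, so that each diagrammatic relation of $\VBG_n$ and $\WBG_n$ is matched with the correct algebraic identity for $\Rc$; of these, stating MIII cleanly as a naturality identity is the subtlest point.
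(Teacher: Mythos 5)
Your proof is correct and takes essentially the same route as the paper's: both verify the group-presentation relations of $\VBG_n$ (and $\WBG_n$) generator by generator, reducing RIII and WIII to the identities \eqref{prop2R} and \eqref{prop2Rw} of Lemma \ref{propsofR} and dispatching the remaining relations as flip/naturality identities. Your unitarity argument (the adjoint of $(-).\Rc$ is $(-).\Rc^*$ with $\Rc^*=\Rc^{-1}$ by comparison with \eqref{def2ofRm}) is just a streamlined packaging of the paper's direct inner-product computation, which establishes exactly that identity.
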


 \mdef\label{uncat} {If $V$ is the object-regular representation (see {Example}  \ref{obre}) of $\C(\Gamma)$, then $\trl^*$ coincides with $\trl$ in  \peq{lin}.}

\mdef\label{uncat1} {If $V$ is the right-regular representation (see {Example}  \ref{rre}) of $\C(\Gamma)$, then $\trl^*$ coincides with $\trll$ in  \peq{birackact2}.}
\begin{proof}
 {(We are using the definition in \peq{twoperpectives} for $\WBG_n$)}. First of all note that $(-)\trl^* S^+_a{{[n]}}$ and $(-)\trl^* V_a{{[n]}}$ {each are} invertible maps $V^{n\tn} \to V^{n\tn}$. 
 In order to prove $\trl^*$ is a representation, we must check that the relations in  Def. \ref{de:WBG} for the virtual braid group, not involving the $S^-_a{{[n]}}$, hold. The Reidemeister III move follows from \eqref{prop2R}, whereas Virtual and Mixed Reidemeister III moves, as well as Locality, follow trivially. If $(\Gamma,X^+)$ is welded, then  \eqref{prop2Rw} holds, from which it follows the {Welded} Reidemeister III move.

Let us now prove that the representation of $\VBG_n$ is unitary if $V$ is unitary; see Def. \ref{unirep}. It suffices proving   unitarity for $S_a^+{{[n]}}$. This follows easily from the following calculation, which encodes unitarity for $n=2$. If  $u,v,u',v' \in V$, we have:
\begin{align*}
 &\left \langle \sum_{x,y \in \Gamma_0}u'.\big (y \ra{R(x,y)} y /x\big)\tn u.\big (x \ra{L(x,y)} x\backslash y \big), \sum_{z,w \in \Gamma_0}v'.\big (w \ra{R(z,w)} w /z\big)\tn v.\big (z \ra{L(z,w)} z\backslash w \big)\right  \rangle\\
&\stackrel{(\#1)}=  \sum_{z,w \in \Gamma_0}\sum_{x,y \in \Gamma_0} \left\langle u'\tn u, v'.\big  (w \ra{R(z,w)} w /z\big) \star (y/x \ra{\overline{R(x,y)}} y) \tn v.(z \ra{L(z,w)} z\backslash w ) \star (x \backslash y \ra{\overline{L(x,y)} } x\big)\right \rangle\\
&\stackrel{(\#2)}=\langle u'\tn u, v'\tn v\rangle.
\end{align*}
In $(\# 1)$ we have used Def. \ref{unirep} and the notation of \peq{refnow1} and \peq{uselater}. Step $(\#2)$ follows from  (in $\C(\G) \tn \C(\G)$):
\begin{align*}\sum_{z,w \in \Gamma_0}\sum_{x,y \in \Gamma_0}& \big  (w \ra{R(z,w)} w /z\big) \star (y/x \ra{\overline{R(x,y)}} y) \tn (z \ra{L(z,w)} z\backslash w ) \star (x \backslash y \ra{\overline{L(x,y)} } x\big)\\&\stackrel{(\# 1)}{=}
\sum_{z,w \in \Gamma_0}\sum_{\substack{x,y \in \Gamma_0\\ w /z=y/x\\ z\backslash w =x \backslash y  }} \big  (w \ra{R(z,w)} w /z\big) \star (y/x \ra{\overline{R(x,y)}} y) \tn (z \ra{L(z,w)} z\backslash w ) \star (x \backslash y \ra{\overline{L(x,y)} } x\big) \\
&\stackrel{(\# 2)}{=}
\sum_{z,w \in \Gamma_0}\big  (w \ra{R(z,w)} w /z\big) \star (w/z \ra{\overline{R(z,w)}} {w}) \tn (z \ra{L(z,w)} z\backslash w ) \star (z \backslash w \ra{\overline{L(z,w)} } z\big) \\
&\stackrel{(\# 3)}=\sum_{z,w \Gamma_0}( w \ra{\id_w} w )\tn( z \ra{\id_z} z)=1_{\C(\Gamma)}\tn 1_{\C(\Gamma)}.
\end{align*}
Note that step $(\# 1)$ follows from \eqref{pga}. Step $(\#2)$ follows from the fact that the map $(z,w) \mapsto (w / z, z \backslash w)$ is bijective, by Def. \ref{de:birack}. Step $(\#3)$ follows from the description in \peq{refnow1} and  \peq{uselater}.
 This finishes the proof.
\end{proof}

The representation in Thm. \ref{rep12} generalises. Pick different representations $V_1,\dots, V_n$ of $\C(\Gamma)$, and consider $(-)\trl^*S^+_a{{[n]}}$ and $(-)\trl^*V_a{{[n]}}$ in (\ref{act2c}) and (\ref{act2b}) to act on $v_1\tn \dots \tn v_n \in V_1 \tn \dots \tn V_n$. In general, this does not  define a representation of $\VBG_n$,  since $(-)\trl^*S^+_a{{[n]}}$ and $(-)\trl^*V_a{{[n]}}$ each send $V_{1}\tn \dots\tn  V_a\tn V_{a+1} \tn \dots \tn V_n$ to  $V_{1}\tn \dots\tn  V_{a+1}\tn V_{a} \tn \dots \tn V_n$,
but it still satisfies the $\VBG_n$ relations. {We give details in the next subsection.}

\subsection{Representations of the categories $\G$-$\WBG_n$ and ${\cal U}\G$-$\WBG_n$ from bikoids}\label{fromM}

The results in this section are stated for W-\biker s and representations of $\WBG_n$; see {Def. \ref{de:WBG} and Def. \ref{de:wbiker}.} All results apply, {\it mutatis mutandis}, to \biker s and the virtual braid group $\VBG_n$. {We work over $\C$.  }

{Throughout the subsection, we fix $n\in \Z^+$. Given a pair of vector spaces $U$ and $V$, we put $L(U,V)$ to denote the vector space of linear maps $U \to V$.} 
Given vector spaces $U_1,\dots,U_n,V_1,\dots,V_n$, linear maps $g_1\colon U_1 \to V_1,\dots,g_n\colon U_n \to V_n$, and $f \in\Sigma_n$, we put:
\begin{equation}\label{fact}
f\trr \big(U_1 \tn \dots \tn U_n \ra{g_1 \tn \dots \tn g_n}  V_1 \tn \dots \tn V_n\big)=\big(U_{f(1)} \tn \dots \tn U_{f(n)} \ra{g_{f(1)} \tn \dots \tn g_{f(n)}}  V_{f(1)} \tn \dots \tn V_{f(n)}\big).
\end{equation}
{This yields a linear map $f \trr (-)$,  from $L(\otimes_{i=1}^n U_i,\otimes_{i=1}^n V_i)$  to  $L(\otimes_{i=1}^n U_{f(i)},\otimes_{i=1}^n V_{f(i)})$.}

% {This yields a linear map $f \trr (-)$, from the vector space $L(\otimes_{i=1}^n U_i,\otimes_{i=1}^n V_i)$, of linear maps  $\otimes_{i=1}^n U_i \to \otimes_{i=1}^n V_i$, to the vector space  $L(\otimes_{i=1}^n U_{f(i)},\otimes_{i=1}^n V_{f(i)})$, of linear maps $\otimes_{i=1}^n U_{f(i)} \to \otimes_{i=1}^n V_{f(i)}$.}

\begin{Definition}[{Categories} ${\cal V}^n$ and ${\cal UV}^n$] %Fix an $n \in \Z^+$.
We define the following category ${\cal V}^n$. Objects are given by sequences $\U=(U_1,\dots, U_n)$ of vector spaces. Given objects $\U$ and $\V=(V_1,\dots, V_n)$, the morphisms $\U \to \V$ are given by linear maps  $F\colon \otimes_{i=1}^n U_i \to \otimes_{i=1}^n V_i$. Given objects $\U,\V,\W$, {as well as} morphisms $F\colon \U \to \V$ and $G \colon \V \to \W$, their composition in ${\cal V}^n$ is  given by the following  composition of linear maps (see \peq{coc}):
$$\big(\U \ra{F} \V\ra{G} { \W}\big)=\big( \U \ra{G \circ F }\W\big).$$
{The category ${\cal UV}^n$ is defined analogously. Objects of ${\cal UV}^n$ are  given by sequences $\U=(U_1,\dots, U_n)$ of vector spaces, provided with inner products, and the linear maps $F\colon \otimes_{i=1}^n U_i \to \otimes_{i=1}^n V_i$ are required to be unitary.}
\end{Definition}

\noindent {The symmetric group $\Sigma_n$ left-acts in the categories ${\cal V}^n$ and  ${\cal UV}^n$ by functors, by means of Equation \eqref{fact}.}

\begin{Definition}[{Categories} $\G$-$\WBG_n$ and ${\cal U}\G$-$\WBG_n$] Let $\Gamma$ be a groupoid.
We have categories $\G$-$\WBG_n$ and ${\cal U}\G$-$\WBG_n$.   Objects of $\G$-$\WBG_n$ are given by sequences {$\uR=(R_1,\dots, R_n)$} of representations of $\C(\Gamma)$; cf. \S\ref{ga}. Given objects $\underline{R}$ and $\underline{S}$, morphisms $ \underline{R} \ra{B} \underline{S}$  are given by welded braids $B \in \WBG_n$ such that $U_B^{-1} \trr \underline{R}=\underline{S}$, where $U_B$ is the underlying permutation of $B$; see \peq{proj}. The composition in $\G$-$\WBG_n$  has the form below:
$${\big(\uR\ra{B} U_B^{-1}\trr \uR \ra{B'} U_{B'}^{-1}\trr  U_B^{-1} \trr\uR\big)= \big(\uR\ra{BB'} U_{B'}^{-1}\trr  U_B^{-1}\trr \uR\big)=\big(\uR\ra{BB'} U_{BB'}^{-1}\trr \uR\big).}$$

 {The category ${\cal U}\G$-$\WBG_n$ is similarly defined, the only difference being that  the $R_1,\dots,R_n$ are required to be unitary representations of $\C(\Gamma)$.}
\end{Definition}
\noindent We can see morphisms of $\G$-$\WBG_n$ as welded braids, whose strands are coloured with representation of $\C(\Gamma)$.
% 
% \medskip
% {Given right-representations $V_1,\dots, V_n$ of $\C(\Gamma)$, we have a representation of $\C(\Gamma^n)$ in $V_1 \tn \dots \tn V_n$ with:}
% $$(v_1\tn \dots \tn v_n).(a_1,a_2,\dots,a_n)=(v_1.a_1) \tn \dots \tn (v_n.a_n). $$
% Cf. \eqref{fact}, note that if $f \in \Sigma_n$: 
% \begin{equation}\label{senpw}
%  f\trr\big((v_1\tn \dots \tn v_n).(a_1,a_2,\dots,a_n)\big)=(f\trr (v_1\tn \dots \tn v_n)). \big(f\trr (a_1,a_2,\dots,a_n)\big).
% \end{equation}
% {(Here $v_i \in V_i$ and $a_i \in \Mor(\G)$, for each $i\in \{1,\dots,n\}$.)}
\begin{Theorem}\label{uf}
Let $(\Gamma,X^+)$ be a W-\biker. We have a functor ${\cal F}\colon\Gamma$-$\WBG_n\to {\cal V}^n.$ 
{The functor $\F$ sends an $n$-tuple of representations $\uR=(R_1,\dots,R_n)$ of $\C(\Gamma)$ to  $U(\uR)=(U(R_1),\dots,U(R_n))$, where $U(R_i)$ is the underlying vector space of $R_i$. }
{On morphisms, $\F$ is uniquely specified by its value on the morphisms of the form $\uR\ra{B}U_B^{-1} \trr 
\uR$, where $B$ is a generator of {the monoid} $\WBG_n$, as in Equation \eqref{mongens}. {In these particular cases, $\F$ has the form in shown in Equations \eqref{a1}, \eqref{a2} and \eqref{a3}}, below:}
%(cf. \eqref{act2c} and  \eqref{act2b})}}
\begin{multline}\label{a1}
\F\left( (U_1,\dots , U_a , U_{a+1}, \dots, U_n) \ra{S^+_a{{[n]}}}  
(U_1, \dots , U_{a+1} , U_{a} , \dots, U_n)
\right)(v_1 \tn \dots\tn v_a \tn v_{a+1}  \tn \dots \tn v_n)\\=
\displaystyle\sum_{x,y \in \Gamma_0}v_1 \tn \dots \tn  v_{a-1}
  \tn v_{a+1}.\big (y \ra{R(x,y)} y /x\big)\tn v_a.
    \big (x \ra{L(x,y)} x\backslash y \big) \tn v_{a+2} \tn \dots \tn v_n,
\end{multline}
\begin{multline}\label{a2}
\F\left((U_1 , \dots , U_a , U_{a+1} , \dots, U_n) \ra{S^-_a{{[n]}}}  
(U_1,\dots, U_{a+1} , U_{a} , \dots,U_n)
\right)(v_1 \tn \dots\tn v_a \tn v_{a+1}  \tn \dots \tn v_n)\\=
\displaystyle\sum_{x,y \in \Gamma_0}v_1 \tn \dots \tn  v_{a-1}
  \tn v_{a+1}.\big (x\backslash y \ra{\overline{L(x,y)}} x \big)
\tn v_{a}.\big (y/x  \ra{\overline{R(x,y)}}y \big)
 \tn v_{a+2} \tn \dots \tn v_n,
\end{multline}
\begin{multline}\label{a3}
\F\left((U_1 , \dots , U_a , U_{a+1} , \dots, U_n) \ra{V_a{{[n]}}}  
(U_1 , \dots , U_{a+1} , U_{a} , \dots, U_n)
\right)(v_1 \tn \dots\tn v_a \tn v_{a+1}  \tn \dots \tn v_n)\\=
v_1 \tn \dots \tn  v_{a-1}
  \tn v_{a+1}
\tn v_{a}
 \tn v_{a+2} \tn \dots \tn v_n.
\end{multline}

\noindent Equations \eqref{a1}--\eqref{a3} also gives a functor ${\cal UF}\colon{\cal U}\Gamma$-$\WBG_n\to {\cal UV}^n,$ sending  $\uR=(R_1,\dots,R_n)$ to $U(\uR)=(U(R_1),\dots, U(R_n))$, where $U(R_i)$ is the underlying inner-product space of the unitary representation $R_i$. 
\end{Theorem}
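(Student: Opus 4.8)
The plan is to exploit the monoid presentation of $\WBG_n$ (Def. \ref{de:WBG}) together with the $\Rc$-matrix identities already established in Lem. \ref{propsofR}, so that the bulk of the work reduces to the computations carried out in the proof of Thm. \ref{rep12}. First I would settle the bookkeeping: each generator $S^\pm_a{{[n]}}, V_a{{[n]}}$ has underlying permutation $U_B=t_{a,a+1}^n$ (see \peq{proj}, \peq{deftab}), so $U_B^{-1}\trr\uR$ is obtained from $\uR$ by transposing the $a$-th and $(a+1)$-th entries; one checks directly that the right-hand sides of \eqref{a1}--\eqref{a3} land in $\otimes_i U(R_{U_B^{-1}(i)})$, i.e. that ${\cal F}$ of each generator is a linear map with the correct source and target in ${\cal V}^n$. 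Since morphisms of $\Gamma$-$\WBG_n$ are welded braids and $\WBG_n$ is generated as a monoid by the $S^\pm_a{{[n]}}$ and $V_a{{[n]}}$ of \eqref{mongens}, the assignment on generators extends to a candidate functor by declaring it to send a composite of generators to the corresponding composite of linear maps; preservation of composition and of the identity $I{{[n]}}$ is then built in, and the only thing left to prove is \emph{well-definedness}, namely that the relations \eqref{eq:SRI}--\eqref{eq:F} are respected.

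The key observation is that the formulas \eqref{a1}--\eqref{a3} are \emph{universal} in the modules: up to the coordinate swap of the $a$-th and $(a+1)$-th factors, ${\cal F}(S^+_a{{[n]}})$ is the right action on those factors of the element $\Rc\in\C(\G)\tn\C(\G)$ of \eqref{def2ofR}, while ${\cal F}(S^-_a{{[n]}})$ comes likewise from $\Rc^{-1}$ of \eqref{def2ofRm} and ${\cal F}(V_a{{[n]}})$ is the bare swap. Because the $\C(\G)$-action on distinct strands is independent, each relation \eqref{eq:SRI}--\eqref{eq:F} becomes an identity of linear maps supported on a bounded block of strands, and these are exactly the identities verified in the proof of Thm. \ref{rep12}: RII$^\pm$ follows from $\Rc\Rc^{-1}=\Rc^{-1}\Rc=1_{\C(\G)}\tn 1_{\C(\G)}$ (so \eqref{a2} is the two-sided inverse of \eqref{a1}), VII from the coordinate swap being an involution, Locality from the commutativity of maps acting on disjoint strand-pairs, RIII \eqref{eq:R3} from the braid identity \eqref{prop2R}, VIII \eqref{VIII} and MIII \eqref{eq:VRIII} formally from the swap relations, and WIII \eqref{eq:F} from \eqref{prop2Rw}, which is available precisely because $(\Gamma,X^+)$ is welded.

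The step needing genuine care --- and the only place where the passage from Thm. \ref{rep12} to the present statement is not purely cosmetic --- is the matching of domains and codomains. In Thm. \ref{rep12} all tensor factors were equal, so the underlying permutation was invisible; here the $R_i$ differ, and one must track through a composite braid word which representation sits on each strand, checking that both sides of a relation are genuinely \emph{parallel} morphisms (same source, same target $U_B^{-1}\trr\uR$) before invoking the corresponding $\Rc$-matrix identity. Once this is done, the equalities \eqref{prop2R} and \eqref{prop2Rw}, being identities in $\C(\G)^{\tn 3}$ independent of the chosen modules, transport verbatim to equality of the two composite linear maps. This establishes ${\cal F}$ as a well-defined functor $\Gamma$-$\WBG_n\to{\cal V}^n$.

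Finally, for the unitary statement I would check that ${\cal F}$ of each generator is unitary whenever the $R_i$ are unitary representations (Def. \ref{unirep}), so that ${\cal F}$ restricts to ${\cal UF}\colon{\cal U}\Gamma$-$\WBG_n\to{\cal UV}^n$. Here ${\cal F}(V_a{{[n]}})$ is a coordinate permutation and hence unitary, while for ${\cal F}(S^+_a{{[n]}})$ the verification is exactly the $n=2$ unitarity computation in the proof of Thm. \ref{rep12}, now performed on the two (possibly different) inner-product spaces $U(R_a)$ and $U(R_{a+1})$; that computation used only Def. \ref{unirep}, the notations of \peq{refnow1} and \peq{uselater}, and the bijectivity of $(z,w)\mapsto (w/z,\,z\backslash w)$ from Def. \ref{de:birack}, and at no point used that the two spaces coincide, so it applies verbatim.
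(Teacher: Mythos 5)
Your proposal is correct and takes essentially the same route as the paper: the paper's own proof just invokes the calculations in the proof of Thm.~\ref{rep12} (equivalently, the presentation of the category by generators and relations in \peq{catpres} combined with the $\Rc$-matrix identities of Lem.~\ref{propsofR}), which is precisely what you have spelled out, including reading \eqref{a1}--\eqref{a3} as the swap composed with the action of $\Rc$, $\Rc^{-1}$, or nothing. Your added care about matching sources and targets when the $R_i$ differ, and the observation that the $n=2$ unitarity computation never uses equality of the two inner-product spaces, are exactly the points the paper leaves implicit.
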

\begin{proof}Follows from the calculations in the proof of Thm. \ref{rep12}. We can  also combine \peq{catpres} and Lem. \ref{propsofR}.
\end{proof}

\section{\Biker s from crossed modules} \label{ss:4}
%{{{ bikoids from
%{{{ bikoids from crossed modules
We already have, from \S\ref{fromR} and \S\ref{fromM}, machines for constructing
representations of the welded braid group $\WBG_n$, hence \S\ref{motion} of the loop braid group $\LBG_n$, from
W-\biker s. The task now is to  manufacture 
{W-\biker s} 
from  crossed modules. These will be our main examples of W-\biker s. {They  are related to Aharonov-Bohm {like} effects \cite{LL,EN,Bais1} inherent to moving loop-particles  in topological higher gauge theory in $D^3$, {see \S\ref{phys}.}}

\smallskip
Let the group $G$ have a left-action $(g,e) \in G \times E \mapsto g \trr e \in E$ on the group $E$  by automorphisms. In order to not surcharge our notation with {brackets}, let us convention that actions   have priority over group multiplication, meaning for instance that we put $g\trr a\,\, g\trr b \,\,c$ for $(g\trr a)\,\,(g\trr b) \,\,c$, where $a,b,c \in E$ and $g,h \in G$.

\smallskip

{The following simple observation will be useful later on in \S\ref{bs2}.}

\mdef\label{transport} Let $\G=\Gam$ and $\G'=\Gamp$ be groupoids. Let us be given a groupoid inclusion $g\colon \G\to \G'=(g_0\colon \G_0 \to \G_0'\,,\, g_1\colon \G_1 \to \G_1')$. Hence $g_0$ and $g_1$ are injective and preserve all structure maps in $\G$ and $\G'$. Suppose in addition that $g_0$ is surjective. If we have a \biker\ (or W-\biker) $X^+_\G=X^+$ on $\Gamma$ then $X^+_\G$ can be {\em transported} to a \biker\ (or W-\biker)  $X^+_{\G'}$ in $\G'$, in the obvious way. {Explicitly we put:} $$\xymatrix@R=1pt{\\\\ X^+_{\G'}(x',y')=}
\xymatrix{& x'  \ar[dr]|\hole
  \ar[dr]|<<<<<<<<<<<{g_1\big(L\big(g_0^{-1}(x'),g_0^{-1}(y')\big)\big)\,\,\,\bullet\qquad\quad \qquad\qquad \qquad}|\hole
  & y'  \ar[dl]\ar[dl]|<<<<<<<<<<<<{\quad \quad \qquad \qquad\,\,\,\quad \quad \,\,\bullet g_1\big (R\big(g_0^{-1}(x'),g_0^{-1}(y')\big)\big)}\\
& g_0(g_0^{-1}(y')/ g_0^{-1}(x')) &g_0\Big(g_0^{-1}(x')\backslash g_0^{-1}(y')\Big)
}\xymatrix@R=1pt{\\\\\\\quad.}
$$

\subsection{Crossed modules ${\cal G}$}

\begin{Definition}[Crossed module]\label{xmod}Let $E$ and $G$ be groups. A crossed module   of groups
${\cal G}=(\d\colon E \to G,\trr)$
(see
\cite{martins_tams,brown_higgins_sivera,brown_hha,baez_lauda})
is  a group map $\d\colon E \to G$, and a left action
of $G$ on $E$ by automorphisms, such that the  relations below, called
Peiffer relations hold for each $g \in G$ and $e,e' \in E$: 
\begin{align}\label{PR}
 &\textrm{1st Peiffer relation: } \d(g \trr e)=g \d(e) g^{-1},&
\textrm{2nd Peiffer relation: } \d(e) \trr e'=ee'e^{-1}.
\end{align}
\end{Definition}
%\mdef Given ${\cal G}=(\d\colon E \to G,\trr)$, if only the 1st Peiffer relation holds, but not necessarily the second, then  ${\cal G}$ is called a pre-crossed module.

\begin{Lemma}\label{xmodprop}Let  ${\cal G}=(\d\colon E \to G,\trr)$ be a crossed module.
a) The group $A=\ker(\partial)\subset E$ is central in $E$,  and in particular $A$ is abelian. b)  The group $A$ is closed under the action of $G$, meaning that if $K \in A$ and $g \in G$, then $ g \trr K \in A$.  
  c) Given any $g,h \in G$, $e \in E$, and $f \in A$ it holds that:
$(g \d(e) h) \trr f=(gh) \trr f$.
 d) The action of $G$ on $A$ descends to an action of 
${\rm coker}(\partial)$ on $A$.
\end{Lemma}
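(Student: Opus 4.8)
The plan is to prove the four parts in order, since each feeds into the next; the only ingredients are the two Peiffer relations \eqref{PR} together with the action axioms, and every part reduces to a one-line manipulation once the dependencies are arranged correctly.

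First I would prove (a). Take $K \in A=\ker(\partial)$ and apply the second Peiffer relation with first argument $K$: for every $e \in E$ one has $\partial(K)\triangleright e = K e K^{-1}$. Since $\partial(K)=1_G$ and the action is a left action by automorphisms (so $1_G$ acts as the identity), the left-hand side is simply $e$, whence $KeK^{-1}=e$. Thus $K$ commutes with every element of $E$, so $A$ is central in $E$; specialising $e$ to an element of $A$ gives that $A$ is abelian.

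Next I would prove (b) using the first Peiffer relation: for $K\in A$ and $g\in G$, $\partial(g\triangleright K)=g\,\partial(K)\,g^{-1}=g\,1_G\,g^{-1}=1_G$, so $g\triangleright K\in\ker(\partial)=A$. This closure is what allows the action of $G$ to be restricted to $A$, and it is needed before (c) and (d) can even be stated. For (c), I would expand the left side using that $\triangleright$ is a left action, writing $(g\,\partial(e)\,h)\triangleright f = g\triangleright\big(\partial(e)\triangleright(h\triangleright f)\big)$. By (b) the element $h\triangleright f$ lies in $A$, hence by (a) it is central in $E$, so the second Peiffer relation collapses the middle factor: $\partial(e)\triangleright(h\triangleright f)=e\,(h\triangleright f)\,e^{-1}=h\triangleright f$. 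What remains, $g\triangleright(h\triangleright f)=(gh)\triangleright f$, is just the left-action axiom.

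Finally, for (d) I would first record that $\partial(E)$ is normal in $G$ --- immediate from the first Peiffer relation, since $g\,\partial(e)\,g^{-1}=\partial(g\triangleright e)\in\partial(E)$ --- so that ${\rm coker}(\partial)=G/\partial(E)$ is a genuine group. Well-definedness of the induced action then means $g_1\triangleright f=g_2\triangleright f$ whenever $g_1\partial(E)=g_2\partial(E)$; writing $g_1=g_2\,\partial(e)$, this is exactly the case $h=1_G$ of (c), and the action axioms descend verbatim from those of $G$. There is no genuine obstacle here: the entire content is that centrality of $A$ (from (a)) is precisely what lets the $\partial(e)$-factors cancel in (c) and (d), so the only thing demanding care is the ordering of the argument.
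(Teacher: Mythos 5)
Your proof is correct and follows essentially the same route as the paper's (very terse) proof: part (a) via the 2nd Peiffer relation applied to $K\in\ker(\partial)$, part (b) via the 1st Peiffer relation, and parts (c), (d) by the cancellation $\partial(e)\triangleright x = exe^{-1}=x$ for $x\in A$, which is the 2nd Peiffer relation combined with the centrality established in (a). The only difference is that you spell out the dependencies (including normality of $\partial(E)$, needed for ${\rm coker}(\partial)$ to be a group), which the paper leaves implicit.
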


\begin{proof}
 Assertions a), c), d) follow from the 2nd Peiffer relation, and  b)  from the 1st Peiffer relation. 
\end{proof}

\mdef\label{grt} A crossed module ${\cal G}$ hence gives rise to
two abelian $gr$-groups: $(G,A)$ and  
$({\rm coker} (\partial),A)$; see Def. \ref{abgrg}.

\mdef\label{grtox} {Conversely, if $(G,A,\trr)$ is an abelian $gr$-group, then we have a crossed module $(A \stackrel{a \mapsto 1_G}{\longrightarrow} G,\trr)$. Looking at {the abelian $gr$-groups in Examples \ref{gr1} and \ref{gr2}, this gives many}  examples of crossed modules of finite groups.} 

\smallskip

\mdef {Let $G$ be a group and ${\rm Aut}(G)$ be its group of automorphisms, acting on $G$ in the obvious way. Let $g \in G \mapsto {\rm Ad}_g \in {\rm Aut}(G)$ be such that ${\rm Ad}_g(x)=gxg^{-1}$. This defines a crossed module $({\rm Ad}\colon G \to {\rm Aut}(G),\trr)$.}

%(We set  $ \pi_{1,2}(\Gc) \doteq ({\rm coker} (\partial),\ker(\partial))$.)
%\end{Remark}

% \mdef A cohomology
% class $\omega \in H^3\big({\rm coker} (\partial),\ker(\partial)\big)$;
% \cite{baez_lauda,brown_hha}, called the $k$-invariant of $\Gc$, can
% also be derived. The triple $ \big({\rm coker}
% (\partial),\ker(\partial),\omega\big)$ is invariant under weak
% equivalence of crossed modules. A crossed module can be reconstructed
% from  $\big({\rm coker} (\partial),\ker(\partial),\omega\big)$, up to
% 
\subsection{{Groupoids $\TRANS(T^2_R(\Gc))$ and $\TRANS(S^2(\Gc))$, and the associated W-bikoids}}\label{refBIK}
Throughout this subsection, we fix a crossed module $\Gc=(\d\colon E \to G,\trr)$, and  an element $R \in E$; {see \peq{defR}.} 
% {We will explicitly construct the \biker s $X^+_R$ in \eqref{compact} and $X^+_{gr^*}$ in \eqref{addX2}, mentioned at the end of \S\ref{phys}.}
{We will explicitly construct the W-\biker s $X^+_R$ and $X^+_{gr^*}$ of Equations \eqref{compact} in \eqref{addX2}, appearing at the end of the physics motivation section \S\ref{phys}. We  firstly construct $X^+_{gr^*}$, and them obtain $X^+_R$ by transporting  $X^+_{gr^*}$ along a groupoid isomorphism; see \peq{transport}. The ideas are simple, but require several algebraic preliminaries.} 
{A slightly depleted version of the W-bikoids $X^+_{gr^*}$ and $X^+_R$ is the W-bikoid $X^+_{gr}$ in Equation \eqref{kau-mar}.}

\smallskip
 Our convention for commutators in a group $G$ is $[p,q]=pqp^{-1}q^{-1}$, where $p,q \in G$.
\subsubsection{The groups $T^2_R(\Gc)$  and $S^2(\Gc)$}

\begin{Definition} Fix an $R\in E$. We let $T^2_R(\Gc)=\big\{(g,\d(R),e) \in G \times G \times E\,\,\big |\,\, \d(e)=[\d(R),g]\big\}$.
\end{Definition}
 Explicit calculations swiftly prove the following lemma. (Cf. \peq{2f} and the second equation of \eqref{compos}.)
\begin{Lemma}
The set $T^2_R(\Gc)$ is a group with the operation
$(g,\d(R),e)(g',\d(R),e')=(gg',\d(R), e \,\, g \trr e') $. Hence inverses in  $T^2_R(\Gc)$ take the form $(g,\d(R),e)^{-1}=(g^{-1},\d(R),g^{-1} \trr {e^{-1}}).$ {The unit is $(1_G,\d(R),1_E)$.}
\end{Lemma}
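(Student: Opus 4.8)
The plan is to recognise $T^2_R(\Gc)$ as a subgroup of the semidirect product $G\ltimes_\trr E$ of \peq{sdc1}, after discarding the constant middle slot. The middle coordinate $\d(R)$ plays no role in the product or inverse formulas—it is merely carried along unchanged—so the assignment $(g,\d(R),e)\mapsto(g,e)$ is a bijection from $T^2_R(\Gc)$ onto the subset
\[
H=\big\{(g,e)\in G\times E \,\big|\, \d(e)=[\d(R),g]\big\}\subseteq G\ltimes_\trr E,
\]
and it carries the stated operation $(g,\d(R),e)(g',\d(R),e')=(gg',\d(R),e\,g\trr e')$ exactly to the semidirect-product multiplication $(g,e)(g',e')=(gg',e\,g\trr e')$ of \peq{sdc1}. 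Hence it suffices to check that $H$ is a subgroup of $G\ltimes_\trr E$: associativity and the unit then come for free from the ambient group, and the displayed inverse $(g^{-1},\d(R),g^{-1}\trr e^{-1})$ is precisely the image of the ambient inverse $(g^{-1},g^{-1}\trr e^{-1})$ recorded in \peq{sdc1}.

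First I would verify that $H$ is nonempty: the unit $(1_G,1_E)$ lies in $H$ since $\d(1_E)=1_G=[\d(R),1_G]$, which also identifies the unit of $T^2_R(\Gc)$ as $(1_G,\d(R),1_E)$. Then I would check closure under products. Given $(g,e),(g',e')\in H$, the first Peiffer relation $\d(g\trr e')=g\,\d(e')\,g^{-1}$ gives
\[
\d(e\,g\trr e')=\d(e)\,g\,\d(e')\,g^{-1}=[\d(R),g]\,g\,[\d(R),g']\,g^{-1}.
\]
Expanding the two commutators, the inner factors telescope, namely $\d(R)g\d(R)^{-1}g^{-1}\cdot g\cdot \d(R)g'\d(R)^{-1}g'^{-1}\cdot g^{-1}=\d(R)(gg')\d(R)^{-1}(gg')^{-1}=[\d(R),gg']$, so $(gg',e\,g\trr e')\in H$. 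Finally, for closure under inverses I would compute, again via the first Peiffer relation, $\d(g^{-1}\trr e^{-1})=g^{-1}\d(e)^{-1}g=g^{-1}[\d(R),g]^{-1}g=[\d(R),g^{-1}]$, so that $(g^{-1},g^{-1}\trr e^{-1})\in H$.

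The only genuine computation is the telescoping identity $[\d(R),g]\,g\,[\d(R),g']\,g^{-1}=[\d(R),gg']$ establishing closure under multiplication; everything else is bookkeeping inherited from the semidirect product. I expect this step to be the main (though routine) obstacle, and the key algebraic input throughout is the first Peiffer relation \eqref{PR}, which converts $\d$ of an action $g\trr e$ into the conjugate $g\,\d(e)\,g^{-1}$ and thereby makes the defining condition $\d(e)=[\d(R),g]$ compatible with the group structure.
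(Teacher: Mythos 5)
Your proof is correct, and its computational core --- the closure check $\d(e\,\, g\trr e') = \d(e)\,\, g\,\d(e')\,g^{-1} = [\d(R),g]\,\, g\,\, [\d(R),g']\,\, g^{-1} = [\d(R),gg']$ via the first Peiffer relation --- is exactly what the paper's own proof calls ``the most important bit.'' (Incidentally, the paper's displayed computation contains a typo: it writes $\d(e\,\,g^{-1}\trr e')=\d(e)\,\,g^{-1}\d(e')\,g$ where the operation being checked is $e\,\,g\trr e'$, so the conjugation should be by $g$, not $g^{-1}$; your version is the consistent one.) The difference between the two arguments is one of packaging. The paper verifies the group axioms directly on $T^2_R(\Gc)$, declaring associativity ``immediate'' and then doing the closure computation; it never explicitly checks that the claimed inverse $(g^{-1},\d(R),g^{-1}\trr e^{-1})$ actually lies in $T^2_R(\Gc)$. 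You instead identify $T^2_R(\Gc)$ with the subset $H=\{(g,e)\colon \d(e)=[\d(R),g]\}$ of the semidirect product $G\ltimes_\trr E$, whose multiplication and inverse conventions the paper has already fixed, and reduce everything to the subgroup criterion. That framing buys associativity, the unit, and the inverse formula for free from the ambient group, and it forces the one verification the paper silently skips, namely $\d(g^{-1}\trr e^{-1})=g^{-1}\d(e)^{-1}g=[\d(R),g^{-1}]$, which you carry out correctly. The mathematical content is the same; your write-up is marginally more complete, at the small cost of introducing the auxiliary subgroup $H$.
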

\begin{proof}
 Associativity is immediate. The most important bit is to show that $\d(e\,\,g \trr e')=[\d(R),gg']$. Note:
\begin{align*}
\d(e\,\,g^{-1} \trr e')=\d(e)\,\,g^{-1} \d( e') g=[\d(R),g]\,g[\d(R),g']g^{-1}=[\d(R),gg'],
\end{align*}
where we have used  the 1st {Peiffer relation} in Def. \ref{xmod}.
\end{proof}

\begin{Definition} {(Recall the conventions in \peq{sdc1}.)} Let $S^2(\Gc)=G \ltimes _\trr A$.  Here  $A=\ker(\partial)$, which is closed under the action of $G$ on $E$; see Lem. \ref{xmodprop}. Hence the product in $S^2(\Gc)$ is
$(g,K)\,\, (h,L)=(gh, K\,\, g \trr L). $
\end{Definition}

\mdef\label{thisgen} {We have a group isomorphism  $\phi_{1_E}\colon (g,J) \in S^2(\Gc) \mapsto   (g,\d(1_E),J) \in T^2_{1_E}(\Gc)$. This  generalises.}

 \begin{Lemma}\label{defpsiphi}Let $R \in E$. We have a group isomorphism $\psi_R\colon T^2_R(\Gc) \to S^2(\Gc)$. It is defined as: $$(g,\d(R),e) \stackrel{\psi_R}{\longmapsto}  (g,R^{-1} \,\,e \,\,g \trr R).$$  Its inverse is given by $\phi_R\colon S^2(\Gc) \to T^2_R(\Gc)$, where $\phi_R(g,J)=(g, \d(R), R \,\,J\,\,g \trr R^{-1})$.
\end{Lemma}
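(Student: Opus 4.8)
The plan is to verify in turn the four things that the statement packages together: that $\psi_R$ really lands in $S^2(\Gc)$, that $\phi_R$ really lands in $T^2_R(\Gc)$, that $\psi_R$ is a group homomorphism, and that $\phi_R$ is a two-sided inverse for $\psi_R$ (from which it follows that $\psi_R$ is an isomorphism and $\phi_R=\psi_R^{-1}$ is automatically a homomorphism). Throughout, the only facts I will need are the first Peiffer relation $\d(g\trr e)=g\,\d(e)\,g^{-1}$ from Def. \ref{xmod}, the defining constraint $\d(e)=[\d(R),g]$ of $T^2_R(\Gc)$, and the elementary observation that each $g\trr(-)$ is an automorphism of $E$, so that $g\trr(ee')=(g\trr e)(g\trr e')$ and $(g\trr R)(g\trr R^{-1})=1_E$.

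For well-definedness of $\psi_R$, I would apply $\d$ to the second coordinate $R^{-1}e\,(g\trr R)$, obtaining $\d(R)^{-1}\d(e)\,g\,\d(R)\,g^{-1}$ via the first Peiffer relation; substituting $\d(e)=\d(R)\,g\,\d(R)^{-1}g^{-1}$ makes everything telescope to $1_G$, so the second coordinate lies in $A=\ker(\d)$, exactly as required for membership in $S^2(\Gc)=G\ltimes_\trr A$. Symmetrically, for $\phi_R$ I would apply $\d$ to $R\,J\,(g\trr R^{-1})$; since $J\in A$ it is killed by $\d$, and the first Peiffer relation turns the surviving factors into $\d(R)\,g\,\d(R)^{-1}g^{-1}=[\d(R),g]$, which is precisely the constraint defining $T^2_R(\Gc)$.

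For the homomorphism property I would take $(g,\d(R),e)$ and $(g',\d(R),e')$, multiply them in $T^2_R(\Gc)$ to get $(gg',\d(R),e\,g\trr e')$, and compare $\psi_R$ of this product with the product of the images in $S^2(\Gc)$. Expanding the second coordinate of the latter by the semidirect-product rule and distributing $g\trr(-)$ across $R^{-1}e'(g'\trr R)$, the factor $(g\trr R)(g\trr R^{-1})$ cancels and $g\trr(g'\trr R)=(gg')\trr R$, leaving exactly $R^{-1}e\,(g\trr e')\,(gg')\trr R$, which matches. Finally the two composites are handled by the same cancellation: in $\phi_R\circ\psi_R$ the factor $(g\trr R)(g\trr R^{-1})=1_E$ collapses the second coordinate back to $e$, and in $\psi_R\circ\phi_R$ the factor $(g\trr R^{-1})(g\trr R)=1_E$ collapses it back to $J$.

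None of the steps is genuinely difficult; the computation is short and purely formal. The one place requiring care—and the only spot where an ordering slip could creep in—is the bookkeeping with the semidirect-product convention of \peq{sdc1}, where the action has priority over multiplication and must be distributed correctly across products before the telescoping cancellations can be read off. Keeping the automorphism property of $g\trr(-)$ explicit at each step is what makes every computation collapse cleanly.
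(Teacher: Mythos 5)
Your proposal is correct and follows essentially the same route as the paper's proof: well-definedness of both maps via the first Peiffer relation and the constraint $\d(e)=[\d(R),g]$, the homomorphism property by expanding the semidirect product and cancelling $(g\trr R)(g\trr R^{-1})$, and the two composite identities by the same telescoping. The only cosmetic difference is that you expand the product of images and cancel, while the paper inserts the factor $g\trr R\,\,g\trr R^{-1}$ into $\psi_R$ of the product; these are the same computation read in opposite directions, and you spell out the inverse composites that the paper dismisses as clear.
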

\begin{proof}
Let $(g,\d(R),e) \in T^2_R(\Gc)$. Then $\d(e)=[\d(R),g]$, from which it follows, by using $\d(g \trr R)=g\,\d(R)\,g^{-1}$,  that $R^{-1}\, e\, g \trr R\in A=\ker(\partial)$. Thence $  (g,R^{-1} \,e\, g \trr R)\in S^2(\Gc)$.  Analogously, if $J$ in $\ker(\partial)$, it holds that  $\d(R \,\,J\,\,g \trr R^{-1})=[\d(R),g]$. (We have used  the 1st {Peiffer rule in \eqref{PR}}.) Hence {$\phi_R(g,J)\in T^2_R(\Gc)$, if $g \in G$}.

Given $(g,\d(R),e),(g',\d(R),e') \in T^2_R(\Gc)$, we have:
\begin{align*}
\psi_R\big ((g,\d(R),e)(g',\d(R),e') \big )&= \psi_R\big ( gg',\d(R), e \,\, g \trr e')=\Big(gg',R^{-1}\,\,  e \,\, g \trr e'\,\, (gg')\trr R\Big)\\
  &=(gg',  R^{-1} \,\,e\,\,g \trr R\,\,\, g \trr R^{-1} \,\, g\trr e'\,\,(gg') \trr R)= (g, R^{-1} \,\,e\,\,g \trr R) \,(g', R^{-1} \,\,e'\,\,g' \trr R)\\
&=\psi_R\big (g,\d(R),e\big) \,\psi_R(g',\d(R),e').\end{align*}
  
  To finalise, note that clearly $\psi_R\circ \phi_R=\id_{S^2(\Gc)}$ and  $\phi_R\circ \psi_R=\id_{T^2_R(\Gc)}$. 
\end{proof}

\mdef\label{defTheta} We have a homomorphism $\Theta\colon S^2(\Gc) \to S^2(\Gc)$ defined as: $(g,J)\stackrel{\Theta}{\mapsto} (g,1_A)$. Note $\Theta \circ \Theta=\Theta$. 

\noindent By combining with the previous lemma,  %\ref {defpsiphi}, 
{or by a direct calculation,} it follows  that:
\begin{Lemma}\label{defThetaR}
 We have a group homomorphism $\Theta_R\colon T^2_R(\Gc)\to  T^2_R(\Gc)$, which has the explicit form:
\begin{equation}\label{dEFT}(g,\d(R),e) \in T^2_R(\Gc) \stackrel{\Theta_R}{\longmapsto} (g,\d(R),R\,\ g\trr R^{-1}) \in T^2_R(\Gc) .  
\end{equation}

Furthermore  $\Theta_R\circ \Theta_R=\Theta_R$ and $\Theta_R= \phi_R \circ \Theta \circ \psi_R$.
\end{Lemma}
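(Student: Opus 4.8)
The plan is to obtain $\Theta_R$ as the transport of the idempotent $\Theta$ across the isomorphisms of Lemma \ref{defpsiphi}, and then to read off both the explicit formula \eqref{dEFT} and the identity $\Theta_R\circ\Theta_R=\Theta_R$ as formal consequences. Concretely, I would \emph{define} $\Theta_R\doteq \phi_R\circ\Theta\circ\psi_R$. Since $\psi_R\colon T^2_R(\Gc)\to S^2(\Gc)$ and $\phi_R\colon S^2(\Gc)\to T^2_R(\Gc)$ are mutually inverse group isomorphisms by Lemma \ref{defpsiphi}, and $\Theta$ is a group homomorphism by \peq{defTheta}, the composite $\Theta_R$ is automatically a group homomorphism $T^2_R(\Gc)\to T^2_R(\Gc)$; this settles the homomorphism assertion with no computation.

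Next I would compute the explicit form by tracking a general element through the three maps. Starting from $(g,\d(R),e)\in T^2_R(\Gc)$, applying $\psi_R$ gives $(g,R^{-1}\,e\,g\trr R)\in S^2(\Gc)$; applying $\Theta$ collapses the second coordinate to $1_A$, yielding $(g,1_A)$; and applying $\phi_R$ returns $(g,\d(R),R\,1_A\,g\trr R^{-1})=(g,\d(R),R\,g\trr R^{-1})$. This is exactly \eqref{dEFT}, so the identity $\Theta_R=\phi_R\circ\Theta\circ\psi_R$ and the explicit formula are established simultaneously. The idempotency is then immediate: using $\psi_R\circ\phi_R=\id_{S^2(\Gc)}$ (Lemma \ref{defpsiphi}) and $\Theta\circ\Theta=\Theta$ (\peq{defTheta}),
$$\Theta_R\circ\Theta_R=\phi_R\circ\Theta\circ(\psi_R\circ\phi_R)\circ\Theta\circ\psi_R=\phi_R\circ(\Theta\circ\Theta)\circ\psi_R=\phi_R\circ\Theta\circ\psi_R=\Theta_R.$$

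If instead one prefers the direct route hinted at in the statement, one checks by hand that the right-hand map of \eqref{dEFT} lands in $T^2_R(\Gc)$ and respects the product. Membership follows from the 1st Peiffer relation \eqref{PR}, since $\d(R\,g\trr R^{-1})=\d(R)\,g\,\d(R)^{-1}g^{-1}=[\d(R),g]$. For multiplicativity one expands $(g,\d(R),R\,g\trr R^{-1})(g',\d(R),R\,g'\trr R^{-1})$ using the product in $T^2_R(\Gc)$; since $\trr$ acts by automorphisms one has $g\trr(R\,g'\trr R^{-1})=(g\trr R)\big((gg')\trr R^{-1}\big)$, and the third coordinate becomes $R\,(g\trr R^{-1})(g\trr R)\,(gg')\trr R^{-1}$. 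The key cancellation $g\trr R^{-1}\,g\trr R=g\trr(R^{-1}R)=1_E$ then leaves $R\,(gg')\trr R^{-1}$, matching $\Theta_R$ of the product.

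I do not anticipate a genuine obstacle here: the content is entirely carried by Lemma \ref{defpsiphi} together with the idempotency of $\Theta$. The only point that needs care is the cancellation $g\trr R^{-1}\,g\trr R=1_E$, equivalently $g\trr R^{-1}=(g\trr R)^{-1}$, which is precisely where the automorphism property of the action is invoked; everything else is bookkeeping with the semidirect-product conventions of \peq{sdc1}.
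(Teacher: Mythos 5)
Your proposal is correct and follows exactly the route the paper takes: the paper states this lemma with the one-line justification ``by combining with the previous lemma, or by a direct calculation,'' and your transport argument $\Theta_R\doteq\phi_R\circ\Theta\circ\psi_R$ (using Lemma \ref{defpsiphi} and the idempotency of $\Theta$) is precisely the first route, while your hands-on verification via the 1st Peiffer relation and the cancellation $g\trr R^{-1}\,g\trr R=1_E$ is the second. Both computations check out, so nothing is missing.
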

% \begin{proof}
% Follows by combining Lem. \ref {defpsiphi} with \peq{defTheta}.
% \end{proof}

\mdef\label{defbeta} We will make strong use of the map of sets $\beta\colon S^2(\Gc) \to S^2(\Gc)$, defined as \smash{$(g,K)\stackrel{\beta}{\mapsto} (1_G,K^{-1})$.} %This $\beta$ is in general not a group morphism. 
Note that $\beta(g,K)=(g,1_A)(g,K)^{-1}$, in other words: $\beta(g,K)=\Theta(g,K)\,\, (g,K)^{-1}$, for each $(g,K) \in S^2(\Gc)$.

\begin{Definition}\label{defbetaR}
Analogoulsy, if $(g, \d(R),e) \in T^2_R(\Gc)$, we  put $\beta_R(g, \d(R),e)\in T^2_R(\Gc)$ to be:
\begin{equation}\label{betaversions}
\begin{split}
\beta_R(g, \d(R),e)&=\Theta_R(g,\d(R),e)\,\,(g,\d(R),e)^{-1}\\ &=(1_G,\d(R), R\,\, g \trr R^{-1}\,\, e^{-1})\in T^2_R(\Gc).
\end{split}
\end{equation}
\end{Definition}
\mdef\label{betasame} By construction, or by a direct calculation, it follows that: $\beta_R=\phi_R \circ \beta \circ \psi_R$.
\begin{Lemma}\label{def:t}We have left-actions $\t$ of $E$ on $T^2_R(\Gc)$ and on $S^2_R(\Gc)$. They are such that: 
\begin{align}
a \t (g,\d(R),e)&=(\d(a)g,\d(R),  R\,\,a\,\, R^{-1} \,\, e\,\, a^{-1}),\label{tact1}\\
a \t (g,K)&=(\d(a)g,K). \label{tact2}
\end{align}
Here $a \in E$.
Morever $\psi_R\colon T^2_R(\Gc) \to S^2(\Gc)$, hence also $\phi_R\colon S^2(\Gc) \to T^2_R(\Gc)$, preserves $E$ actions.
\end{Lemma}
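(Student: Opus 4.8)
The plan is to dispatch the $S^2(\Gc)$-action \eqref{tact2} by inspection, then produce the $T^2_R(\Gc)$-action \eqref{tact1} by \emph{transport} along the isomorphism $\phi_R$ of Lemma \ref{defpsiphi}, so that both the action axioms and the equivariance of $\psi_R$ come essentially for free.

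First I would check \eqref{tact2}. Since the $A$-component is left untouched and $\d(a)g\in G$, the right-hand side lies in $S^2(\Gc)$. As $\d(1_E)=1_G$ we get $1_E \t (g,K)=(g,K)$, and since $\d$ is a homomorphism, $a \t \big(b\t(g,K)\big)=(\d(a)\d(b)g,K)=(\d(ab)g,K)=(ab)\t(g,K)$; so \eqref{tact2} is a left action.

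Next, on $T^2_R(\Gc)$ I would \emph{define} $a\star x := \phi_R\big(a \t \psi_R(x)\big)$. Because $\phi_R=\psi_R^{-1}$ is a group isomorphism (Lemma \ref{defpsiphi}) and $\t$ is a left action of $E$ on $S^2(\Gc)$, the map $\star$ is automatically a well-defined left action of $E$ on $T^2_R(\Gc)$ — in particular well-definedness \emph{into} $T^2_R(\Gc)$ is guaranteed by Lemma \ref{defpsiphi}, with no separate Peiffer check needed — and $\psi_R,\phi_R$ are $E$-equivariant by construction. It then only remains to compute $\star$ explicitly and identify it with \eqref{tact1}. Substituting $\psi_R(g,\d(R),e)=\big(g,\,R^{-1}e\,(g\trr R)\big)$, applying \eqref{tact2}, and then $\phi_R$, and using the 2nd Peiffer relation $\d(a)\trr X=aXa^{-1}$ (whence $\big(\d(a)g\big)\trr R^{-1}=\d(a)\trr\big(g\trr R^{-1}\big)=a\,(g\trr R^{-1})\,a^{-1}$), the third component becomes $e\,(g\trr R)\,a\,(g\trr R)^{-1}a^{-1}$.

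Finally I would match this with \eqref{tact1}. Set $Z:=R^{-1}e\,(g\trr R)$; by Lemma \ref{defpsiphi} (equivalently, a short $\d$-computation using the 1st Peiffer relation in \eqref{PR} together with $\d(e)=[\d(R),g]$) one has $Z\in A=\ker(\d)$, which is central in $E$ by Lemma \ref{xmodprop}(a). Writing $e\,(g\trr R)=RZ$ and commuting the central $Z$ past $a$ gives $e\,(g\trr R)\,a\,(g\trr R)^{-1}a^{-1}=R\,a\,Z\,(g\trr R)^{-1}a^{-1}=R\,a\,R^{-1}e\,a^{-1}$, which is exactly the third component of \eqref{tact1}; the $G$-component $\d(a)g$ already agrees. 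Thus $\star$ coincides with $\t$ of \eqref{tact1}, and since $\psi_R$ (hence $\phi_R$) intertwines the two actions, the claim follows. The step I would be most careful about is the centrality of $\ker(\d)$ in $E$: it is what licenses moving $Z$ past $a$ and is the entire reason the clean formula \eqref{tact1} holds. A purely direct route — verifying \eqref{tact1} is a left action and checking $\psi_R(a\t x)=a\t\psi_R(x)$ by hand — works equally well, but it merely duplicates this same centrality-driven commutation.
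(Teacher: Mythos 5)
Your proof is correct, and it is organized differently from the paper's. The paper takes the formula \eqref{tact1} as given and verifies three things by hand: that $R\,a\,R^{-1}\,e\,a^{-1}$ has the right boundary (so the formula lands in $T^2_R(\Gc)$), that the action axiom holds (stated as ``clearly'' for both \eqref{tact1} and \eqref{tact2}), and that $\psi_R$ intertwines the two actions -- this last check being the substantive computation, using the 2nd Peiffer relation and the centrality of $A=\ker(\d)$. You instead \emph{define} the action on $T^2_R(\Gc)$ by transporting \eqref{tact2} along the bijection $\psi_R$ of Lemma \ref{defpsiphi}, so that well-definedness, the action axiom, and the equivariance of $\psi_R$ and $\phi_R$ are all automatic, and the only remaining work is to compute $\phi_R\big(a\t\psi_R(g,\d(R),e)\big)$ and identify it with \eqref{tact1}. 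That identification is where the mathematics lives, and it is the same mathematics as the paper's intertwining check: one application of the 2nd Peiffer relation in \eqref{PR} to move $\d(a)\trr(-)$ into a conjugation by $a$, and one use of Lemma \ref{xmodprop} (centrality of $\ker(\d)$) to commute $Z=R^{-1}e\,(g\trr R)$ past $a$ -- your computation checks out. What transport buys you is economy: the separate boundary computation $\d(R\,a\,R^{-1}\,e\,a^{-1})=[\d(R),\d(a)g]$ that opens the paper's proof, and the action-axiom check for \eqref{tact1}, both disappear. What the paper's direct route buys is that the stated formula \eqref{tact1} is verified on its own terms, independently of the isomorphism $\psi_R$ -- which matters slightly, since \eqref{tact1} is later quoted and used directly (e.g.\ in Lemma \ref{refnow}). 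As you note yourself, the two routes are computationally equivalent; the difference is purely in which facts are structural and which are checked.
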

\noindent {(The action in Equation \eqref{tact1} was mentioned in  \peq{lastc}. It arises from the gauge transformation rule for the 2-dimensional holonomy of a 2-connection \eqref{leftrightconj}.) These actions are, in general, not actions by automorphisms.}
\begin{proof}
Firstly, if $(g,\d(R),e) \in T^2_R(\Gc)$, then an easy calculation shows that $\d(R\,\,a\,\, R^{-1} \,\, e\,\, a^{-1})=[\d(R),\d(a) g]$. 
 Let $a,b \in E$. Clearly $a\t b\t (-)=(ab)\t (-)$, both for $\t$ in \eqref{tact1} and $\t$ in \eqref{tact2}. Finally note that:
\begin{align*}
 \psi_R\big (a \t (g,\d(R),e) \big)&=\psi_R\big (\d(a)g,\d(R),  R\,\,a\,\, R^{-1} \,\, e\,\, a^{-1} \big)=\big(\d(a)g,  R^{-1}\,\,  R\,a\, R^{-1} \,\, e\,\, a^{-1} \,\, (\d(a)g) \trr R\big) \\
&\stackrel{(\#1)}{=}\big(\d(a)g, R^{-1} R\,\, a \, R^{-1} \,\, e\,\, a^{-1} \,\,  a\,\, g \trr R \,\, a^{-1}\big) \\
&=\big(\d(a)g, a\, R^{-1} \,\, e\,\,   g \trr R \,\, a^{-1}\big),
\end{align*}
where $(\#1)$ follows from the 2nd Peiffer rule (see Def. \ref{xmod}). On the other hand:
\begin{align*}
 a \t\psi_R\big (g,\d(R),e \big)&=a \t  (g,R^{-1} \,\, e\,\, g \trr R )=(\d(a) g,R^{-1}\,\, e \,\, g \trr R) \\&\stackrel{(\#2)}{=}(\d(a) g,a \,\, R^{-1}\,\, e \,\, g \trr R\,\, a^{-1}).
\end{align*}
Here step $(\#2)$ follows from the fact that $\ker (\partial)=A$ is central in $E$ (Lem. \ref{xmodprop}), since $R^{-1}\,\, e \,\, g \trr R\in A$.
\end{proof}

% 
% \mdef We have a left action $\trr_V$ of $G \ltimes_\trr E$ on $T^2(\G)$. It has the form:
% $$(w,k)\trr_V (g,p,e)=\big(wgw^{-1},\d(k)wpw^{-1}, k\,\,w \trr e\,\, (wgw^{-1})\trr k^{-1} \big).$$
% Particular cases are: 
% \begin{align*}
% (w,1_E)\trr_V (g,p,e)=\big(wgw^{-1},wpw^{-1}, w \trr e \big).\\
% (1_G,k)\trr_V (g,p,e)=\big(g,\d(k)p, k\,\,  e\,\, g\trr k^{-1} \big).
% \end{align*}
% These clearly define actions of $G$ and $E$ on $T^2(\Gc)$. By noting that:
% $$(1_G,k)\trr (w,1_E)\trr_V (g,p,e)=(w,k)\trr_V (g,p,e),$$
% $$(w,1_E) \trr_V (1_G,k) \trr_V  (g,p,e) = (1_G,w \trr k)\trr_V (w,1_E)\trr_V  (g,p,e) $$
% follows that indeed we have an action of $G\ltimes_\trr E$ on $T^2(\Gc)$.
% \mdef We have a lect
%\end{proof}

\subsubsection{The action groupoids $\TRANS(S^2(\Gc))$ and  $\TRANS(T^2_R(\Gc))$}\label{agrpoids}
\mdef Consider the actions $\trr'$ of $T^2_R(\Gc)$ and of $S^2(\Gc)$ on $E$, by automorphisms, such that:
\begin{align*}
(g,\d(R),e) \trr' a&=g \trr a, \textrm{ where } a \in E \textrm{ and } (g,\d(R),e)\in T^2_R(\Gc),\\
(g,K) \trr' a&=g \trr a, \textrm{ where } a \in E \textrm{ and } (g,K)\in S^2(\Gc). 
\end{align*}

\smallskip

\mdef\label{comp1} Note that $\psi_R( g,\d(R),e) \trr' a=(g,\d(R),e) \trr' a,$ where  $a \in E \textrm{ and } (g,\d(R),e)\in T^2_R(\Gc)$.

\smallskip

\mdef\label{rrrn} Let ${\td}$ denote the left-actions of $T^2_R(\Gc)$ and of $S^2(\Gc)$ on themselves by conjugation. 

\begin{Lemma}\label{refnow}
We have a left-action $\t$ of the group $T^2_R(\Gc)\ltimes_{\trr'}E$ (cf. \peq{sdc1}) on $T^2_R(\Gc)$. It has the form:
\begin{equation}\label{tT2}
\begin{split}
(g,\d(R),e,a) \t (h,\d(R),f)&= a \t \big((g,\d(R),e){\td} (h,\d(R),f)\big) \\&=(\d(a)ghg^{-1},R\, a\, R^{-1}\,\,  e\,\, g \trr f\,\, (ghg^{-1}) \trr e^{-1}\,\, a^{-1}\big).
\end{split}
\end{equation}
\end{Lemma}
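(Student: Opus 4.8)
The plan is to deduce this from the general criterion \eqref{sdirectcomp} of \peq{sdc2} for assembling two actions into an action of a semidirect product. All the ingredients are already in place: conjugation $\td$ is a left-action of $T^2_R(\Gc)$ on itself (\peq{rrrn}); the map $\t$ of \eqref{tact1} is a left-action of $E$ on $T^2_R(\Gc)$ (Lem.\ \ref{def:t}); and $\trr'$ is a left-action of $T^2_R(\Gc)$ on $E$ by automorphisms. Taking $G=T^2_R(\Gc)$, the acted group to be $E$, and $\trr=\trr'$, the criterion says that $(g,\d(R),e,a)\t(h,\d(R),f)\doteq a\t\big((g,\d(R),e)\td(h,\d(R),f)\big)$ is a left-action of $T^2_R(\Gc)\ltimes_{\trr'}E$ exactly when, for all $(g,\d(R),e),(h,\d(R),f)\in T^2_R(\Gc)$ and $a\in E$,
\begin{equation*}
a\t\big((g,\d(R),e)\td(h,\d(R),f)\big)=(g,\d(R),e)\td\Big(\big((g,\d(R),e)^{-1}\trr' a\big)\t(h,\d(R),f)\Big).
\end{equation*}
So the whole proof reduces to checking this single identity.

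First I would expand the left-hand side. The product and inverse of $T^2_R(\Gc)$ give $(g,\d(R),e)\td(h,\d(R),f)=(ghg^{-1},\d(R),e\,g\trr f\,(ghg^{-1})\trr e^{-1})$, whereupon \eqref{tact1} yields precisely the closed formula in the statement; this simultaneously confirms that the right-hand expression in \eqref{tT2} is correct. Next I would expand the right-hand side: since $(g,\d(R),e)^{-1}\trr' a=g^{-1}\trr a=:b$, applying \eqref{tact1} and then conjugating produces a first component $g\,\d(b)\,h\,g^{-1}=\d(g\trr b)\,ghg^{-1}=\d(a)\,ghg^{-1}$, using the 1st Peiffer relation and $g\trr b=a$; this agrees with the first component on the left. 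Comparing $E$-components (the matching factor $(ghg^{-1})\trr e^{-1}$ on the right arising, after an $a^{-1}a$ cancellation, from $(\d(a)ghg^{-1})\trr e^{-1}=a\,(ghg^{-1})\trr e^{-1}\,a^{-1}$ via the 2nd Peiffer relation) and cancelling the common trailing factor, the identity collapses to
\begin{equation*}
e^{-1}\,R\,a\,R^{-1}\,e=(g\trr R)\,a\,(g\trr R^{-1}).
\end{equation*}

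This last identity is the real content, and I expect it to be the main obstacle. I would prove it from the Peiffer relations together with the defining constraint $\d(e)=[\d(R),g]$ of $T^2_R(\Gc)$. By the 2nd Peiffer relation conjugation by $e^{-1}$ is the automorphism $\d(e)^{-1}\trr(-)$, so the left side equals $\d(e)^{-1}\trr(R\,a\,R^{-1})$. Substituting $\d(e)^{-1}=[\d(R),g]^{-1}=g\,\d(R)\,g^{-1}\,\d(R)^{-1}$ and peeling off the innermost factor through $\d(R)^{-1}\trr(R\,a\,R^{-1})=R^{-1}(R\,a\,R^{-1})R=a$ (2nd Peiffer) leaves $(g\,\d(R)\,g^{-1})\trr a=\d(g\trr R)\trr a$, which by the 1st and then the 2nd Peiffer relation equals $(g\trr R)\,a\,(g\trr R)^{-1}=(g\trr R)\,a\,(g\trr R^{-1})$, as required.

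Finally, a cleaner route I would keep in reserve is to transport the problem along the isomorphism $\psi_R\colon T^2_R(\Gc)\to S^2(\Gc)$ of Lem.\ \ref{defpsiphi}. Being a group isomorphism, $\psi_R$ preserves conjugation $\td$; it also intertwines the $E$-actions $\t$ (Lem.\ \ref{def:t}) and the actions $\trr'$ (\peq{comp1}), so the candidate $\t$ is a left-action on $T^2_R(\Gc)$ if and only if the analogous map is a left-action on $S^2(\Gc)=G\ltimes_\trr A$. On the $S^2(\Gc)$ side the $E$-action is merely $a\t(g,K)=(\d(a)g,K)$ by \eqref{tact2}, and the verification of \eqref{sdirectcomp} is immediate from the centrality of $A=\ker(\d)$ and from $(g\,\d(e)\,h)\trr f=(gh)\trr f$ for $f\in A$ (Lem.\ \ref{xmodprop}), thereby sidestepping the $R$-conjugations of the direct computation.
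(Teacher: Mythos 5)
Your proposal is correct and follows essentially the same route as the paper's proof: reduce via the semidirect-product criterion \peq{sdc2} to the single compatibility identity, expand both sides using \eqref{tact1} and conjugation in $T^2_R(\Gc)$, and settle the comparison by the same Peiffer-relation manipulation (the paper establishes your key identity in the equivalent form $e\,\, g\trr R\,\, a\,\, g \trr R^{-1} = R\, a\, R^{-1}\, e$, using the 2nd Peiffer relation together with $\d(e)=[\d(R),g]$). Your reserve route via $\psi_R$ is also sound and is essentially the paper's own trick run in reverse: the paper deduces the $S^2(\Gc)$ case (Lem.\ \ref{compat}) from this lemma by taking $R=1_E$, whereas you would deduce this lemma from the easier $S^2(\Gc)$ verification.
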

%\noindent Note that $\t$ is not, in general, an action by automorphisms.
\begin{proof}
Cf. Lem. \ref{def:t} and \peq{sdc2}. We solely need to prove that, given $a \in E$ and $(g,\d(R),e)\in T^2_R(\Gc)$, then: $$a \t \big( (g,\d(R),e){\td} (h,\d(R),f)\big)= (g,\d(R),e){\td} \big( (g^{-1} \trr a) \t (h,\d(R),e)\big).$$
 We have:
\begin{align*}
a \t \big((g,\d(R),e){\td} (h,\d(R),f)\big)&=a \t (ghg^{-1}, \d(R),e\,\, g \trr f\,\, (ghg^{-1}) \trr e^{-1}) \\&=(\d(a)ghg^{-1},\d(R),R\, a\, R^{-1}\,\,  e\,\, g \trr f\,\, (ghg^{-1}) \trr e^{-1}\,\, a^{-1}\big).
\end{align*}
On the other hand:
\begin{align*}
(g,\d(R),e){\td}\big( (g^{-1} \trr a) \t  (h,\d(R),f)\big)&=(g,\d(R),e) {\td}  (g^{-1}\d(a)gh, R\,\ g^{-1} \trr a\,\,  R^{-1} \,f \,\, g^{-1} \trr a^{-1})\\
&=\big(\d(a)g h g^{-1}, e\,\, g \trr R \,\,a \,\,g\trr R^{-1} \,\, g \trr f\,\, a^{-1}\,\,\,(\d(a) ghg^{-1})\trr e^{-1}\big)\\
&\stackrel{(\#1)}{=}\big(\d(a)g h g^{-1}, e\,\, g \trr R \,\,a \,\,g\trr R^{-1} \,\, g \trr f\,\, (ghg^{-1})\trr e^{-1} \,\,a^{-1}\big)\\
&\stackrel{(\#2)}{=}(\d(a)ghg^{-1},R\, a\, R^{-1}\,\,  e\,\, g \trr f\,\, (ghg^{-1}) \trr e^{-1}\,\, a^{-1}\big).
\end{align*}
Here $(\#1)$ follows  by the 2nd. Peiffer rule {\eqref{PR}.} Step $(\#2)$ also follows from the 2nd Peiffer rule, since:
\begin{align*}
 e\,\, g \trr R \,\,a \,\,g\trr R^{-1}
&\stackrel{(\# 1)}=\d(e) \trr \big ( g \trr R \,\,a \,\,g\trr R^{-1} \big)\,\, e \stackrel{(\# 2)}=\big(\d(R) g \d(R)^{-1} g^{-1} \big)  \trr \big ( g \trr R \,\,a \,\,g\trr R^{-1} \big)\,\, e 
\\&\stackrel{(\# 3)}=(\d(R)g\d(R^{-1})) \trr \big ( R \,\,g^{-1} \trr a \,\, R^{-1} \big)\,\, e  \\
&\stackrel{(\# 4)}=R\,\, g\trr \big (R^{-1} R \,\,g^{-1} \trr a \,\, R^{-1} R\big) \,\,R^{-1}\,\, e  =R\,\, a \, R^{-1}\, e.
\end{align*}
In $(\#2)$   we used   $\d(e)=\d(R) g \d(R)^{-1} g^{-1}$. The 2nd Peiffer rule was used {in $(\# 1)$ and $(\# 4)$.}
\end{proof}
\begin{Definition}\label{comp4} Consider the group $S^2(\Gc)\ltimes_{\trr'}E=G \ltimes_\trr (A \times E)$, where $\trr$ is the product action of $G$ on  $A \times E$. By \peq{comp1} and Lem. \ref{defpsiphi}, we can define a group isomorphism $\psi_R^*\colon T^2_R(\Gc)\ltimes_{\trr'}E\to S^2(\Gc)\ltimes_{\trr'}E$, as: $$\psi_R^*(g,\d(R),e,a)=(\psi_R(g,\d(R),e),a), \textrm{ where } (g,\d(R),e)\in T^2_R(\Gc) \textrm{ and } a \in E.$$
\end{Definition}
\mdef\label{prstar} The inverse $\phi_R^*\colon  S^2(\Gc)\ltimes_{\trr'}E \to T^2_R(\Gc)\ltimes_{\trr'}E $ of $\psi_R^*$ is such that $\phi_R^*(g,K,a)=(\phi_R(g,K),a)$.
\begin{Lemma}\label{compat}
We have a left-action $\t$ of $S^2(\Gc)\ltimes_{\trr'}E=G \ltimes_\trr (A \times E)$, on $S^2(\Gc)$. It has the form:
\begin{equation}\label{tS2}
(g,J,a) \t (h,K)= a \t \big((g,J){\td} (h,K)\big) =(\d(a)ghg^{-1}, J\,\,g\trr K\,\, (ghg)^{-1} \trr J^{-1}\big).
\end{equation}
Moreover, given ${(g,J,a)} \in S^2(\Gc)\ltimes_{\trr'}E$ and  $(h,K)\in S^2(E)$, it holds that: 
$$\phi_R\big(   {(g,J,a)}\t (h,K)\big)=\phi_R^*  {(g,J,a)} \t \phi_R\big( h,K\big) .$$
\end{Lemma}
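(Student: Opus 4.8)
The plan is to dispatch the two assertions in turn, arranging the argument so that the intertwining identity carries most of the weight and the action axioms are inherited from Lemma \ref{refnow} by transport of structure. Concretely, I would first confirm the closed form in \eqref{tS2}, then prove the intertwining relation with $\phi_R$ essentially formally, and finally deduce that $\t$ is a genuine left-action without any further hand computation.

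First I would verify the explicit formula. Unfolding the definition $(g,J,a)\t(h,K)=a\t\big((g,J)\td(h,K)\big)$, the conjugation $\td$ in $S^2(\Gc)=G\ltimes_\trr A$ (using the group law $(g,K)(h,L)=(gh,K\,g\trr L)$) gives $(g,J)\td(h,K)=(ghg^{-1},\,J\,(g\trr K)\,(ghg^{-1})\trr J^{-1})$, and the $E$-action \eqref{tact2} then multiplies the first component on the left by $\d(a)$, producing the right-hand side of \eqref{tS2}. At this point I would invoke Lemma \ref{xmodprop}(b) to note that each of $J$, $g\trr K$ and $(ghg^{-1})\trr J^{-1}$ lies in $A=\ker(\d)$, so the result indeed lands in $S^2(\Gc)$; this $G$-stability of $A$ is the only place where the crossed-module axioms are used for the formula itself.

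Next I would prove the intertwining identity, which is almost purely structural. Pushing $\phi_R$ through the definition and using that $\phi_R$ is a group isomorphism (Lemma \ref{defpsiphi}), hence commutes with conjugation, together with the fact that $\phi_R$ preserves the $E$-actions (Lemma \ref{def:t}), I obtain $\phi_R\big((g,J,a)\t(h,K)\big)=\phi_R\big(a\t((g,J)\td(h,K))\big)=a\t\big(\phi_R(g,J)\td\phi_R(h,K)\big)$. On the other side, $\phi_R^*(g,J,a)=(\phi_R(g,J),a)$ by \peq{prstar}, so the definition \eqref{tT2} of the $T^2_R(\Gc)\ltimes_{\trr'}E$-action gives $\phi_R^*(g,J,a)\t\phi_R(h,K)=a\t\big(\phi_R(g,J)\td\phi_R(h,K)\big)$, and the two expressions coincide.

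Finally, the action property of $\t$ on $S^2(\Gc)$ follows by transport of structure: $\phi_R^*$ is a group isomorphism from $S^2(\Gc)\ltimes_{\trr'}E$ to $T^2_R(\Gc)\ltimes_{\trr'}E$ (Def. \ref{comp4}) and $\phi_R$ is a bijection $S^2(\Gc)\to T^2_R(\Gc)$ with inverse $\psi_R$ (Lemma \ref{defpsiphi}), so the identity just proved reads $(g,J,a)\t(h,K)=\psi_R\big(\phi_R^*(g,J,a)\t\phi_R(h,K)\big)$, exhibiting $\t$ as the conjugate by these bijections of the action from Lemma \ref{refnow}; such a transported operation is automatically an action. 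As an alternative one may check the semidirect-product compatibility \eqref{sdirectcomp} directly, where the single nontrivial point is the identity $(\d(a)ghg^{-1})\trr J^{-1}=(ghg^{-1})\trr J^{-1}$, valid by Lemma \ref{xmodprop}(c) since $J\in A$. I expect this $\d$-absorption (the centrality of $A$ swallowing a $\d$-factor in the $G$-action) to be the only genuine obstacle in the whole argument, and it is a mild one.
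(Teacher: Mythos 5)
Your proof is correct, but it runs the paper's logic in reverse and replaces its central computation by a structural argument. The paper first gets the action property from Lemma \ref{refnow} (either by repeating that computation, or by specialising it to $R=1_E$ via the isomorphism of \peq{thisgen}), and then proves the intertwining identity $\phi_R\big((g,J,a)\t(h,K)\big)=\phi_R^*(g,J,a)\t\phi_R(h,K)$ by brute force: both sides are expanded in coordinates using the explicit formulas for $\phi_R$, $\phi_R^*$ and the two actions, and the resulting words in $E$ are matched using the 2nd Peiffer relation and the centrality of $A=\ker(\partial)$ in $E$ (Lemma \ref{xmodprop}). You instead prove the intertwining identity first and make it essentially formal: since both actions are, by definition, ``conjugate, then apply the $E$-action'' (Equations \eqref{tS2} and \eqref{tT2}), since $\phi_R$ is a group isomorphism (Lemma \ref{defpsiphi}) and hence commutes with $\td$, and since $\phi_R$ intertwines the $E$-actions (Lemma \ref{def:t}), both sides collapse to $a\t\big(\phi_R(g,J)\td\phi_R(h,K)\big)$; the action property then comes free by transport of structure along the bijections $\phi_R$ and $\phi_R^*$, with no circularity since Lemma \ref{refnow} is proved independently. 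What your route buys is economy: all Peiffer-relation work is quarantined inside the previously proven Lemmas \ref{def:t} and \ref{refnow}, and no new coordinate computation is needed; your fallback verification of \eqref{sdirectcomp}, whose only nontrivial point is the absorption $(\d(a)ghg^{-1})\trr J^{-1}=(ghg^{-1})\trr J^{-1}$ from Lemma \ref{xmodprop}(c), is also sound and correctly identified. What the paper's computation buys is explicit coordinate expressions for both sides of the identity, in the same style as the later explicit descriptions of the W-bikoid $X^+_R$; your argument, being formal, produces no such formulas, but as a proof of the stated lemma it is complete.
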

\noindent Note that $\t$ is not, in general, an action by automorphisms.
\begin{proof}
That we have an action follows as in the proof of Lem. \ref{refnow}, or by using Lem. \ref{refnow} in the particular case $R=1_E$, given \peq{thisgen}. The other bits follows from concatenation of previous formulae. Here is a proof:
\begin{align*}
 \phi_R\big(   (g,J,{a})\t (h,K)\big)&=\phi_R\big ( \d({a})ghg^{-1}, J\,\,g\trr K\,\, (ghg)^{-1} \trr J^{-1}\big)\\ &=(\d({a}) ghg^{-1},{\d(R)}, R \,\, J\,\, g \trr K\,\, (ghg^{-1}) \trr J^{-1}\,\, (\d({a})ghg^{-1}) \trr R^{-1}\big)
\\ &=(\d({a}) ghg^{-1},{\d(R)}, R \,\, J\,\, g \trr K\,\, (ghg^{-1}) \trr J^{-1}\,\,{a}\,\, (ghg^{-1}) \trr R^{-1}\,\, {a}^{-1}\big).
\end{align*}
In the last step we used the 2nd {Peiffer relation} (see Def. \ref{xmod}). On the other hand:
\begin{align*}
\phi_R^*  (g,J,{a}) \t &\phi_R\big( h,K\big)=(g,\d(R),R\,\, J\,\, g \trr R^{-1},{a}) \t (h,\d(R), R\,\, K\,\, h\trr R^{-1})\\
&=(\d({a}) ghg^{-1},{\d(R)}, R\,\, {a}R^{-1} \,\, R\,\, J\,\, g \trr R^{-1}\,\, g \trr (R\,\, K\,\, h\trr R^{-1})\,\, (ghg^{-1}) \trr (R\,\, J\,\, g \trr R^{-1})^{-1}\,\, {a}^{-1})\\
&\stackrel{(\#1)}{=}(\d({a}) ghg^{-1},{\d(R)}, R\,\, {a}\,\, J \,\, g \trr K\,\, (ghg^{-1}) \trr J^{-1} \,\,(ghg^{-1})\trr R^{-1}\,\, {a}^{-1})\\
&\stackrel{(\#2)}{=}(\d({a}) ghg^{-1},{\d(R)}, R\,\, J \,\, g \trr K\,\, (ghg^{-1}) \trr J^{-1} \,\,{a}\,\,(ghg^{-1})\trr R^{-1}\,\, {a}^{-1}).
\end{align*}
Here $(\#1)$ follows by cancelling pairs of a group element product with  its inverse. And $(\#2)$ follows from the fact that $A$ is central in $E$ and $J,K\in A$; recall Lem. \ref{xmodprop}.
\end{proof}

\begin{Definition}\label{ts2} Cf. Def. \ref{de:ag}; $\TRANS(S^2(\Gc))$ is the action groupoid of the action $\t$ of $S^2(\Gc)\ltimes_\tp E$ on $S^2(\Gc)$. \end{Definition}

\begin{Definition}\label{tt2}
We let $\TRANS(T^2_R(\Gc))$ be the action groupoid of the action $\t$ of $T^2_R(\Gc)\ltimes_\tp E$ on $T^2_R(\Gc).$
\end{Definition}

 Arrows of  $\TRANS(S^2(\Gc))$ and of $\TRANS(T^2_R(\Gc))$ hence have the form below; see \eqref{tT2} and \eqref{tS2}:
\begin{align*}  
\Big ( (h,\d(R),f) \ra{(g,\d(R),e,a)}  (g,\d(R),e,a) \t {(h,\d(R),f)} \Big) \textrm{ and }
\Big ( (h,K) \ra{(g,J,{a})}  (g,J,{a}) \t (h,K)\Big).
\end{align*}
Here $g,h \in G$; $e,a,f \in E$ and $J,K \in A$. For an interpretation in terms of 2-fluxes of loop-particles see \S\ref{phys}.

\begin{Lemma}\label{iso} We have an isomorphism $\Phi_R\colon \TRANS(S^2(\Gc))\to \TRANS(T^2_R(\Gc))$, of groupoids.  On objects $\Phi_R$ is given by $\phi_R,$ see Lem. \ref{defpsiphi}. On morphisms  $\Phi_R$ takes the form {(see \peq{prstar}):}
$$\Phi_R\big ( (h,K) \ra{(g,J,{a})} (g,J,{a}) \t (h,K)\big)= \big(\phi_R( h,K) \ra{\phi_R^*(g,J,{a})} \phi_R^*(g,J,{a}) \t \phi_R(h,K)\big).$$
\end{Lemma}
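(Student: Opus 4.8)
The plan is to realise $\Phi_R$ as an isomorphism of action groupoids, assembled from two ingredients that are already in hand: the group isomorphism $\phi_R^*\colon S^2(\Gc)\ltimes_{\trr'}E\to T^2_R(\Gc)\ltimes_{\trr'}E$ of \peq{prstar} (equivalently Def.~\ref{comp4}), which acts on the \emph{decorating} groups, and the bijection $\phi_R\colon S^2(\Gc)\to T^2_R(\Gc)$ of Lem.~\ref{defpsiphi}, which acts on \emph{objects}. The single nontrivial fact needed to glue these together is the intertwining identity $\phi_R\big((g,J,a)\t(h,K)\big)=\phi_R^*(g,J,a)\t\phi_R(h,K)$, which is exactly the last assertion of Lem.~\ref{compat}; everything else is bookkeeping in the action-groupoid formalism of Def.~\ref{de:ag}.

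First I would check that $\Phi_R$ is well defined on arrows. An arrow of $\TRANS(S^2(\Gc))$ has the form $\big((h,K)\ra{(g,J,a)}(g,J,a)\t(h,K)\big)$, and its prescribed image is decorated by $\phi_R^*(g,J,a)$ with source $\phi_R(h,K)$. In the action groupoid $\TRANS(T^2_R(\Gc))$ (Def.~\ref{tt2}), the arrow with this source and this decoration necessarily has target $\phi_R^*(g,J,a)\t\phi_R(h,K)$; so well-definedness reduces to checking that this target coincides with $\Phi_R$ applied to the target $(g,J,a)\t(h,K)$ of the original arrow, that is, with $\phi_R\big((g,J,a)\t(h,K)\big)$. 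This is precisely Lem.~\ref{compat}, so there is nothing further to prove here.

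Next I would verify functoriality. Given composable arrows, say $\alpha=\big(x\ra{p}p\t x\big)$ and $\beta=\big(p\t x\ra{q}(qp)\t x\big)$ with $x\in S^2(\Gc)$ and $p,q\in S^2(\Gc)\ltimes_{\trr'}E$, the composition rule of Def.~\ref{de:ag} gives $\alpha\star\beta=\big(x\ra{qp}(qp)\t x\big)$, so $\Phi_R(\alpha\star\beta)$ is decorated by $\phi_R^*(qp)$. On the other hand $\Phi_R(\alpha)\star\Phi_R(\beta)$ is decorated by $\phi_R^*(q)\,\phi_R^*(p)$ (again by Def.~\ref{de:ag}), and these two decorations agree because $\phi_R^*$ is a group homomorphism; the sources and targets match by the same application of Lem.~\ref{compat} used above, with the group element taken to be $p$, $q$, and $qp$ in turn (this is also what guarantees $\Phi_R(\beta)$ is composable after $\Phi_R(\alpha)$, since the target $\phi_R^*(p)\t\phi_R(x)$ of $\Phi_R(\alpha)$ equals $\phi_R(p\t x)$). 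Preservation of identities is immediate, as $\phi_R^*$ sends the unit to the unit and the unit acts trivially. The only point requiring care here is to keep the composition convention of Def.~\ref{de:ag} — in which decorations multiply in the reversed order $qp$ — aligned with the fact that $\phi_R^*$ is a homomorphism.

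Finally, $\Phi_R$ is an isomorphism of groupoids. It is a bijection on objects because $\phi_R$ is (Lem.~\ref{defpsiphi}), and since the arrows of an action groupoid (Def.~\ref{de:ag}) are indexed precisely by pairs (source object, decorating group element), the map $\Phi_R$ on arrows is literally $\phi_R$ on sources together with $\phi_R^*$ on decorations; as both factors are bijections, $\Phi_R$ is a bijection on arrows, with the inverse functor built in the same manner from $\psi_R$ and $\psi_R^*$. Thus the only genuinely substantive step, the equivariance of $\phi_R$ under the two conjugation-type actions, has already been dispatched in Lem.~\ref{compat}, and what remains is the routine verification that an equivariant bijection of objects paired with a homomorphism of decorating groups assembles into an isomorphism of action groupoids.
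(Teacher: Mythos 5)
Your proposal is correct and takes essentially the same approach as the paper: the paper's own proof of Lem.~\ref{iso} consists of the single sentence that it ``follows by combining Lem.~\ref{defpsiphi}, Lem.~\ref{compat} and Def.~\ref{comp4},'' which is precisely your decomposition (bijectivity of $\phi_R$ on objects, the homomorphism property of $\phi_R^*$ on decorations, and the equivariance identity of Lem.~\ref{compat} as the glue). You have simply made explicit the action-groupoid bookkeeping that the paper leaves to the reader.
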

\begin{proof}
 Follows by combining Lem. \ref{defpsiphi}, Lem \ref{compat} and  Def.  \ref{comp4}. 
\end{proof}
  \subsubsection{A W-\biker\ structure on  $\TRANS(S^2(\Gc))$}\label{bs2}
Let $\Gc=(\d\colon E \to G,\trr)$ be a crossed module. We put $\overline{g}$ to denote $g^{-1}$. {Recall \eqref{tS2}, \peq{sdc2} and Def. \ref{ts2}.}

\begin{Theorem}\label{bik-proof}
We have  a W-\biker\ structure $X^+_{gr^*}$ in the groupoid
$\TRANS(S^2(\Gc))$, such that:
\begin{equation}\label{X2}
\xymatrix@R=1pt{\\\\ X^+_{gr^*}\Big((z,J),(w,K)\Big)=}\hskip-1cm
\xymatrix{& (z,J)  \ar[dr]|\hole \ar[dr]|<<<<<{\overline{w}\,\,\,\,\bullet
    \quad\,\,}|\hole
  & (w,K)  \ar[dl]\ar[dl]|<<<<<{\quad\, \, \,\bullet \,\,\,\overline{w}\trr \overline{J}}\\
  & (\overline{w}\trr\overline{J})\t (w,K) & \overline{w} \t (z,J)  }
\raisebox{-.31in}{$
, 
\textrm{ where }\begin{cases} \overline{w} = (w^{-1},1_A,1_E) \\ 
                \overline{w}\trr \overline{J} =    (1_G,w^{-1} \trr J^{-1},1_E)    
\end{cases}
$}\xymatrix@R=1pc{\\ .}
\end{equation}
{(Note that $ \overline{w}\trr \overline{J}=\big(\beta\big(\Theta(w,K)^{-1} \t (z,J) \big),1_E\big)$; see \peq{defbeta}.)  Equation {\eqref{X2}} can  be written as (recall \peq{rrrn}):}
\begin{equation}\label{X2p}
\xymatrix@R=1pt{\\\\ X^+_{gr^*}\Big((z,J),(w,K)\Big)=}\hskip-1cm\xymatrix@R=48pt{& (z,J)  \ar[dr]|\hole \ar[dr]|<<<<<<<<<{(\Theta(w,K)^{-1},1_E)\,\,\,\,\bullet\,\,
    \quad\qquad\qquad\,\,\,\,}|\hole
  & (w,K)  \ar[dl]\ar[dl]|<<<<<{\quad \quad \quad \quad \quad \quad \quad \,\,\,\,\,\,\,\,\,\, \,\,\,\,\bullet\,\,\,\, (\Theta(w,K)^{-1} \t (z,J)^{-1},1_E)}
  |<<<<<<<<<<<<<<<{\quad \quad\quad \quad \quad \quad \quad \quad \,\,\,\,\,\,\,\,\,\,\,\,\,\,\,\,\,\,\bullet\,\,\,\,\,\,\,\,\,\, \big(\Theta\big(\Theta(w,K)^{-1} \t (z,J)\big),1_E\big)}
  \\
  & \beta\big(\Theta(w,K)^{-1} \t (z,J)\big)\t (w,K) & \Theta(w,K)^{-1} \t (z,J)  }\xymatrix@R=1pt{\\\\\\\ .}
\end{equation}

Explicitly, the underlying W-birack of $X^+_{gr^*}$ is such that (where $z,w \in G$ and $K,L \in A=\ker(\partial)$):
\begin{equation}\label{XR2}
\begin{split}
(z,J)\backslash (w,K)&=\overline{w} \t (z,J)=(w^{-1},1_A,1_E) \t (z,J)=\big({w^{-1}zw},w^{-1} \trr J\big),
 \\
(w,K) / (z,J)&=(\overline{w}\trr\overline{J})\t (w,K)=(1_G, w^{-1} \trr J^{-1},1_E)\t (w,K)= (w, w^{-1} \trr J^{-1} \,\, K \,\, J).
\end{split}
\end{equation}
And, as the diagram in \eqref{X2} indicates, the holonomy morphisms in the W-bikoid $X^+_{{gr^*}}$ in \eqref{X2} are:
\begin{align*}
L\big ((z,J),(w,K))&=\Big((z,J)\ra{(w^{-1},1_A,1_E)} (w^{-1},1_A,1_E) \t (z,J) \Big),\\ 
R\big ((z,J),(w,K))&=\Big((w,K)\ra{(1_G,w^{-1} \trr J^{-1},1_E) }(1_G,w^{-1} \trr J^{-1},1_E) \t (w,K)\Big).
\end{align*}
\end{Theorem}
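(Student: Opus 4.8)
The plan is to verify directly that the holonomy arrows $L$ and $R$ written at the end of the statement satisfy the axioms of Def. \ref{de:biker} and Def. \ref{de:wbiker}. First I would read off source and target from the action \eqref{tS2}: the arrow $L((z,J),(w,K))$ carries the label $(w^{-1},1_A,1_E)$, so its source is $(z,J)$ and its target is $(w^{-1},1_A,1_E)\t(z,J)=(w^{-1}zw,w^{-1}\trr J)$, while $R((z,J),(w,K))$ carries $(1_G,w^{-1}\trr J^{-1},1_E)$, has source $(w,K)$, and target $(1_G,w^{-1}\trr J^{-1},1_E)\t(w,K)=(w,w^{-1}\trr J^{-1}\,K\,J)$. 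This reproduces the underlying birack \eqref{XR2}, so nothing is left free to choose.

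The decisive simplification is that both labels have trivial $E$-component, and the product $(p,K,1_E)(q,L,1_E)=(pq,K\,p\trr L,1_E)$ in $S^2(\Gc)\ltimes_\tp E=G\ltimes_\trr(A\times E)$ keeps the $E$-component trivial. Consequently every holonomy arrow, and every composite of holonomy arrows appearing in \eqref{compbikoids} and \eqref{weldedtoR}, lies in the wide subgroupoid $\AUT(S^2(\Gc))\subset\TRANS(S^2(\Gc))$, on which $\t$ is just conjugation $\td$ in $S^2(\Gc)$. Thus the whole bikoid structure depends only on the abelian $gr$-group $(G,A)$ and not on $E$ or $\d$; the ambient groupoid $\TRANS(S^2(\Gc))$, with its $E$-action, is needed only afterwards, to transport $X^+_{gr^*}$ to $X^+_R$ on $\TRANS(T^2_R(\Gc))$ along the isomorphism of Lem. \ref{iso} (cf. \peq{transport}). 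It is cleanest to record the labels intrinsically as $\ell(x,y)=(\Theta(y)^{-1},1_E)$ and $r(x,y)=(\beta(x\backslash y),1_E)$, as in the remarks around \eqref{X2} and \eqref{X2p}, so that $x\backslash y=\Theta(y)^{-1}\,x\,\Theta(y)$ is honest conjugation and $\beta(g,K)=(1_G,K^{-1})$.

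With this reduction the remaining checks are computations in $S^2(\Gc)$. (i) I would confirm that \eqref{XR2} is a W-birack: invertibility of the maps $f_a,f^a$ and of the switch of Def. \ref{de:birack} is immediate from the closed forms, since $A$ is abelian and $G$-stable (Lem. \ref{xmodprop}), and the welded birack identities \eqref{eq:wba} hold because $/$ leaves the $G$-component unchanged. (ii) For the hexagon \eqref{compbikoids}, recall that in an action groupoid the label of a composite is the product of the second label with the first (Def. \ref{de:ag}); writing $x=(g_1,J_1)$, $y=(g_2,J_2)$, $z=(g_3,J_3)$, the first (pure-$L$) identity becomes $\Theta(z)^{-1}\Theta(y)^{-1}=\Theta(y\backslash z)^{-1}\Theta(z/y)^{-1}$, true because $\Theta$ only reads $G$-components and $/$ preserves them, its two targets agreeing by the Yang--Baxter shadow $(x\backslash y)\backslash z=(x\backslash(z/y))\backslash(y\backslash z)$ of \eqref{eq:stybe}; the third (pure-$R$) identity is an equality of $\beta$-type labels that collapses, by commutativity of $A$, to an identity among $G$-translates in $A$. (iii) For the welded identities \eqref{weldedtoR}, the first two, $L(x,z)=L(x,z/y)$ and $L(y,z)=L(y,z/x)$, hold because $\Theta(z)=\Theta(z/y)$, and the third, $R(x,z)\star R(y,z/x)=R(y,z)\star R(x,z/y)$, reduces to $g_3^{-1}\trr(J_2^{-1}J_1^{-1})=g_3^{-1}\trr(J_1^{-1}J_2^{-1})$, which is exactly where $A$ abelian is used.

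I expect the main obstacle to be the middle, mixed line of \eqref{compbikoids}, namely $R(x,y)\star L(y/x,z/(x\backslash y))=L(y,z)\star R(x\backslash(z/y),y\backslash z)$, where a $\beta$-type label must be pushed past a $\Theta$-type label. The bookkeeping is heaviest here because the nested $\backslash$-operations feed $G$-translated $A$-elements into one another, so that one must use the action law $p\trr(q\trr J)=(pq)\trr J$ together with the centrality of $A$ in $E$ (Lem. \ref{xmodprop}); concretely both sides should collapse to $(g_3^{-1},(g_3^{-1}g_2^{-1})\trr J_1^{-1})$ in $S^2(\Gc)$. Since this computation is formally the one for the abelian $gr$-group bikoid, carrying it out simultaneously supplies the proof of Thm. \ref{AbelianGRgroups} that was postponed; and if a more transparent account is wanted, the whole verification can be redrawn diagrammatically in $\gwrr{3}$, exactly parallel to \eqref{dIA1}--\eqref{dIA2} in Prop. \ref{groupalg}.
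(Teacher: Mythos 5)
Your proof is correct, and its computational core coincides with the paper's: both arguments reduce the bikoid axiom \eqref{compbikoids} and the welded axiom \eqref{weldedtoR} to label identities in an action groupoid, with commutativity of $A=\ker(\partial)$ (Lem. \ref{xmodprop}) clinching the welded Reidemeister III identity, exactly as you predict. The packaging differs. The paper draws the two sides of each move as diagrams \eqref{d1}--\eqref{d2} and \eqref{d3}--\eqref{d4} in the wreath product groupoid (graphical calculus of \S\ref{ss:msgc}) and checks, strand by strand, that the products of labels in $S^2(\Gc)\ltimes_{\trr'}E$ agree; equality of the diagrams then yields the birack property (equal bottom rows) and the bikoid property (equal morphisms) \emph{simultaneously}. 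You instead make explicit, at the outset, that all holonomy labels have trivial $E$-component, so that everything happens in the wide subgroupoid $\AUT(S^2(\Gc))\subset\TRANS(S^2(\Gc))$ on which $\t$ is conjugation; this is precisely the inclusion ${\rm Inc}$ of Rem. \ref{refA}, which the paper records only \emph{after} the proof. Your organization buys two things the paper leaves implicit: the structure visibly depends only on the abelian $gr$-group $(G,A)$ and not on $E$ or $\d$, and the same computation supplies the postponed proof of Thm. \ref{AbelianGRgroups}. Your explicit collapses (e.g.\ both sides of the mixed identity equalling $(g_3^{-1},(g_3^{-1}g_2^{-1})\trr J_1^{-1})$, and the pure-$R$/welded identities reducing to $g_3^{-1}\trr(J_2^{-1}J_1^{-1})=g_3^{-1}\trr(J_1^{-1}J_2^{-1})$) match the paper's strand-by-strand computations.

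One clause should be repaired. In your step (ii) you justify the agreement of the two targets of the pure-$L$ identity ``by the Yang--Baxter shadow of \eqref{eq:stybe}''; but at that point axiom 3 of Def. \ref{de:birack} has not been established (your step (i) checks only invertibility and \eqref{eq:wba}), so as written this is circular. The fix is immediate, and is how the paper argues: in an action groupoid two morphisms are equal as soon as their sources and labels agree (Def. \ref{de:ag}), the targets then agreeing automatically; so the three label identities you verify already give the three morphism equalities of \eqref{compbikoids}, and the set-theoretic Yang--Baxter relation for \eqref{XR2} falls out as a corollary rather than being needed as an input. A smaller point of the same kind: your one-line reason for \eqref{eq:wba} (``$/$ leaves the $G$-component unchanged'') suffices for the second and third identities there, but the first, $(z/x)/y=(z/y)/x$, also uses commutativity of $A$.
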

\begin{Remark}The form \eqref{X2p} for the \biker \ $X^+_{gr^*}$ is useful (see \S\ref{mainbiker}) for transporting \peq{transport}  $X^+_{gr^*}$ in \eqref{X2} to  $\TRANS(T^2_R(\Gc))$. Equation \eqref{X2p} for \ $X^+_{gr^*}$  also makes the  interpretation of the associated representations \S\ref{representations} of the loop braid group as observables in topological higher gauge theory in $D^3$ more transparent; see \S\ref{phys}.
\end{Remark}
\begin{Remark}A more compact formulation of $X^+_{gr^*}$ in \eqref{X2}, in additive notation, can be found in \eqref{addX2}.
\end{Remark}

\noindent\begin{proof}
That  \eqref{X2} and \eqref{X2p} coincide follows from \peq{defTheta} and \peq{defbeta}. 

Let us prove that $X^+_{gr^*}$ in \eqref{X2} is a \biker. We freely use Defs. \ref{de:birack} and \ref{de:biker}.
First of all note that the maps:
\begin{align*}
{f^{(w,K)}}\colon (z,J) \in S^2(\Gc) &\longmapsto (z,J)\backslash (w,K)=(w^{-1},1_A,1_E) \t (z,J)=\big(w^{-1}zw,w^{-1} \trr J\big)\in S^2(\Gc), \\
{f_{(z,J)}}\colon (w,K) \in S^2(\Gc) & \longmapsto   (w,K)/(z,J) =(1_G,w^{-1} \trr J^{-1},1_E) \t (w,K)=  (w, w^{-1} \trr J^{-1} \,\, K \,\, J)\in S^2(\Gc) 
\end{align*}
each are bijections.
Inverses are given by:
\begin{align*}
\overline{{f^{(w,K)}}}\colon  (z',J') \in S^2(\Gc) &\longmapsto (w,1_A,1_E) \t (z',J')=(wz'w^{-1},w \trr J')\in S^2(\Gc), \\
\overline{{f_{(z,J)}}}\colon (w',K') \in S^2(\Gc) &\longmapsto  (1_G,{w'}^{-1} \trr J,1_E)  \t (w',K')=(w',{w'}^{-1} \trr J\,\, K'\,\,{J^{-1}}) \in S^2(\Gc).
\end{align*}

On the other hand, the map below is also invertible: $$S\colon \big((z,J),(w,K)\big) \in S^2(\Gc)\times S^2(\Gc) \mapsto \big((w,K)/(z,J),(z,J)\backslash (w,K)\big) \in S^2(\Gc)\times S^2(\Gc), $$
and its inverse is:
$$\big((w',K'),(z',J')\big)  \mapsto \big((w',1_A,1_E)\t (z',J'),(1_G,J',1_E)\t (w',K')\big)=\big((w'z'{w'}^{-1}, w'\trr J'),(w',J'\,\,K'\,\,w'\trr {J'}^{-1})\big).
$$

That  $\big((z,J),(w,K)\big) \in S^2(\Gc)\times S^2(\Gc) \mapsto \big((w,K)/(z,J),(z,J)\backslash (w,K)\big) \in S^2(\Gc)\times S^2(\Gc) $ in {Equation} \eqref{XR2} is a birack, and that {$X^+_{gr^*}$ in \eqref{X2}} is a \biker,  follows from comparing the diagrams \eqref{d1} and \eqref{d2} in 
$\TRANS\big(T^2_R(\Gc)\big)^3 \rtimes \Sigma_3$, where $(z,J),(w,K)$ and $(t,L)$ are general elements of $S^2(\Gc)$.
\begin{equation}\label{d1}
\hskip-2cm
\xymatrix@R=25pt@C=5pt{ 
  & (z,J) \ar[dr]|<<<<<<<{\overline{w}\,\,\,\bullet\quad }|\hole\ar[dr]|\hole
  & (w,K)\ar[dl]\ar[dl]|<<<<<<{\quad \,\,\bullet \,\,\,\overline{w}\trr \overline{J}} & (t,L)\ar[d]  \\
& (\bw \trr \bJ) \t (w,L)\ar[d] 
  &     \bw \t (z,J)
  \ar[dr]|<<<<{\,\,\overline{t}\,\,
   \,\,\, \bullet\quad}|\hole\ar[dr]|\hole
  & (t,L)\ar[dl]|<<<<<<<{\qquad \,\,\,\,
   \,\, \bullet\,\, \,\,(\overline{t}\,\overline{w}) \trr \overline{J}}\ar[dl] \\
& (\bw \trr \bJ) \t (w,K)
  \ar[dr]|<<<<<{\overline{t}\,\,\,\,\,\,\bullet\,\,\,\,}|\hole \ar[dr]|\hole
  & \big((\bt \bw) \trr \bJ\big)\t (t,L) \ar[dl]|<<<<<<{\qquad \quad\,\,\,\,\,\, \,\, \qquad \bullet \,\,\,\,\,\overline{t}\trr \bJ\,\,\overline{t}\trr \bK\,\,(\overline{t}\,\overline{w}) \trr J}\ar[dl] & (\bt\bw)\t (z,J)\ar[d] \\
  &(\overline{t}\trr \bJ\,\,\overline{t}\trr \bK)\t (t,L)  
& (\bt \,\, (\bw\trr \bJ)\trr (w,L)& (\bt\bw)\t (z,J)}\qquad\quad\xymatrix@R=1pt{
\bw=(w^{-1},1_A,1_E)\\
\bw \trr \bJ=(1_G,w^{-1} \trr J^{-1},1_E)\\
\overline{t}=(t^{-1},1_A,1_E)\\
(\bt\,\bw)\trr \bJ=(1_G,(t^{-1}w^{-1})\trr J^{-1},1_E)\\
\overline{t}\trr \bJ=(1_G, t^{-1}\trr J^{-1},1_E)\\
\overline{t}\trr \bK=(1_G, t^{-1} \trr K^{-1},1_E)
\\
(\overline{t}\,\overline{w}) \trr J=(1_G, (t^{-1}w^{-1})\trr J,1_E)
}\xymatrix@R=1pt{\\\\\\\\\\\\ .}
\end{equation}  
{(Recall the convention in Def. \ref{de:ag} for the composition in $\TRANS\big(T^2_R(\Gc)\big)$.} Note:
\begin{align*} 
\bw \t (z,J)&=(w^{-1},1_A,1_E)\t  (z,J)=(w^{-1}zw,w^{-1} \trr J),\\
(\bw\trr\bJ)\t (w,K)&=(1_G,w^{-1}\trr J^{-1},1_E) \t (w,K)
= (w,w^{-1}\trr J^{-1}\,\, K\,\, J)\\
\big(1_G,t^{-1}\trr (w^{-1}\trr J^{-1}\,\, K\,\, J)^{-1}\big)&=(1_G,t^{-1}\trr J^{-1}\,\, t^{-1} \trr K^{-1}\,\,(t^{-1}w^{-1}) \trr J\big)=\overline{t}\trr \bJ\,\,\overline{t}\trr \bK\,\,(\overline{t}\,\overline{w}) \trr J.
\end{align*}
And {(for the other side of the Reidemeister III move}):
\begin{equation}\label{d2}
\hskip-1.5cm
\xymatrix@R=25pt@C=5pt{
  & (z,J)\ar[d] 
  & (w,K)\ar[dr]|<<<<{\overline{t}\,\,\,\,\,\,\bullet\,\,\,\,\,}|\hole\ar[dr]|\hole
  & (t,L)\ar[dl]|<<<<<<{\,\,\,\,\,\,\,\,\,\,\,\bullet \,\,\,\,\,\overline{t}\trr \bK}\ar[dl]
  \\
  & (z,J)\ar[dr]|<<<<<<<{\overline{t}\,\,\,\,\,\,\,\,\bullet\,\,\,\,\,\,}|\hole\ar[dr]|\hole
  & (\bt \trr\bK)\t(t,L)\ar[dl]|<<<<<<{\,\,\,\,\,
    \,\,\,\,\,\bullet \,\,\,\,\,\,\,\overline{t}\trr \bJ}|\hole\ar[dl] & \bt \t (w,K)\ar[d] \\
& \big ((\bt \trr \bJ) (\bt \trr \bK)\big)\t (t,L)\ar[d]  
  & \bt \t (z,J)\ar[dr]|<<<<<{\overline{t}\,\overline{w}\, t\,\,\bullet\,\, \quad
    \,\,}|\hole\ar[dr]|\hole &  \bt \t(w,K)\ar[dl]\ar[dl]|<<<<<{\,\,\quad
    \quad \,\,\bullet\, \, \,\, (\bt \bw)\trr \bJ} \\
&  \big ((\bt \trr \bJ) (\bt \trr \bK)\big)\t (t,L)
  & \big((\bt \bw )\trr \bJ \,\,\bt\big)\t (w,K)
    & (\bt \bw) \t (z,a)
    } \quad\,\,\,\xymatrix@R=1pt{
\bw=(w^{-1},1_A,1_E)\\
%(\bw \trr \bJ)=(1_G,w^{-1} \trr J^{-1},1_E)\\
\overline{t}=(t^{-1},1_A,1_E)\\
{t}=(t,1_A,1_E)\\
(\bt\,\bw)\trr \bJ=(1_G,(t^{-1}w^{-1})\trr J^{-1},1_E)\\
\overline{t}\trr \bJ=(1_G, t^{-1}\trr J^{-1},1_E)\\
\overline{t}\trr \bK=(1_G, t^{-1} \trr K^{-1},1_E)
\\
(\overline{t}\,\overline{w}) \trr \bJ=(1_G, (t^{-1}w^{-1})\trr J^{-1},1_E)
}\xymatrix@R=1pc{\\ \\ \\.}
\end{equation}
Note:
\begin{align*}
&\bt\t (w,K)=(t^{-1}wt,t^{-1} \trr K),
&&\bt\t (z,J)=(t^{-1}zt,t^{-1} \trr J).
%,\\
%(1_G,(t^{-1}wt)^{-1}\trr t^{-1}\trr J)&= (1_G,(t^{-1}w^{-1})\trr J)= (\overline{t}\,\overline{w}) \trr J.
\end{align*}
%Comparing \eqref{d1} and \eqref{d2}, 
{We can easily see} that the group elements in $S^2(\Gc)\ltimes_\tp E$ associated to each strand in the {diagrams in} \eqref{d1} and \eqref{d2} coincide. Namely in the first strand we have $\bt\,\bw=\bt\,\bw\, t\, \bt$; in the second strand we have: 
$$\bt\, (\bw \trr \bJ) = (t^{-1},1_A,1_E) \,(1_G,w^{-1} \trr J^{-1},1_E)=(1_G,(t^{-1}w^{-1}) \trr J^{-1},1_E)\,  (t^{-1},1_A,1_E)=\big((\bt\bw)\trr \bJ)\,\, \bt. $$
And in the third strand we have:
$(\overline{t}\trr \bJ\,\,\overline{t}\trr \bK\,\,(\overline{t}\,\overline{w}) \trr J)\,\, (\overline{t}\,\overline{w}) \trr \bJ)=\overline{t}\trr \bJ\,\,\overline{t}\trr \bK. $

Hence given any $(z,J),(w,K),(t,L)\in S^2(\Gc)$, then diagrams \eqref{d1} and \eqref{d2} in 
$\TRANS\big(T^2_R(\Gc)\big)^3 \rtimes \Sigma_3$ coincide. Hence, by construction, so do the bottom lines of  \eqref{d1} and \eqref{d2}, thus it  follows that $(S^2(\Gc),/,\backslash)$ in \eqref{XR2} is a birack. And then the equality of  \eqref{d1} and \eqref{d2} mean exactly that $X^+_{{gr^{*}}}$ in \eqref{X2} is a \biker.

That the \biker\  is welded (Def. \ref{de:wbiker}), follows in exactly the same way, by comparing the two diagrams below \eqref{d3} and \eqref{d4} in $\TRANS(S^2(\Gc))^3 \rtimes \Sigma_3$, where $(z,J),(w,K),(t,L)\in S^2(\Gc)$:
\begin{equation}\label{d3}
\hskip-2cm
\xymatrix@R=19pt@C=5pt{ & (z,J) \ar[dr]  & (w,K)\ar[dl]\ar[dl] & (t,L)\ar[d]  \\
  & (w,K) \ar[d] &      \big(z,J\big)
  \ar[dr]|<<<{\overline{t}\,\,\,\,\bullet\,\,\,\,}|\hole\ar[dr]|\hole &
  (t,L)\ar[dl]|<<<{\quad\,\,\, 
    \bullet\,\,\,\, \overline{t}\,\trr \bJ}\ar[dl] \\
  & (w,K) \ar[dr]|<<<<{\overline{t}\,\,\,\,\,\,\,\bullet\,\,\,\,\,}|\hole
  \ar[dr]|\hole & (\bt \trr \bJ) \t (t,L)\ar[dl]|<<<{\quad
   \,\,\,\,\, \bullet \,\,\,\,\,\,\,\overline{t}\trr \bK}\ar[dl] &\bt \t (z,J)\ar[d] \\
  & (\bt \trr \bK\,\,\bt \trr \bJ) \t (t,L) & {\bt \t (w,K)} & \bt\t (z,J) }
\quad\xymatrix@R=1pt{
\overline{t}=(t^{-1},1_A,1_E)\\
\overline{t}\trr \bJ=(1_G, t^{-1}\trr J^{-1},1_E)\\
\overline{t}\trr \bK=(1_G, t^{-1} \trr K^{-1},1_E)
}\xymatrix@R=1pc{\\ \\.}
\end{equation}
and:
\begin{equation}\label{d4}
\xymatrix@R=19pt@C=14pt{
  &(z,J)\ar[d]
  & (w,K)\ar[dr]|<<<<{\overline{t}\,\,\,\,\,\,\bullet\,\,\,\,}|\hole\ar[dr]|\hole
  & (t,L)\ar[dl]|<<<<<<{\quad \,\,\,\,\bullet \,\,\,\,\,\,\,\overline{t}\trr \bK}\ar[dl]  \\
  &(z,J)\ar[dr]|<<<<<<{\overline{t}\,\,\,\,\,\,\,\,\bullet\,\,\,\,\,\,}|\hole\ar[dr]|\hole
  & {(\overline{t}\trr\bK)}\t (t,L)\ar[dl]|<<<<<{ \,\,\,\,\,\,\,\,\,\ \bullet\,\,\,\,\,\, \,\,\overline{t}\trr \bJ}|\hole\ar[dl] & \overline{t}\t (w,K)\ar[d] \\
  & {(\bt \trr \bJ\,\, \bt \trr \bK) \t (t,L)}\ar[d]
  & \bt \t (z,J)\ar[dr]\ar[dr]|\hole & \bt \t (w,K)\ar[dl]\\
  &{(\bt \trr \bJ\,\, \bt \trr \bK) \t (t,L)}  
  &  \bt \t (w,K)
  &  \bt \t (z,J)
,  }\quad\xymatrix@R=1pt{
\overline{t}=(t^{-1},1_A,1_E)\\
\overline{t}\trr \bJ=(1_G, t^{-1}\trr J^{-1},1_E)\\
\overline{t}\trr \bK=(1_G, t^{-1} \trr K^{-1},1_E)}\xymatrix@R=1pc{\\ \\.}
\end{equation}
Morphisms \eqref{d3} and \eqref{d4} in $\TRANS(S^2(\Gc))^3 \rtimes \Sigma_3$ coincide, for each $(z,J),(w,K),(t,L)\in S^2(\Gc)$. This follows since $\overline{t}\trr \bJ\,\, \overline{t}\trr \bK = \overline{t}\trr \bK \,\, \overline{t}\trr \bJ$, as the group operation in $A=\ker(\partial)$ is commutative; see Lem. \ref{xmodprop}. 
\end{proof}

\begin{Remark}\label{refA}{Let $\Gc=(\d\colon E \to G,\trr)$ be a crossed module. We have an abelian $gr$-group $(G,A=\ker(\partial))$ \peq{grt}. Write $A$ in multiplicative notation. The W-\biker\ $X^+_{gr^*}$ {is} obtained  from the W-bikoid $X^+_{gr}$ in \eqref{kau-mar}, {and {\it vice versa}}. {We have} an inclusion morphism  of groupoids
${\rm Inc}\colon \TRANS(G,A) \to \TRANS(S^2(\Gc))$, with:}
 $${\rm Inc} \Big((h,K) \ra {(g,J)} (g,J)  (h,K) (g,J)^{-1}\Big)= \Big((h,K) \ra {(g,J,1_E)} (g,J,1_E)\t  (h,K)\Big).$$
This {inclusion} is bijective on objects. And then $X_{gr^*}^+$ is obtained by transporting \peq{transport} $X_{gr}^+$ along ${\rm Inc}$.

{In particular, \S \ref{bandh} gives a topological interpretation for the existence of the W-bikoid $X^+_{gr^*}$.}
\end{Remark}
\begin{Remark} {(Cf. \peq{grt} \peq{grtox} and Rem. \ref{refA}.)} The representations of {$\WBG_n$} derived from  $X^+_{gr^*}$ do not appear to encode more information that those derived from $X^+_{gr}$. The extra degree of structure appearing in the underpinning groupoid $\TRANS(S^2(\Gc))$ of  $X^+_{gr^*}$ is {\peq{lastc}} related to gauge transformations betwen 2-connections. Hence  $X^+_{gr^*}$ is likely more fundamental than $X^+_{gr}$. From a strictly mathematical sense, this extra degree of freedom  that $\TRANS(S^2(\Gc))$ has in comparison with  $\TRANS(G,A)$ is essential when addressing invariants of welded knots derived from crossed modules, by using $X^+_{gr^*}$. This will be addressed in \cite{prox}.
\end{Remark}

% \begin{remark}
%  The welded knot invariants associated to the underlying birack of $X^+_{gr^*}
% \end{remark}

% 
% \begin{Example}Let $G$ be a group. We put $E=G$. This gives rise to a crossed module $\Gc=(\id\colon G \to G,{\td})$, where ${\td}$ is the action of $G$ on $G$ by conjugation. In this case $A=\{1_E\}$. The arrows of $\TRANS(\Gc)$ are:
% $$\big ((g,1_E,e)^{-1} \t (h,1_E)\ra{(g,1_E,e)}(h,1_E)\big)=\big ((g^{-1}\d(e)^{-1}hg,1_E)\ra{(g,1_E,e)}(h,1_E)\big).$$
% Let us look at the upper birack of $X^+_{gr^*}$. 
% \end{Example}
% 
% 

\begin{Remark}\label{whyto-1}
{Let $\Gc=(\d\colon E \to G,\trr)$ be a crossed module. The underlying W-birack \eqref{XR2} of $X^+_{gr^*}$  is  studied in \cite{martins_kauffman}. {In the cases in Examples \ref{gr1} and \ref{gr2}, this W-birack yields non-trivial invariants of welded knots that see beyond their knot groups; see also \cite{martins_tams}.} This indicates that the representations of {$\WBG_n$} derived from $\Gc=(\d\colon E \to G,\trr)$ carry more topological information than those derived  (see \peq{deQD}, \peq{deQD-1}) from $G$ alone.} 
\end{Remark}

\begin{Example}\label{whyto-2} {Continuing Rem. \ref{whyto-1}.} Consider the crossed module $\Gc(\Z_2,\Z_3)$ derived \peq{grtox} from the abelian $gr$ group in {Example \ref{gr2}. Let $V$ be the right-regular representation of  {$\C\big(\TRANS(S^2(\Gc(\Z_2,\Z_3)))\big)$}; {Example}  \ref{rre}. Let  $V_{{\rm OBJ}}$ be the object-regular representation of 
{$\C\big(\TRANS(S^2(\Gc(\Z_2,\Z_3)))\big)$}; see {Example}  \ref{obre}. Let also $U$ and $U_{{\rm OBJ}}$ be the right-regular and object-regular representations of $\C(\AUT(\Z_2))$; see {Example}  \ref{autg}. Consider the representations of $\WBG_2$ derived from the W-bikoid $X^+_{gr*}$ in \eqref{X2}, on $V \tn V$ and on $V_{{\rm OBJ}}\tn V_{{\rm OBJ}}$; see Thm. \ref{rep12} or Thm. \ref{uf}. 
 Consider the representations  of $\WBG_2$ on  $U \tn U$ and on $U_{{\rm OBJ}}\tn U_{{\rm OBJ}}$, derived from the W-bikoid  $X^+_{\Z_2}$ in \eqref{fgwb}. The degree of $(-) \trl^*S^+_1[2]$ in \eqref{act2b}, induced by the braid generator $S^+_1[2]$ in \eqref{eq:SV} is:}
\begin{align}
 &2 \textrm{ for the action of } \WBG_2 \textrm{ on }  U_{{\rm OBJ}}\tn U_{{\rm OBJ}}, && 4 \textrm{ for the action of } \WBG_2 \textrm{ on }   U\tn U, \label{fl}\\
 &12 \textrm{ for the action of } \WBG_2 \textrm{ on }  V_{{\rm OBJ}}\tn V_{{\rm OBJ}}, && 12 \textrm{ for the action of } \WBG_2 \textrm{ on }   V\tn V\label{sl}.
\end{align}
{Hence Equation \eqref{fl} tell us that the representations of $\WBG_n$ derived from a {W-bikoid} may carry more information than those derived from its underlying {W-birack} \peq{uncat}. Comparing \eqref{fl} and \eqref{sl} gives an example of a case when the  representations of $\WBG_n$ derived from a crossed module $(\d\colon G \to E,\trr)$ -- in this case $\Gc(\Z_2,\Z_3)$ -- are finner than those derived from the {W-bikoid associated to $G$ alone}.}
\end{Example}

\subsubsection{A W-\biker\ structure in $\TRANS(T^2_R(\Gc))$}\label{mainbiker}

Let  $\Gc=(\d\colon E \to G,\trr)$ be a crossed module. Let $R\in E$. (The physical meaning of $R$ is {explained} in  \peq{defR}.)
 We now describe the {W-bikoid} $X^+_R$ mentioned in the end of \S\ref{phys}. (Hence $X^+_R$ is related to loop-particles in topological higher gauge theory.)  {This $X^+_R$ is derived from and is isomorphic to $X^+_{gr^*}$ in Equation \eqref{X2}.}

Cf. Defs. \ref{ts2}, \ref{tt2}, and Lem. \ref{iso}. There is a groupoid isomorphism  $\Phi_R\colon \TRANS(S^2(\Gc))\to \TRANS(T^2_R(\Gc))$. 

%The \biker in $\TRANS(S^2(\Gc))$  appearing in \ref{bs2} can be transported to a \biker structure in  $\TRANS(T^2_R(\Gc))$, in the obvious way:
\begin{Theorem}%Cf. Def. \ref{tt2}. 
 {We have  a W-\biker\  %associated to 
defined in the groupoid $X^+_R$
in $\TRANS(T^2_R(\Gc))$. This {$X^+_R$} is obtained by transporting \peq{transport} $(\TRANS(S^2(\Gc)),X^2_{gr^*})$ {in \eqref{X2}} to $\TRANS(T^2_R(\Gc))$ along $\Phi_R$. The explicit form of  $X^+_R$ is (recalling the notation of Defs.   \ref{defThetaR} and \ref{defbetaR},  and that we sometimes use $\overline{(-)}$ to denote inverses in a group):} %, thus to
\begin{equation}\label{X3}
\hskip-0.4cm\xymatrix@R=1pt{\\\\\\\\ X^+_R\Big((z,\d(R),e),(w,\d(R),f)\Big)=}\hskip-1cm
\hskip-2.1cm\xymatrix@R=70pt@C=10pt{& (z,\d(R),e)  \ar[dr]|\hole \ar[dr]|<<<<<<<<<{(\overline{\Theta_R(w,\d(R),f)},1_E)
\,\,\,\,\bullet
    \quad\qquad\qquad\qquad}|\hole
  & (w,\d(R),f)  \ar[dl]\ar[dl]|<<<<<<<<<<<<{\quad \qquad\qquad\qquad \qquad \qquad \,\bullet\,\,\,\big(\overline{\Theta_R(w,\d(R),f)}\t \overline{(z,\d(R),e)},1_E)}
\ar[dl]|<<<<<<<<<<<<<<<<<<<<<<<<{\qquad \quad \qquad\qquad \qquad \qquad \qquad \bullet\,\,\,\,
(\Theta_R\big(\overline{\Theta_R(w,\d(R),f)}\t (z,\d(R),e)\big),1_E)
}
\\ 
  &   \beta_R \Big(\overline{\Theta_R(w,\d(R),f)}\t (z,\d(R),e)\Big) \t (w,\d(R),f)
 & \overline{\Theta_R(w,\d(R),f)}\t (z,\d(R),e) }\xymatrix@R=1pc{\\ \\.}
\end{equation}
{(Recall \peq{rrrn}.)}
 {Or, in another form:}
\begin{equation}\label{X3p}
\hskip-0.45cm\xymatrix@R=1pt{\\\\\\\\\\ X^+_R\Big((z,\d(R),e),(w,\d(R),f)\Big)=}\hskip-1cm
\hskip-2.35cm\xymatrix@R=60pt@C=10pt{& (z,\d(R),e)  \ar[dr]|\hole \ar[dr]|<<<<<<<<<<<{\big(\overline{\Theta_R(w,\d(R),f)},1_E)
\,\,\bullet \qquad
    \quad\qquad\qquad}
|\hole
  & (w,\d(R),f)  \ar[dl]\ar[dl]|<<<<<<<<<<<<<<<<{\quad \qquad\qquad\qquad \qquad \qquad \quad\,\,\,\,\,\,\,\,\,\, \bullet\,\, \,\,\,\,\,\big(\beta_R \big(\overline{\Theta_R(w,\d(R),f)}\t (z,\d(R),e)\big),1_E\big)}
\ar[dl]
\\ 
  &   \beta_R \Big(\overline{\Theta_R(w,\d(R),f)}\t (z,\d(R),e)\Big) \t (w,\d(R),f)
 & \overline{\Theta_R(w,\d(R),f)}\t (z,\d(R),e) }\xymatrix@R=1pc{\\ \\.}
\end{equation}
{Explicitly, on the second strand of  \eqref{X3}, we have the following element of $T^2_R(\Gc)$ (see \eqref{dEFT} and \eqref{betaversions}):}
\begin{equation}\label{otherform}
\begin{split}
\Theta_R\big(\overline{\Theta_R(w,\d(R),f)}\t (z,\d(R),e)\big)&\,\,
\overline{\Theta_R(w,\d(R),f)}\t \overline{(z,\d(R),e)}\\&=\beta_R\Big(\overline{\Theta_R(w,\d(R),f)}\t (z,\d(R),e)\Big)\\
&=\beta_R\big (w^{-1}zw,{\d(R),} R\,\, w^{-1} \trr R^{-1} \,\, w^{-1} \trr e\,\, (w^{-1}z)\trr R\,\, (w^{-1}zw)\trr R^{-1} \big)\\
 &=\big(1,{\d(R),}R\,\,( w^{-1}z) \trr R^{-1}\,\, w^{-1} \trr e^{-1} \,\, w^{-1} \trr R\,\, R^{-1}\big).
\end{split}
\end{equation}
Hence the element of $T^2_R(\Gc)$ associated to the target of the  second strand of the {diagram in \eqref{X3}} is:
\begin{equation}\label{K2}
\begin{split}\big(\beta_R (\overline{\Theta_R(w,\d(R),f)}\t &(z,\d(R),e)\big),1_E\big)  \t (w,\d(R),f)\
\\&=(w,{\d(R),}R\,\, (w^{-1}z) \trr R^{-1}\,\, w^{-1} \trr e^{-1} \,\, w^{-1} \trr R\,\, R^{-1}\,\, f \,\, w \trr R\,\, R^{-1} \,\, e \,\, z\trr R \,\, w \trr R^{-1}).
\end{split}
\end{equation}
And, on the first strand of {the diagram in \eqref{X3},} we have the following element of $T^2_R(\Gc)$:
\begin{align}\label{K3}\overline{\Theta_R(w,\d(R),f)}=(w^{-1},\d(R),R\,\,w^{-1}\trr R^{-1}).
\end{align}
Hence the element of $T^2_R(\Gc)$ associated to the target of the first strand  in \eqref{X3} is:
\begin{align}\label{K4} 
\overline{\Theta_R(w,\d(R),f)}\t (z,\d(R),e)=\big(w^{-1}zw,\d(R),R\,\, w^{-1} \trr R^{-1}\,\, w^{-1} \trr e\,\, (w^{-1}z) \trr R\,\,(w^{-1}zw) \trr R^{-1}).
\end{align}  
\end{Theorem}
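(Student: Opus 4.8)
The plan is to avoid verifying the bikoid axioms from scratch and instead \emph{transport} the already-established W-\biker\ $X^+_{gr^*}$ of Thm. \ref{bik-proof} along the groupoid isomorphism $\Phi_R\colon \TRANS(S^2(\Gc))\to \TRANS(T^2_R(\Gc))$ of Lem. \ref{iso}. Since $\Phi_R$ is an isomorphism of groupoids, it is in particular a groupoid inclusion which is bijective on objects, so the transport construction of \peq{transport} applies directly and yields a W-\biker\ structure on $\TRANS(T^2_R(\Gc))$ with no further axiom-checking. In effect, the Reidemeister-III and welded relations (the content of diagrams \eqref{d1}--\eqref{d4}) hold for $X^+_{gr^*}$ and are preserved by any transport along an object-bijective groupoid isomorphism; this disposes of the qualitative part of the statement at once, leaving only the task of putting the transported data into the explicit shape \eqref{X3}--\eqref{K4}.

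For the explicit form, I would write $\phi_R$ for the action of $\Phi_R$ on objects (Lem. \ref{defpsiphi}) and $\psi_R=\phi_R^{-1}$, so that by \peq{transport} the transported holonomy arrows are
$$
L_R(x',y')=\Phi_R\Big(L\big(\psi_R(x'),\psi_R(y')\big)\Big),\qquad
R_R(x',y')=\Phi_R\Big(R\big(\psi_R(x'),\psi_R(y')\big)\Big),
$$
where $L,R$ are the holonomy arrows of $X^+_{gr^*}$ supplied by Thm. \ref{bik-proof}, whose labels as morphisms of $\TRANS(S^2(\Gc))^2\rtimes\Sigma_2$ are $(w^{-1},1_A,1_E)$ and $(1_G,w^{-1}\trr J^{-1},1_E)$. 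I would then push these through $\Phi_R$ using the morphism rule $\Phi_R(\cdots)=\phi_R^*(\cdots)\t\phi_R(-)$ of Lem. \ref{iso} and the explicit $\phi_R^*(g,J,a)=(\phi_R(g,J),a)$ of \peq{prstar}.

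The key observation that makes the substitution match \eqref{X3}--\eqref{K4} is that the conjugation-type labels appearing on the $T^2_R(\Gc)$ side are exactly the $\phi_R$-images of the $\Theta$- and $\beta$-labels that already figure in the transparent form \eqref{X2p} of $X^+_{gr^*}$. Concretely, using $\Theta_R=\phi_R\circ\Theta\circ\psi_R$ (Lem. \ref{defThetaR}), $\beta_R=\phi_R\circ\beta\circ\psi_R$ (\peq{betasame}), and $\psi_R(z,\d(R),e)=(z,R^{-1}e\,z\trr R)$, the label $(w^{-1},1_A,1_E)$ transports to $(\overline{\Theta_R(w,\d(R),f)},1_E)$, which is precisely \eqref{K3}; the target computations \eqref{K4} and \eqref{K2} then follow by applying the explicit action $\t$ of Lem. \ref{refnow}, and \eqref{otherform} is the $\Phi_R$-image of the identity $\beta(g,K)=\Theta(g,K)\,(g,K)^{-1}$ recorded in \peq{defbeta}. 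Each individual step reduces to a short manipulation with the semidirect-product multiplication in $T^2_R(\Gc)\ltimes_{\trr'}E$ and the two Peiffer relations of Def. \ref{xmod}.

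The main obstacle is bookkeeping rather than anything conceptual: the action $\t$ is \emph{not} by automorphisms (Lem. \ref{refnow}), and the $R$-dependent correction factors $R\,w^{-1}\trr R^{-1}$ and their conjugates proliferate, so the delicate point is to keep the $A=\ker(\d)$-components and the $E$-conjugations correctly ordered and to invoke the 2nd Peiffer relation at exactly the places where central elements of $A$ must be commuted past elements of $E$. The cleanest route — and the one I would follow — is to perform all simplification on the $S^2(\Gc)$ side, where $X^+_{gr^*}$ already has the transparent form \eqref{X2p}, and to transport through $\Phi_R$ only at the very end via \peq{betasame} and Lem. \ref{defThetaR} (with compatibility of $\t$ and $\phi_R$ supplied by Lem. \ref{compat} and Def. \ref{comp4}); this confines the heavy crossed-module algebra to the previously-proved intertwining identities and leaves only formal substitution to reach \eqref{X3}--\eqref{K4}.
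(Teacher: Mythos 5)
Your proposal is correct and essentially coincides with the paper's own proof: both obtain the W-bikoid axioms for free by transporting $X^+_{gr^*}$, in its formulation \eqref{X2p}, along the groupoid isomorphism $\Phi_R$ of Lem.~\ref{iso} via the construction in \peq{transport}, and both recover the explicit formulae \eqref{X3}--\eqref{K4} from the intertwining identities of Lem.~\ref{defThetaR} and \peq{betasame} together with the explicit action formula \eqref{tT2}. The only inessential difference is one of bookkeeping: you cite Lem.~\ref{compat} (and Def.~\ref{comp4}) for the compatibility of $\phi_R$ with the actions where the paper cites Lem.~\ref{def:t}, and these play the same role.
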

\begin{Remark}
{A more compact formulation of $X^+_R$ in  Equations \eqref{X3p} and \eqref{otherform} is in Equation \eqref{compact}.}
\end{Remark}
\begin{proof}
{By using  \eqref{betaversions}, it follows that the formulations for $X^+_R$ in \eqref{X3} and \eqref{X3p} coincide, and it also follows \eqref{otherform}.  Equation \eqref{K2} follows from \eqref{tT2}. Analogously,  \eqref{K3} and \eqref{K4} follow from Lem. \ref{defThetaR} and  \eqref{tT2}.}

{We  prove that $X^+_R$ in \eqref{X3p} is obtained by  transporting the bikoid $X^+_{gr^*}$ in \eqref{X2} along $\Phi_R\colon \TRANS(S^2(\Gc))\to \TRANS(T^2_R(\Gc))$ in Lem. \ref{iso}; recall \peq{transport}. We use the formulation in Equation \eqref{X2p} for $X^2_{gr^*}$.
Recall Defs. \ref{defpsiphi} and \ref{comp4}.
That $X^+_R$ in  \eqref{X3} is obtained by transporting $X^+_{gr^*}$ in  \eqref{X2p} along $\Phi_R$ follows swiftly from the construction of  $\Phi_R\colon \TRANS(S^2(\Gc))\to \TRANS(T^2_R(\Gc))$ in Lem. \ref{iso}, together with Lem. \ref{defThetaR},  \ref{def:t} and \peq{betasame}.}
\end{proof}
% 
% {We expect that a variant of the birack \eqref{X3}, where instead of $\d(R)\in G$ we have an an arbitrary element $g\in G$, will yield representations of the necklace braid group; see \cite{Necklace}. This will be investigated elsewhere.}

% 
% The proof will follow by comparison with the $R=1_E$ case. And will be postponed till later. 
% 
% \subsubsection{$\TRANS(S^2(\Gc)$ and associated bikoid}
% 
% \begin{Definition}
% We let $\TRANS(S^2(\Gc)=\TRANS(T^2_{1_E}(\Gc)$.
% \end{Definition}
%  Explicitely, $\TRANS(S^2(\Gc)$ is the action groupoid of the following action $\trr$ of $G \ltimes_{\trr}{(A \times E)}$ on $S^2(\Gc)\cong G \rtimes_{\trr} A$. Here $x \trr (K,e)=(x\trr K, x \trr e)$.
% $$(x,K,e) \t (g,L)=(\d(e) g, K\,\, x \trr L\,\, xgx^{-1} \trr K).  $$
% 
% 
% 

%}}}
\bibliographystyle{ieeetr}

\bibliography{RepsLBG.bib}
\end{document}